\newcommand{\iheq}{\stackrel{=}{\text{\scriptsize{IH}}}}
\newcommand{\bebecomes}{\mathrel{::=}}
\newcommand{\alternative}{~|~}
\newcommand{\ivr}{\psi}
\newcommand{\dvar}{g}
\newcommand{\durvar}{d}
\newcommand{\ctrl}{\textsf{ctrl}\xspace}
\newcommand{\plant}{\textsf{plant}\xspace}
\newcommand{\eren}[3]{\urename[{#1}]{#2}{#3}}
\newcommand{\earen}[2]{\urename[{#1}]{\boundvars{#2}}{\vec{y}}}
\newcommand{\rref}[2][]{\prettyref{#2}}
\newcommand{\allstate}{\sty}
\renewcommand*{\freevars}[1]{\mathop{\text{FV}}(#1)}
\renewcommand*{\boundvars}[1]{\mathop{\text{BV}}(#1)}
\renewcommand*{\mustboundvars}[1]{\mathop{\text{MBV}}(#1)}
\newcommand*{\sfreevars}[1]{\mathop{\mathrel{\textsf{\upshape F}}\joinrel\mathrel{\textsf{\upshape V}}}(#1)}
\newcommand*{\sboundvars}[1]{\mathop{\mathrel{\textsf{\upshape B}}\joinrel\mathrel{\textsf{\upshape V}}}(#1)}
\newcommand*{\smustboundvars}[1]{\mathop{\mathrel{\textsf{\upshape M}}\joinrel\mathrel{\textsf{\upshape B}}\joinrel\mathrel{\textsf{\upshape V}}}(#1)}
\newcommand{\logname}{\text{\upshape\textsf{dH{\kern-0.05em}L}}\xspace}
\newcommand{\lequiv}{\leftrightarrow}
\newcommand{\DL}{\textsf{DL}\xspace}
\newcommand{\GL}{\textsf{GL}\xspace}
\newcommand{\CGL}{\textsf{CGL}\xspace}
\newcommand{\CdGL}{\textsf{CdGL}\xspace}
\newcommand{\ModelPlex}{ModelPlex\xspace}
\newcommand{\veps}{\varepsilon}
\newcommand{\seq}[2]{#1 \vdash #2}
\newcommand{\proves}[3]{#1\allowbreak\vdash #2 \allowbreak \mathop{:} #3}
\newcommand{\mproves}[3]{#1\allowbreak\vdash #2 \allowbreak \mathop{:} #3}
\newcommand{\edprojL}[1]{\langle\pi_1{#1}\rangle}
\newcommand{\edprojR}[1]{\langle\pi_2{#1}\rangle}
\newcommand{\ebprojL}[1]{[\pi_1{#1}]}
\newcommand{\ebprojR}[1]{[\pi_2{#1}]}
\newcommand{\edOpencons}{\langle}
\newcommand{\edClosecons}{\rangle}
\newcommand{\edSepcons}{,}
\newcommand{\edcons}[2]{\edOpencons{#1}\edSepcons{#2}\edClosecons}
\newcommand{\ebOpencons}{[}
\newcommand{\ebSepcons}{,}
\newcommand{\ebClosecons}{]}
\newcommand{\ebcons}[2]{\ebOpencons#1\ebSepcons#2\ebClosecons}
\newcommand{\pmodality}[2]{\llensehat{#1}\rlensehat{#2}}
\newcommand{\kwcase}{\textsf{case}}
\newcommand{\ecaseHead}[1]{\langle\kwcase\textrm{ }#1{\text{\textrm{ of }}}}
\newcommand{\ecaseLeft}[2]{#1\Rightarrow~#2}
\newcommand{\ecaseRight}[2]{~|~#1\Rightarrow~#2}
\newcommand{\ecaseEnd}{\rangle}
\newcommand{\ecasegen}[5]{\ecaseHead{#1}\allowbreak\ecaseLeft{#2}{#3}\allowbreak\ecaseRight{#4}{#5}\ecaseEnd}
\newcommand{\ecase}[3]{\ecasegen{#1}{\ell}{#2}{r}{#3}}
\newcommand{\edcase}[3]{\ecase{#1}{#2}{#3}}
\newcommand{\ercase}[3]{\langle\textsf{case}_*\ #1\text{ of }\pvs\Rightarrow~#2~|~\pvg\Rightarrow#3\rangle}
\newcommand{\edinjL}[1]{\langle\ell \cdot #1\rangle}
\newcommand{\edinjR}[1]{\langle r \cdot #1\rangle}
\newcommand{\kwrep}{\textsf{rep}}
\newcommand{\erep}[3]{{#1}\text{ }\kwrep\text{ }#3.~{#2}}
\newcommand{\eapp}[2]{#1\ #2}
\newcommand{\elam}[3]{\lambda #1:#2 \: #3}
\newcommand{\eplam}[2]{\elam{\pvx}{#1}{#2}}
\newcommand{\etlam}[2]{\elam{x}{#1}{#2}}
\newcommand{\eSeq}[1]{\lstrike\iota~#1\rstrike}
\newcommand{\eSwap}[1]{\lstrike\textsf{yield }#1\rstrike}
\newcommand{\emonInfix}[1]{{\circ_{#1}}}
\newcommand{\emon}[3]{{#1} \emonInfix{#3} {#2}}
\newcommand{\eQE}[2]{\textsf{FO}[#1](#2)}
\newcommand{\kwstop}{\textsf{stop}}
\newcommand{\kwgo}{\textsf{go}}
\newcommand{\estop}[1]{\langle\kwstop\ #1\rangle}
\newcommand{\ego}[1]{\langle\kwgo\ #1\rangle}
\newcommand{\kwroll}{\textsf{roll}}
\newcommand{\kwunroll}{\textsf{unroll}}
\newcommand{\ebroll}[1]{[\kwroll\ {#1}]}
\newcommand{\ebunroll}[1]{[\kwunroll\ #1]}
\newcommand{\eghost}[4]{\textsf{Ghost}[#1=#2](#3.~#4)}
\newcommand{\eAsgn}[4]{\lstrike\humod{#2}{\eren{f}{#2}{#1}}\text{ in }#3.~#4\rstrike}
\newcommand{\eAsgneq}[4]{\eAsgn{#1}{#2}{#3}{#4}}
\newcommand{\etconsgen}[5]{\langle{\eren{#4}{#1}{#2}}~{{:}{*}}~#3.~#5\rangle}
\newcommand{\etcons}[2]{\etconsgen{x}{y}{\pvx}{#1}{#2}}
\newcommand{\eunpack}[2]{\textsf{unpack}(#1,\pvx y.~#2)}
\newcommand{\efpgen}[5]{\textit{FP}(#1, #2.~#3, #4.~#5)}
\newcommand{\efp}[3]{\efpgen{#1}{\pvs}{#2}{\pvg}{#3}}
\newcommand{\met}{\ensuremath{\mathcal{M}}}
\newcommand{\metz}{{\boldsymbol{0}}}
\newcommand{\metgeq}{\boldsymbol{\succcurlyeq}}
\newcommand{\metgr}{\boldsymbol{\succ}}
\newcommand{\metleq}{\boldsymbol{\preccurlyeq}}
\newcommand{\metle}{\boldsymbol{\prec}}
\newcommand{\conv}{\varphi\xspace}
\newcommand{\G}{\Gamma}
\newcommand{\Gemp}{\cdot}
\newcommand{\eforHead}[4]{\textsf{for}(#1:\conv(\met)={#2};#3;{#4})} 
\newcommand{\eforBody}[1]{\{#1\}}
\newcommand{\eforgen}[5]{\eforHead{#1}{#2}{#3}{#4}\eforBody{#5}}
\newcommand{\efor}[2]{\eforgen{\pvx}{#1}{\pvy}{#2}{\alpha}}
\newcommand{\esub}[3]{[{#3}/{#2}]{#1}}
\newcommand{\tsub}[3]{\subst[#1]{#2}{#3}}
\newcommand{\mycase}{\textbf{Case}\xspace}
\newcommand{\ediv}[2]{#1 / #2}
\newcommand{\pvx}{p}
\newcommand{\pvy}{q}
\newcommand{\pvs}{s}
\newcommand{\pvg}{g}
\newcommand{\btt}{\text{\rm\tt{tt}}}
\newcommand{\bff}{\text{\rm\tt{ff}}}
\newcommand{\squarequotes}[2][0]{{%
  {\vphantom{#2}}^{\ulcorner}\kern-\scriptspace
  \mspace{-#1mu}%
  {{}#2}^{\urcorner}%
}}
\newcommand{\ftrans}[1]{\squarequotes[-0.5]{#1}}
\newcommand{\atrans}[1]{\langle\!\langle{#1}\rangle\!\rangle}
\newcommand{\dtrans}[1]{[[{#1}]]}
\newcommand{\alltype}{\mathbb{T}}
\newcommand{\lindty}[2]{\mu{#1}.\,{#2}}
\newcommand{\lcoty}[2]{\rho{#1}.\,{#2}}
\newcommand{\pity}[3]{\Pi #1\mathrel{:}#2.\,#3}
\newcommand{\sity}[3]{\Sigma #1\mathrel{:}#2.\,#3}
\newcommand{\xty}{\reals\xspace}
\newcommand{\sty}{\m{\mathfrak{S}}\xspace}
\newcommand{\lget}[2]{{#1}\ #2} 
\newcommand{\lset}[3]{\textsf{set}\ {#1}\ #2\ #3}
\newcommand{\kwprod}{\texttt{*}}
\newcommand{\kwsum}{\texttt{+}}
\newcommand{\sprod}[2]{#1\,\kwprod\,#2}
\newcommand{\sfun}[2]{#1 \to #2}
\newcommand{\vdot}{{\boldsymbol{\cdot}}}
\newcommand{\solves}[4]{#1,#2,#3 \vDash #4}
\newcommand{\kwplay}{\textsf{play}}
\newcommand{\play}[4][{}]{\kwplay_{#1}\ #4\ #2\ #3}
\newcommand{\sprojL}[1]{\pi_{L}#1}
\newcommand{\sprojR}[1]{\pi_{R}#1}
\newcommand{\as}{\textsf{as}}
\newcommand{\ds}{\textsf{ds}}
\newcommand{\slet}[2]{\textsf{let}\,#1\,=\,#2\,\textsf{in}\,}
\newcommand{\scase}[1]{\textsf{case}\ #1\ \textsf{of}}
\newcommand{\slbranch}[2]{~ #1 \Rightarrow #2}
\newcommand{\srbranch}[2]{|~ #1 \Rightarrow #2}
\let\save@mathaccent\mathaccent
\newcommand*\if@single[3]{%
  \setbox0\hbox{${\mathaccent"0362{#1}}^H$}%
  \setbox2\hbox{${\mathaccent"0362{\kern0pt#1}}^H$}%
  \ifdim\ht0=\ht2 #3\else #2\fi
  }
\newcommand*\rel@kern[1]{\kern#1\dimexpr\macc@kerna}
\newcommand*\widebar[1]{\@ifnextchar^{{\wide@bar{#1}{0}}}{\wide@bar{#1}{1}}}
\newcommand*\wide@bar[2]{\if@single{#1}{\wide@bar@{#1}{#2}{1}}{\wide@bar@{#1}{#2}{2}}}
\newcommand*\wide@bar@[3]{%
  \begingroup
  \def\mathaccent##1##2{%
    \let\mathaccent\save@mathaccent
    \if#32 \let\macc@nucleus\first@char \fi
    \setbox\z@\hbox{$\macc@style{\macc@nucleus}_{}$}%
    \setbox\tw@\hbox{$\macc@style{\macc@nucleus}{}_{}$}%
    \dimen@\wd\tw@
    \advance\dimen@-\wd\z@
    \divide\dimen@ 3
    \@tempdima\wd\tw@
    \advance\@tempdima-\scriptspace
    \divide\@tempdima 10
    \advance\dimen@-\@tempdima
    \ifdim\dimen@>\z@ \dimen@0pt\fi
    \rel@kern{0.6}\kern-\dimen@
    \if#31
      \overline{\rel@kern{-0.6}\kern\dimen@\macc@nucleus\rel@kern{0.4}\kern\dimen@}%
      \advance\dimen@0.4\dimexpr\macc@kerna
      \let\final@kern#2%
      \ifdim\dimen@<\z@ \let\final@kern1\fi
      \if\final@kern1 \kern-\dimen@\fi
    \else
      \overline{\rel@kern{-0.6}\kern\dimen@#1}%
    \fi
  }%
  \macc@depth\@ne
  \let\math@bgroup\@empty \let\math@egroup\macc@set@skewchar
  \mathsurround\z@ \frozen@everymath{\mathgroup\macc@group\relax}%
  \macc@set@skewchar\relax
  \let\mathaccentV\macc@nested@a
  \if#31
    \macc@nested@a\relax111{#1}%
  \else
    \def\gobble@till@marker##1\endmarker{}%
    \futurelet\first@char\gobble@till@marker#1\endmarker
    \ifcat\noexpand\first@char A\else
      \def\first@char{}%
    \fi
    \macc@nested@a\relax111{\first@char}%
  \fi
  \endgroup
}
\let\citet\cite
\renewcommand{\proves}[3]{#1\allowbreak\vdash #3}
\renewcommand{\eAsgn}[4]{\lstrike\humod{#2}{\eren{f}{#2}{#1}}\text{ in }#4\rstrike}
\renewcommand{\eAsgneq}[4]{\eAsgn{#1}{#2}{#3}{#4}}
\renewcommand{\eghost}[4]{\textrm{Ghost}[#1=#2](#4)}
\renewcommand{\erep}[3]{{#1}\textrm{ }\kwrep\text{ }~{#2}}
\renewcommand{\emonInfix}[1]{{\circ}}
\newcommand{\pre}{\textsf{pre}}
\newcommand{\post}{\textsf{post}}
\newcommand{\reachavoid}{\textsf{reachAvoid}}
\newcommand{\live}{\textsf{liveness}}
\newcommand{\safe}{\textsf{safe}}
\newcommand{\safety}{\textsf{safety}}
\begin{document}
\title{Constructive Hybrid Games\thanks{%
This research was sponsored by the AFOSR under grant number FA9550-16-1-0288 and the Alexander von Humboldt Foundation. The first author was also funded by an NDSEG Fellowship.}
}

\author{
Brandon Bohrer\inst{1}\orcidID{0000-0001-5201-9895} \and
Andr\'{e} Platzer\inst{1,2}\orcidID{0000-0001-7238-5710}
}
\authorrunning{B.\ Bohrer and A.\ Platzer}
\institute{Computer Science Department, Carnegie Mellon University
\email{\{bbohrer,aplatzer\}@cs.cmu.edu}
\and Fakult\"at f\"ur Informatik, Technische Universit\"at M\"unchen
}

\maketitle
\begin{abstract}
  Hybrid games are models which combine discrete, continuous, and
  adversarial dynamics. Game logic enables proving (classical) existence of winning strategies. We introduce
  \emph{constructive differential game logic} (\CdGL) for hybrid games, where
  proofs that a player can win the game correspond to \emph{computable} winning strategies.
  This is the logical foundation for synthesis of
  correct control and monitoring code for safety-critical
  cyber-physical systems. Our contributions include novel static and
  dynamic semantics as well as soundness and consistency.
\end{abstract}

\keywords{Game Logic, Constructive Logic, Hybrid Games, Dependent Types}

\section{Introduction}

Logics for program verification can be broadly divided into two groups.
Imperative programs are typically verified with \emph{program logics} such as Hoare calculi \cite{DBLP:journals/cacm/Hoare69} and dynamic logics (\DL) \cite{DBLP:conf/focs/Pratt76} where modalities capture the effect of program execution.
Games logics (\GL)~\cite{DBLP:conf/focs/Parikh83}, studied in this paper, are \DL's with a turn-taking program connective to switch players.
In contrast, functional programs are often studied by writing the program \emph{in} logic.
In a dependent type theory \cite{DBLP:journals/iandc/CoquandH88}, a program's type is its correctness specification.
Classical higher-order logics also specify functions and their correctness.

The Curry-Howard correspondence~\cite{curry1967combinatory,howard1980formulae} explains type-theoretic verification: a constructive proof of a
specification corresponds to a function which provably implements that
specification.  Curry-Howard for program logics is far less explored,
but deserves exploration in order to answer: what \emph{is} the
computational content of a program-logic proof? This paper argues that the intersection of
Curry-Howard and program logic, both fundamental tools, provides an insightful logical foundation. For example, we expect this correspondence will enable
synthesis for programmatic models that are too challenging
for automatic synthesis without a proof.

Our hybrid games are as in differential game logic (\dGL) \cite{DBLP:journals/tocl/Platzer15}, and
combine continuous dynamics, discrete computation, and adversarial
dynamics to provide powerful models of
cyber-physical systems (CPSs) such as transportation systems, energy systems, and medical devices.
But \dGL is classical, so truth of \dGL formulas only implies classical existence of winning strategies for their hybrid games.
Based on discrete Constructive Game Logic (\CGL)~\cite{esop20}, this paper introduces \emph{Constructive Differential Game Logic} (\CdGL) for hybrid games with a Curry-Howard interpretation that proofs that a player can win a hybrid game correspond to programs implementing their winning strategies.

\CdGL is a compelling use case for Curry-Howard-based
synthesis among program logics precisely because synthesizing a
winning strategy of a hybrid game is undecidable until a proof is
provided; the combination of adversarial and continuous dynamics makes
this so not just in theory but in practice. Curry-Howard for games
says proofs correspond to constructive winning strategies, allowing us to reduce
undecidable synthesis questions to verification questions which,
while still undecidable, are routinely verified with human assistance.

\paragraph{Contributions.}
We build directly on Constructive Game Logic (\CGL)~\cite{esop20} for discrete games and Differential Game Logic (\dGL)~\cite{DBLP:journals/tocl/Platzer15} for classical hybrid games.
In combining these logics, we must \emph{constructively} justify differential equation (ODE) reasoning.
This requires foundations in constructive analysis~\cite{bishop1967foundations,bridges2007techniques}, so the proofs in \rref{app:proofs} appeal to constructive formalizations of ODEs~\cite{DBLP:conf/mkm/Cruz-FilipeGW04,DBLP:conf/itp/MakarovS13}.
We also contribute a new type-theoretic semantics, as opposed to the previous realizability semantics~\cite{esop20}.
This clarifies subtle side conditions and should be useful in future constructive program logics.
Our example model and proof, while short, lay the groundwork for future case studies.

\section{Related Work}
We discuss related works on games, constructive logic,  and hybrid systems.




\paragraph{Games in Logic.}
Propositional \GL was introduced by Parikh~\citet{DBLP:conf/focs/Parikh83}.
\GL's are unique in their clear delegation of strategy to the \emph{proof} language rather than the \emph{model} language, allowing succinct game specifications with sophisticated winning strategies.
Succinct specifications are important: specifications are \emph{trusted} because proving the \emph{wrong theorem} would not ensure correctness.
Relatives without this separation include SL~\cite{DBLP:conf/concur/ChatterjeeHP07}, ATL~\cite{DBLP:journals/jacm/AlurHK02}, CATL~\cite{DBLP:conf/atal/HoekJW05}, SDGL~\cite{ghosh2008strategies}, structured strategies~\cite{DBLP:conf/kr/RamanujamS08},
DEL~\cite{DBLP:series/lncs/Benthem15,DBLP:journals/games/BenthemPR11,van2001games}, evidence logic~\cite{DBLP:journals/sLogica/BenthemP11}, and Angelic Hoare Logic~\cite{DBLP:journals/corr/Mamouras16}.

\paragraph{Constructive Modal Logics.}
We are interested in the semantics of games, thus we review constructive modal semantics generally.
This should not  be confused with game semantics~\cite{DBLP:journals/iandc/AbramskyJM00}, which give a semantics to programs \emph{in terms of} games.
The main semantic approaches for constructive modal logics are intuitionistic Kripke semantics~\citet{DBLP:journals/apal/Wijesekera90} and realizability semantics~\cite{DBLP:journals/mscs/Oosten02,lipton1992constructive}.
\CGL~\cite{esop20} used a realizability semantics which operate on a state, reminiscent of state in Kripke semantics, whereas we interpret \CdGL formulas into type theory.

Modal Curry-Howard is relatively little-studied, and each author has their own emphasis.
Explicit proof terms are considered for \CGL~\cite{esop20} and a small fragment thereof~\cite{kamide2010strong}.
Others~\cite{DBLP:journals/apal/WijesekeraN05,degen2006towards,DBLP:journals/fuin/Celani01} focus on intuitionistic semantics for their logics, fragments of \CGL.
Our semantics should be of interest for these fragments.
We omit proof terms for space. \CdGL proof terms would extend \CGL proof terms~\cite{esop20} with a constructive version of existing classical ODE proof terms~\cite{DBLP:journals/corr/abs-1908-05535}.
Propositional modal logic~\cite{DBLP:conf/lics/VIICHP04} has been interpreted as a type system.

\paragraph{Hybrid Systems Synthesis.}
Hybrid games synthesis is a motivation of this work.
Synthesis of hybrid \emph{systems} (1-player games) is an active area.
The unique strength of proof-based synthesis is expressiveness: they can synthesize every provable system.
\CdGL proofs support first-order regular games with first-order (e.g., semi-algebraic) initial and goal regions.
While synthesis and proof are both undecidable, interactive proof for undecidable logics is well-understood.
The \ModelPlex~\cite{DBLP:journals/fmsd/MitschP16} synthesizer for \CdGL's classical systems predecessor \dL~\cite{Platzer18} recently added~\cite{DBLP:conf/pldi/BohrerTMMP18} proof-based synthesis to improve expressiveness.
\CdGL aims to provide a computational foundation for a more systematic proof-based synthesizer in the more general context of games.

Fully-automatic synthesis, in contrast, restricts itself to some fragment in order to sidestep undecidability.
Studied classes include rectangular hybrid games~\cite{10.1007/3-540-48320-9_23},
switching systems~\cite{DBLP:conf/emsoft/TalyT10},
linear systems with polyhedral sets~\cite{DBLP:journals/tac/KloetzerB08,DBLP:conf/emsoft/TalyT10},
and  discrete abstractions~\cite{DBLP:conf/iros/FinucaneJK10,DBLP:conf/IEEEcca/FilippidisDLOM16}.
A well-known~\cite{tomlin2000game} \emph{systems} synthesis approach translates specifictions \emph{into} finite-alternation games.
Arbitrary first-order games are our \emph{source} rather than \emph{target} language.
Their approach is only known to terminate for simpler classes~\cite{shakernia2001semi,shakernia2000decidable}.


\section{Constructive Hybrid Games}
Hybrid games in \CdGL are 2-player, zero-sum, and perfect-information, where continuous subgames are ordinary differential equations (ODEs) whose duration is chosen by a player.
Hybrid games should not be confused with \emph{differential games} which compete continuously~\cite{IsaacsDifferentialGames}.
The player currently controlling choices is always called Angel in this paper, while the player waiting to play is always called Demon.
For any game $\alpha$ and formula $\phi,$ the modal formula $\ddiamond{\alpha}{\phi}$ says Angel can play $\alpha$ to ensure postcondition $\phi,$ while $\dbox{\alpha}{\phi}$ says Demon can play $\alpha$ to ensure postcondition $\phi$.
These generalize safety and liveness modalities from \DL.
\GL's are distinguished from other \DL's by the dual game $\pdual{\alpha},$ which implements turn-taking by switching the Angel and Demon roles in game $\alpha$.
The Curry-Howard interpretation of proof of a modality $\ddiamond{\alpha}{\phi}$ or $\dbox{\alpha}{\phi}$ is a program which performs each player's winning strategy.
A game might have several winning strategies, each represented by a different proof.

\subsection{Syntax of \CdGL}
We introduce the language of \CdGL. We introduce three classes of expressions $e$: terms $f,g,$ games $\alpha,\beta,$  and formulas $\phi, \psi.$

The valuation of a term is a real number understood constructively a l\`a Bishop~\cite{bishop1967foundations,bridges2007techniques}:
effectivity requires that all \emph{functions on real numbers} are computable, yet effectivity \emph{does not} require that variables range over \emph{only} computable reals.
Bishop-style real analysis preserves many classical intuitions (e.g., uncountability) about $\reals$ while ensuring effectivity.

The simplest terms are \emph{game variables} $x, y \in \allvars$ where $\allvars$ is the set of variable identifiers.
The game variables, which are mutable, contain the state of the game, which is globally scoped.
For every base game variable $x$ there is a primed counterpart $\D{x}$ whose purpose is to track the time derivative of $x$ within an ODE.
Real-valued terms $f,g$ are simply (Type-2~\cite{DBLP:series/txtcs/Weihrauch00}) effective functions, usually from states to reals.
Type-2 effectivity means $f$ must be computable when values are represented as streams of bits.
It is occasionally useful for $f$ to return a tuple of reals, which are computable when every component is computable.
\begin{definition}[Terms]
A \emph{term} $f, g$ is any computable function over the game state.
The following constructs appear in our example:
\[f,g ~\bebecomes~  \cdots \alternative c \alternative x \alternative f + g \alternative f \cdot g \alternative \ediv{f}{g} \alternative \min(f,g) \alternative \max(f,g) \alternative \der{f}\]
where $c \in \mathbb{R}$ is a real literal, $x$ a game variable, $f + g$ a sum, $f \cdot g$ a product, and $\ediv{f}{g}$ is the quotient in real division of $f$ by $g$. Divisors $g$ are assumed to be nonzero.
Minimum and maximum of terms $f$ and $g$ are written $\min(f,g)$ and $\max(f,g)$.
Any differentiable term $f$ has a definable~(\rref{sec:cdgl-semantics}) spatial differential term $\der{f},$
which agrees with the time derivative within an ODE.
\label{def:terms}
 \end{definition}
Because \CdGL is constructive, Angel strategies must make their choices computably.
Until his turn, Demon just observes Angel's choices, and does not care whether Angel made them computably.
We informally discuss how a game is played here, then give full winning conditions in \rref{sec:semantics}.
In \textcolor{red}{red} are the ODE and dual games, which respectively distinguish hybrid games from discrete games and games from systems.
\begin{definition}[Games]
The set of \emph{games} $\alpha,\beta$ is defined recursively as such:
\[\alpha,\beta ~\bebecomes~ \ptest{\phi} \alternative \humod{x}{f} \alternative \prandom{x} \alternative {\textcolor{red}{\pevolvein{\D{x}=f}{\ivr}}} \alternative  \pchoice{\alpha}{\beta} \alternative \alpha;\beta \alternative \prepeat{\alpha} \alternative {\textcolor{red}{\pdual{\alpha}}}\]
\end{definition}
The \emph{test game} $\ptest{\phi},$ is a no-op if Angel proves $\phi,$ else Demon wins by default since Angel ``broke the rules''.
A deterministic assignment $\humod{x}{f}$ updates game variable $x$ to the value of term $f$.
Nondeterministic assignments $\prandom{x}$ ask Angel to compute the new value of $x : \reals$.
The ODE game $\pevolvein{\D{x}=f}{\ivr}$ evolves ODE $\D{x}=f$ for duration $d \geq 0$ chosen by Angel such that Angel proves the domain constraint formula $\ivr$ is true throughout.
We require that term $f$ is locally Lipschitz-continuous on domain $\ivr$, so solutions are unique.
ODEs are explicit-form, so no primed variable $\D{y}$ for $y \in \allvars$ is mentioned in $f$ or $\ivr$.
Systems of ODEs are supported, we present single equations for readability.
In the choice game $\alpha \cup \beta,$ Angel chooses whether to play game $\alpha$ or game $\beta$.
In the sequential composition game $\alpha;\beta$, game $\alpha$ is played first, then $\beta$ from the resulting state.
In the repetition game $\prepeat{\alpha},$ Angel chooses after each repetition of $\alpha$ whether to continue playing, but must not repeat $\alpha$ infinitely.
The exact number of repetitions is not known in advance, because it may depend on Demon's reactions.
In the dual game $\pdual{\alpha},$ Angel takes the Demon role and vice-versa while playing $\alpha$.
Demon strategies ``wait'' until a dual game $\pdual{\alpha}$ is encountered, then contain an Angelic strategy for $\alpha$.
We parenthesize games with braces $\{ \alpha \}$ when necessary.

\begin{definition}[\CdGL Formulas]
The \CdGL \emph{formulas} $\phi$ (also $\psi, \rho$) are:
\[ \phi ~\bebecomes~ \ddiamond{\alpha}{\phi} \alternative \dbox{\alpha}{\phi} \alternative f \sim g\]
\label{def:cgl-formula}
\end{definition}
Above, $f \sim g$ is a comparison formula for $\sim\mathop{\in}\{\leq, <, =, \neq, >, \geq\}$.
The defining formulas of \CdGL (and \GL) are the modalities $\ddiamond{\alpha}{\phi}$ and $\dbox{\alpha}{\phi}$.
These mean that Angel or Demon respectively have a \emph{constructive} strategy to play $\alpha$ and prove postcondition $\phi$.
We do not develop modalities for existence of classical strategies because those cannot be synthesized to executable code.

Standard connectives are defined from games and comparisons.
Verum ($\btt$) is defined $1 > 0$ and falsum ($\bff$) is $0 > 1$.
Conjunction $\phi \land \psi$ is defined $\ddiamond{\ptest{\phi}}{\psi},$
disjunction $\phi \lor  \psi$ is defined $\ddiamond{\ptest{\phi} \cup \ptest{\psi}}{\btt},$
and implication $\phi \limply \psi$ is defined $\dbox{\ptest{\phi}}{\psi}$.
Real quantifiers $\lforall{x}{\phi}$ and $\lexists{x}{\phi}$ are defined $\dbox{\prandom{x}}{\phi}$ and
 $\ddiamond{\prandom{x}}{\phi},$ respectively.
As usual, equivalence $\phi \lequiv \psi$ reduces to $(\phi \limply \psi) \land (\psi \limply \phi),$
negation $\neg \phi$ is defined $\phi \limply \bff$, and inequality is defined by $f \neq g \equiv \neg(f = g)$.
Semantics and proof rules are needed only for core constructs, but we use derived constructs when they improve readability.
Keep these definitions in mind, because the semantics and rules for some game connectives mirror first-order connectives.

For convenience, we also write derived operators where Demon is given control of a single choice before returning control to Angel.
The \emph{Demonic choice} $\dchoice{\alpha}{\beta},$ defined $\pdual{\{{\pchoice{\pdual{\alpha}}{\pdual{\beta}}}\}},$ says Demon chooses which branch to take, but Angel controls the subgames.
\emph{Demonic repetition} $\drepeat{\alpha}$ is defined likewise by $\{\{\alpha\pdual{{}}\} \prepeat{{}}\} \pdual{{}}$.

We write $\eren{\phi}{x}{y}$ (likewise for $\alpha$ and $f$) for the \emph{renaming} of $x$ for $y$ and vice versa in formula $\phi$,
and write $\tsub{\phi}{x}{f}$ for the result of \emph{substitution} of term $f$ for game variable $x$ in $\phi$,
if the substitution is admissible (\rref{def:lem-admit}).

\subsection{Example Game}
\label{sec:example-game}
We give an example game and theorem statements, \textbf{proven in \rref{app:example-proofs}}.
Automotive systems are a major class of CPS, so we consider simple time-triggered 1-dimensional driving with adversarial timing.
For maximum time $T$ between control cycles, we let Demon choose any duration in $[T/2,T]$.
This forces Angel's controller to be robust to realistic timing constraints, yet prohibits Demon from pathological ``Zeno'' behaviors.

We write $x$ for the position of the car, $v$ for the velocity, $a$ for the current acceleration, $A > 0$ for the maximum positive acceleration, and $B > 0$ for the maximum braking rate.
We assume $x=v=0$ initially to simplify arithmetic.
In time-triggered control, the controller runs at least once every $T > 0$ time units.
Time and physics are continuous, $T$ simply says how often the controller runs.
Local clock $t$ marks the current time within the current timestep, then resets at each step.
The control game ($\ctrl$) says Angel can pick any acceleration $a$ that is physically achievable ($-B \leq a \leq A$).
The clock $t$ is then reinitialized to $0$.
The plant game ($\plant$) says Demon can evolve physics for duration $t \in [T/2,T]$ such that $v \geq 0$ throughout, then returns control to Angel.
The lower bound on $t$ rules out Zeno strategies where Demon ``cheats'' by exponentially decreasing durations to effectively stop time.
The limit $t \geq T/2$ is chosen for simplicity.

Typical theorems in \DL's and \GL's are \emph{safety} and \emph{liveness}: are unsafe states always avoided and are goals eventually reached?
Safety and liveness of the 1D \emph{system} has been proven previously: safe driving ($\safety$) never goes past goal $g,$ while live driving eventually reaches $g$ ($\live$).
\begin{align*}
  \pre &\equiv T > 0 \land A > 0 \land B > 0 \land v=0 \land x=0 \qquad  \post\ \equiv (\dvar=x \land v=0)\\
  \ctrl &\equiv \prandom{a}; \ptest{-B \leq a \leq A}; \humod{t}{0} \\
  \plant &\equiv \pdual{\{\pevolvein{\D{t}=1, \D{x}=v,\D{v}=a}{t \leq T \land v \geq 0}\}} \\
  \safety &\equiv \pre \limply \ddiamond{\drepeat{(\ctrl;\plant)}}{x \leq \dvar} \qquad \live\ \equiv \pre \limply \ddiamond{\prepeat{(\ctrl;\plant)}}{(x \geq \dvar)}  
\end{align*}
Safety and liveness theorems, if designed carelessly, have trivial solutions.
It is safe to remain at $x=0$ and is live to maintain $a = A,$ but not vice-versa.
In contrast to \DL's, \GL's easily express the requirement that \emph{the same} strategy is both safe and live: we must remain safe \emph{while} reaching the goal.
This specification is called \emph{reach-avoid}, which we use because it is immune to trivial solutions.
We state and prove a new reach-avoid result for 1D driving.
\begin{example}[Reach-avoid]
The following is provable in \dGL and \CdGL:
\[\reachavoid \equiv \pre \limply \ddiamond{\prepeat{\{\ctrl;\plant;\ptest{x \leq \dvar};\pdual{\{\ptest{t > T/2}\}}\}}}{\post}\]
\label{ex:reach-avoid}
\end{example}
Angel \emph{reaches} $v=0 \land \dvar=x$ while safely \emph{avoiding} states where $x \leq \dvar$ does not hold.
Angel is safe at \emph{every} iteration for \emph{every} time $t \in [0,T]$, thus safe \emph{throughout} the game.
The test $t \in [T/2,T]$ appears second, allowing Demon to win if Angel violates safety during $t < T/2$.

\begin{wrapfigure}{r}{2.5in}
  \vspace*{-2.5\baselineskip}
  \includegraphics[width=2.5in]{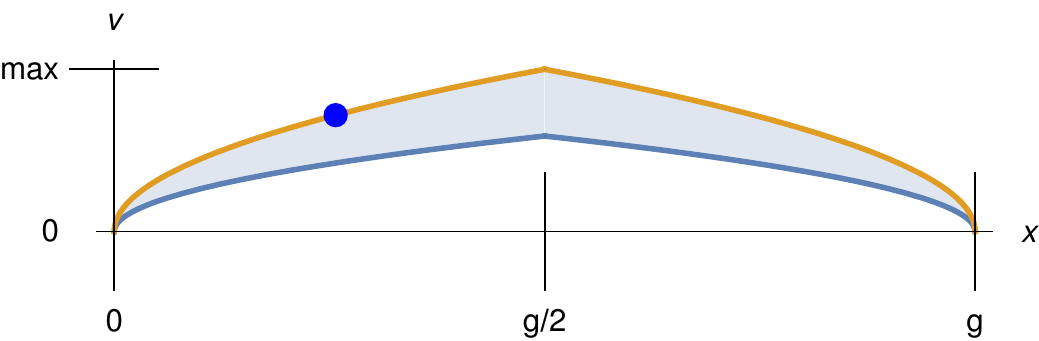}
\caption{Safe driving envelope}
  \vspace*{-1.5\baselineskip}
  \label{fig:simple-envelope}
\end{wrapfigure}
1D driving is well-studied for classical systems, but the constructive reach-avoid proof (\rref{app:example-proofs}) is subtle.
The proof constructs a envelope of safe upper and live lower bounds on velocity as a function of position \rref{fig:simple-envelope}.
The blue point indicates where Angel must begin to brake to ensure time-triggered safety.
It is surprising that Angel can achieve postcondition $\dvar=x \land v=0$, given that trichotomy ($f < g \lor f = g \lor f > g$) is constructively invalid.
The key (\rref{app:example-proofs}) is  comparison \emph{terms} $\min(f,g)$ and $\max(f,g)$ \emph{are} exact under Type 2 effectivity where bits of $\min$ and $\max$ may be computed lazily.
Our exact result encourages us that constructivity is not overly burdensome in practice.
When decidable comparisons ($f < g + \delta \lor f > g$) \emph{are} needed, the alternative is a weaker guarantee $x \in [\dvar-\veps,\dvar]$ for parameter $\veps > 0$.
This relaxation is often enough to make the theorem provable, and reflects the fact that real agents only expect to reach their goal within finite precision.

\section{Type-theoretic Semantics}
\label{sec:semantics}
In this section, we define the semantics of games and game formulas in type theory.
We start with assumptions on the underlying type theory.

\subsection{Type Theory Assumptions}
We assume a  Calculus of Inductive and Coinductive Constructions (CIC)-like type theory~\cite{DBLP:journals/iandc/CoquandH88,DBLP:conf/colog/CoquandP88,COQ} with polymorphism and dependency.
We assume first-class anonymous constructors for (indexed~\cite{DBLP:journals/fac/Dybjer94}) inductive and coinductive types.
We write $\tau$ for type families and $\kappa$ for kinds (those type families inhabited by other type families).
Inductive type families are written $\lindty{t:\kappa}{\tau},$ which denotes the \emph{smallest} solution \texttt{ty} of kind $\kappa$ to the fixed-point equation $\texttt{ty} = \tsub{\tau}{t}{\texttt{ty}}.$
Coinductive type families are written $\lcoty{t:\kappa}{\tau},$ which denotes the \emph{largest} solution \texttt{ty} of kind $\kappa$ to the fixed-point equation $\texttt{ty} = \tsub{\tau}{t}{\texttt{ty}}.$
Per Knaster-Tarski~\cite[Thm.\ 1.12]{DBLP:harel2000}, the type-expression $\tau$ must be monotone in $t$ to ensure that smallest and largest solutions exist.
We allow arbitrary proofs that $\tau$ is monotone; a major reason we did not mechanize this work is that prominent proof assistants such as Coq reject definitions where monotonicity requires nontrivial proof.

We use a single predicative universe which we write $\alltype$ and Coq writes \texttt{Type\ 0}.
Predicativity is an important assumption because our semantic definition is a large elimination, a feature known to interact dangerously with impredicativity.
We write $\pity{x}{\tau_1}{\tau_2}$ for a dependent function type with argument named $x$ of type $\tau_1$ and where return type $\tau_2$ may mention $x$.
We write $\sity{x}{\tau_1}{\tau_2}$ for a dependent pair type with left component named $x$ of type $\tau_1$ and  right component of type $\tau_2,$ possibly mentioning $x$.
These specialize to the simple types $\sfun{\tau_1}{\tau_2}$ and $\sprod{\tau_1}{\tau_2}$ respectively when $x$ is not mentioned in $\tau_2$.
Lambdas $(\lambda x:\tau.\, M)$ inhabit function types.
Pairs  $(M,N)$ inhabit dependent pair types.
Application is $M\ N$.
Let-binding unpacks pairs and $\sprojL{M}$ and $\sprojR{M}$ are left and right projection.
We write $\tau_1 + \tau_2$ for disjoint unions inhabited by $\ell \cdot M$ and $r \cdot M,$ and write $\scase{A}\slbranch{\pvx}{B}~\srbranch{\pvy}{C}$ for case analysis.

We assume a real number type $\xty$ and a Euclidean state type \sty.
The positive real numbers are written $\reals^+$, nonnegative reals $\reals_{\geq0}$.
We assume scalar and vector sums, products, inverses, and units.
A state $s:\sty$ assigns values to every variable $x \in \allvars$ and supports operations $\lget{s}{x}$ and $\lset{s}{x}{v}$ which respectively retrieve the value of $x$ or update it to $v$.
The usual axioms of setters and getters~\cite{foster2010bidirectional} are satisfied.

\subsection{Semantics of \CdGL}
\label{sec:semantics}
Terms $f,g$ are interpreted into type theory as functions of type $\sfun{\sty}{\xty}$.
We will need differential terms $\der{f},$ a definable term construct when $f$ is differentiable.
Not every term $f$ need be differentiable, so we give a \emph{virtual} definition, defining when $\der{f}$ is equal to some term $g$.
If $\der{f}$ does not exist, $\der{f} = g$ is not provable.
We define the (total) differential as the dot product ($\vdot$) of the gradient (variable name: $\nabla$) with $\D{s},$ which is the vector of values $s\ \D{x}$ assigned to primed variables.
To show that $\nabla$ is the gradient, we define the gradient as a limit, which we express in $(\varepsilon,\delta)$ style.
In this definition, $f$ and $g$ are scalar-valued, and the minus symbol is used for both scalar and vector difference.
\begin{align*}
  (\der{f}\ s = g\ s) &\equiv
    \lexists[{\xty^{\abs{\D{s}}}}]{\nabla}{}
    (g\ s = \nabla \vdot \D{s}) \kwprod
    \pity{\varepsilon}{\reals^+}{\sity{\delta}{\reals^+}{\pity{r}{\sty}{}}}\\
    &\sfun{(\norm{r - s} < \delta )}
        {\abs{f\ r - f\ s - \nabla \vdot (r - s)} \leq \varepsilon \norm{r - s}}
\end{align*}
For practical proofs, a library of standard rules for automatic, syntactic differentiation of common arithmetic operations can be proven.

We model a formula $\phi$ as a predicate over states, i.e., a type family $\ftrans{\phi} : \sty \to \alltype$.
A predicate of kind $\sty \to \alltype$ is also understood as a \emph{region}, e.g., $\ftrans{\phi}$ is the region containing states where $\phi$ is provable.
We say the formula $\phi$ is valid if there exists a term $M$ such that $\cdot \vdash M : (\pity{s}{\sty}{\ftrans{\phi}\ s})$.
That is, a valid formula is provable in every state.
The witness may inspect the state, but must do so constructively.
The formula semantics are defined in terms of the Angelic and Demonic semantics of games, which determine how to win a game $\alpha$ whose postcondition is $\phi$.
We write $\atrans{\alpha} : (\sty \to \alltype) \to (\sty \to \alltype)$ for the Angelic semantics of $\alpha$ and $\dtrans{\alpha} : (\sty \to \alltype) \to (\sty \to \alltype)$ for its Demonic semantics.
Angel and Demon strategies for a game $\alpha$ with postcondition $P$ are inhabitants of $\atrans{\alpha}\ P$ and $\dtrans{\alpha}\ P,$ respectively. 

\begin{definition}[Formula semantics]
\begin{align*}
\ftrans{\dbox{\alpha}{\phi}}\ s     &= \dtrans{\alpha}\ \ftrans{\phi}\ s &
\ftrans{\ddiamond{\alpha}{\phi}}\ s &= \atrans{\alpha}\ \ftrans{\phi}\ s &
\ftrans{f \sim g}\ s  &= ((f\ s) \sim (g\ s))
\end{align*}
\end{definition}
Modality $\ddiamond{\alpha}{\phi}$ is provable in $s$ when $\atrans{\alpha}\ \ftrans{\phi}\ s$ is inhabited so Angel has an $\alpha$ strategy from $s$ to reach region $\ftrans{\phi}$ on which $\phi$ is provable.
Modality $\dbox{\alpha}{\phi}$ is provable in $s$ when  $\dtrans{\alpha}\ \ftrans{\phi}\ s$ is inhabited so
 Demon has an $\alpha$ strategy from  $s$ to reach region $\ftrans{\phi}$ on which $\phi$ is provable.
For $\sim~\in~\{\leq,<,=,>,\geq,\neq\},$ the values of $f$ and $g$ are compared at state $s$ in $f \sim g$.
The game and formula semantics are simultaneously inductive.
In each case, the connectives which define $\atrans{\alpha}$ and $\dtrans{\alpha}$ are duals, because $\dbox{\alpha}{\phi}$ and $\ddiamond{\alpha}{\phi}$ are dual.
Below, $P$ refers to the postcondition of the game and $s$ to the initial state.

\begin{definition}[Angel semantics]~\\
We define $\atrans{\alpha} : (\allstate \to \alltype) \to (\allstate \to \alltype)$ inductively (by a large elimination) on $\alpha$:
{\small{\begin{minipage}{0.3\textwidth}
  \begin{align*}
\atrans{\ptest{\psi}}\ P\ s     &= \ftrans{\psi}\ s \mathop{\kwprod} P\ s\\
\atrans{\humod{x}{f}}\ P\ s     &= P\ (\lset{s}{x}{(f\ s)})\\
\atrans{\prandom{x}}\ P\ s      &= \sity{v}{\tau}{\,P\ (\lset{s}{x}{v})}\\
\atrans{\alpha\cup\beta}\ P\ s  &= \atrans{\alpha}\ P\ s\, \mathop{\kwsum}\, \atrans{\beta}\ P\ s\\
\atrans{\alpha;\beta}\ P\ s     &= \atrans{\alpha}\ (\atrans{\beta}\ P)\ s
\end{align*}
\end{minipage}}
{\begin{minipage}{0.5\textwidth}
  \begin{align*}
\atrans{\pdual{\alpha}}\ P\ s   &= \dtrans{\alpha}\ P\ s\\
\atrans{\pevolvein{\D{x}=f}{\ivr}}&\ P\ s =  \sity{d}{{\reals_{\geq0}}}{\sity{sol}{[0,d]\to\xty}{}} \\
\hskip-1in &(\solves{sol}{s}{d}{\D{x}=f})\\
\hspace{-1in} \mathop{\kwprod}\, &(\pity{t}{{[0,d]}}{\ftrans{\ivr}\ {(\lset{s}{x}{(sol\ t)})}})\\
\hspace{-1in} \mathop{\kwprod}\, &P\ (\lset{s}{(x,\D{x})}{}\\
              &\ \ \ \ \ (sol\ d, f\ (\lset{s}{x}{(sol\ d)})))
  \end{align*}
\end{minipage}}\\
\[\atrans{\prepeat{\alpha}}\ P\ s = (\lindty{\tau'\mathrel{:}(\sty \to \alltype)}{\lambda{t:\sty}.\,{
(P\ t \to \tau'\ t)
\,\mathop{\kwsum}\,
(\atrans{\alpha}\ \tau'\ t \to \tau'\ t)
}})\ s\]}
\end{definition}
Angel wins $\ddiamond{\ptest{\psi}}{P}$ by proving both $\psi$ and $P$ at $s$.
Angel wins the deterministic assignment $\humod{x}{f}$ by performing the assignment, then proving $P$.
Angel wins nondeterministic assignment $\prandom{x}$ by constructively choosing a value $v$ to assign, then proving $P$.
Angel wins $\alpha \cup \beta$ by choosing between playing $\alpha$ or $\beta,$ then winning that game.
Angel wins $\alpha;\beta$ if he wins $\alpha$ with the postcondition of winning $\beta$.
Angel wins $\pdual{\alpha}$ if he wins $\alpha$ in the Demon role.
Angel wins ODE game $\pevolvein{\D{x}=f}{\ivr}$ by choosing some solution $sol$ of some duration $\durvar$ which satisifies the ODE and domain constraint throughout and the postcondition $\phi$ at time $\durvar$.
While top-level postconditions rarely mention $\D{x},$ intermediate invariant steps do,
thus $x$ and $\D{x}$ both appear updated in the postcondition.
The construct $\solves{sol}{s}{\durvar}{\D{x}=f},$ saying $sol$ solves $\D{x}=f$ from state $s$ for time $\durvar$, is defined:
\[\small{(\solves{sol}{s}{\durvar}{\D{x}=f}) \equiv
\left(
\sprod{(\lget{s}{x} = sol\ 0)}
{\pity{r}{[0,\durvar]}{
   (\der{sol}\ r = f\ (\lset{s}{x}{(sol\ r)}))
}}\right)}\]
Angel strategies for $\prepeat{\alpha}$ are inductively defined: either choose to stop the loop and prove $P$ now, else play a round of $\alpha$ before repeating inductively.
By Knaster-Tarski~\cite[Thm.\ 1.12]{DBLP:harel2000}, this least fixed point exists because the interpretation of a game is monotone in its postcondition (\rref{lem:monotone}).

\begin{lemma}[Monotonicity]
  Let $P, Q : \sty \to \alltype$.
  If $\pity{s}{\sty}(P\ s \to Q\ s)$ is inhabited, then so are
      $(\pity{s}{\sty}{\atrans{\alpha}\ P\ s \to \atrans{\alpha}\ Q\ s})$
  and $(\pity{s}{\sty}{\dtrans{\alpha}\ P\ s \to \dtrans{\alpha}\ Q\ s})$
\label{lem:monotone}
\end{lemma}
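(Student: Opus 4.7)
The plan is to prove both conclusions simultaneously by structural induction on $\alpha$, leveraging $\atrans{\pdual{\alpha}}\ P\ s = \dtrans{\alpha}\ P\ s$ so that the Angel case for $\pdual{\alpha}$ reduces to the Demon case for $\alpha$ and conversely. Fix the coercion $M : \pity{s}{\sty}{P\ s \to Q\ s}$; at each game we must construct
\begin{align*}
  M_\diamond &: \pity{s}{\sty}{\atrans{\alpha}\ P\ s \to \atrans{\alpha}\ Q\ s},\\
  M_\square &: \pity{s}{\sty}{\dtrans{\alpha}\ P\ s \to \dtrans{\alpha}\ Q\ s}.
\end{align*}

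For the atomic cases the transformation just unpacks each strategy to its buried occurrence of $P$ and applies $M$ there. For $\humod{x}{f}$ a strategy is a witness of $P\ (\lset{s}{x}{(f\ s)})$, to which $M$ at that updated state applies directly; for $\ptest{\psi}$ a pair $(p,q)$ becomes $(p, M\ s\ q)$; for $\prandom{x}$ a dependent pair $(v,q)$ becomes $(v, M\ (\lset{s}{x}{v})\ q)$. The ODE game is the longest in prose but identical in spirit: the strategy tuple of duration, solution, solution certificate, domain proof, and a witness of $P$ at the end state is rebuilt with $M$ applied to the last component. These transformations are literally definable because each Angel connective builds its strategy type from $P$ by strictly positive constructions. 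The compositional cases invoke the induction hypothesis: $\alpha \cup \beta$ case-splits and invokes the IH on the chosen branch; $\alpha;\beta$ feeds the IH on $\beta$ as the coercion into the IH on $\alpha$, yielding the required map $\atrans{\alpha}\ (\atrans{\beta}\ P) \to \atrans{\alpha}\ (\atrans{\beta}\ Q)$; and $\pdual{\alpha}$ reduces to the ``other half'' of the IH on $\alpha$. The Demon side is symmetric throughout, using products where Angel uses sums and $\Pi$-types where Angel uses $\Sigma$-types.

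The delicate case is repetition. Since $\atrans{\prepeat{\alpha}}\ P$ is the least fixed point of the functor $\tau'_P \mapsto \lambda t.(P\ t \to \tau'_P\ t) \kwsum (\atrans{\alpha}\ \tau'_P\ t \to \tau'_P\ t)$, I would proceed by the induction principle of this $\mu$-type, with ``produce an inhabitant of $\tau'_Q\ t$'' as the motive. In the stop branch (a leaf containing $p : P\ t$), the $Q$-side stop constructor applied to $M\ t\ p$ yields the required leaf at $Q$. In the iterate branch (a node containing $h : \atrans{\alpha}\ \tau'_P\ t$), the inductive coercion $\tau'_P \to \tau'_Q$ supplied by the motive, lifted through the IH on $\alpha$, converts $h$ into an inhabitant of $\atrans{\alpha}\ \tau'_Q\ t$, to which the $Q$-side iterate constructor is applied. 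Demonic repetition is dual, defined as a greatest fixed point, so I would discharge it by coinduction: exhibit the relation ``the inner postcondition strategies are related by the coercion'' as a post-fixed point of the underlying functor, using the IH on the Demon semantics of $\alpha$ exactly where induction was used above.

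The main obstacle, and the one potential circularity to verify, is that the $\mu$- and $\nu$-types above exist by Knaster-Tarski only when their defining functors are monotone in $\tau'$, and this monotonicity is itself an instance of the very lemma being proved. The resolution is that the repetition's functor requires game-monotonicity only for the strictly smaller game $\alpha$, which is supplied by the outer structural IH; thus the induction is well-founded and the semantics of $\prepeat{\alpha}$ (and its dual) is well-defined precisely because this lemma already holds for the immediate subgame.
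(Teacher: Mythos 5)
Your proposal is correct and follows essentially the same route as the paper's proof: a simultaneous structural induction over Angel and Demon semantics, applying the coercion at the buried occurrence of $P$ in the atomic cases, composing induction hypotheses for $\alpha;\beta$, $\alpha\cup\beta$, and $\pdual{\alpha}$, and handling $\prepeat{\alpha}$ by induction (resp.\ coinduction) on membership in the least (resp.\ greatest) fixed point. Your closing remark about the apparent circularity is also consonant with the paper's setup, where existence of the fixed points via Knaster--Tarski only needs monotonicity for the strictly smaller subgame $\alpha$.
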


\begin{definition}[Demon semantics]~\\
We define $\dtrans{\alpha} : (\allstate \to \alltype) \to (\allstate \to \alltype)$ inductively (by a large elimination) on $\alpha$:
{\small{\begin{minipage}{0.4\textwidth}
\begin{align*}
\dtrans{\ptest{\psi}}\ P\ s     &= \ftrans{\psi}\ s \limply P\ s\\
\dtrans{\humod{x}{f}}\ P\ s     &= P\ {(\lset{s}{x}{(f\ s)})}\\
\dtrans{\prandom{x}}\ P\ s      &= \pity{v}{\tau}{P\ {(\lset{s}{x}{v})}}\\
\dtrans{\alpha\cup\beta}\ P\ s  &= \dtrans{\alpha}\ P\ s\, \mathop{\kwprod}\ \dtrans{\beta}\ P\ s\\
\dtrans{\alpha;\beta}\ P\ s     &= \dtrans{\alpha}\ (\dtrans{\beta}\ P)\ s
\end{align*}
\end{minipage}}
 {\begin{minipage}{0.5\textwidth}
\begin{align*}
\dtrans{\pdual{\alpha}}\ P\ s   &= \atrans{\alpha}\ P\ s\\
\dtrans{\pevolvein{\D{x}=f}{\ivr}}&\ P\ s =
  \pity{d}{\reals_{\geq0}}{\pity{sol}{[0,d]\to\xty}{}} \\
     &(\solves{sol}{s}{d}{\D{x}=f})\\
 \to &(\pity{t}{[0,d]}{\ftrans{\ivr}\ {(\lset{s}{x}{(sol\ t)})}})\\
 \to &P\ (\lset{s}{(x,\D{x})}{}\\
     &\ \ \ \ \  (sol\ d, f\ (\lset{s}{x}{(sol\ d)})))
\end{align*}
 \end{minipage}}\\
\[\dtrans{\prepeat{\alpha}}\ P\ s = (\lcoty{\tau'\mathrel{:}(\sty \to \alltype)}{\lambda{t:\sty}.\,{
(\tau'\ t \to \dtrans{\alpha}\ \tau'\ t)
\,\mathop{\kwprod}\,
(\tau'\ t \to P\ t)
}})\ s\]
}
\end{definition}
Demon wins $\dbox{\ptest{\psi}}{P}$ by proving $P$ under assumption $\psi$, which Angel must provide (\rref{sec:synthesis}).
Demon's deterministic assignment is identical to Angel's.
Demon wins $\prandom{x}$ by proving $\psi$ for \emph{every} choice of $x$.
Demon wins $\alpha \cup \beta$ with a pair of winning strategies.
Demon wins $\alpha;\beta$ by winning $\alpha$ with a postcondition of winning $\beta$.
Demon wins $\pdual{\alpha}$ if he can win $\alpha$ after switching roles with Angel.
Demon wins $\pevolvein{\D{x}=f}{\ivr}$ if for an arbitrary duration and arbitrary solution which satisfy the domain constraint, he can prove the postcondition.
Demon wins $\dbox{\prepeat{\alpha}}{P}$ if he can prove $P$ no matter how many times Angel makes him play $\alpha$.
Demon repetition strategies are coinductive using some invariant $\tau'$.
When Angel decides to stop the loop, Demon responds by proving $P$ from $\tau'$.
Whenever Angel chooses to continue, Demon proves that $\tau'$ is preserved.
Greatest fixed points exist by Knaster-Tarski~\cite[Thm.\ 1.12]{DBLP:harel2000} and \rref{lem:monotone}.

It is worth comparing the Angelic and Demonic semantics of $\prandom{x}$.
An Angel strategy says how to compute $x$.
A Demon strategy simply accepts $x \in \reals$ as its input, even uncomputable numbers.
This is because Angel strategies supply a computable real while Demon act with computable outputs given real \emph{inputs}.
In general, each strategy is constructive but permits its opponent to play classically.
In the cyber-physical setting, the opponent is indeed rarely a computer.

\section{Proof Calculus}
\label{sec:proof-calculus}
To enable direct syntactic proof, we give a natural deduction system for \CdGL.
We write $\G = \psi_1, \ldots, \psi_n$ for a context of formulas and $\proves{\G}{}{\phi}$ for the natural-deduction sequent with conclusion $\phi$ and context $\G$.
We begin with rules shared by \CGL~\cite{esop20} and \CdGL, then present the new rules for ODEs.
We write $\eren{\G}{x}{y}$ for the renaming of game variable $x$ to $y$ and vice versa in context $\G$.
Likewise $\tsub{\G}{x}{f}$ is the substitution of term $f$ for game variable $x$.
To avoid repetition, we write $\dmodality{\alpha}{\phi}$ to indicate that the same rule applies for $\ddiamond{\alpha}{\phi}$ and $\dbox{\alpha}{\phi}$.
These rules write $\pmodality{\alpha}{\phi}$ for the dual of $\dmodality{\alpha}{\phi}$.
We write $\freevars{e}$ and $\boundvars{\alpha}$ for the free variables of expression $e$ and bound variables of game $\alpha$ respectively, i.e.,  variables which might influence the meaning of an expression or be modified during game execution.

\begin{figure}
\centering
{\small\begin{calculuscollections}{0.5\columnwidth}
\begin{calculus}
\cinferenceRule[bchoiceI|{$[\cup]${I}}]{}
{\linferenceRule[formula]
  {\proves{\G}{M}{\dbox{\alpha}{\phi}} & \proves{\G}{N}{\dbox{\beta}{\phi}}}
  {\proves{\G}{\ebcons{M}{N}}{\dbox{\alpha\cup\beta}{\phi}}}
}{}
\cinferenceRule[dchoiceIL|{$\langle\cup\rangle${I1}}]{}
{\linferenceRule[formula]
  {\proves{\G}{M}{\ddiamond{\alpha}{\phi}}}
  {\proves{\G}{\edinjL{M}}{\ddiamond{\alpha\cup\beta}{\phi}}}
}{}
\cinferenceRule[dtestI|{$\langle?\rangle$}{I}]{}
{\linferenceRule[formula]
  {\proves{\G}{M}{\phi} & \proves{\G}{N}{\psi}}
  {\proves{\G}{\edcons{M}{N}}{\ddiamond{\ptest{\phi}}{\psi}}}
}{}
\cinferenceRule[btestI|{$[?]$}{I}]{}
{\linferenceRule[formula]
  {\proves{\G,\phi}{M}{\psi}}
  {\proves{\G}{(\eplam{\phi}{M})}{\dbox{\ptest{\phi}}{\psi}}}
}{}
\end{calculus}
\end{calculuscollections}\hfill
\begin{calculuscollections}{0.33\columnwidth}
\begin{calculus}
\cinferenceRule[bchoiceEL|{$[\cup]${E1}}]{}
{\linferenceRule[formula]
  {\proves{\G}{M}{\dbox{\alpha\cup\beta}{\phi}}}
  {\proves{\G}{\ebprojL{M}}{\dbox{\alpha}{\phi}}}
}{}
\cinferenceRule[dchoiceIR|{$\langle\cup\rangle${I2}}]{}
{\linferenceRule[formula]
  {\proves{\G}{M}{\ddiamond{\beta}{\phi}}}
  {\proves{\G}{\edinjR{M}}{\ddiamond{\alpha\cup\beta}{\phi}}}
}{}
\cinferenceRule[dtestEL|{$\langle?\rangle$}{E1}]{}
{\linferenceRule[formula]
  {\proves{\G}{M}{\ddiamond{\ptest{\phi}}{\psi}}}
  {\proves{\G}{\edprojL{M}}{\phi}}
}{}
\cinferenceRule[btestE|{$[?]$}{E}]{}
{\linferenceRule[formula]
  {\proves{\G}{M}{\dbox{\ptest{\phi}}{\psi}} & \proves{\G}{N}{\phi}}
  {\proves{\G}{(\eapp{M}{N})}{\psi}}
}{}
\end{calculus}
\end{calculuscollections}\hfill
\begin{calculuscollections}{0.33\columnwidth}
  \begin{calculus}
\cinferenceRule[bchoiceER|{$[\cup]${E2}}]{}
{\linferenceRule[formula]
  {\proves{\G}{M}{\dbox{\alpha\cup\beta}{\phi}}}
  {\proves{\G}{\ebprojR{M}}{\dbox{\beta}{\phi}}}
}{}
\cinferenceRule[hyp|{hyp}]{}
{\linferenceRule[formula]
  {}
  {\proves{\G,\phi}{\pvx}{\phi}}
}{}
\cinferenceRule[dtestER|{$\langle?\rangle$}{E2}]{}
{\linferenceRule[formula]
  {\proves{\G}{M}{\ddiamond{\ptest{\phi}}{\psi}}}
  {\proves{\G}{\edprojR{M}}{\psi}}
}{}
  \end{calculus}
\end{calculuscollections}

\begin{calculuscollections}{\columnwidth}
\begin{calculus}
\cinferenceRule[dchoiceE|{$\langle\cup\rangle${E}}]{}
{
\linferenceRule[formula]
{\proves{\G}{A}{\ddiamond{\alpha\cup\beta}{\phi}}
        &\proves{\G,\ddiamond{\alpha}{\phi}}{B}{\psi}
        &\proves{\G,\ddiamond{\beta}{\phi}}{C}{\psi}}
{\proves{\G}{\edcase{A}{B}{C}}{\psi}}
}{}
\end{calculus}
\end{calculuscollections}}
\caption{\CdGL proof calculus: Propositional game rules}
\label{fig:cgl-rules-prop}
\end{figure}
\rref{fig:cgl-rules-prop} gives the propositional game rules.
Rule \irref{btestE} is modus ponens and \irref{btestI} is implication introduction because $\dbox{\ptest{\phi}}{\psi}$ is implication.
Angelic choices are disjunctions introduced by \irref{dchoiceIL} and \irref{dchoiceIR} and case-analyzed by \irref{dchoiceE}.
Angelic tests and Demonic choices are conjuctions introduced by \irref{dtestI} and \irref{bchoiceI}, eliminated by \irref{dtestEL}, \irref{dtestER}, \irref{bchoiceEL}, and \irref{bchoiceER}.
Rule \irref{hyp} applies an assumption.

\begin{figure}
  \centering
\hfill%
{\small\begin{calculuscollections}{\columnwidth}
\begin{calculus}
\cinferenceRule[brandomI|{$[{:}{*}]${I}}]{}
{
\linferenceRule[formula]
{\proves{\tsub{\G}{x}{y}}{M}{\phi}}
{\proves{\G}{(\etlam{\reals}{M})}{\dbox{\prandom{x}}{\phi}}}
}{} 
\cinferenceRule[drandomI|{$\langle{:}*\rangle${I}}]{}
{
  \linferenceRule[formula]
    {\proves{\G}{}{\ddiamond{\humod{x}{f}}{\phi}}}
    {\proves{\G}{\etcons{f}{M}}{\ddiamond{\prandom{x}}{\phi}}}
}{} 
\cinferenceRule[seqI|{$\lstrike{;}\rstrike$}I]{}
{\linferenceRule[formula]
  {\proves{\G}{M}{\dmodality{\alpha}{\dmodality{\beta}{\phi}}}}
  {\proves{\G}{\eSeq{M}}{\dmodality{\alpha;\beta}{\phi}}}
}{}
\cinferenceRule[asgnI|{$\lstrike{:=}\rstrike$}I]{}
{\linferenceRule[formula]
{\proves{\tsub{\G}{x}{y},x=\tsub{f}{x}{y}}{M}{\phi}}
{\proves{\G}{\eAsgneq{y}{x}{\pvx}{M}}{\dmodality{\humod{x}{f}}{\phi}}}
}{} 
\end{calculus}\hfill
\begin{calculus}
\cinferenceRule[brandomE|{$[{:}{*}]${E}}]{}
{
      \linferenceRule[formula]
        {\proves{\G}{M}{\dbox{\prandom{x}}{\phi}}}
        {\proves{\G}{\eapp{M}{f}}{\tsub{\phi}{x}{f}}}
}{}
\cinferenceRule[drandomE|{$\langle{:}*\rangle${E}}]{}
{
\linferenceRule[formula]
{\proves{\G}{M}{\ddiamond{\prandom{x}}{\phi}} & \proves{\G}{N}{\lforall{x}{(\phi \limply \psi)}}}
{\proves{\G}{\eunpack{M}{N}}{\psi}}
}{$x \notin \freevars{\psi}$}
\cinferenceRule[mon|{M}]{}
{\linferenceRule[formula]
  {\proves{\G}{M}{\dmodality{\alpha}{\phi}} & \proves{\earen{\G}{\alpha},\phi}{N}{\psi}}
  {\proves{\G}{\emon{M}{N}{\pvx}}{\dmodality{\alpha}{\psi}}}
}{}
\cinferenceRule[dualI|{$\lstrike{}^d\rstrike$}I]{}
{\linferenceRule[formula]
  {\proves{\G}{M}{\pmodality{\alpha}{\phi}}}
  {\proves{\G}{\eSwap{M}}{\dmodality{\pdual{\alpha}}{\phi}}}
}{}
\end{calculus}
\end{calculuscollections}
}
\caption{\CdGL proof calculus: First-order games ($y$ fresh, $f$ computable, $\tsub{\phi}{x}{f}$ admissible)}
\label{fig:cgl-compos}
\end{figure}
\rref{fig:cgl-compos} covers assignment, choice, sequencing, duals, and monotonicity.
Repetition games can be folded and unfolded (\irref{bunroll},\irref{broll}).
Angelic games have injectors (\irref{dstop},\irref{dgo}) and cases analysis (\irref{drcase}).
Duality \irref{dualI} switches players by switching modalities.
Sequential games (\irref{seqI}) are decomposed as nested modalities.

Monotonicity \irref{mon} is \rref{lem:monotone} in rule form.
The second premiss writes $\earen{\G}{\alpha}$ to indicate that the bound variables of $\alpha$ must be freshly renamed in $\G$ for soundness.
Rule \irref{mon} is used for generalization because all \GL's are subnormal, lacking axiom K (modal modus ponens) and necessitation.
Common uses include concise right-to-left symbolic execution proofs and, in combination with \irref{seqI}, Hoare-style sequential composition reasoning.

Nondeterministic assignments quantify over real-valued game variables.
Assignments \irref{asgnI} remember the initial value of $x$ in fresh variable $y$ ($\tsub{\G}{x}{y}$) for sake of completeness, then provides an assumption that $x$ has been assigned to $f$.
Skolemization \irref{brandomI} bound-renames $x$ to $y$ in $\G$, written $\eren{\G}{x}{y}$.
Specialization \irref{brandomE} instantiates $x$ to a term $f$.
Existentials are introduced by giving a witness $f$ in \irref{drandomI}.
Herbrandization \irref{drandomE} unpacks existentials, soundness requires $x$ is not free in $\psi$.

\begin{figure}
{\small\cinferenceRule[drcase|{$\langle*\rangle${E}}]{}
{
\linferenceRule[formula]
{ \proves{\G}{A}{\ddiamond{\prepeat{\alpha}}{\phi}} 
& \proves{\G,\phi}{B}{\psi} 
& \proves{\G,\ddiamond{\alpha}{\ddiamond{\prepeat{\alpha}}{\phi}}}{C}{\psi}}
{\proves{\G}{\ercase{A}{B}{C}}{\psi}}
}{}\hfill\cinferenceRule[bunroll|{$[*]${E}}]{}
{\linferenceRule[formula]
  {\proves{\G}{M}{\dbox{\prepeat{\alpha}}{\phi}}}
  {\proves{\G}{\ebunroll{M}}{\phi \land \dbox{\alpha}{\dbox{\prepeat{\alpha}}{\phi}}}}
}{}}

\begin{calculuscollections}{\columnwidth}
\begin{calculus}
\cinferenceRule[dstop|{$\langle*\rangle$S}]{}
{
\linferenceRule[formula]
{\proves{\G}{M}{\phi}}
{\proves{\G}{\estop{M}}{\ddiamond{\prepeat{\alpha}}{\phi}}}
}{}
\cinferenceRule[broll|{$[*]${R}}]{}
{\linferenceRule[formula]
  {\proves{\G}{M}{\phi \land \dbox{\alpha}{\dbox{\prepeat{\alpha}}{\phi}}}}
  {\proves{\G}{\ebroll{M}}{\dbox{\prepeat{\alpha}}{\phi}}}
}{}
\end{calculus}
\begin{calculus}
\cinferenceRule[dgo|{$\langle*\rangle$G}]{}
{
\linferenceRule[formula]
{\proves{\G}{M}{\ddiamond{\alpha}{\ddiamond{\prepeat{\alpha}}{\phi}}}}
{\proves{\G}{\ego{M}}{\ddiamond{\prepeat{\alpha}}{\phi}}}
}{}
\cinferenceRule[bloopI|{loop}]{}
{\linferenceRule[formula]
  {\proves{\G}{A}{J} & \proves{J}{B}{\dbox{\alpha}{J}}
  &\proves{J}{C}{\phi}}
  {\proves{\G}{(\erep{A}{B,C}{\pvx})}{\dbox{\prepeat{\alpha}}{\phi}}}
}{}
\end{calculus}
\begin{calculus}
\cinferenceRule[dloopE|FP]{}
{
\linferenceRule[formula]
{\proves{\G}{A}{\ddiamond{\prepeat{\alpha}}{\phi}}
        &\proves{\phi}{B}{\psi} & \proves{\ddiamond{\alpha}{\psi}}{C}{\psi}}
{\proves{\G}{\efp{A}{B}{C}}{\psi}}
}{}
\end{calculus}

\cinferenceRule[dloopI|{$\langle*\rangle${I}}]{}
{\linferenceRule[formula]
{
\deduce
  {\proves{\conv,(\met \metgr \metz  \land \met_0 = \met)}{B}{\ddiamond{\alpha}{(\conv\land \met_0 \metgr \met)}}}
  {\proves{\G}{A}{\conv} & \proves{\conv, \metz \metgeq \met}{C}{\phi}}
}
{\proves{\G}{\efor{A}{B}{C}}{\ddiamond{\prepeat{\alpha}}{\phi}}}
}{}
\end{calculuscollections}

\caption{\CdGL proof calculus: loops ($\met_0$ fresh)}
\label{fig:cgl-loops}
\end{figure}
\rref{fig:cgl-loops} provides rules for repetitions.
In rule \irref{dloopI}, $\met$ indicates an arbitrary termination metric where $\metgr$ denotes an arbitrary (effectively) well-founded~\cite{DBLP:journals/aml/HofmannOS06} partial order with some zero element $\metz$.
$\met_0$ is a fresh variable which remembers $\met$.
Angel plays $\prepeat{\alpha}$ by repeating an $\alpha$ strategy which always decreases the termination metric.
Angel maintains a formula $\conv$ throughout, and stops once $0 \metgeq \met$.
The postcondition need only follow from termination condition $0 \metgeq \met$ and convergence formula $\conv$.
Simple real comparisons $x \geq y$ are not well-founded, but inflated comparisons like $x \geq y + 1$ are.
Well-founded metrics ensure convergence in finitely (but often unboundedly) many iterations.
In the simplest case, $\met$ is a real-valued term.
Generalizing $\met$ to tuples enables, e.g., lexicographic termination metrics.
For example, the metric in the proof of \rref{ex:reach-avoid} is the distance to the goal, which must decrease by some minimum amount each iteration.

Rule \irref{dloopE} says $\ddiamond{\prepeat{\alpha}}{\phi}$ is a least pre-fixed-point.
It works backwards: first show $\psi$ holds after $\prepeat{\alpha},$ then preserve $\psi$ when each iteration is unwound.
Rule \irref{bloopI} is the repetition invariant rule.
Demonic repetition is eliminated by \irref{bunroll}.

Like any first-order program logic, \CdGL proofs contain first-order reasoning at the leaves.
Decidability of constructive real arithmetic is an open problem~\cite{constructiveRealAlgebra}, so first-order facts are proven manually in practice.
Our semantics embed \CdGL into type theory;
we defer first-order arithmetic proving to the host theory.
Note that even effectively-well-founded $\metgeq$ need not have decidable guards ($\met \metleq \metz \lor \met \metgeq 0$) since exact comparisons are not computable~\cite{bishop1967foundations}.
We may not be able to distinguish $\met = \metz$ from very small positive values of $\met,$ leading to one unnecessary loop iteration, after which $\met$ is certainly $\metz$ and the loop terminates.
Comparison up to $\veps > 0$ is decidable~\cite{bridges2007techniques} ($f > g \lor (f < g + \veps)$).


\begin{figure}
\begin{calculuscollections}{\textwidth}
\begin{calculus}
\cinferenceRule[di|DI]{}
{\linferenceRule[formula]
  {\proves{\G}{}{\phi} & \proves{\G}{}{\lforall{x}{(\ivr \limply \dbox{\humod{\D{x}}{f}}{\der{\phi}})}}}
  {\proves{\G}{}{\dbox{\pevolvein{\D{x}=f}{\ivr}}{\phi}}}
}{}
\cinferenceRule[dc|DC]{}
{\linferenceRule[formula]
  {\proves{\G}{}{\dbox{\pevolvein{\D{x}=f}{\ivr}}{R}} & \proves{\G}{}{\dbox{\pevolvein{\D{x}=f}{\ivr \land R}}{\phi}}}
  {\proves{\G}{}{\dbox{\pevolvein{\D{x}=f}{\ivr}}{\phi}}}
}{}
\end{calculus}
\begin{calculus}
    \cinferenceRule[dw|DW]{}
{\linferenceRule[formula]
  {\proves{\G}{}{\lforall{x}{\lforall{\D{x}}{(\ivr \limply \phi)}}}}
  {\proves{\G}{}{\dbox{\pevolvein{\D{x}=f}{\ivr}}{\phi}}}
}{}
\cinferenceRule[dg|DG]{}
{\linferenceRule[formula]
  {\proves{\G}{}{\lexists{y}{\dbox{\pevolvein{\D{x}=f,\D{y}=a(x)y + b(x)}{\ivr}}{\phi}}}}
  {\proves{\G}{}{\dbox{\pevolvein{\D{x}=f}{\ivr}}{\phi}}}
}{}
\end{calculus}
\end{calculuscollections}

\begin{calculus}
\cinferenceRule[dv|DV]{$t$ fresh, $x,\D{x},t,\D{t}$ not free in $d,\veps$}
{\linferenceRule[formula]
  {
  \deduce
    {\proves{\G}{}{\ddiamond{\humod{t}{0};\{\pevolvein{\D{t}=1,\D{x}=f}{\ivr}\}}{t \geq d}} 
\quad \proves{\G}{}{\dbox{\pevolve{\D{x}=f}}{(\der{h}-\der{g}) \geq \veps}}}
    {\proves{\ivr, h \geq g}{}{\phi} 
   & \proves{\G}{}{d > 0 \land \veps > 0 \land h-g \geq -d\veps}}
  }
  {\proves{\G}{}{\ddiamond{\pevolvein{\D{x}=f}{\ivr}}{\phi}}}
}{}
\cinferenceRule[bsolve|bsolve]{}
{\linferenceRule[formula]
  {\proves{\G}{}{\lforall[{\reals_{\geq0}}]{t}{((\lforall[{[0,t]}]{r}{\dbox{\humod{t}{r};\humod{x}{sln}}{\psi}})\limply\dbox{\humod{x}{sln};\humod{\D{x}}{f}}{\phi})}}}
  {\proves{\G}{}{\dbox{\pevolvein{\D{x}=f}{\ivr}}{\phi}}}
}{}
\cinferenceRule[dsolve|dsolve]{}
{\linferenceRule[formula]
  {\proves{\G}{}{\lexists[{\reals_{\geq0}}]{t}{((\lforall[{[0,t]}]{r}{\ddiamond{\humod{t}{r};\humod{x}{sln}}{\psi}})\land\ddiamond{\humod{x}{sln};\humod{\D{x}}{f}}{\phi})}}}
  {\proves{\G}{}{\ddiamond{\pevolvein{\D{x}=f}{\ivr}}{\phi}}}
}{}
\end{calculus}
\caption{\CdGL proof calculus: ODEs. In bsolve and dsolve, $sln$ solves $x'=f$ globally, $t$ and $r$ fresh, $x' \notin \freevars{\phi}$}
\label{fig:ode-rules}
\end{figure}
\rref{fig:ode-rules} gives the ODE rules, which are a constructive version of those from \dGL~\cite{DBLP:journals/tocl/Platzer15}.
For nilpotent ODEs such as the plant of \rref{ex:reach-avoid}, reasoning via solutions is possible.
Since \CdGL supports nonlinear ODEs which often do not have closed-form solutions, we provide invariant-based rules, which are complete~\cite{DBLP:conf/lics/PlatzerT18} for invariants of polynomial ODEs.
\emph{Differential induction} \irref{di}~\cite{DBLP:journals/logcom/Platzer10} says $\phi$ is an invariant of an ODE if it holds initially and if its \emph{differential formula}~\cite{DBLP:journals/logcom/Platzer10} $\der{\phi}$ holds throughout, for example $\der{f \geq g} \equiv (\der{f} \geq \der{g})$.
Soundness of \irref{di} requires differentiability, and $\der{\phi}$ is not provable when $\phi$ mentions nondifferentiable terms.
\emph{Differential cut} \irref{dc} proves $R$ invariant, then adds it to the domain constraint.
\emph{Differential weakening} \irref{dw} says that if $\phi$ follows from the domain constraint, it holds throughout the ODE.
\emph{Differential ghosts} \irref{dg} permit us to augment an ODE system with a fresh dimension $y,$ which enables~\cite{DBLP:conf/lics/PlatzerT18} proofs of otherwise unprovable properties.
We restrict the right-hand side of $y$ to be linear in $y$ and (uniformly) continuous in $x$ because soundness requires that ghosting $\D{y}$ does not change the duration of an ODE.
A linear right-hand side is guaranteed to be Lipschitz on the whole existence interval of equation $\D{x} = f,$ thus ensuring an unchanged duration by (constructive) Picard-Lindel\"{o}f~\cite{DBLP:conf/itp/MakarovS13}.
\emph{Differential variants}~\cite{DBLP:journals/logcom/Platzer10,DBLP:conf/fm/TanP19} \irref{dv} is an Angelic counterpart to \irref{di}.
The schema parameters $d$ and $\veps$ must not mention $x,\D{x},t,\D{t}$.
To show that $f$ eventually exceeds $g,$ first choose a duration $d$ and a sufficiently high minimum rate $\veps$ at which $h-g$ will change.
Prove that $h-g$ is decreases at rate at least $\veps$ and that the ODE has a solution of duration $d$ satisfying constraint $\ivr$.
Thus at time $d,$ both $h \geq g$ and its provable consequents hold.
Rules \irref{bsolve} and \irref{dsolve} assume as a side condition that $sln$ is the unique solution of $\D{x}=f$ on domain $\ivr$.
They are convenient for ODEs with simple solutions, while invariant reasoning supports complicated ODEs.

\section{Theory: Soundness}
\label{sec:soundness}

Following constructive counterparts of the classical soundness proofs for \dGL, we prove that the \CdGL proof calculus is sound: provable formulas are true in the CIC semantics.
We begin with standard lemmas.
\textbf{Full details in \rref{app:proofs}.}
\begin{lemma}[Uniform renaming]
  If $\proves{\G}{M}{\phi}$ then $\proves{\eren{\G}{x}{y}}{\eren{M}{x}{y}}{\eren{\phi}{x}{y}}$.
\label{lem:renaming}
\end{lemma}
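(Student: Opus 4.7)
The plan is to prove the lemma by structural induction on the derivation of $\proves{\G}{M}{\phi}$. Because uniform renaming $\eren{\cdot}{x}{y}$ is a bijective swap of $x$ with $y$ (and implicitly of $\D{x}$ with $\D{y}$, since the ODE calculus treats differential and base variables symmetrically), it distributes over all syntactic constructors, commutes with free and bound variable computation, and preserves admissibility of substitutions. For each inference rule, I apply the IH to each premiss to obtain a renamed derivation, then re-apply the same rule to the renamed components; the resulting conclusion matches $\eren{\phi}{x}{y}$ syntactically in nearly every case.

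The propositional rules of \rref{fig:cgl-rules-prop} and the first-order rules of \rref{fig:cgl-compos} go through directly. For rules with a freshness side condition such as \irref{asgnI}, \irref{brandomI}, and \irref{drandomE}, the freshly introduced bookkeeping variable can be chosen to avoid both $x$ and $y$ of the renaming, after which the IH applies and the side condition is preserved. For \irref{mon}, the bound-variable renaming $\earen{\G}{\alpha}$ in the second premiss commutes with $\eren{\cdot}{x}{y}$ up to alpha-equivalence by analogous fresh-choice reasoning. For the loop rules \irref{bloopI} and \irref{dloopI}, the invariant $\linv$ (respectively the convergence formula $\conv$ and metric $\met$) is simply renamed consistently across all three premisses, and the IH fires independently on each.

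The main obstacles are the substitution-heavy and ODE rules. For \irref{brandomE}, \irref{drandomI}, \irref{bsolve}, and \irref{dsolve}, I need the renaming/substitution compatibility statement $\eren{(\tsub{\phi}{z}{f})}{x}{y} = \tsub{(\eren{\phi}{x}{y})}{\eren{z}{x}{y}}{\eren{f}{x}{y}}$, established as a separate structural induction on $\phi$ using admissibility preservation. For the ODE rules \irref{di}, \irref{dc}, \irref{dw}, \irref{dg}, and \irref{dv}, I need that renaming commutes with the differential operator $\der{\cdot}$, which is immediate once $\D{x}$ and $\D{y}$ are swapped in lockstep with $x$ and $y$; the \irref{dg} case additionally requires choosing the ghost dimension fresh from $\{x,y,\D{x},\D{y}\}$ to preserve its side condition. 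I expect these auxiliary compatibility lemmas—particularly the interaction of renaming with substitution and with differentials—to be the main technical overhead, while the induction on derivations itself is essentially bookkeeping.
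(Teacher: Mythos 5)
Your proposal is correct in outline, but it takes a genuinely different route from the paper. You prove renaming as a syntactic admissibility result, by induction on the natural-deduction derivation, re-applying each rule after renaming and discharging side conditions (freshness, admissibility, commutation of renaming with substitution and with $\der{\cdot}$) via auxiliary structural lemmas. The paper instead proves the lemma semantically: by a simultaneous induction on the structure of contexts, formulas, games, and the $\solves{sol}{s}{d}{\D{x}=f}$ predicate, it shows that renaming commutes with the type-theoretic semantics when the state is renamed as well, i.e.\ an inhabitant of $\ftrans{\phi}\ s$ (resp.\ $\atrans{\alpha}\ \ftrans{\phi}\ s$, $\dtrans{\alpha}\ \ftrans{\phi}\ s$) yields one of $\ftrans{\eren{\phi}{x}{y}}\ (\eren{s}{x}{y})$, using an assumed host-theory term-renaming fact $(\eren{f}{x}{y})\ s = f\ (\eren{s}{x}{y})$ and a transposition lemma for the swap. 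Two consequences are worth noting. First, in this calculus the terms are arbitrary host-theory functions and the first-order leaves of a derivation are discharged in the host theory, so your derivation induction does not stay purely syntactic: at those leaves (and at side conditions such as ``$sln$ solves $\D{x}=f$ globally'') you need exactly the semantic commutation the paper proves, so your auxiliary lemmas would largely reproduce it. Second, the place where the paper actually uses the lemma is inside the semantic soundness proof (the assignment case), where the semantic form --- renaming formula, game, and state together preserves inhabitation --- is what is needed; a purely syntactic admissibility statement would not plug in there without the semantic bridge. Your approach buys a calculus-level result that is independent of the particular semantics and would be useful for proof-term manipulation, at the cost of a rule-by-rule case analysis plus the renaming/substitution and renaming/differential compatibility lemmas; the paper's approach avoids the case analysis over proof rules and directly delivers the statement in the form soundness consumes.
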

\begin{lemma}[Coincidence]
Only free variables affect expressions' meaning.
\label{lem:coincidental}
\end{lemma}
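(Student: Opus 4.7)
The plan is to make the informal statement precise, then carry out a simultaneous induction on the structure of terms, formulas, and games. First I would define state agreement: for a variable set $V \subseteq \allvars$, write $\ws =_V \wt$ when $\lget{\ws}{x} = \lget{\wt}{x}$ for every $x \in V$ (including primed counterparts). The three precise claims to establish simultaneously are: (i) for terms $f$, if $\ws =_{\freevars{f}} \wt$ then $f\,\ws = f\,\wt$; (ii) for formulas $\phi$, if $\ws =_{\freevars{\phi}} \wt$ then $\ftrans{\phi}\,\ws$ and $\ftrans{\phi}\,\wt$ are isomorphic as types; and (iii) for games $\alpha$ with postconditions $P,Q : \sty \to \alltype$ that are pointwise isomorphic on states agreeing on $\freevars{P}\cup\freevars{Q}$, if $\ws =_{\freevars{\alpha} \cup \freevars{P} \cup \freevars{Q}} \wt$ then $\atrans{\alpha}\,P\,\ws \cong \atrans{\alpha}\,Q\,\wt$ and likewise for $\dtrans{\alpha}$. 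The key idea is that each strategy built over $\ws$ can be converted to one over $\wt$ by transporting along the agreement witness, using that the only way an expression consults the state is through its free variables.

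Next I would carry out the induction. Terms are treated by a straightforward structural induction using the assumption that the primitive arithmetic operations are extensional. Comparison formulas $f \sim g$ reduce immediately to the term case. Propositional connectives and the modalities for $\ptest{\psi}, \humod{x}{f}, \prandom{x}, \pchoice{\alpha}{\beta}, \alpha;\beta,$ and $\pdual{\alpha}$ follow by unfolding the corresponding clause in the Angelic or Demonic semantics and invoking the appropriate induction hypothesis; for $\humod{x}{f}$ and $\prandom{x}$ I use that $\lset{\ws}{x}{v} =_{V \cup \{x\}} \lset{\wt}{x}{v}$ whenever $\ws =_V \wt$, combined with the fact that $\freevars{\humod{x}{f}}$ and $\freevars{\prandom{x}}$ include $\freevars{f}$ and $\{x\}$ respectively. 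For sequential composition I instantiate the postcondition IH at the intermediate postcondition $\atrans{\beta}\,P$, which again satisfies the required agreement property by the IH on $\beta$.

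The harder cases are ODEs and repetition. For $\pevolvein{\D{x}=f}{\ivr}$, I appeal to uniqueness of solutions (guaranteed by local Lipschitz continuity) together with the term-coincidence clause applied to $f$ and the formula-coincidence clause applied to $\ivr$: a solution $sol$ from $\ws$ transports to a solution $sol$ from $\wt$ because the right-hand side $f$ depends only on the values of free variables, which agree throughout the flow; the domain constraint is preserved pointwise by the IH on $\ivr$, and the postcondition obligation transports by the assumed agreement of $P$ and $Q$ evaluated at states that still agree on $\freevars{P}\cup\freevars{Q}$ (since the ODE only mutates variables in $\boundvars{\alpha} \subseteq \freevars{\alpha}$, agreement on the complement is preserved). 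For $\prepeat{\alpha}$ I use the fixed-point characterization: the Angelic case is handled by induction on the inductive proof of $\atrans{\prepeat{\alpha}}\,P\,\ws$, transporting the stopping branch using the formula IH and the iteration branch using the game IH; the Demonic case is handled dually by coinduction, using the image of the coinductive invariant $\tau'$ transported from $\ws$ to $\wt$ via the agreement witness as the invariant on the $\wt$ side.

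The main obstacle I anticipate is the ODE case, where I must carefully argue that state-agreement is maintained for the entire duration of the flow even though the two ``parallel'' solutions run in different ambient states: the fact that only variables in $\{x,\D{x}\}$ are mutated, combined with the term-coincidence for $f$ on the moving state and uniqueness of solutions, is what forces the two flows to agree on $\freevars{\alpha}\cup\freevars{P}\cup\freevars{Q}$ at every instant, which is the precondition needed to invoke the formula IH for the postcondition and the domain constraint. A secondary subtlety is that the game claim (iii) must allow the postconditions to differ between the two states (hence the isomorphism-of-postconditions hypothesis), because intermediate postconditions produced by sequential composition and by loop unfolding are themselves derived semantically from subgames, so an equality-only statement would not compose through the induction.
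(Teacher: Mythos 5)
Your overall architecture---a simultaneous induction over terms, formulas, and games that transports witnesses between states agreeing on free variables, with an inner induction on fixed-point membership for Angelic repetition and coinduction for Demonic repetition---is the same shape as the paper's proof. The genuine gap is that your precise claims ignore \emph{must-bound} variables, and the lemma the paper actually needs is stated with them. The paper computes $\freevars{\ddiamond{\alpha}{\phi}} = \freevars{\alpha} \cup (\freevars{\phi} \setminus \mustboundvars{\alpha})$ and $\freevars{\alpha;\beta} = \freevars{\alpha} \cup (\freevars{\beta} \setminus \mustboundvars{\alpha})$, and its game-level coincidence lemma accordingly assumes only that the two states agree on $\freevars{\G} \cup \freevars{\alpha} \cup (\freevars{\phi} \setminus \mustboundvars{\alpha})$. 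Your claim (iii) instead demands agreement on the full free-variable set of the postcondition, so in the formula case for $\ddiamond{\alpha}{\phi}$---where the hypothesis only gives agreement on $\freevars{\alpha} \cup (\freevars{\phi}\setminus\mustboundvars{\alpha})$---your game-level induction hypothesis cannot be invoked. What your induction establishes is coincidence for a coarser free-variable computation, which is not the statement the calculus relies on: for example, the assignment and nondeterministic assignment cases work precisely because agreement on $\freevars{\phi}\setminus\{x\}$ suffices (both states then receive the same new value for $x$, e.g.\ $f\ s = f\ t$ by term coincidence), and the ODE case similarly updates $x,\D{x}$ identically so that agreement modulo $\mustboundvars{\alpha}$ is what gets restored. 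This refinement has to be threaded through every case, as in the paper's proof.

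Secondary issues: your generalization to arbitrary semantic postconditions $P,Q:\sty\to\alltype$ with a set ``$\freevars{P}$'' is under-specified, since syntactic $\freevars{\cdot}$ is only defined for formulas and games; the paper avoids this by keeping every intermediate postcondition syntactic (e.g.\ rewriting $\atrans{\alpha}\ (\atrans{\beta}\ \ftrans{\phi})$ as $\atrans{\alpha}\ \ftrans{\ddiamond{\beta}{\phi}}$ so the IH applies to the formula $\ddiamond{\beta}{\phi}$). If you want semantic postconditions you must first define a semantic free-variable operator and prove its transport property. In the ODE case your step ``$\boundvars{\alpha}\subseteq\freevars{\alpha}$'' is false in general ($\D{x}$ is bound but not free in $\pevolvein{\D{x}=f}{\ivr}$, and $x$ is bound but not free in $\humod{x}{f}$), and uniqueness of solutions via Lipschitz continuity is not needed at all: the same witness $(d,sol)$ transports because the predicate $\solves{sol}{s}{d}{\D{x}=f}$ itself coincides on the two states by term coincidence, for Angel and Demon alike. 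Finally, the term case cannot be done by structural induction on the displayed grammar: the term language is deliberately open (any computable function on states), so term coincidence is an assumption on the host theory rather than something derivable from syntax.
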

\begin{lemma}[Bound effect]
Game execution modifies only bound variables.
\label{lem:bound-effect}
\end{lemma}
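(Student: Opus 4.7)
The plan is to prove this by simultaneous structural induction on the game $\alpha$, handling both $\atrans{\alpha}$ and $\dtrans{\alpha}$ in parallel since they are defined by mutual induction. Since the type-theoretic semantics does not literally return a "final state", the cleanest formulation is: for any game $\alpha$, state $s$, and postcondition $P$, if $\atrans{\alpha}\ P\ s$ (respectively $\dtrans{\alpha}\ P\ s$) is inhabited, then $\atrans{\alpha}\ (\lambda t.\,P\ t \kwprod E_s(t))\ s$ (respectively the Demonic analogue) is inhabited, where $E_s(t) \equiv \pity{x}{\allvars}{(x \notin \boundvars{\alpha}) \to (\lget{s}{x} = \lget{t}{x})}$. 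This phrasing lets us use monotonicity (\rref{lem:monotone}) to thread the agreement property through the strategy.

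First I would dispatch the atomic cases by unfolding the semantic clauses: $\ptest{\psi}$ has $\boundvars{\ptest{\psi}} = \emptyset$ and leaves the state untouched; $\humod{x}{f}$ updates only $x \in \boundvars{\humod{x}{f}}$; $\prandom{x}$ updates only $x$ in both the Angel clause (where Angel supplies $v$) and the Demon clause (where Demon receives an arbitrary $v$); and the ODE $\pevolvein{\D{x}=f}{\ivr}$ writes only to $(x, \D{x})$ via $\lset{s}{(x,\D{x})}{\cdots}$, both of which lie in $\boundvars{\pevolvein{\D{x}=f}{\ivr}}$. The compound constructs $\pchoice{\alpha}{\beta}$, $\alpha;\beta$, and $\pdual{\alpha}$ follow directly from the induction hypotheses, using that $\boundvars{\pchoice{\alpha}{\beta}} = \boundvars{\alpha;\beta} = \boundvars{\alpha} \cup \boundvars{\beta}$ and $\boundvars{\pdual{\alpha}} = \boundvars{\alpha}$; the sequential case additionally uses transitivity of the agreement relation.

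The main obstacle is the repetition case $\prepeat{\alpha}$, because the bound-variable agreement must be transported across arbitrarily (and for Demon, non-well-foundedly) many iterations. For the Angelic inductive fixed point, I would perform a secondary induction on the derivation of membership in $\atrans{\prepeat{\alpha}}\ P\ s$: the stop constructor inherits $E_s(s)$ trivially, and the go constructor combines the IH for one round of $\alpha$ with the secondary IH for the tail, composing the agreement relations. For the Demonic coinductive fixed point, I would instead invoke the coinduction principle: take $\tau'(t) \equiv \dtrans{\prepeat{\alpha}}\ P\ t \kwprod E_s(t)$ as the candidate, and verify both defining coclauses are preserved, the nontrivial one being that one unfolding of $\alpha$ preserves agreement by the IH, which then composes with the coinductive hypothesis for subsequent iterations.

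Two auxiliary points deserve care. In the ODE case, I must check that the solution itself, while parameterized over $[0,d]$, only induces updates to $x$ and $\D{x}$ in the returned state; this is immediate from the form of the state update in $\atrans{\pevolvein{\D{x}=f}{\ivr}}$, but it does rely implicitly on the coincidence lemma (\rref{lem:coincidental}) to know that evaluating $f$ at intermediate states behaves coherently. Finally, combining all cases yields the lemma: given a strategy in $\atrans{\alpha}\ P\ s$, the strengthened strategy in $\atrans{\alpha}\ (\lambda t.\,P\ t \kwprod E_s(t))\ s$ witnesses that every state reached by game execution agrees with $s$ off $\boundvars{\alpha}$.
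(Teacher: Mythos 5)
Your proposal is correct and follows essentially the same route as the paper's proof: a simultaneous structural induction on Angelic and Demonic semantics with a strengthened postcondition recording agreement on non-bound variables, monotonicity (\rref{lem:monotone}) to thread it through, an inner induction on fixed-point membership for the Angelic loop, and a coinductive invariant for the Demonic loop. The only differences are bookkeeping: you conjoin a semantic agreement predicate over all variables outside $\boundvars{\alpha}$ at once, whereas the paper uses a finite syntactic conjunction $V = s\ V$ with an outer induction on $\abs{V}$ reducing to a single variable, and for the Demonic repetition you take the fixed point itself conjoined with agreement as the coinduction candidate while the paper inverts to an invariant $J$ and reuses $J \land x = s\ x$ --- equivalent maneuvers (your aside that the ODE case needs the coincidence lemma is unnecessary, since the state update there writes only $(x,\D{x})$ by its very form).
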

\begin{definition}[{Term substitution admissibility~\cite[{Def.\ 6}]{DBLP:journals/jar/Platzer08}}]
For a formula $\phi,$ (likewise for context $\G,$ term $f,$ and game $\alpha$) we say $\tsub{\phi}{x}{f}$ is \emph{admissible}
if $x$ never appears in $\phi$ under a binder of $\freevars{f} \cup \{x\}$.
\label{def:lem-admit}
\end{definition}
\begin{lemma}[Term substitution]
 If $\proves{\G}{M}{\phi}$ and the substitutions $\tsub{\G}{x}{f},$ and $\tsub{\phi}{x}{f}$ are admissible, then 
 $\proves{\tsub{\G}{x}{f}}{\tsub{M}{x}{f}}{\tsub{\phi}{x}{f}}$.
\label{lem:atsub}
\end{lemma}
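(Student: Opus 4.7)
The plan is to prove the lemma by structural induction on the derivation $\proves{\G}{M}{\phi}$, with cases split according to the last proof rule applied. Admissibility of $\tsub{\G}{x}{f}$ and $\tsub{\phi}{x}{f}$ is threaded through the induction as a side condition that must be re-established for each premise before invoking the inductive hypothesis; in particular, whenever a rule descends under a binder of some variable $z$, I must check that either $z \notin \freevars{f} \cup \{x\}$ (so the substitution is still admissible on the premise) or, when it is not, I will first apply \irref{hyp} uniform renaming (\rref{lem:renaming}) to $\alpha$-rename $z$ to a fresh variable before substituting, so that the inductive hypothesis applies.

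First I would dispatch the propositional and structural cases of \rref{fig:cgl-rules-prop} (\irref{hyp}, the choice/test/seq/dual rules): here substitution commutes homomorphically with the proof term constructors and the admissibility conditions pass to the premises unchanged because these rules bind no variables. Next, the first-order quantifier and assignment rules of \rref{fig:cgl-compos} are the first source of real work: for \irref{brandomI}, \irref{asgnI}, \irref{drandomE}, and \irref{mon}, the premises live under a rename of a bound variable $y$, so I would pick $y$ fresh for $f$, use \rref{lem:renaming} to align the bound name with a variable outside $\freevars{f} \cup \{x\}$, appeal to \rref{lem:coincidental} and \rref{lem:bound-effect} to show the substitution passes under the rename, then apply the IH. For \irref{brandomE} and \irref{drandomI} where a witness term $g$ is used, I would combine the IH with a standard compositionality fact that $\tsub{(\tsub{\phi}{x}{g})}{\_}{\_}$ agrees with iterated substitution (this is the syntactic analogue of substitution lemmas in first-order logic). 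The loop rules \irref{bloopI}, \irref{dloopI}, \irref{dloopE}, \irref{broll}, \irref{bunroll}, \irref{dstop}, \irref{dgo}, \irref{drcase} are also purely structural given admissibility, provided the metric and invariant occurrences of $x$ in \irref{dloopI} are treated as regular subformulas by the IH.

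The main obstacle is the ODE rules of \rref{fig:ode-rules}. Here substitution interacts nontrivially with the differential operator $\der{\cdot}$, with the domain constraint, and with the bound variables $x,\D{x}$ of the continuous evolution. For \irref{di}, \irref{dc}, \irref{dw}, \irref{dg}, \irref{dv}, \irref{bsolve}, and \irref{dsolve}, admissibility must forbid $x$ from being captured by either the ODE bindings or, for \irref{dg}, the fresh ghost $y$. I would prove a preparatory sublemma that $\tsub{\der{g}}{x}{f} \equiv \der{\tsub{g}{x}{f}}$ whenever the substitution does not mention $\D{x}$ in a capturing way (which admissibility guarantees), so that \irref{di}'s premise about $\der{\phi}$ substitutes cleanly; for \irref{dg} I would, if needed, first rename the ghost variable $y$ to a variable fresh for both $f$ and $x$; for \irref{dv}, \irref{bsolve}, \irref{dsolve} the side conditions that $t,\D{t}$ are fresh and the global solution $sln$ must likewise be preserved under substitution, which again reduces to admissibility plus \rref{lem:coincidental}.

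Finally, the base cases \irref{hyp} and arithmetic leaves are immediate: the former because $\tsub{\G}{x}{f}$ still contains $\tsub{\phi}{x}{f}$, the latter because first-order arithmetic reasoning is inherited from the host type theory and is closed under term substitution by standard results on CIC. Assembling the cases, the only nontrivial ingredients beyond the IH are \rref{lem:renaming}, \rref{lem:coincidental}, \rref{lem:bound-effect}, the differential-substitution sublemma, and an iterated-substitution lemma for stacked term substitutions; with these in hand the induction goes through uniformly.
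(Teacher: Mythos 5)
Your proposal takes a genuinely different route from the paper, and the route has a gap. The paper does not induct on the derivation at all: it reads the sequent semantically and proves the statement (as \rref{lem:app-formula-subst} and \rref{lem:app-game-subst}) by simultaneous structural induction on formulas, games, contexts, and the $(\solves{sol}{s}{d}{\D{x}=f})$ predicate, establishing that $\ftrans{\tsub{\phi}{x}{f}}\ s$ coincides with $\ftrans{\phi}\ (\lset{s}{x}{(f\ s)})$, using only the assumed term-level substitution (\rref{lem:app-term-subst}) and coincidence facts. That semantic form is exactly what the soundness proof consumes (the \irref{asgnI} and \irref{drandomI} cases pass from $\ftrans{\tsub{\phi}{x}{f}}\ s$ to $\ftrans{\phi}\ (\lset{s}{x}{(f\ s)})$), so a purely syntactic closure-of-derivability statement would not serve that purpose even if established.

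The concrete failure in your induction is in the ODE cases. The preparatory sublemma $\tsub{\der{g}}{x}{f} = \der{\tsub{g}{x}{f}}$ is false, and admissibility does not repair it, because the failure comes from the chain rule, not from capture: for $g = x$ one has $\tsub{\der{x}}{x}{f} = \D{x}$ while $\der{\tsub{x}{x}{f}} = \der{f}$; likewise for $\phi \equiv (y \geq x)$ under an ODE binding only $z,\D{z}$ the substitution $\tsub{\phi}{x}{f}$ is admissible, yet $\tsub{\der{\phi}}{x}{f} = (\D{y} \geq \D{x})$ whereas $\der{(\tsub{\phi}{x}{f})} = (\D{y} \geq \der{f})$. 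Hence in the \irref{di} (and \irref{dv}) case the inductive hypothesis yields a derivation whose postcondition is $\tsub{\der{\phi}}{x}{f}$, but reapplying \irref{di} to the substituted conclusion requires $\der{(\tsub{\phi}{x}{f})}$; these formulas differ, so the step does not close. A second hole is at the leaves: arithmetic facts enter via the host theory with a semantic side condition (an inhabitant of $\pity{s}{\sty}{\ftrans{\G \limply \phi}\ s}$), so ``closure under substitution by standard results on CIC'' is not available there — what is needed is precisely the commutation of syntactic formula substitution with the state update $\lset{s}{x}{(f\ s)}$, i.e., the semantic lemma the paper proves, so your induction would presuppose at its leaves the very statement being proven.
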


Soundness of the proof calculus follows from the lemmas, and soundness of the ODE rules employing several known results from constructive analysis.
\begin{theorem}[Soundness]
  If $\proves{\G}{M}{\phi}$ is provable then $\seq{\G}{\phi}$ is valid.
  As a special case, if $(\proves{\Gemp}{M}{\phi})$ is provable, then $\phi$ is valid.
\label{thm:proof-calculus-sound}
\end{theorem}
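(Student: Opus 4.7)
The plan is to proceed by structural induction on the derivation of $\proves{\G}{M}{\phi}$, producing for each rule a type-theoretic witness of $\pity{s}{\sty}{(\pity{\psi}{\G}{\ftrans{\psi}\ s}) \to \ftrans{\phi}\ s}$ from the witnesses supplied by the inductive hypotheses on the premisses. The propositional rules of Fig.~\ref{fig:cgl-rules-prop} are essentially definitional: since $\ddiamond{\ptest{\phi}}{\psi}$, $\dbox{\ptest{\phi}}{\psi}$, and the various $\cup$-rules unfold (by the semantics of $\atrans{\cdot}$ and $\dtrans{\cdot}$) to products, functions, and sums, their introduction and elimination rules are realized directly by pairing, lambda, case, and projection in CIC. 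For \irref{seqI}, \irref{dualI}, and \irref{mon} the witness is obtained by rewriting using the equations $\atrans{\alpha;\beta} = \atrans{\alpha}\circ\atrans{\beta}$, $\atrans{\pdual{\alpha}} = \dtrans{\alpha}$, and the monotonicity morphism provided by Lemma~\ref{lem:monotone}; soundness of the renaming premiss in \irref{mon} uses Lemma~\ref{lem:bound-effect} (only bound variables are modified) together with Lemma~\ref{lem:coincidental}.

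For the first-order rules of Fig.~\ref{fig:cgl-compos} I would appeal to Lemma~\ref{lem:renaming} and Lemma~\ref{lem:atsub}: \irref{asgnI} and \irref{brandomI} introduce a fresh name $y$ and pass to the renamed context via Lemma~\ref{lem:renaming}, while \irref{brandomE} and \irref{drandomI} produce or consume an existential witness through the term substitution lemma, the admissibility side condition being exactly what Def.~\ref{def:lem-admit} requires to make the semantic update commute with substitution. The loop rules of Fig.~\ref{fig:cgl-loops} are the first place real work is needed: \irref{dstop}, \irref{dgo}, \irref{broll}, \irref{bunroll}, \irref{drcase}, and \irref{dloopE} are justified by the Knaster--Tarski characterization of $\atrans{\prepeat{\alpha}}$ and $\dtrans{\prepeat{\alpha}}$ as least and greatest fixed points, so the inductive type's constructors and iterator, respectively the coinductive type's destructors and coiterator, give the proof terms. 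Rule \irref{bloopI} is the standard invariant proof, discharged by the coiteration principle with invariant $J$. The termination rule \irref{dloopI} requires effectively well-founded recursion on $\met$: from the premisses one constructs, by well-founded induction on the order $\metgr$, a function that repeatedly executes the $\alpha$-strategy, strictly decreasing $\met$ until $\metz \metgeq \met$ holds, at which point the third premiss converts $\conv$ to $\phi$.

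The ODE rules of Fig.~\ref{fig:ode-rules} are the main obstacle and are where the constructive analysis libraries must be invoked. For \irref{dw}, \irref{dc}, \irref{bsolve}, and \irref{dsolve} the argument is semantic bookkeeping using the existentially/universally quantified solution $sol$ in the definitions of $\atrans{\pevolvein{\D{x}=f}{\ivr}}$ and $\dtrans{\pevolvein{\D{x}=f}{\ivr}}$. For \irref{di} one needs a constructive mean-value-type result: differentiating $sol$ along $\D{x}=f$ and using the virtual definition of $\der{\cdot}$ with its $(\varepsilon,\delta)$-gradient, the premiss that $\der{\phi}$ is nonnegative forces $\phi$ to persist along the solution. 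The hardest case is \irref{dg}, where adding a dimension $\D{y}=a(x)y+b(x)$ must preserve the maximal duration of solutions; this is precisely where one invokes the constructive Picard--Lindel\"of theorem~\cite{DBLP:conf/itp/MakarovS13} on the interval on which $\D{x}=f$ is defined, using linearity of the right-hand side in $y$ to guarantee Lipschitz continuity globally in $y$. Rule \irref{dv} combines the duration bound from its ODE premiss with integration of the minimum-rate estimate $\der{h}-\der{g}\geq\veps$ to produce, via a constructive fundamental theorem of calculus, a time at which $h\geq g$ so that the domain and postcondition become available. Once each rule is discharged, the special case $\G = \Gemp$ follows immediately.
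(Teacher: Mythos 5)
Your proposal is correct and follows essentially the same route as the paper: induction on the derivation with each rule discharged semantically, using renaming, coincidence, bound effect, substitution, and monotonicity for the discrete and first-order rules, the Knaster--Tarski fixed-point characterizations (with well-founded induction on the metric for the convergence rule) for loops, and constructive analysis (mean-value-style lemmas for DI/DV, prefix-closure of solutions for DC, constructive Picard--Lindel\"of for DG) for the ODE rules. The only detail worth noting is that applying the time-derivative lemmas from the constructive analysis library additionally requires equating the spatial differential $\der{g}$ with the time derivative along the solution, which the paper isolates as a separate differential lemma; your appeal to the $(\varepsilon,\delta)$ definition of $\der{\cdot}$ implicitly covers this step.
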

\begin{proof}[Proof Sketch]
By induction on the derivation.
The assignment case holds by \rref{lem:atsub} and \rref{lem:renaming}.
\rref{lem:coincidental} and \rref{lem:bound-effect} are applied when maintaining truth of a formula across changing state.
The equality and inequality cases of \irref{di} and \irref{dv} employ the constructive mean-value theorem (\rref{thm:app-mvt} in \rref{app:proofs}), which has been formalized, e.g., in Coq~\cite{DBLP:conf/mkm/Cruz-FilipeGW04}.
Rules \irref{dw},  \irref{bsolve}, and \irref{dsolve} follow from the semantics of ODEs.
Rule \irref{dc} uses the fact that prefixes of solutions are solutions.
Rule \irref{dg} uses constructive Picard-Lindel\"{o}f~\cite{DBLP:conf/itp/MakarovS13}, which constitutes an algorithm for arbitrarily approximating the solution of any Lipschitz ODE, with a convergence rate depending on its Lipschitz constant.
\end{proof}
We have shown that every provable formula is true in the type-theoretic semantics.
Because the soundness proof is constructive, it amounts to an extraction algorithm from \CdGL into type theory:
for each proof, there exists a program in type theory which inhabits the corresponding type.

\section{Theory: Extraction and Execution}
\label{sec:synthesis}
Another perspective on constructivity is that provable properties must have witnesses.
We show Existential and Disjunction properties providing witnesses for existentials and disjunctions.
For modal formulas $\ddiamond{\alpha}{\phi}$ and $\dbox{\alpha}{\phi}$ we show proofs can be \emph{used as} winning strategies: a big-step operational semantics $\kwplay$ allows playing strategies against each other to extract a proof that their goals hold in some final state $s$.
Our presentation is more concise than defining the language, semantics, and properties of strategies, while providing key insights.
\begin{lemma}[Existential Property]
Let $s \in \sty$. If $M : (\ftrans{\sity{x}{\tau}{\phi}}\ s)$ then there exist terms $y:\tau$ and $N : (\ftrans{\tsub{\phi}{x}{y}}\ s)$.
\label{lem:term-ep}
\end{lemma}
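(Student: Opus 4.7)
The plan is to unfold the semantics of the existential, extract the witness by projection, and then bridge the semantic update with the syntactic substitution. Since $\lexists{x}{\phi}$ in \CdGL is notation for $\ddiamond{\prandom{x}}{\phi}$, the Angelic clause for $\prandom{x}$ gives $\ftrans{\sity{x}{\tau}{\phi}}\ s = \sity{v}{\xty}{\ftrans{\phi}\ (\lset{s}{x}{v})}$. An inhabitant $M$ of this semantic $\Sigma$-type is a dependent pair, so I would set $v \mathrel{:=} \sprojL{M} : \xty$ and $M' \mathrel{:=} \sprojR{M} : \ftrans{\phi}\ (\lset{s}{x}{v})$.

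Next, I need a syntactic witness term $y : \tau$ whose value in $s$ is $v$. Because Angel strategies for $\prandom{x}$ are required to produce \emph{computable} reals (\rref{sec:semantics}), $v$ is a computable real and can be represented by a closed term (e.g., a real literal / constructive Cauchy code). I would take $y$ to be any such closed witness term: having no free game variables, $y$ has a constant interpretation $(y\ s) = v$, the coincidence lemma (\rref{lem:coincidental}) makes its value independent of $s$, and $\tsub{\phi}{x}{y}$ is automatically admissible because no binder in $\phi$ can capture $\freevars{y} = \emptyset$.

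The last step is to convert $M'$ into a proof of $\ftrans{\tsub{\phi}{x}{y}}\ s$ using the denotational counterpart of \rref{lem:atsub}: when $\tsub{\phi}{x}{y}$ is admissible,
\[
\ftrans{\tsub{\phi}{x}{y}}\ s \;=\; \ftrans{\phi}\ (\lset{s}{x}{(y\ s)}).
\]
Instantiating with $(y\ s) = v$ identifies the right-hand side with $\ftrans{\phi}\ (\lset{s}{x}{v})$, so setting $N \mathrel{:=} M'$ (transported across this equality) discharges the goal. The result then follows by pairing the chosen $y$ with $N$.

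The main obstacle will be making the semantic substitution equation precise and discharging it, which is where the rest of the apparatus from \rref{sec:soundness} (coincidence, bound effect, and term-substitution) comes into play. A minor but real subtlety is the step of representing the computable real $v$ as a syntactic term $y$: if the term grammar does not literally contain a constant for every computable real, I would reformulate the statement so that $y$ ranges over closed computable-real terms and use \rref{lem:coincidental} to replace $\lset{s}{x}{v}$ by $\lset{s}{x}{(y\ s)}$, which is routine once the lifting is fixed.
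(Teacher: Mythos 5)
Your proposal is correct and follows essentially the same route as the paper's proof: unfold the Angelic semantics of $\prandom{x}$ to a semantic $\Sigma$-type, invert/project $M$ to get the witness $v$ and a proof at $\lset{s}{x}{v}$, and bridge the semantic state update to the syntactic substitution via the static-semantics lemmas. The only cosmetic difference is that the paper takes the extracted value $v$ itself as the witness term (the term grammar already admits real constants, so your representability worry is moot) and cites coincidence for the final step, whereas you route it through the substitution lemma with a closed term denoting $v$ — both are fine.
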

\begin{lemma}[Disjunction Property]
  If $M : (\ftrans{\phi \lor \psi}\ s)$ then there exists an $N$ such that either $N : (\ftrans{\phi}\ s)$ or $N : (\ftrans{\psi}\ s)$.
\end{lemma}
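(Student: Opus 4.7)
The plan is to unfold the definition of disjunction and the game/formula semantics, at which point the claim reduces to ordinary case analysis on an inhabitant of a disjoint sum type. Recall that $\phi \lor \psi$ is defined as $\ddiamond{\ptest{\phi} \cup \ptest{\psi}}{\btt}$, so by the formula semantics I have
\[ \ftrans{\phi \lor \psi}\ s \;=\; \atrans{\ptest{\phi} \cup \ptest{\psi}}\ \ftrans{\btt}\ s. \]
Applying the Angel semantics of $\cup$ and then of $?$ gives
\[ \ftrans{\phi \lor \psi}\ s \;=\; \bigl(\ftrans{\phi}\ s \kwprod \ftrans{\btt}\ s\bigr) \,\kwsum\, \bigl(\ftrans{\psi}\ s \kwprod \ftrans{\btt}\ s\bigr). \]

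The term $M$ therefore inhabits a disjoint sum, so I proceed by inspecting its head constructor. If $M = \ell \cdot M'$, then $M' : \ftrans{\phi}\ s \kwprod \ftrans{\btt}\ s$ and I take $N = \sprojL{M'} : \ftrans{\phi}\ s$ to land in the left disjunct. Symmetrically, if $M = r \cdot M'$, I take $N = \sprojL{M'} : \ftrans{\psi}\ s$ to land in the right disjunct. This case analysis is precisely the large elimination on sum types available in the assumed CIC-like metatheory, so the witness $N$ is constructed effectively from $M$.

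The only subtlety worth flagging is that the witness extraction crucially uses the constructive nature of the choice game's semantics: the Angel strategy for $\ptest{\phi} \cup \ptest{\psi}$ must commit to a side \emph{before} revealing the proof of the chosen disjunct, which is exactly what distinguishes this from the classical reading. There is no real obstacle here beyond checking that the unfoldings chain together correctly; unlike the Existential Property (\rref{lem:term-ep}), no auxiliary lemma on substitution or renaming is needed, because the underlying game does no state update. The proof is essentially a one-liner once the semantic equalities are written out.
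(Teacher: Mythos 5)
Your proof is correct and follows essentially the same route as the paper: unfold the semantics of the derived disjunction to a sum type and case-analyze the inhabitant $M$. The only difference is that you carry out the unfolding through $\ddiamond{\ptest{\phi}\cup\ptest{\psi}}{\btt}$ explicitly and strip the vacuous $\btt$ conjunct by projection, whereas the paper states the resulting sum form $\ftrans{\phi}\ s \mathop{\kwsum} \ftrans{\psi}\ s$ directly and inverts, which is the same argument with the trivial step elided.
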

Their proofs follow directly from their counterparts in type theory.
The Disjunction Property considers truth at a \emph{specific state}.
It is \emph{not} the case that validity of $\phi \lor \psi$ implies validity of either $\phi$ or $\psi$.
For example, $x < 1 \lor x > 0$ is valid, but its disjuncts are not.

Function $\kwplay$ below gives a big-step semantics: Angel and Demon strategies $\as$ and $\ds$ for respective goals $\phi$ and $\psi$ in game $\alpha$ suffice to construct a final state $s$ satisifying both.
By parametricity, $s$ was found by playing $\alpha$, because $\kwplay$ cannot inspect $p$ and $q,$ thus can only prove them via $\as$ and $\ds$.
\[\kwplay :
\pity{\alpha}{\textsf{Game}}{}
\pity{P,Q}{(\sfun{\sty}{\alltype})}{}
\pity{s}{\sty}{}
\sfun{\atrans{\alpha}\ P\ s}{\sfun{\dtrans{\alpha}\ Q\ s}{\sity{t}{\sty}{\ftrans{(P \land Q)}\ t}}}\]
Applications of $\kwplay$ are written $\play[\alpha]{\as}{\ds}{s}$ ($P$ and $Q$ implicit).
Game consistency (\rref{cor:consistency}) is by $\kwplay$ and consistency of type theory.
Note that  $\pdual{\alpha}$ is played by swapping the Angel and Demon strategies in $\alpha$.
{\small\begin{align*}
\play[\humod{x}{f}]{\as}{\ds}{s}              &= (\slet{t}{\lset{s}{x}{(f\ s)}} (t, (\as\ t, \ds\ t)))\\
\play[\prandom{x}]{\as}{\ds}{s}               &= \slet{t}{\lset{s}{x}{\sprojL{\as}}} (t, (\sprojR{\as}, \ds\ \sprojL{\as}))\\
\play[\pevolvein{\D{x}=f}{\ivr}]{\as}{\ds}{s} &=
\slet{(d,\text{sol},\text{solves},c,p)}{\as\ s} \\
&\phantom{=\ }(\lset{s}{x}{(\text{sol}\ d)}, (p,\ds\ d\ \text{sol}\ \text{solves}\ c))\\
\play[\ptest{\phi}]{\as}{\ds}{s}              &=  (s, (\sprojR{\as},\ds\ (\sprojL{\as})))\\
\play[\alpha\cup\beta]{\as}{\ds}{s}           &=
  \scase{(\as\ s)}\\
   \slbranch{\as'&}{\play[\alpha]{\as'}{(\sprojL{\ds})}{s}}\\
   \srbranch{\as'&}{\play[\beta]{\as'}{(\sprojR{\ds})}{s}}\\
\play[\alpha;\beta]{\as}{\ds}{s}              &= (\slet{(t,(\as',\ds'))}{\play[\alpha]{\as}{\ds}{s}}  \play[\beta]{\as'}{\ds'}{t})\\
\play[\prepeat{\alpha}]{\as}{\ds}{s}          &=
  \scase{(\as\ s)}\\
   \slbranch{\as'&}{(s, (\as', \sprojL{\ds}))}\\
   \srbranch{\as'&}{\slet{(t,(\as'',\ds''))}{\play[\alpha]{\as'}{(\sprojR{\ds})}{s}}}\\
   &\ \ \ \ \:\play[\prepeat{\alpha}]{\as''}{\ds''}{t}\\
\play[\pdual{\alpha}]{\as}{\ds}{s}            &= \play[\alpha]{\ds}{\as}{s}
\end{align*}}
\begin{corollary}[Consistency]
  It is never the case that both $\ftrans{\ddiamond{\alpha}{\phi}}\ s$ and $\ftrans{\dbox{\alpha}{\lnot \phi}}\ s$ are inhabited.
\label{cor:consistency}
\end{corollary}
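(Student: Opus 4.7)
The plan is to derive consistency from the big-step semantics $\kwplay$ together with consistency of the underlying type theory. Suppose, for contradiction, that we had inhabitants $\as : \ftrans{\ddiamond{\alpha}{\phi}}\ s$ and $\ds : \ftrans{\dbox{\alpha}{\lnot\phi}}\ s$. Unfolding the formula semantics, these are exactly $\as : \atrans{\alpha}\ \ftrans{\phi}\ s$ and $\ds : \dtrans{\alpha}\ \ftrans{\lnot\phi}\ s$, so they match the types required by $\kwplay$ with postconditions $P = \ftrans{\phi}$ and $Q = \ftrans{\lnot\phi}$.

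Next, I would apply $\play[\alpha]{\as}{\ds}{s}$, which by its specification produces a pair $(t, (p, q))$ with $t : \sty$, $p : \ftrans{\phi}\ t$, and $q : \ftrans{\lnot\phi}\ t$. Recall $\lnot\phi$ was defined as $\phi \limply \bff$, which unfolds through the encoding of implication via $\ptest{}$ to a function $\ftrans{\phi}\ t \to \ftrans{\bff}\ t$. Hence $q\ p : \ftrans{\bff}\ t$, i.e.\ an inhabitant of $(0 > 1)$ at state $t$, which is an inhabitant of the empty type. This contradicts the consistency of the ambient type theory (CIC with a single predicative universe), so one of the two initial assumptions must fail.

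The only nontrivial step is justifying that $\kwplay$ is in fact well-defined and actually delivers the claimed pair: this rests on \rref{lem:monotone} (so the coinductive and inductive fixed points are well-formed) and on the fact that $\kwplay$ consumes Angel and Demon strategies in lockstep, using the Angel's choices to drive Demon's universally-quantified branches and vice versa, which the excerpt already spells out case by case. The ODE case is the most delicate: the Angel strategy supplies a duration $d$, a solution $\mathit{sol}$, the proof of $\solves{\mathit{sol}}{s}{d}{\D{x}=f}$, the domain-constraint witness $c$, and the postcondition witness $p$, and these are precisely the arguments that the Demon strategy (a $\Pi$-type over exactly those data) consumes, producing $q$ at the same final state $\lset{s}{x}{(\mathit{sol}\ d)}$.

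The main obstacle I would expect is not the contradiction step itself, which is essentially immediate from $\kwplay$, but the careful bookkeeping that $\kwplay$ really does produce both $p$ and $q$ at the \emph{same} state $t$ in every case; once that is established (uniformly across the clauses of $\kwplay$), the corollary follows by the one-line argument above, and no extra hypotheses beyond consistency of type theory are needed.
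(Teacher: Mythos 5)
Your proposal is correct and follows essentially the same route as the paper: play the two strategies against each other via $\kwplay$, obtain witnesses of both $\phi$ and $\lnot\phi$ at the common final state, and apply the negation to get an inhabitant of the empty type, contradicting consistency of the ambient type theory. Your version merely spells out the final unfolding of $\lnot\phi$ as $\phi\limply\bff$ that the paper's one-line proof leaves implicit.
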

\begin{proof}
Suppose $\as :\ftrans{\ddiamond{\alpha}{\phi}}\ s$ and $\ds : \ftrans{\dbox{\alpha}{\lnot \phi}}\ s,$
then $\sprojR{(\play[\alpha]{\as}{\ds}{s})} : \bot,$ contradicting consistency of type theory.
\end{proof}
The  $\kwplay$ semantics show how strategies can be executed.
Consistency is a theorem which ought to hold in any \GL and thus helps validate our semantics.


\section{Conclusion and Future Work}
We extended Constructive Game Logic \CGL to \CdGL for \emph{hybrid} games.
We contributed a new static and dynamic semantics.
We presented a natural deduction proof calculus for \CdGL and used it to prove reach-avoid correctness of 1D driving with adversarial timing.
We showed soundness and constructivity results.

The next step is to implement a proof checker, game interpreter, and synthesis tool for \CdGL.
Function $\kwplay$ is the high-level interpreter algorithm, while synthesis would commit to one Angel strategy and allow black-box Demon implementations.
In practice, Demon strategies represent some physical environment which is not implemented in type theory.
There is good justification to allow black-box treatment of Demon: the Demon connectives are \emph{negative} and thus defined by their observable behaviors.
Any program which behaves like a Demon \emph{is} a Demon.
Angel connectives are defined positively by their introduction forms, thus the task of synthesis is to extract these contents into code form.

\bibliographystyle{splncs04}
\bibliography{constructive-games,platzer}

\appendix
\newpage

\section{Example Proof}
\label{app:example-proofs}
It is understood that reading these appendices is \textbf{optional} for the reviewers.
We include the appendices in the event that a reviewer wishes to read them.
When published, the appendices will be published in an online-only extended version.

\newcommand{\SB}{\text{SB}}
\newcommand{\LS}{\text{LS}}
\newcommand{\RS}{\text{RS}}
\newcommand{\SLS}{\text{SLS}}
\newcommand{\SRS}{\text{SRS}}
\newcommand{\UV}{\text{UV}}
\newcommand{\LV}{\text{LV}}
\newcommand{\BM}{\text{BM}}
\newcommand{\AM}{\text{AM}}
\newcommand{\LT}{\text{LT}}
\newcommand{\ST}{\text{ST}}
\newcommand{\UL}{\text{UL}}
\newcommand{\UR}{\text{UR}}
\newcommand{\acc}{\text{acc}}

We restate the definition of the 1D driving game and its reach-avoid specification here:
\begin{align*}
  \ctrl &\equiv \prandom{a}; \ptest{-B \leq a \leq A}; \humod{t}{0} \\
  \plant &\equiv \pdual{\{\pevolvein{\D{t}=1, \D{x}=v,\D{v}=a}{t \leq T \land v \geq 0}\}} \\
  \pre &\equiv T > 0 \land A > 0 \land B > 0 \land v=0 \land x=0\\
  \reachavoid &\equiv \pre \limply \ddiamond{\prepeat{(\ctrl;\plant;\ptest{x \leq g};\{\pdual{\ptest{t > T/2}}\})}}{(g=x \land v=0)}
\end{align*}

We first give an overview of our proof approach, then give the main algebraic derivations, then finally the complete natural deduction proof.

\subsection{Proof overview}
While safety of 1D driving is a thoroughly studied introductory example, adversarial 1D reach-avoid is more challenging due to the combination of adversarial timing and liveness.

To simplify our arithmetic, the proof uses the same acceleration magnitude $C = \min(A,B)$ to both accelerate and brake.
The resulting strategy is conservative, but still satisifies reach-avoid correctness.
Our proof proceeds by convergence: we establish a minimum distance $\Delta{x}$ which is traversed in each iteration, guaranteeing that the goal is eventually reached.
The minimum distance is determined by appealing to a velocity envelope which is invariant throughout the loop, given as a function of the position.
Much of \rref{sec:derivations} is devoted to identifying a velocity envelope which is invariant while also strong enough to ensure liveness.
The car's velocity goes to $0$ during braking, so the key is to show that velocity decreases slowly enough to ensure progress.
We depict the safe driving envelope in \rref{fig:full-envelope}.

\begin{figure}
  \centering
  \includegraphics[width=3in]{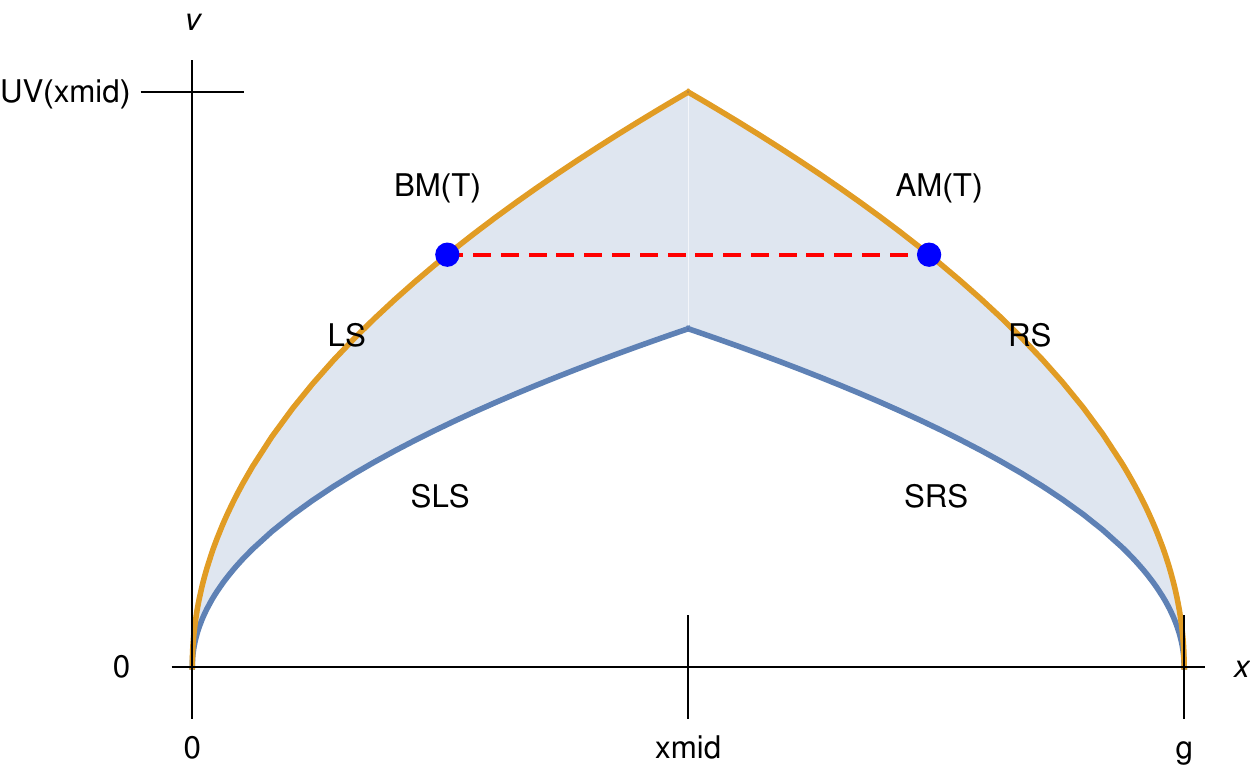}
  \caption{Safe driving envelope}
  \label{fig:full-envelope}
\end{figure}

A major task of the controller is to detect events within an adversarial environment.
We detect one event: when are we close enough to the goal that we must brake?
Because our acceleration and braking rates are the same, it suffices to begin braking by the midpoint $x = g/2$.
Care is still required because the controller is \emph{time-triggered}: we must determine whether it is possible to cross the midpoint within the coming timestep, and react right \emph{before} we actually cross the midpoint.
In \rref{fig:full-envelope}, the blue point $BM(T)$ is the point at which we would start to react.

The other  major task of the controller is to choose an acceleration value.
Until we approach the midpoint, the acceleration is at the maximum value $C$.
Once the midpoint is detected, acceleration is computed with a predictive method: compute the state at the end of the timestep, and solve for the greatest $C$ that maintains safety.


Recall that \CdGL features Type-2 effective computations on reals.
We note that in proofs which use inexact comparisons for constructive reals, only approximate ``reach'' properties might be provable.
The reason our proof obtains an exact result is subtle: in Type-2 effectivity, the functions $\min(f,g)$ and $\max(f,g)$ are exact comparisons, in contrast to inexact formula-level comparisons.
Because  $\min(f,g)$ and $\max(f,g)$ are \emph{terms}, Type-2 effectivity demands only that when real numbers are represented as lazy streams of bits, the binary representation of $\min(f,g)$ or $\max(f,g)$ could be computed \emph{lazily} from the bits of $f$ and $g$.
Exact $\min$ and $\max$ \emph{are} implementable lazily: each bit of $f$ is compared with the corresponding bit of $g$.
In the case that $f$ and $g$ have identical bit representations, this process will lazily return their bit representation.
In the case that the bit representations of $f$ and $g$ differ, the extremum will start by returning identical bits, then commit to a choice of $f$ or $g$ once a differing bit is found.
While there exist numbers with multiple binary representations (${1.0}_{2} = {0.111\ldots}_{2}),$ this simply means $\min$ and $\max$ are free to return either representation.

\subsection{Algebraic derivations}
\label{sec:derivations}
We now algebraically derive the main equations of the proof, e.g., for invariant regions, termination metrics, and acceleration control.
In this section, ``monotonicity'' does not refer to the monotonicity rule for game modalities, but to monotone functions in the arithmetic sense, e.g.,
\[x \geq y \limply f(x) \geq f(y)\]

The following employs the well-known Newtonian motion equations:
\begin{align*}
  v(t) &= v(0) + at\\
  x(t) &= x(0) + v(0)t + \frac{at^2}{2}
\end{align*}
Because our initial conditions are $v(0) = x(0) = 0,$ we can often eliminate the first terms.
We write $x(k)$ and $v(k)$ for the values of $x$ and $v$ at the beginning of a given iteration of the game loop, as opposed to the beginning of the game.
From the Newton equations we derive the safe-braking ($\SB$) inequality, which says braking rate $-C$ suffices to stop the car by the time $x$ reaches $g$:
\[\SB \equiv  \left(\frac{v^2}{2C} \leq g-x\right)\]
When the equality is strict, then $x$ reaches $g$ exactly as the car stops.

We write $x_{mid}$ for the midpoint, i.e., $g/2$.
We write $v_{mid}$ for the maximum velocity which may ever be attained, i.e., the maximum velocity one might have at the midpoint and still brake safely.
We write $\UV(x)$ for the \textbf{u}pper bound of permissible \textbf{v}elocity at a position $x$.
The shape of $\UV$ will be a convex ``triangle'' which is 0 at $x=0$ and $x=g$ and attains value $v_{mid}$ at $x = x_{mid}$.
The shape of $\UV$ is convex rather than a true triangle because its edges are defined by square-root expressions defining the relationship between position and velocity.

Its \textbf{l}eft \textbf{s}ide ($\LS$) is derived by setting the maximum acceleration $a = C$ and solving the Newton equations for $v$ as a function of $x$:
\[\LS(x) = \sqrt{2Cx}\]
The \textbf{r}ight \textbf{s}ide ($RS$) is derived by setting the braking acceleration to $C,$ assuming $SB$ holds as a strict equality, and solving for $v$ as a function of $x$:
\[\RS(x) = \sqrt{2C(g-x)}\]
The upper velocity envelope is their minimum:
\[\UV(x) = \min(\LS(x),\RS(x))\]
We refer to the set of points bounded by these curves as the Large Triangle (\LT), with the understanding that its sides are actually convex curves.


Choosing a \textbf{l}ower \textbf{v}elocity bound $\LV(x)$ at each point $x$ is more challenging, because velocity necessarily decays as $x$ approaches $g$.
To prove that position $x=d$ is eventually reached, our proof must rule out so-called ``Zeno'' behaviors, such as those where distanced traversed in each timestep decreases exponentially.
Ruling out Zeno behaviors requires a somewhat strong lower bound $\LV$.
Strong bounds are of course desirable in and of themselves: the higher the bound, the faster $x$ is guaranteed to reach $g$.

Our control scheme predicts future motion: a control choice is safe if for every duration $t \in [0,T],$ the motion is safe.
By monotonicity, it suffices to show the case  $t = T$.
The lower bound likewise predicts motion, we introduce several helper functions which predict motion.
We write $\BM(T)$ (before midpoint) for ``the position $x$ from which the car will reach the midpoint in time $T$ at maximum acceleration.''
Under time-triggered control, $\BM(T)$ is the ``point of no return'' by which Angel must react, and we will safety detect the approaching midpoint by comparing our position to $\BM(T)$.  
We also write $\AM(T)$ (after midpoint) for the conjugate point of $\BM(T)$ opposite the midpoint.
To derive $\BM(T)$ we set $x(k) + \UV(x(k))T + \frac{CT^2}{2} = g/2$ with the simplification that $\UV(x(k)) = \LS(x(k))$ before the midpoint.
We solve (by computer):
\[x + \sqrt{2Cx}T  + \frac{CT^2}{2} = g/2\]
which yields
\[\BM(T) = \frac{1}{2} (-2 T\sqrt{Cg} + C T^2 + g)\]
and its conjugate
\[\AM(T) = \frac{1}{2} (2 T\sqrt{Cg} - C T^2 + g)\]

It is clear that control must be more conservative past $\BM(T)$, the only question is \emph{how} conservative.
For example, the minimum safe acceleration from $\BM(T/2)$ is $0$, which takes us to the boundary at $\AM(T/2)$ if Demon chooses $t=T$.
We might wonder if it is sufficient to exclude all states above the line connecting $\BM(T/2)$ and $\AM(T/2),$ indicated by a red dashed line in   \rref{fig:full-envelope}.
It is not, under adversarial timing. Consider blue point $\BM(T/2)$ again.
Demon may choose $t = T/2$ so that by definition Angel regains control at $x = x_{mid}$ but $v < v_{mid}$.
Demon now has a strategy to keep Angel strictly in the interior of the Large Triangle indefinitely, so that the lower bound is eventually violated, probably after $\AM(T/2)$.

Our envelope can only hope to be an invariant if a \emph{strict} inequality $\LV(x) < \UV(x)$ holds for \emph{all} $x \in  (x_{mid},g)$.
We believe that the optimal lower bound is particularly nontrivial, so we do not aim to show our bound is perfectly tight.
We do note that our bound is not exceptionally loose either.
For example, if we were to permit Demon to elapse physics up to time $2T,$ then any strategy more aggressive than ours becomes clearly unsafe.
Regardless, we believe the controller used in this proof is tight modulo our simplifying assumption $A = B = C$, we simply use this slightly looser bound for the sake of \emph{proving liveness}.

The starting observation is that Angel might need to decrease acceleration as early as $\BM(T)$.
The simplest live decision would be to construct a braking rate $a_{end} > 0$ which reaches $g$ exactly when $v=0,$ and simply brake at rate $ a = -a_{end}$ until stopped.
The braking curve of rate $a_{end}$ form the small right side (SRS) of the triangle, while its mirror forms the small left side (SLS).
Perhaps we could reuse $\LS$ for the left side, but we preserve symmetry in hopes of simplifying the proof.

Consider the upper-left point  $\UL = (\BM(T), \UV(\BM(T)))$ and upper-right point $\UR = (\AM(T), \UV(\AM(T)))$.
The conservative braking rate $a_{end}$ 
is defined by setting $\SB$ as an equality and solving for $a$ as a function of $x$ and $v$.


\begin{align*}
           &\left(\frac{\UV(\BM(T))^2}{2a_{end}} = g - \BM(T)\right)\\
\text{iff} &\left(a_{end} = \frac{\LS(\BM(T))^2}{2(g- \BM(T))}\right)\\
= & \frac{\LS(\frac{1}{2} (-2 T\sqrt{C g}  + C T^2 + g))^2}{2(g-\frac{1}{2} (-2 T\sqrt{Cg} + C T^2 + g))}\\
= & \frac{\LS(\frac{1}{2} (-2 T\sqrt{C g}  + C T^2 + g))^2}{2g + 2 T\sqrt{Cg}  - C T^2 - g}\\
= & \frac{\LS(\frac{1}{2} (-2 T\sqrt{C g}  + C T^2 + g))^2}{g  + 2 T\sqrt{Cg}  - C T^2}\\
= & \frac{\LS(- T\sqrt{C g} + CT^2/2 + g/2))^2}{g + 2 T \sqrt{Cg} - C T^2}\\
= & \frac{\left(\sqrt{2C(-T\sqrt{C g} + C/2 T^2 + g/2)}\right)^2}{g + 2 T\sqrt{Cg}  - C T^2}\\
= & \frac{2C(-T\sqrt{C g} + CT^2/2 + g/2)}{g + 2 T \sqrt{Cg} - C T^2}\\
= &\frac{C (\sqrt{g} - T \sqrt{C})^2}{g + 2 T \sqrt{Cg} - C T^2}
\end{align*}

Then we define the Small Triangle (\ST) as the set of points bounded by:
\[\SLS(x) = \sqrt{2 a_{end} x}\]
\[\SRS(x) = \sqrt{2 a_{end}(g-x)}\]
The lower velocity envelope is their minimum:
\[\LV(x) = \min(\SLS(x),\SRS(x))\]

We are finally ready to give the variant formula $J$ and the progress term $\Delta{x}$ which induces the termination metric.
The variant simply says the velocity is between the envelopes and gives the signs of state variables:
\[J \lequiv \LV(x) \leq v \leq \UV(x) \land x \geq 0 \land v \geq 0\]

The minimum progress is found by taking the shortest distance traversed along any path of duration $T/2$ contained within the envelope between triangles \LT and \ST:
\[\Delta{x}= \inf_{(x,v)\in \LT \setminus \ST, t \in [T/2,T]} \inf_a (vt/2 + at^2/2)\]
where $a$ ranges over accelerations which remain within the envelope for time $t$.
We observe that distance traversed is monotone in $x, v,$ and $t$.
Thus it suffices to consider the worst case where $t = T/2, x=0, v=0, a = -a_{end}$.
The worst-case acceleration is $a = a_{end}$ because $a_{end}$ defines the lower bound of the velocity envelope.
The worst case is then \[\frac{a_{end} T^2}{8}\]
We construct our termination argument.
We use $\met \equiv (g-x)$ as our termination metric  with terminal value $\metz = 0$ and define
$a \metle b \lequiv a + \Delta{x} \leq b$.
As usual, the convergence proof will need only prove a disjunction: either $\met$ has decreased \emph{or} it is equal to zero.
It sometimes happens that the penultimate step actually reaches $x=g$ but can only prove that $(g-x)$ has decreased, in which case the final step will observe $x=g$ and terminate.
We are simply observing that such behavior is permissible: when we observe the goal has already been reached, it is irrelevantly how much progress the final (i.e., previous) step had made.

The invariant and metric are major components of the proof.
The last major component is the strategy for choosing $a,$ which we have only alluded to thus far.
We wish to set the highest acceleration that is guaranteed to remain within the velocity envelope.

To find this acceleration, recall the motion equations:
\begin{align*}
x(k+t) &= x(k) + v(k)t + a \frac{t^2}{2} \\
v(k+t) &= v(k) + at
\end{align*}
where $v(k)$ and $x(k)$ are the values of state variables $v$ and $x$ at the start of the current iteration of the game loop.

The most aggressive safe acceleration is that which satisfies $SB$ as a strict equality after the pessimal time interval $T$, so we set
\[\frac{(v + aT)^2}{2C} = \left(g - (x(k) + v(k)t + a \frac{t^2}{2})\right)\]
and solve for $a$.
Wolfram Alpha gives two conjugate solutions (assuming $T \neq 0$ and $C \neq 0,$ which are true):
\[a = -\frac{\sqrt{C} \sqrt{C T^2 + 8 g - 4 T v - 8 x} + C T + 2 v}{2 T}\]
\[a = \frac{\sqrt{C} \sqrt{C T^2 + 8 g - 4 T v - 8 x} + C T + 2 v}{2 T}\],
the latter of which is positive.

We take this second solution as a candidate for the acceleration:
\[a_{cand} = \frac{\sqrt{C} \sqrt{C T^2 + 8 g - 4 T v - 8 x} - C T - 2 v}{2 T}\]
Recall that accelerations are required to fall within the range $[-B,A]$ and that for simplicitly we show the stronger condition $a \in [-C,C]$.
The lower bound $a_{cand} \geq -C$ holds by construction: $a_{cand}$ is the \emph{greatest} acceleration which remains within the safe envelope.
By construction of the envelope, an acceleration $-C$ always remains below the upper limit, so $a_{cand}$ must be at least $-C$.
The upper bound $a_{cand} \leq C$ does not hold in general: we computed $a_{cand}$ as the acceleration required to reach the maximum velocity \emph{in this timestep}, when reaching maximum velocity usually takes multiple timesteps.
Thus the final acceleration ($\acc$) is computed by bounding $a_{cand}$ against the upper limit $C$:
\[\acc \equiv \min\left(C, a_{cand}\right)\]


\newcommand{\ndd}[1]{\m{\mathcal{D}_{\text{#1}}}}
\irlabel{implyR|implyR}
\irlabel{FO|FO}
\irlabel{pf|{}}
\subsection{Natural Deduction Proof}
We give a formal proof in the natural deduction calculus.
We first give derived rules and lemmas used in the proof, and we split the deduction into small pieces for the sake of formatting.

\subsubsection{Derived Rules}
The vacuity axiom for constant propositions (indicated $p()$) is not sound for games~\cite{DBLP:journals/tocl/Platzer15}:
\[p() \not\limply \dbox{\alpha}{p()}\]
The following rule is sound for games, however, and can be derived from \rref{lem:bound-effect} and \rref{lem:monotone}:
\[
\cinferenceRule[gV|GV]{}
{\linferenceRule[formula]
  {\proves{\G}{}{p()} & \proves{\G}{}{\ddiamond{\alpha}{Q}}}
  {\proves{\G}{}{\ddiamond{\alpha}{p()}}}
}{}
\]
The formula $Q$ is arbitrary: as soon as Angel has \emph{any} winning strategy, vacuity becomes sound.
$Q = \btt$ is usually chosen in practice.

As discussed in \rref{sec:proof-calculus}, we do not axiomatize first-order reasoning in this paper, but assume it has been implemented in ``the host logic''
Thus we label first-order steps ``FO'' but do give arithmetic proofs in full axiomatic detail.
To be precise, the following (non-effective!) rule is sound:
\[\cinferenceRule[FO|{$\text{FO}_C$}]{}
{\linferenceRule[formula]
{\lclose}
{\proves{\G}{\eQE{\phi}{M}}{\phi}}}{\text{exists }M : (\pity{s}{\sty}{\ftrans{\G\limply\phi}\ s})\text{ and }\G,\phi \text{ F.O.}}
\]

\subsubsection{Lemmas}
We use several arithmetic facts throughout the proof.
\begin{lemma}[Safe Upper Bound]
Braking is safe when the upper velocity bound is satisified.
Formula $v \leq \UV(x) \limply (g-x) \geq \SB(x,v)$ is provable in context $C > 0$.
\label{lem:safe-upper-bound}
\end{lemma}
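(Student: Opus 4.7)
The plan is to dispatch this lemma as a pure first-order arithmetic step via \irref{FO}; the real work is just identifying the short algebraic chain. Specifically, I would unfold $\UV$ and $\RS$, apply the defining inequality of the minimum, and then invoke monotonicity of squaring together with positivity of $C$.

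First I would unfold $\UV(x) = \min(\LS(x),\RS(x))$ and use the elementary fact $\min(a,b) \leq b$ to weaken the hypothesis $v \leq \UV(x)$ into $v \leq \RS(x) = \sqrt{2C(g-x)}$. Because the right-hand side is a real expression involving a square root and $C > 0$, its well-definedness already forces $g - x \geq 0$, and in particular $\RS(x) \geq 0$. Squaring the resulting nonnegative bound then gives $v^2 \leq 2C(g-x)$, and dividing by $2C > 0$ yields $v^2/(2C) \leq g-x$, which is precisely the $\SB$ inequality.

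The main subtlety I anticipate is the squaring step: squaring is monotone on $[0,\infty)$ but not on all of $\mathbb{R}$, so if $v$ could be a large negative number the chain would break (e.g.\ $v = -100 \leq \UV(x) = 1$ does not imply $v^2 \leq 2C(g-x)$). I would handle this by either reading the lemma as implicitly requiring $v \geq 0$, which is the regime where $\UV$ is intended to bound physical velocity, or, when invoking the lemma inside the main proof, by pulling the loop invariant $J$'s conjunct $v \geq 0$ into the assumption context first. Under Type-2 constructive reals this creates no additional obstacle, because $\min$, squaring, and division by a positive constant are all computable and preserve the relevant comparisons, so the whole step is a single application of \irref{FO} against the host theory's decision procedure or manual proof.
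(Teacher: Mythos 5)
Your proposal takes essentially the same route as the paper: the paper discharges this lemma with the single line ``By first-order arithmetic,'' and your chain---unfold $\min$, weaken to the right-hand branch, square, and divide by $2C>0$---is exactly the algebra that FO step is meant to cover. Your added caveat that the squaring step needs $v \geq 0$ (which in every use of the lemma is supplied by the invariant $J$) is a fair refinement of a point the paper leaves implicit, not a divergence in approach.
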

\begin{proof}
By first-order arithmetic.
\end{proof}

\begin{lemma}[Acceleration in Bounds]
Our control algorithm only proposed accelerations which are feasible.
Formula $\SB \limply   -C \leq \acc \leq C \limply -B \leq \acc \leq A]$ is provable in context
$A > 0 \land B > 0 \land C = \min(A,B) \land (g-x) \geq \SB$.
\label{lem:accel-in-bounds}
\end{lemma}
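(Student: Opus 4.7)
The plan is to reduce the lemma to a short arithmetic leaf. First I would strip both implications using \irref{btestI} twice (recalling that $\phi \limply \psi$ unfolds to $\dbox{\ptest{\phi}}{\psi}$), moving the antecedents $\SB$ and $-C \leq \acc \leq C$ into the context. The remaining obligation is the conjunction $-B \leq \acc \leq A$, which is a pure first-order arithmetic claim to be discharged in a single \irref{FO} step.

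At the leaf the reasoning is transparent. Since $C = \min(A,B)$, constructively one has $C \leq A$ and $C \leq B$ directly from the defining inequalities of $\min$. Combined with the hypothesis $-C \leq \acc \leq C$, these chain to $\acc \leq C \leq A$ and $-B \leq -C \leq \acc$, proving both conjuncts. Neither $\SB$ nor the context conjunct $(g-x) \geq \SB$ contributes; they travel with the caller's context but are irrelevant here. Likewise the definition $\acc \equiv \min(C, a_{cand})$ need not be unfolded, because the lemma's hypothesis already delivers $\acc \in [-C, C]$, and the tightness of $a_{cand}$ belongs to a sibling safety lemma rather than to this bounds lemma.

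There is no genuine obstacle. The one subtlety worth flagging is that constructivity forbids any case split ``is $A \leq B$ or is $B \leq A$?'', since trichotomy fails for Bishop reals. This is not a problem: the defining projections $\min(A,B) \leq A$ and $\min(A,B) \leq B$ are available without case analysis, following from the lazy bit-stream construction of $\min$ discussed in the proof overview, and the two inequality chains above use only these projections. The lemma exists in the development solely to bridge the controller's internal envelope $[-C,C]$ with the admissibility test $-B \leq a \leq A$ inside $\ctrl$, and its proof is correspondingly one-line in spirit: two \irref{btestI} applications followed by a single \irref{FO} leaf whose only nontrivial ingredient is the pair of $\min$-projection inequalities.
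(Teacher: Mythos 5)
Your reading of the statement trivializes it, and in doing so you skip exactly the step the paper's proof supplies. You parse the chain as ``assume $\mathrm{SB}$, assume $-C \leq \mathrm{acc} \leq C$, conclude $-B \leq \mathrm{acc} \leq A$'', so your leaf uses only the projections $\min(A,B)\leq A$ and $\min(A,B)\leq B$, and you explicitly declare that $\mathrm{SB}$ does not contribute and that $\mathrm{acc} \equiv \min(C, a_{\mathit{cand}})$ need not be unfolded. The paper's proof is the opposite: its content is precisely that $\mathrm{SB}$ yields $-C \leq \mathrm{acc} \leq C$. The lower bound $\mathrm{acc} \geq -C$ is argued from the construction of $a_{\mathit{cand}}$ as the greatest acceleration that keeps the state inside the safe envelope: whenever the safe-braking inequality holds, full braking $-C$ itself stays inside the envelope, hence $a_{\mathit{cand}} \geq -C$ and so $\mathrm{acc} = \min(C, a_{\mathit{cand}}) \geq -C$. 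The upper bound $\mathrm{acc} \leq C$ is immediate because $\mathrm{acc}$ is obtained by clipping $a_{\mathit{cand}}$ against $C$. The passage from $[-C,C]$ to $[-B,A]$ via $C = \min(A,B)$ — the only thing your proof establishes — is the easy coda, and your remark that the $\min$-projections are constructively available without a trichotomy split is correct but not where the work is.

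The statement's formatting (the stray bracket and the unparenthesized implication chain) makes your literal reading understandable, but the lemma's role in the development settles the intent: the main deduction's first test step invokes it as ``$\mathrm{SB} \limply \mathrm{acc} \in [-B,A]$'', with only envelope facts (hence $\mathrm{SB}$) in the context at that point. Under your reading, the main proof would still owe a separate derivation of $-C \leq \mathrm{acc} \leq C$, which appears nowhere else in the paper. So the missing idea is the unfolding of $\mathrm{acc} = \min(C, a_{\mathit{cand}})$ together with the envelope argument that $a_{\mathit{cand}} \geq -C$ under $\mathrm{SB}$; without it your proof does not discharge the obligation the lemma is cited for.
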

\begin{proof}
The lower bound holds by construction: as discussed in the last section, $\acc \geq -C$ when $(g-x) \geq \SB$.
The upper bound holds trivially because $\acc$ is computed by bounding $a_{cand}$ to $C$.
\end{proof}

\subsubsection{Main Proof}

The main proof begins by applying the convergence rule \irref{dloopI}, which is parameterized by a well-order $(\met, \metz, <_{\met})$ and an invariant $\varphi$.
The convergence metric is the remaining distance $\met = (g-x)$ and the zero element is $\metz = 0$.
The ordering $<_{\met}$ defines $0$ less than all other states and otherwise defines
$(s \leq t) \equiv ((g-x)\ s + \Delta{x} \leq (g-x)\ t)$.
The invariant $\conv$ says variables' signs are preserved and that velocity remains within its envelope:
\[\conv \lequiv (\LV(x) \leq v \leq \UV(x) \land x \geq 0 \land v \geq 0)\]
Note we do not include the sign conditions on $T, A, B, C$ in the invariant because they are constants.
By \irref{gV}, for any game $\alpha$ and constant proposition (indicated $p()$), we have $p() \limply \ddiamond{\alpha}{p()}$ whenever
$\ddiamond{\alpha}{\psi}$ holds for \emph{any} postcondition $\psi$.
Loop convergence contains such a proof, thus vacuity can always applied to constants of a convergence proof in the inductive step.

\begin{sequentdeduction}[array]
\linfer[implyR]
{
  \linfer[dloopI]{
      \linfer[pf]{\ndd{pre}}{\lsequent{\pre}{\conv}}
    ! \linfer[pf]{\ndd{body}}{\lsequent{\conv,(\metz \metleq \met \land \met_0 = \met)}{(\conv\land \met_0 \metgr \met)}}
    ! \linfer[pf]{\ndd{post}}{\lsequent{\conv \land \metz \metgeq \met}{\phi}}
  }
  {\lsequent{\pre}{\ddiamond{\prepeat{\alpha}}{\post}}}
}
{\lsequent{}{\pre \limply \ddiamond{\prepeat{\alpha}}{\post}}}
\end{sequentdeduction}

We first dispatch the precondition and postcondition steps, which are purely arithmetic.

\begin{sequentdeduction}[array]
\linfer[FO]
  {\lclose}
  {\lsequent{T > 0 \land A > 0 \land B > 0 \land v=0 \land x=0}{(\LV(x) \leq v \leq \UV(x) \land x \geq 0 \land v \geq 0)}}
\end{sequentdeduction}
By construction, when $x=v=0$ then $\LV(x) = \UV(x) = 0$ and the first two conjuncts are trivially satisfied.
The latter two conjuncts follow directly from $v=0$ and $x=0$.

\begin{sequentdeduction}[array]
\linfer[FO]
  {\lclose}
  {\lsequent{(\LV(x) \leq v \leq \UV(x) \land x \geq 0 \land v \geq 0) \land 0 \geq (g-x)}{(g=x \land v=0)}}
\end{sequentdeduction}
Since $v \leq \UV(x)$ then (SB) $\frac{v^2}{2C} \leq (g-x)$. The LHS is always nonnegative, so $x \leq g$.
Since $0 \geq (g-x)$ then $x \geq g$ so $g=x$ as desired.
Moreover $\frac{v^2}{2C} \leq (g-x) = 0$ so $v=0$ as desired.

We proceed to the proof of the loop body.
We abbreviate $\G \equiv \LV(x) \leq v \leq \UV(x) \land x \geq 0 \land v \geq 0, 0 \leq (g-x) \land \met_0 = (g-x)$
and $\phi \equiv \LV(x) \leq v \leq \UV(x) \land x \geq 0 \land v \geq 0\land \met_0 \geq \Delta{x} + (g - x)$.

The first \irref{dtestI} step applies the lemma $\SB \limply \acc \in [-B,A]$.
The second \irref{dtestI} step applies the lemma $\ndd{arith1}$.
In the \irref{dsolve} step, we write $X(t)$ and $V(t)$ for the solutions of $x$ and $v$, where
$X(t) = x + vt + acc \frac{t^2}{2}$ and $V(t) = v + \acc\,t$.
The domain constraint assumption in the \irref{dsolve} step has been simplified monotonely.
\begin{sequentdeduction}[array]
\linfer[seqI] {
  \linfer[seqI] {
    \linfer[drandomI] {
      \linfer[seqI] { %
        \linfer[dtestI] { %
          \linfer[seqI] { %
            \linfer[asgnI] { %
              \linfer[seqI] { %
                \linfer[dualI] { %
                  \linfer[dsolve] { %
                    \linfer[seqI] {
                      \linfer[dtestI] { %
                        \linfer[dualI] { %
                          \linfer[btestI] {
                            \linfer[pf]
                              {\ndd{arith2}}
                              {\lsequent{\G,a=\acc,T/2 \leq t \leq T, V(t) \geq 0}{\esub{\phi}{(x,v)}{(X(t),V(t))}}}
                            } 
                            {\lsequent{\G,a=\acc,0 \leq t \leq T, V(t) \geq 0}{\esub{\dbox{\ptest{t > T/2}}{\phi}}{(x,v)}{(X(t),V(t))}}}} 
                          {\lsequent{\G,a=\acc,0 \leq t \leq T, V(t) \geq 0}{\esub{\ddiamond{\pdual{\ptest{t > T/2}}}{\phi}}{(x,v)}{(X(t),V(t))}}}} 
                        {\lsequent{\G,a=\acc,0 \leq t \leq T, V(t) \geq 0}{\esub{\ddiamond{\ptest{\safe}}{\ddiamond{\pdual{\ptest{t > T/2}}}{\phi}}}{(x,v)}{(X(t),V(t))}}}} 
                      {\lsequent{\G,a=\acc,0 \leq t \leq T, V(t) \geq 0}{\esub{\ddiamond{\ptest{\safe};\pdual{\ptest{t > T/2}}}{\phi}}{(x,v)}{(X(t),V(t))}}}} 
                    {\lsequent{\G,a=\acc,t=0}{\dbox{\pevolvein{\D{x}=v,\D{v}=a}{t \leq T \land v \geq 0}}{\ddiamond{\ptest{\safe};\pdual{\ptest{t > T/2}}}{\phi}}}}} 
                  {\lsequent{\G,a=\acc,t=0}{\ddiamond{\plant}}{\ddiamond{\ptest{\safe};\pdual{t \geq T/2}}{\phi}}}} 
                {\lsequent{\G,a=\acc,t=0}{\ddiamond{\plant;\ptest{\safe};\pdual{t \geq T/2}}{\phi}}}} 
              {\lsequent{\G,a=\acc}{\ddiamond{\humod{t}{0}}{\ddiamond{\plant;\ptest{\safe};\pdual{t \geq T/2}}{\phi}}}}} 
            {\lsequent{\G,a=\acc}{\ddiamond{\humod{t}{0};\plant;\ptest{\safe};\pdual{t \geq T/2}}{\phi}}}} 
          {\lsequent{\G,a=\acc}{\ddiamond{\ptest{-B \leq a \leq A}}{\ddiamond{\humod{t}{0};\plant;\ptest{\safe};\pdual{t \geq T/2}}{\phi}}}}} 
        {\lsequent{\G,a=\acc}{\ddiamond{\ptest{-B \leq a \leq A}; \humod{t}{0};\plant;\ptest{\safe};\pdual{t \geq T/2}}{\phi}}}}
      {\lsequent{\G}{\ddiamond{\prandom{a}}{\ddiamond{\ptest{-B \leq a \leq A}; \humod{t}{0}}{\ddiamond{\plant;\ptest{\safe};\pdual{t \geq T/2}}{\phi}}}}}} 
    {\lsequent{\G}{\ddiamond{\ctrl}{\ddiamond{\plant;\ptest{\safe};\pdual{t \geq T/2}}{\phi}}}}} 
  {\lsequent{\G}{\ddiamond{\alpha}{\phi}}}
\end{sequentdeduction}

The remainder of the proof is \ndd{arith1} and \ndd{arith2}.
To prove them, we first prove a lemma \ndd{arith3}.
$\lsequent{\G,a=acc,0 \leq t \leq T, V(t) \geq 0}{LV(X(t)) \leq V(t) \leq \UV(X(t))}$.

To prove \ndd{arith3} prove the upper bound $V(T) \leq \UV(X(T)),$ then the lower bound $\LV(X(T)) \leq V(T)$.
The upper bound holds by construction since $\acc$ is specifically chosen to remain within $\LV(T)$.
To show the lower bound,  consider two regions which partition the safe envelope.
Let Region 1 be bounded by SLS, LS, and SRS, while Region 2 is bounded by SRS, RS, and LS.
Note that in our strategy, case analysis on Region 1 vs.\ Region 2 is implicit in the comparisons $\min$ and $\max$.
We make this case analysis explicit in our proof for the sake of clarity.
It suffices in Region 1 to show $\acc \geq a_{min}$ and in Region 2 to show $\acc \geq -C$.

First consider an initial state $(X(0),V(0)) \in \text{Region 1}$.
The acceleration can be determined by first determining the distance $\delta{X} = X(T) - X(0)$ traversed in time $T$, not to be confused with the \emph{global minimum} traversed distance $\Delta{x}$.
The greatest elapsed distance is attained at upper-left point $\UL = \LS \land \SRS$ where, by definition of $\BM(T)$ we have
$\delta{X} = (x_{mid} - \BM(T))$.
At this point, the acceleration $\acc = C$ is clearly live.
The next extremum is the upper-right point $\UR = \SLS \land \SRS$.
Because $a_{min}$ defines the curve $\SLS,$ any acceleration $\acc \geq a_{min}$ is live from any point on $\SRS$.
Monotonicity shows that all other points are live because $\delta{X}$ decreases with distance from $\SRS$.
That is, $\acc \leq C \limply \delta{X} \leq (x_{mid} - \BM(T)) = (\AM(T) - x_{mid}),$ so that $X(T) \leq \AM(T)$.
Then $\acc$ increases as $\delta{X}$ decreases.
Since $a_{min}$ is defined to yield $V(T) = \UV(X(T))$ only at $X(T) = \AM(T),$ then by monotonicity also $V(T) \leq \UV(T)$.

From the UL and UR points, correctness of the entire Region 1 follows by additional monotonicity and continuity arguments.
Any point between  $\UL$ and $\UR$ has $\acc \in [a_{min},C]$ because the restriction of $\acc$ to this line is monotone.
As we move toward the lower-left corner (LL), then $\acc$ can only increase: decreasing $V(0)$ or $X(0)$ frees us to be \emph{less} conservative.
Thus $\acc \geq a_{min}$ everywhere in Region 1 as desired.

From every point in Region 2, consider the simplistic braking rate $a_{simp}$ which, if followed indefinitely, achieves $x=g$ exactly when $v=0$.
The trajectory of $a_{simp}$ is clearly within Region 2 for any initial point in Region 2.
The value $a_{simp}$ is always in $[-C, -a_{min}]$ and so is not only physically achievable but is also live.
Thus concludes the proof of \ndd{arith3}.

For \ndd{arith1} we prove
$\lsequent{\G,a=\acc,0 \leq t \leq T, V(t) \geq 0}{\esub{\safe}{(x,v)}{(X(t),V(t))}},$ i.e., we prove
$\lsequent{\G,a=\acc,0 \leq t \leq T, V(t) \geq 0}{X(t) \leq g}$, assuming \ndd{arith3}.
Since $V(T) \leq \UV(X(T)) \leq \RS(X(T))$ then (SB) $\frac{V(T)^2}{2C} \leq (g - X(T))$.
Since the LHS is trivially nonnegative, the RHS is nonnegative, i.e. $X(T) \leq g$ as desired.

We now prove \ndd{arith2}, i.e., we prove
$\G,a=\acc,T/2 \leq t \leq T, V(t) \geq 0 \allowbreak \vdash \esub{\phi}{(x,v)}{(X(t),V(t))}$
where we abbreviate:
$\rho_1 = \LV(x) \leq v \leq \UV(x),
\rho_2 = x \geq 0,
\rho_3 = v \geq 0,
\rho_4 = (\met_0 = (g-x)),
\rho_5 = a=\acc,
\rho_4 = t \in [T/2,T],
\rho_7 = V(t) \geq 0$
and
$\phi_1 = \LV(X(t)) \leq V(t) \leq \LV(X(t)),
\phi_2 = X(t) \geq 0,
\phi_3 = V(t) \geq 0,
\phi_4 = (\met_0 \geq \Delta{x} + g-X(0) \lor g = X(t))$
so that the context
$(\G,a=\acc,0 \leq t \leq T, V(t) \geq 0) \lequiv \bigwedge_{i} \rho_i$
and $\esub{\phi}{(x,v)}{(X(t),V(t))} = \bigwedge_{i} \phi_i$.
We prove each conjunct $\phi_i$.
Conjunct $\phi_1$ is already proven by \ndd{arith3}.
We prove $\phi_3$ next to prove $\phi_2$ as a corollary.
We prove $\phi_4$ last.

In each case except $\phi_4,$ it suffices to consider the case $t = T$ by monotonicity.

We prove $\rho_3$ by hypothesis rule: assumption $\rho_7$ is the desired result.

We prove $\rho_2$.
We can prove it by the ODE solution or even more obviously by \irref{di}.
From the domain constraint, $v \geq 0$ is an invariant.
Then $\rho_2$ says $X(0) \geq 0$ and since $\D{x} = v \geq 0$ then by \irref{di} we have $X(T) \geq 0$.

We prove $\rho_4$.
To do so, we must test whether we are in the ``final'' iteration of the loop.
We perform an inexact (formula-level) comparison of  $g - x$ against  $\frac{a_{min} T^2}{16}$
with tolerance $\frac{a_{min} T^2}{16}$
so that we have constructively
$g - x \leq \frac{a_{min} T^2}{8} \lor g - x \geq \frac{a_{min} T^2}{16}$.
In the former case since $t \geq T/2$ we derive from construction of $a$ and the initial velocity envelope
that  $\frac{V(0)^2}{2a} = (g-x)$ so by definition of $X(t)$ have $X(t) = g$ which satisfies the right disjunct.

In the second case, not only does the test yield $g - X(0) \geq \frac{a_{min} T^2}{16}$, but the test $t \geq T/2$ implies a stronger condition:
\[X(t) - X(0) \geq \frac{a_{min} T^2}{8} \]
which combined with safe braking entails
$g - X(0) \geq \frac{a_{min} T^2}{8}$.
Then
\begin{align*}
           &\met_0 \geq \Delta{x} + (g-X(T))\\
\text{iff}~& g - X(0) \geq \Delta{x} + (g - X(T))\\
\text{iff}~& X(T) \geq \Delta{x} + X(0)
\end{align*}
We argue by monotonicity.
The elapsed distance $\delta{X}$ is minimized (i.e., $\delta{X} = \Delta{x}$) when velocity and acceleration are minimized, that is when $a = a_{min}$ and $x = 0$.
In the worst case Demon chooses $t = T/2,$ and Demon is responsible for satisfying the domain constraint $V(t) \geq 0$ and test $t \geq T/2$ simultaneously.
Then by the Newton equations, $(X(T) - X(0)) \geq \delta{X} = a_{\min} \frac{T^2}{8}$, which exactly the definition of $\Delta{x}$ as required.

This completes the last case of the arithmetic lemma, which in turn closes the final goal of the reach-avoid proof.


\section{Theory Proofs}
\label{app:proofs}
\renewcommand{\d}{\mathrm{d}}
\renewcommand{\mproves}[3]{#1\allowbreak\vdash #2 \allowbreak \mathop{:} #3}
We prove the stated meta-theorems of \CdGL such as monotonicity, soundness, the Existential Property, and the Disjunction Property.

\subsection{Preliminaries and Assumptions}
We first state preliminaries from the literature and assumptions.

\paragraph{Constructive ODEs.}
A difference between our soundness proof and that of \dGL is that we draw on results of constructive analysis rather than classical analysis.
The major results on which we rely have been proven in the literature, but we restate them here because the theorem statements are otherwise difficult to locate.
The main catch in applying these results is that they are proven for time-derivatives, whereas our differentials are spatial.
For this reason, we will prove \rref{lem:app-differential-lemma} equating time and space differentials within the context of an ODE, which justifies applying these existing results.

\begin{theorem}[Constructive Picard-Lindel\"{o}f~\cite{DBLP:conf/mkm/Cruz-FilipeGW04}]
Picard-Lindel\"{o}f has been formalized in Coq.
We restate it from CoRN\footnote{
The statement of Picard-Lindel\"{o}f is in file \texttt{ode/Picard.v} of the CoRN repository.
See \cite{DBLP:conf/mkm/Cruz-FilipeGW04} for the URL and commit numbers on which our statements are based.
}.
The functions and theorems referenced in our proof summary are also from the CoRN repository.
Let $\tau_1$ and $\tau_2$ be metric spaces and let $f_0 : \tau_1 \to \tau_2$ be uniformly continuous on some region $X \subseteq \tau_1$.
Consider the initial-value problem where $f(0,y) = f_0(y)$ and $\der{f}(x,y) = v(x,y)$ and where $v$ is Lipschitz on $X$.

Then there constructively exists a function $f: \tau_1 \to \tau_2$ that solves the initial value problem, i.e.,
\begin{itemize}
\item $f(0,y) = f_0(y)$
\item $\der{f}(x,y) = v(x,y)$
\end{itemize}
\begin{proof}[Summary]
  The proof relies on the existence for each $v$ of the well-known \emph{\textsf{Picard}} operator $\textsf{picard}_v : (\tau_1 \to \tau_2) \to (\tau_1 \to \tau_2)$ and the fact that this operator is contractive.
  When contracted iteratively, the limit is the solution of the ODE $\der{f}(x,y) = v(x,y)$.
  The proof relies on the Banach fixed point operator $\textsf{fp}$ such that $\textsf{fp}\ g\ g_0$ is a fixed point of $g,$ computed as the limit of the sequence $g_{i+1} = g\ g_i$ starting from the given $g_0$.
Specifically, define $g_0(t)(y) = f(0,y)$.
  \begin{enumerate}
  \item By the Banach fixed point theorem, then $\textsf{fp}\ \textsf{picard}_v\ g_0$ is a fixed point such that $\textsf{picard}_v\ (\textsf{fp}\ \textsf{picard}_v\ g_0) = (\textsf{fp}\ \textsf{picard}_v\ g_0)$.
  \item $\textsf{fp}\ \textsf{picard}_v\ g_0$ is a solution of the IVP.
  \item $\textsf{fp}\ \textsf{picard}_v\ g_0$ is constructive: its exact value is arbitrarily approximated by iterating the $\textsf{picard}$ operator. \qedhere
  \end{enumerate}
\end{proof}
\label{thm:app-pl}
\end{theorem}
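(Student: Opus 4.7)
The plan is to follow the standard Picard iteration strategy, but carried out entirely constructively so that the resulting solution is not merely asserted to exist but can be arbitrarily approximated. The first step is to set up the metric space in which the iteration takes place: the space of uniformly continuous functions $\tau_1 \to \tau_2$ on the region $X$, equipped with the supremum metric (restricted to a sufficiently small time interval so that the Picard operator becomes a contraction). The key technical device is the Picard operator $\textsf{picard}_v$ defined by integration,
\begin{equation*}
  (\textsf{picard}_v\, g)(x)(y) \;=\; f_0(y) + \int_0^{x} v(s, g(s)(y))\,\mathrm{d}s,
\end{equation*}
and the preliminary obligation is to verify that $\textsf{picard}_v$ sends uniformly continuous functions to uniformly continuous functions, which uses uniform continuity of $f_0$ together with the Lipschitz hypothesis on $v$.

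Second, I would establish that $\textsf{picard}_v$ is a contraction on the chosen interval with a contraction constant strictly less than $1$, obtained directly from the Lipschitz constant of $v$ by shrinking the interval length if necessary. This is the key algebraic estimate and should be kept separate from the fixed-point existence, since the contraction modulus is what feeds the rate-of-convergence bound downstream. Third, I would invoke the constructive Banach fixed point theorem $\textsf{fp}$ on $\textsf{picard}_v$, starting from the constant seed $g_0(x)(y) = f_0(y)$, to obtain $f \;=\; \textsf{fp}\,\textsf{picard}_v\,g_0$ together with the fixed-point identity $\textsf{picard}_v\,f = f$.

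Fourth, I would check that the fixed-point identity yields the two required properties. Evaluating at $x = 0$ gives $f(0,y) = f_0(y)$ directly from the integral formulation. Differentiating both sides of $f(x,y) = f_0(y) + \int_0^{x} v(s, f(s,y))\,\mathrm{d}s$ with respect to $x$ gives $\der{f}(x,y) = v(x, f(x,y))$ by the fundamental theorem of calculus in its constructive form; here I must take care that the integrand is (uniformly) continuous in $s$, which follows from the uniform continuity of $v$ and $f$. Finally, constructivity follows because $\textsf{fp}$ is defined as the limit of the iteration sequence $g_{i+1} = \textsf{picard}_v\,g_i$, and the contraction bound from step two translates directly into an explicit geometric error estimate $\|g_i - f\|_\infty \leq c^i \|g_1 - g_0\|_\infty / (1-c)$, so any desired accuracy is attainable by iterating sufficiently many times.

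The main obstacle I anticipate is not the algebraic structure of the iteration but the bookkeeping needed to keep every estimate constructive and uniform. In particular, one must produce an \emph{explicit} modulus of continuity for $\textsf{picard}_v\,g$ from moduli for $v$, $g$, and $f_0$, and one must ensure that the contraction constant is obtained by a \emph{concrete} rescaling of the interval rather than by a nonconstructive choice. This is precisely where the CoRN formalisation does the heavy lifting, so my plan is to cite that development at these two points rather than redo the $(\varepsilon,\delta)$ manipulations, and to reserve the detailed discussion for the subsequent \rref{lem:app-differential-lemma}, which will bridge the gap between the time-derivative formulation used by CoRN and the spatial differentials used in the \CdGL semantics.
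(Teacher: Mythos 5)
Your proposal follows essentially the same route as the paper's own proof summary: the Picard operator $\textsf{picard}_v$ as a contraction, the constructive Banach fixed point operator $\textsf{fp}$ applied from the constant seed $g_0$, verification that the fixed point solves the IVP, and constructivity via the explicit rate of convergence of the iteration, with the detailed moduli-of-continuity work delegated to the CoRN formalization. The extra detail you supply (the integral form of the operator, the FTC step, and the geometric error bound) is a faithful unpacking of what the paper cites rather than a different argument.
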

\begin{theorem}[Constructive Differential Induction (DI) Lemmas]
The following statements are restated from CoRN\footnote{See files \texttt{ftc/CalculusTheorems.v} and \texttt{ftc/Rolle.v} of CoRN~\cite{DBLP:conf/mkm/Cruz-FilipeGW04}.} and are corollaries of the constructive Mean Value Theorem.

Let $a, b : \reals$ with $a < b,$ let $I = [a,b]$.
Let $\veps : \reals > 0$.
Let $F, \der{F}, G, \der{G} : I \to \reals$.
\begin{itemize}
\item
Lemma \emph{(Feq-criterium)} supports equational DI.
If $\der{F} = \der{G}$ on $I$ and there exists $x \in I$ such that $F\ x = G\ x,$ then $F=G$ on $I$.

\item
Lemma \emph{(Derivative-imp-resp-less)} supports strict inequational DI.
If $0 < \der{F}$ on $[a,b]$ then $F\ a < F\ b$.

\item
Lemma \emph{(Derivative-imp-resp-leEq)} supports nonstrict inequational DI.
If $0 \leq \der{F}$ on $I,$ then $F\ a \leq F\ b$.
\end{itemize}
The cases for $>$ and $\geq,$ which are also proven by CoRN, are symmetric.

\label{thm:app-mvt}
\end{theorem}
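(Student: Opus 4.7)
The three claims are all corollaries of the constructive Mean Value Theorem, stated most conveniently in CoRN as the \emph{bounded-difference} principle: if $m \leq \der{F} \leq M$ on $[a,b]$ then $m(b-a) \leq F(b) - F(a) \leq M(b-a)$. My plan is to reduce each item to an instance of this principle, being careful about the constructive gap between pointwise positivity and a uniform lower bound, and using only those consequences of the MVT that are already formalized in the library we cite.

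For Feq-criterium, I would introduce the auxiliary function $H = F - G$ on $I$, with derivative $\der{H} = \der{F} - \der{G} = 0$ on all of $I$. Bounded-difference applied to $H$ with $m = M = 0$ then yields $H(y) = H(z)$ for every pair $y,z \in I$, so $H$ is constant; combined with the hypothesis $F(x) = G(x)$ at the distinguished point $x$, i.e.\ $H(x) = 0$, this propagates to $F = G$ on all of $I$. For Derivative-imp-resp-leEq, the weak inequality falls out directly from bounded-difference with $m = 0$, giving $0 \leq F(b) - F(a)$, hence $F(a) \leq F(b)$. For Derivative-imp-resp-less, the strict inequality requires upgrading the pointwise hypothesis $0 < \der{F}$ on $[a,b]$ to a uniform positive lower bound: using uniform continuity of $\der{F}$ on the compact interval $[a,b]$ together with the rational witness that positivity $>$ always provides constructively, one extracts some $\veps > 0$ with $\der{F} \geq \veps$ throughout $[a,b]$; bounded-difference with $m = \veps$ then yields $F(b) - F(a) \geq \veps(b-a) > 0$. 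The symmetric cases for $>$ and $\geq$ in the conclusion follow by applying the same argument to $-F$ in place of $F$.

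The main obstacle is exactly this uniform-bound step for Derivative-imp-resp-less: constructively, the hypothesis that $\der{F}(x) > 0$ holds at each individual $x$ does \emph{not} by itself supply a single $\veps > 0$ bounding $\der{F}$ below on all of $[a,b]$, so one must invoke a genuine compactness or uniform-continuity principle to extract the modulus. This is precisely the place where the classical argument would be silent but a Bishop-style proof has work to do, and it is the reason one cannot get by with a bare pointwise MVT; once the uniform lower bound is produced, every remaining step is an algebraic rearrangement of the bounded-difference inequality.
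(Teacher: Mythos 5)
The paper does not actually prove this theorem: it is a citation, restating three lemmas that are already formalized in CoRN, so your proposal is an attempt to supply a proof where the paper gives none. Your treatment of the equational and nonstrict cases is fine: both do follow from the Law-of-the-Mean inequality (the bounded-difference principle you state, which is exactly the \emph{Law-of-the-Mean-Abs-ineq} form the paper itself uses elsewhere), applied to $F-G$ with derivative $0$ for \emph{Feq-criterium} and with lower bound $m=0$ for \emph{Derivative-imp-resp-leEq}.

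The strict case, however, contains a genuine gap, and it is precisely at the step you single out as the main obstacle. The extraction you propose --- from ``$0 < \der{F}(x)$ for every $x \in [a,b]$'' plus uniform continuity of $\der{F}$ on the compact interval, conclude ``$\der{F} \geq \veps$ on $[a,b]$ for some $\veps > 0$'' --- is not a theorem of Bishop-style constructive analysis. Each point gives you a rational witness $q_x$ with $\der{F}(x) > q_x > 0$, but turning this pointwise data into a single uniform bound requires a finite-subcover (Heine--Borel) argument; the statement that a pointwise-positive uniformly continuous function on a compact interval is bounded away from zero is equivalent to a form of the fan theorem and fails, e.g., under recursive constructivism. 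Since the paper works in a Bishop-style setting (and CoRN assumes no such principle), this step cannot be carried out, so your proof of \emph{Derivative-imp-resp-less} does not go through as written. The constructive proof avoids any uniform bound: use positivity of the derivative at the single point $a$. From $\der{F}(a) > 0$ and the (interval-uniform) definition of the derivative, there is $\delta > 0$ such that $F(y) - F(a) \geq (\der{F}(a)/2)(y-a)$ for $y \in [a, a+\delta] \cap [a,b]$, giving a strict increase $F(a) < F(c)$ for $c = \min(a+\delta, b)$; on the remaining interval $[c,b]$ one only needs $0 \leq \der{F}$ (which follows pointwise from $0 < \der{F}$) and the nonstrict lemma to conclude $F(c) \leq F(b)$, hence $F(a) < F(b)$. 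This local-plus-monotone argument is essentially how the cited library establishes the result, and it is strictly more elementary than the compactness route you attempted.
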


\begin{theorem}[Constructive DV Lemma]
If  $\der{G} \geq d$ on $[a,b]$ for some constant $d > 0,$ then
$G(b) - G(a) \geq d (b-a)$.
\label{lem:app-dv}
\end{theorem}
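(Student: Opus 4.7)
The plan is to reduce this directly to the nonstrict inequational case of the constructive Mean Value Theorem, namely the Derivative-imp-resp-leEq lemma of \rref{thm:app-mvt}. The standard trick is to subtract off the linear term of slope $d$ so that the hypothesis $\der{G} \geq d$ becomes a nonnegativity hypothesis on the derivative of an auxiliary function.

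First I would introduce $H : [a,b] \to \reals$ defined by $H(x) = G(x) - d\,x$. Since $G$ is assumed to have derivative $\der{G}$ on $[a,b]$ and the linear function $x \mapsto d\,x$ is everywhere differentiable with derivative $d$, the derivative $\der{H}$ exists on $[a,b]$ and is equal to $\der{G} - d$. The hypothesis $\der{G} \geq d$ on $[a,b]$ then yields $\der{H} \geq 0$ on $[a,b]$, which is exactly the shape of hypothesis required by Derivative-imp-resp-leEq.

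Next I would invoke Derivative-imp-resp-leEq on $H$ to obtain $H(a) \leq H(b)$. Unfolding the definition of $H$ gives $G(a) - d\,a \leq G(b) - d\,b$, and rearranging yields the desired inequality $G(b) - G(a) \geq d(b-a)$. I do not anticipate a substantive obstacle: the only mild subtlety is confirming that the constructive formalizations of differentiability used in CoRN are preserved under subtracting a linear term, which follows from the constructive linearity of the derivative operator already present in the CoRN development. The hypothesis $d > 0$ plays no role in the inequality itself but justifies the intended use of this lemma in the \irref{dv} soundness proof, where $d$ represents the strictly positive minimum rate $\veps$ at which the separating term changes.
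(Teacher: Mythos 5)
Your proof is correct, but it takes a different route from the paper's. The paper does not go through the auxiliary function $H(x) = G(x) - d\,x$ at all: it instead invokes CoRN's quantitative lemma \emph{(Law-of-the-Mean-Abs-ineq)} --- if $\der{F} \leq c$ on $[a,b]$ then $F(b)-F(a) \leq c(b-a)$ --- directly, setting $F = -G$ and $c = -d$, and noting that the full CoRN statement tolerates $c < 0$ (and even $b < a$), so that $F(b)-F(a) \leq c(b-a)$ rearranges immediately to $G(b)-G(a) \geq d(b-a)$. Your argument instead reduces to the monotonicity corollary \emph{(Derivative-imp-resp-leEq)} of \rref{thm:app-mvt} applied to $H$, which is equally valid: the hypothesis $\der{G} \geq d$ becomes $\der{H} \geq 0$, the lemma gives $H(a) \leq H(b)$, and unfolding yields the claim. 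The trade-off is that your route leans on the constructive closure properties of differentiation (derivative of a difference, derivative of a linear term), which CoRN does provide, whereas the paper's route avoids any derivative arithmetic but needs the stronger, sign-unrestricted form of the law of the mean. You are also right that $d > 0$ is not needed for the inequality itself in either proof --- it only matters for the intended application in the soundness of \irref{dv}. One cosmetic caveat: as stated in \rref{thm:app-mvt} the CoRN lemmas are phrased for $a < b$, so a complete version of your argument (like the paper's) should dispose of the degenerate case $a = b$ trivially or cite the endpoint-general form of the underlying lemma.
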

\begin{proof}
CoRN features a lemma \emph{(Law-of-the-Mean-Abs-ineq)} which is almost our desired lemma for DV:
If $\der{F} \leq c$ on $[a,b]$ for some constant $c,$ then
 $F(b) - F(a) \leq c (b-a)$.
Assume (0) $\der{G} \geq d$ on $[a,b]$ and (1) $d > 0$.
Because (the full statement of) \emph{(Law-of-the-Mean-Abs-ineq)} supports $c < 0$ and $b < a$ as well, it suffices to
  let $F = -G$ and $c = -d,$ then from (0) and (1) have
(2) $\der{F} \leq c$ on $[a,b]$ so by \emph{(Law-of-the-Mean-Abs-ineq)}
have $F(b) - F(a) \leq c (b-a),$ thus $G(b) - G(a) \geq d(b-a)$ by definition of $F$ and $c$ as desired.
\end{proof}

\paragraph{Static Semantics.}
The proof calculus and soundness proofs rely on standard notions of free variables $\freevars{e}$, bound variables $\boundvars{\alpha},$ and must-bound variables $\mustboundvars{\alpha}$.
The design decision must be made whether to characterize these functions implicitly (semantically) or define them explicitly (syntactically).
For example the semantic free variables of an expression are the smallest set of variables which determine its meaning, while the syntactic free variables are all those which appear in free position.
The semantic free variables are never more than the syntactic free variables, but sometimes are a strict subset.
For a game $\alpha,$ the syntactic bound variables $\boundvars{\alpha},$ are those which are assigned on at least one execution path of $\alpha,$ while the syntactic must-bound variables $\mustboundvars{\alpha}$ are those which are assigned on every execution path of $\alpha$.

Because we leave the language of terms $f,g$ open (any well-typed term from the meta logic is permitted) we have no choice but to characterize free term variables implicitly or assume a correct syntactic computation exists.
For formulas and games we do have a choice: the language of games is closed and easily admits a syntactic definition.
In this work we use a closed formula language, but certainly one might \emph{wish} to use an open formula language;
it could be useful to use arbitrary type families $\tau : \sty \to \alltype$ as the postcondition of a game modality: $\ddiamond{\alpha}{\tau}$.
This is easy for a semantic treatment of variables but not a syntactic one: any new connectives would need new syntactic variable computations.
Yet, a syntactic characterization of free variables is required to show that our proof rules are effective.

The cases for systems are those from~\cite{DBLP:conf/cade/Platzer15}, while the duality cases $\pdual{\alpha}$ are homomorphic~\cite{DBLP:conf/cade/Platzer18}.
However, for comparison, we briefly discuss their semantic counterparts (note the different font) $\sfreevars{e}, \sboundvars{\alpha},$ and $\smustboundvars{\alpha}$ based directly on the coincidence and bound effect properties.
For an expression $e$ (term $f,$ formula $\phi,$ or game $\alpha$), the semantic free variables $\sfreevars{e}$ are those which can influence the meaning of $e$.
In these definitions, $V = s\ V$ is an abbreviation for $\bigwedge_{x \in V} x = s\ x$ where $s\ x$ is a constant equal to the value of $x$ in state $s$.
We write $S^\complement$ for the complement of set $S$.

\begin{align*}
\sfreevars{f} &= \bigcap \{V~|~ \text{for all }s,t : \sty\text{ if } s = t\text{ on } V\text{ then } f\ s = f\ t\}\\
\sfreevars{\phi} &= \bigcap \{V~|~ \text{for all }s,t : \sty\text{ if } s = t\text{ on } V\text{ then } \ftrans{\phi}\ s = \ftrans{\phi}\ t\}\\
\sfreevars{\alpha} &= \left(\bigcap \{V~|~ \text{for all }P : (\sty\to\alltype)\text{ and } s,t : \sty\text{ if } s = t \text{ on } V \text{ then } \atrans{\alpha}\ P\ s = \atrans{\alpha}\ P\ t\}\right)\\
                   &\cap \left(\bigcap \{V~|~ \text{for all }P : (\sty\to\alltype)\text{ and } s,t : \sty\text{ if } s = t \text{ on } V \text{ then } \dtrans{\alpha}\ P\ s = \dtrans{\alpha}\ P\ t\}\right)\\
\sboundvars{\alpha}  &= \big(\left(\bigcup\{V~|~\text{for all }P : (\sty\to\alltype)\text{ and } s : \sty\text{ if } \atrans{\alpha}\ P\ s \text{ then } \atrans{\alpha}\ (P \land (V = s\ V))\ s\right)\\
               &\cap  \left(\bigcup\{V~|~\text{for all }P : (\sty\to\alltype)\text{ and } s : \sty\text{ if } \dtrans{\alpha}\ P\ s \text{ then } \dtrans{\alpha}\ (P \land (V = s\ V))\ s\right)\big)^\complement\\
\end{align*}
The semantic bound variables of games are the complement of the set of preserved variables, which are provably equal to their initial values regardless of the postcondition.
A semantic treatment does not strictly need a definition of must-bound variables, because they are primarily used to provide a less conservative definition of syntactic free variables.
Regardless, a semantic definition can be given:
\begin{align*}
\smustboundvars{\alpha} &= \bigcup\left(\{V~|~\text{for all } s,t : \sty\text{ if } s=t\text{ on } \sfreevars{\alpha}\text{ then }\atrans{\alpha}\ (\lambda ss, \atrans{\alpha}\ (\lambda tt, ss = tt\text{ on }V)\ t)\ s\}\right)\\
                  &\cap  \bigcup\left(\{V~|~\text{for all } s,t : \sty\text{ if } s=t\text{ on } \sfreevars{\alpha}\text{ then }\dtrans{\alpha}\ (\lambda ss, \dtrans{\alpha}\ (\lambda tt, ss = tt\text{ on }V)\ t)\ s\}\right)
\end{align*}

We now recall the syntactic definitions of free, bound, and must-bound variables for the sake of contrast and the sake of being self-contained.

\begin{align*}
  \freevars{f \sim g} &= \freevars{f} \sim \freevars{g}\\
  \freevars{\ddiamond{\alpha}{\phi}} &= \freevars{\alpha} \cup (\freevars{\phi} \setminus \mustboundvars{\alpha})\\
  \freevars{\dbox{\alpha}{\phi}} &= \freevars{\alpha} \cup (\freevars{\phi} \setminus \mustboundvars{\alpha})\\\hline
  \freevars{\ptest{\phi}} &= \freevars{\phi}\\
  \freevars{\humod{x}{f}} &= \freevars{f}\\
  \freevars{\prandom{x}} &= \emptyset\\
  \freevars{\pevolvein{\D{x}=f}{\ivr}} &= \{x\} \cup \freevars{f} \cup \freevars{\ivr}\\
  \freevars{\alpha;\beta} &= \freevars{\alpha} \cup (\freevars{\beta} \setminus \mustboundvars{\alpha})\\
  \freevars{\alpha\cup\beta} &= \freevars{\alpha} \cup \freevars{\beta}\\
  \freevars{\prepeat{\alpha}} &= \freevars{\alpha}\\
  \freevars{\pdual{\alpha}} &= \freevars{\alpha}
\end{align*}

\begin{align*}
  \boundvars{\ptest{\phi}} &= \emptyset\\
  \boundvars{\humod{x}{f}} &= \{x\}\\
  \boundvars{\prandom{x}} &= \{x\}\\
  \boundvars{\pevolvein{\D{x}=f}{\ivr}} &= \{x,\D{x}\}\\
  \boundvars{\alpha;\beta} &= \boundvars{\alpha} \cup \boundvars{\beta}\\
  \boundvars{\alpha\cup\beta} &= \boundvars{\alpha} \cup \boundvars{\beta}\\
  \boundvars{\prepeat{\alpha}} &= \boundvars{\alpha}\\
  \boundvars{\pdual{\alpha}} &= \boundvars{\alpha}
\end{align*}

\begin{align*}
  \mustboundvars{\ptest{\phi}} &= \emptyset\\
  \mustboundvars{\humod{x}{f}} &= \{x\}\\
  \mustboundvars{\prandom{x}} &= \{x\}\\
  \mustboundvars{\pevolvein{\D{x}=f}{\ivr}} &= \{x,\D{x}\}\\
  \mustboundvars{\alpha;\beta} &= \mustboundvars{\alpha} \cup \mustboundvars{\beta}\\
  \mustboundvars{\alpha\cup\beta} &= \mustboundvars{\alpha} \cap \mustboundvars{\beta}\\
  \mustboundvars{\prepeat{\alpha}} &= \emptyset\\
  \mustboundvars{\pdual{\alpha}} &= \mustboundvars{\alpha}
\end{align*}

While we reuse existing definitions of $\freevars{\cdot}, \boundvars{\cdot},$ and $\mustboundvars{\cdot},$ we necessarily offer new proofs of the coincidence and bound effect properties: our semantics are entirely different from those of prior work.
As discussed in the next paragraph, $\freevars{f}$ is not defined here, rather we assume there exists $\freevars{f}$ function which satisfies the coincidence lemma.

\paragraph{Term Language.}
Because our term language reuses terms of the host type theory, we must assume basic lemmas about the term language in order to prove the corresponding lemmas for formulas and games.
The following lemmas should hold in any reasonable type theory.
Coincidence, renaming, and substitution for state variables $s$ are fundamental operations in \emph{any} $\lambda$-calculus, and we simply require that these properties must be generalizable to program variables $x,$ which are simply projections of $s$.

\begin{lemma}[Term coincidence]
  If $s = t$ on $\freevars{f},$ then $f\ s = f\ t$.
\label{lem:app-term-coincide}
\end{lemma}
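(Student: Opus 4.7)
The plan is to treat this lemma as a foundational assumption on the host type theory, as flagged in the paragraph immediately preceding it: $\freevars{f}$ is not given a syntactic definition in this paper, so \rref{lem:app-term-coincide} is essentially a specification that whatever $\freevars{\cdot}$ operator we pick for terms must satisfy. Concretely, I would phrase the lemma as a requirement on the pair $(\freevars{\cdot}, f)$: $f : \sfun{\sty}{\xty}$ is a function of type theory, and $\freevars{f}$ is any set of program variables (projections of $s$) such that $f$ factors through the projection $\projDown{\cdot}{}{\freevars{f}}$. Under this reading the lemma is immediate by unfolding.

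For the reader who wants to know that such a $\freevars{\cdot}$ actually exists for concrete terms, I would briefly sketch why. Fix a small surface syntax of terms (constants, game variables $x$, arithmetic operations $+, \cdot, /, \min, \max$, and the virtual differential $\der{\cdot}$), and define $\freevars{\cdot}$ syntactically by recursion: $\freevars{c} = \emptyset$, $\freevars{x} = \{x\}$, $\freevars{f + g} = \freevars{f} \cup \freevars{g}$, and analogously for the other arithmetic operations, with $\freevars{\der{f}} = \freevars{f} \cup \{\D{x} : x \in \freevars{f}\}$. Then I would prove coincidence by structural induction on $f$. The base case for game variables uses the getter/setter laws on $\sty$ (two states that agree on $x$ return the same value under $\lget{\cdot}{x}$). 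The arithmetic cases are immediate from the induction hypothesis and congruence of the host's arithmetic operations. The $\der{f}$ case uses the fact that the gradient $\nabla$ witness from the virtual definition of $\der{f}$ depends only on $\freevars{f}$, together with the inductive hypothesis on $f$.

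The only genuine obstacle is the $\der{\cdot}$ case, where one must be careful that the limit-based definition of $\der{f}$ as a dot product of gradient with $\D{s}$ does not accidentally introduce dependence on variables outside $\freevars{f} \cup \{\D{x} : x \in \freevars{f}\}$. The right way to handle this is to observe that since $\der{f}$ is defined virtually (only through the equation $\der{f}\ s = g\ s$), any $g$ realizing $\der{f}$ can be chosen to depend only on the variables $f$ depends on plus their primes; then coincidence for $\der{f}$ inherits from coincidence for $f$ and for $g$. With this in hand the inductive argument goes through cleanly, and the abstract statement in the paper is vindicated for the concrete term language used in the examples.
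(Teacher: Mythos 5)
Your proposal matches the paper's treatment: the paper likewise does not prove this lemma but justifies it as an assumption on the host theory, namely that once a notion of $\freevars{f}$ for state functions is fixed, coincidence is its defining correctness property (since the term language is left open, no syntactic definition is given). Your additional sketch of a concrete syntactic $\freevars{\cdot}$ for the example term language, including the $\der{f}$ case, goes beyond what the paper does but is consistent with it and unobjectionable.
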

\begin{proof}[Justification]
Just as definitions of $\freevars{\phi}$ and $\freevars{\alpha}$ have been given in the literature, we are simply assuming that the host theory supports a notion of free variables $\freevars{f}$ for functions over states.
We do not define it ourselves because its exact definition depends on what operations are supported on $\xty$ and $\sty$.
Once $\freevars{f}$ has been defined, coincidence is simply the fundamental correctness theorem for the $\freevars{\cdot}$ function.
\end{proof}

\begin{lemma}[Term renaming]
  $(\eren{f}{x}{y})\ s = f (\eren{s}{x}{y})$
\label{lem:app-term-rename}
\end{lemma}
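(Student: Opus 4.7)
The plan is to treat this the same way the paper treats the preceding term coincidence lemma: since the term language is left open (a term is literally any computable function of type $\sfun{\sty}{\xty}$), we cannot perform structural induction on terms. Instead, renaming on terms must itself be defined semantically, in a way that lifts a state-level swap to a function-level swap. The canonical choice is
\[
\eren{f}{x}{y} \;\equiv\; \lambda s{:}\sty.\; f\,(\eren{s}{x}{y}),
\]
where the state-level renaming $\eren{s}{x}{y}$ swaps the values stored at game variables $x$ and $y$ (and simultaneously swaps the values at the corresponding primed variables $\D{x}$ and $\D{y}$, so that the renaming behaves correctly inside ODE reasoning).

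With this definition in hand, the lemma collapses to one step of $\beta$-reduction: $(\eren{f}{x}{y})\,s = (\lambda s'.\, f\,(\eren{s'}{x}{y}))\,s = f\,(\eren{s}{x}{y})$. Hence the ``proof'' is really a matter of observing that this is the definition we adopt. What must be checked is that the state-level swap is well-defined on $\sty$ using the assumed setter/getter interface: one reads the values $v = \lget{s}{x}$ and $w = \lget{s}{y}$ (and their primed counterparts) and writes them back to the opposite variables using $\lset{\cdot}{\cdot}{\cdot}$. The standard setter/getter axioms~\cite{foster2010bidirectional} that we already assumed on $\sty$ are sufficient, and they also yield involutivity $\eren{\eren{s}{x}{y}}{x}{y} = s$, which will be needed elsewhere.

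The only real obstacle is one of \emph{consistency}, not of content: we must make sure this meta-level definition of $\eren{f}{x}{y}$ on terms agrees with the renaming operators as used on formulas and games, and with the assumed host-theory notion $\freevars{f}$. Concretely, when rules such as \irref{asgnI} and \irref{brandomI} descend into a formula and eventually encounter a term $f$, the term-level renaming they invoke must be exactly the $\lambda$-abstracted state-swap above. Under the semantic view this is automatic, because the formula and game semantics ultimately apply $f$ to a state, and both sides of the equation just permute the two variables in that state.

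In the full proof I would therefore: (i) fix the definition of the state-level swap $\eren{s}{x}{y}$ via the getter/setter interface, covering both the unprimed and primed components; (ii) take this as the defining equation of $\eren{f}{x}{y}$ for open terms, paralleling the way $\freevars{f}$ is treated; (iii) close by unfolding the definition. No appeal to analysis or to the game semantics is required, so this lemma is genuinely a one-line justification, and its role downstream is purely to license the renaming cases of the uniform renaming lemma for formulas and games (\rref{lem:renaming}), which then proceeds by structural induction.
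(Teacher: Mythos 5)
Your proposal is correct and matches the paper's treatment: since the term language is open, the paper does not prove this lemma either, but explicitly assumes it as a basic property of the host theory's terms (a ``modest generalization of $\alpha$-renaming''), exactly as it assumes coincidence for $\freevars{f}$. Your only deviation is to realize that assumption definitionally, taking $\eren{f}{x}{y}$ to be precomposition of $f$ with the getter/setter state swap so the equation holds by $\beta$-reduction (and transposition/involutivity comes for free), which is a harmless and compatible way to discharge the same obligation.
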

\begin{proof}[Justification]
We are assuming that variables can be renamed in terms.
This is only a modest generalization of the $\alpha$-renaming rule for variables $s$.
\end{proof}

\begin{lemma}[Term substitution]
  $(\tsub{f}{x}{g})\ s = f (\lset{s}{x}{(g\ s)})$
\label{lem:app-term-subst}
\end{lemma}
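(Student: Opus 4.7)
The plan is to treat Lemma (Term substitution) in the same spirit as Lemmas (Term coincidence) and (Term renaming) immediately preceding it: the statement concerns the \emph{open} term language of the host type theory, so the cleanest route is to take this equation as the definition of term substitution. Because terms $f,g$ are just inhabitants of $\sfun{\sty}{\xty}$, I would set $\tsub{f}{x}{g} \equiv \lambda s.\, f\ (\lset{s}{x}{(g\ s)})$ at the semantic level, in which case the lemma is immediate by a single $\beta$-reduction and requires no further justification.

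If instead one wishes to view $\tsub{f}{x}{g}$ as a syntactic operator defined by recursion on the structure of $f$, then the lemma becomes its semantic correctness and should be proved by induction on $f$. The base cases are numeric literals (unaffected by substitution, hence trivial), game variables $y \neq x$ (handled by the getter/setter axiom $\lget{(\lset{s}{x}{v})}{y} = \lget{s}{y}$), and the critical case $y = x$ (handled by the companion axiom $\lget{(\lset{s}{x}{v})}{x} = v$, so that $\tsub{x}{x}{g}\ s = g\ s$). The step cases for $+, \cdot, /, \min, \max$ are congruences: each reduces to applying the inductive hypothesis to the subterms and then using the fact that the host theory's arithmetic operations commute with $\lambda$-application.

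The one delicate case is the differential term $\der{f}$, whose meaning was given only virtually through an $(\varepsilon,\delta)$ equation. Pushing substitution through $\der{\cdot}$ is really an instance of the chain rule: replacing $x$ by $g$ inside $f$ composes the gradient $\nabla$ of $f$ with the gradient of $g$ at the state update, and the $(\varepsilon,\delta)$ witness for $f$ transports to a witness for $\tsub{f}{x}{g}$ with the appropriately rescaled $\delta$. Since differentiability of compositions must already be assumed to apply \irref{di} and \irref{dv} to substituted invariants, this obligation is discharged by whatever chain-rule lemma the host constructive-analysis library supplies (for instance the CoRN formalization of \rref{thm:app-mvt}).

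I expect the differential case to be the principal obstacle, but only for the syntactic presentation; under the semantic definition the lemma is a one-line $\beta$-reduction, and I would state the justification in that form to match the style of the two preceding term lemmas.
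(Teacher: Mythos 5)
Your primary route matches the paper's treatment: since the term language is deliberately left open (any computable function on states), the paper does not prove this lemma by induction at all but assumes it of the host theory, with the one-line justification that substitution for program variables is a modest generalization of substitution for state variables --- exactly your semantic reading under which the stated equation holds essentially by $\beta$-reduction. Your alternative syntactic induction (including the chain-rule worry for $\der{f}$) is moot in the paper's setting because there is no closed term grammar to recurse on, so no such case analysis is attempted there.
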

\begin{proof}[Justification.]
  Substitution for program variables is a modest generalization of substitution for state variables.
\end{proof}

\paragraph{Notations and abbreviations.}
Some notations are useful for brevity in the proofs which may not have been mentioned in the main paper.

It is sometimes useful to talk of ODE solutions as yielding an entire state rather than the value for one variable.
Thus we abbreviate 
\[Sol(t) = \lset{s}{(x,\D{x})}{(sol\ t, f\ (\lset{s}{x}{(sol\ t)}))}\]
 for a state-valued solution.

\subsection{Proofs of Stated Results}
For semantic proofs about the inhabitance of types, we do not explicitly write out the proof terms which inhabit them, since the proof terms are obvious from our proofs-by-type-rewriting.

\begin{lemma}[Differential Lemma]
Assume (A1) $(\solves{sol}{s}{d}{\D{x}=f})$ and (A2) $t \in [0,d]$ and (A3) $\freevars{g} \subseteq \{x\}$.
Recall that we abbreviate $Sol(t) = \lset{s}{(x,\D{x})}{(sol\ t, f\ (\lset{s}{x}{(sol\ t)}))}$.
Then
$\der{g}\ (Sol\ t) = \frac{\d}{\d r}\left[g\ (Sol\ r)\right](t)$
\label{lem:app-differential-lemma}
\end{lemma}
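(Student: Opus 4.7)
The plan is to verify the identity by reducing both sides to the same explicit expression, namely the directional derivative of $g$ at $sol(t)$ in the direction $f\,(\lset{s}{x}{(sol\,t)})$. Abbreviate $\bar{s}(r) = \lset{s}{x}{(sol\,r)}$, so that $Sol(r) = \lset{\bar{s}(r)}{\D{x}}{(f\,\bar{s}(r))}$.

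First unfold the left-hand side using the virtual definition of $\der{g}$ from \rref{sec:semantics}. Picking any witnessing $\nabla$ for $g$ at $Sol(t)$, we get $\der{g}\,(Sol\,t) = \nabla \vdot \D{Sol(t)}$. By construction the only nontrivial primed coordinate of $Sol(t)$ is $\D{x} = f\,\bar{s}(t)$; the other primed coordinates of $Sol(t)$ agree with $s$. Combining assumption (A3) with \rref{lem:app-term-coincide}, $g$ depends only on $x$, so in the same $(\veps,\delta)$ limit that defines $\nabla$ the components of $\nabla$ corresponding to variables other than $x$ must vanish (perturbing those coordinates leaves $g$ unchanged, forcing the corresponding partial to be zero). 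Hence $\nabla \vdot \D{Sol(t)} = \nabla_x \cdot f\,\bar{s}(t)$, where $\nabla_x$ is the (constructive) derivative of $g$ along $x$ at $\bar{s}(t)$.

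Next unfold the right-hand side. By \rref{lem:app-term-coincide} and (A3), $g\,(Sol\,r) = g\,\bar{s}(r)$ for every $r \in [0,d]$, so the right-hand side reduces to $\frac{\d}{\d r}[g\,\bar{s}(r)](t)$. This is a composition of the scalar function $r \mapsto sol(r)$, which by (A1) has derivative $f\,\bar{s}(r)$ on $[0,d]$, with the function $x \mapsto g\,(\lset{s}{x}{x})$, whose derivative at $sol(t)$ is exactly $\nabla_x$ chosen above. The constructive chain rule therefore yields $\frac{\d}{\d r}[g\,\bar{s}(r)](t) = \nabla_x \cdot f\,\bar{s}(t)$, matching the left-hand side.

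The main obstacle is making the chain-rule step constructive rather than invoking a classical calculus theorem: I would carry it out in the $(\veps,\delta)$ style of the differential definition, bounding $\lvert g\,\bar{s}(r) - g\,\bar{s}(t) - \nabla_x(sol(r)-sol(t))\rvert$ by the gradient-limit inequality for $g$, bounding $\lvert sol(r)-sol(t) - f\,\bar{s}(t)\,(r-t)\rvert$ via the solution property from (A1) and \rref{thm:app-mvt}, and combining the two via the triangle inequality as in standard Bishop-style analysis. A short argument also handles the boundary points $t \in \{0,d\}$ using one-sided versions of the same estimates. Once both estimates are in place, the equality of the two limits is immediate.
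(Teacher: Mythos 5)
Your proposal is correct and follows essentially the same route as the paper's (very terse) proof sketch: the left-hand side is the directional derivative of $g$ at $x = sol\ t$ in direction $\D{x} = f\ (\lset{s}{x}{(sol\ t)})$, assumption (A3) is what eliminates any dependence on the stale primed values of other variables, and the solves assumption (A1) makes this coincide with the time derivative via the chain rule. You simply spell out in $(\varepsilon,\delta)$ detail the chain-rule/coincidence steps the paper leaves implicit, which is a faithful elaboration rather than a different argument.
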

\begin{proof}[Sketch]
First observe the left-hand-side is the directional derivative of $g$ at $x = sol\ t$ in direction $\D{x} = f\ (\lset{s}{x}{(sol\ t)})$ by the semantics of $\der{g}$.
By assumptions $(\solves{sol}{s}{d}{\D{x}=f})$ and $t \in [0,d]$ then $\lget{Sol(t)}{\D{x}} = \frac{\d x}{\d t}$, i.e., the directional derivative and time derivative agree.
Assumption (A3) is essential: differential variables $\D{y}$ are not bound in $Sol$ for $y \neq x,$ so $g$ must not depend on them.
\end{proof}

In practice, however, (A3) is not a limitation.
Rather, before applying a rule which relies on this differential lemma, one would apply a step that locally transforms any additional game variables $y$ into constants, which ensures their derivatives are 0 as intended, also fulfilling the requirement of this lemma.

\begin{lemma}[Monotonicity]
  Let $P, Q : \sty \to \alltype$.
  Note that in this lemma, $P$ and $Q$ need not be of form $\ftrans{\phi}$.
  If $P\ s \limply Q\ s$ is inhabited for all $s$
  then $\atrans{\alpha}\ P\ s \limply \atrans{\alpha}\ Q\ s$
  and  $\dtrans{\alpha}\ P\ s \limply \dtrans{\alpha}\ Q\ s$  are inhabited for all $s$.
\label{lem:app-monotone}
\end{lemma}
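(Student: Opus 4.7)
My approach is simultaneous induction on the game $\alpha$, since the Angel and Demon semantics are mutually defined through the dual case $\pdual{\alpha}$. The induction hypothesis will give both Angel and Demon monotonicity for every strict subgame, and in each case I will exhibit a transformation from an inhabitant of $\atrans{\alpha}\,P\,s$ (resp.\ $\dtrans{\alpha}\,P\,s$) to one of $\atrans{\alpha}\,Q\,s$ (resp.\ $\dtrans{\alpha}\,Q\,s$), using the hypothesis $h : \pity{s}{\sty}{(P\,s \to Q\,s)}$.

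The atomic and first-order cases are all routine rewriting. For $\ptest{\psi}$ the Angel side is a product whose second component is transformed by $h$, and the Demon side is a function whose codomain is transformed by post-composition with $h$. For $\humod{x}{f}$, the hypothesis $h$ applied at the updated state $\lset{s}{x}{(f\,s)}$ suffices. For $\prandom{x}$, Angel keeps the same witness $v$ and transforms the $P$-component by $h$, while Demon post-composes $h$ after each $v$-instantiated application. For $\alpha \cup \beta$, Angel does a case split and applies the IH on whichever branch was chosen, and Demon pairs the two IH-transformed components. For $\alpha;\beta$, I apply the IH on $\beta$ to obtain $\atrans{\beta}\,P\,s \to \atrans{\beta}\,Q\,s$ pointwise, which I then feed as the hypothesis for the IH on $\alpha$ with postconditions $\atrans{\beta}\,P$ and $\atrans{\beta}\,Q$; the Demon case is symmetric. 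The dual case $\pdual{\alpha}$ is immediate because the simultaneous IH on $\alpha$ already gives both Angel and Demon monotonicity. For the ODE game, both Angel and Demon inhabitants have the form of a tuple whose last component is a proof of the postcondition at $Sol(d)$; I reuse the same duration, solution, solvability witness, and domain-constraint witness, transforming only the final postcondition component by $h$ at the state $Sol(d)$.

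The loop cases are where the mutual induction genuinely bites. For Angel's $\prepeat{\alpha}$, I recall that $\atrans{\prepeat{\alpha}}\,P\,s$ is the least fixed point, so I proceed by a secondary (inner) induction on the inductive witness $M : \atrans{\prepeat{\alpha}}\,P\,s$. If $M$ arose from the ``stop'' constructor with an underlying proof of $P\,s$, I apply $h\,s$ to obtain $Q\,s$ and re-inject via the ``stop'' constructor at the $Q$-indexed fixed point. If $M$ arose from the ``go'' constructor wrapping a proof of $\atrans{\alpha}\,(\atrans{\prepeat{\alpha}}\,P)\,s$, the inner IH supplies the pointwise transformation from $\atrans{\prepeat{\alpha}}\,P$ to $\atrans{\prepeat{\alpha}}\,Q$, which I feed to the outer IH on $\alpha$ to transform the $\atrans{\alpha}$-wrapper, and then re-inject via the ``go'' constructor. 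For Demon's $\prepeat{\alpha}$, I use coinduction: I choose the invariant $\tau' = \dtrans{\prepeat{\alpha}}\,P$ and discharge the two required observations by unfolding the given coinductive witness --- the step observation $\tau'\,t \to \dtrans{\alpha}\,\tau'\,t$ comes directly from the first projection of the coinductive unfolding, while the stop observation $\tau'\,t \to Q\,t$ is the composition of the second projection $\tau'\,t \to P\,t$ with $h\,t$.

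The main obstacle is the repetition cases, and in particular making sure the mutual and nested (co)inductions are carried out in the right order: the outer IH on $\alpha$ must be available when applying the inner (co)induction for $\prepeat{\alpha}$, but cannot itself depend on the loop case. This is why the simultaneous induction is set up on $\alpha$ first, and the inductive/coinductive reasoning for the repetition is performed as a sub-argument using the already-proven IH on the body $\alpha$. All the required monotonicity side conditions on the type-theoretic fixed points are already discharged by the general monotonicity assumption built into the type theory (see the remarks on Knaster--Tarski preceding \rref{lem:monotone}), so no additional meta-level work is needed there.
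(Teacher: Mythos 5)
Your proposal is correct and follows essentially the same route as the paper's proof: a simultaneous structural induction on $\alpha$ (covering Angel and Demon, with the dual case handled by the mutual IH), routine component-wise transformations via $h$ in the atomic, ODE, choice, and composition cases, and for repetition an inner induction on membership in the least fixed point for Angel plus a coinductive argument for Demon. If anything, your treatment of the two loop cases is spelled out more explicitly than the paper's terse fixed-point rewriting, but the underlying argument is the same.
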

\begin{proof}

In each case, assume (0) $P\ s \limply Q\ s$ for all $s : \sty$.
Then fix some such $s : \sty,$ for which (0) trivially also holds.
Then assume (1) $\atrans{\alpha}\ P\ s$ or $\dtrans{\alpha}\ P\ s$ to show
  $\atrans{\alpha}\ Q\ s$ or $\dtrans{\alpha}\ Q\ s$ accordingly.
We annotate a step with subscript 0 when its justification is fact (0), likewise for other facts.

The Angel and Demon cases are proven by simultaneous induction, of which we list the Angel cases first.

\mycase $\humod{x}{f},$ have
$ \atrans{\humod{x}{f}}\ P\ s
= P\ (\lset{s}{x}{(f\ s)})
\limply_0 Q\ (\lset{s}{x}{(f\ s)})
= \atrans{\humod{x}{f}}\ Q\ s$

\mycase $\prandom{x},$ have
$\atrans{\prandom{x}}\ P\ s
= \sity{v}{\xty}{(P\ (\lset{s}{x}{v}))}$.
Let $v$ such that $(P\ (\lset{s}{x}{v}))$.
Then by (0), $Q\ (\lset{s}{x}{v}),$ and picking the same $v,$
$\sity{v}{\xty}{(Q\ (\lset{s}{x}{v}))}
= \atrans{\prandom{x}}\ Q\ s$.

\mycase $\ptest{\phi},$ have
$ \atrans{\ptest{\phi}}\ P\ s
= \ftrans{\phi}\ s \mathop{\kwprod} P\ s
\limply_0 \ftrans{\phi}\ s \mathop{\kwprod} Q\ s
= \atrans{\ptest{\phi}}\ Q\ s$

\mycase $\pevolvein{\D{x}=f}{\ivr},$ have
\begin{align*}
 &\atrans{\pevolvein{\D{x}=f}{\ivr}}\ P\ s\\
=~&\sity{d}{{\reals_{\geq0}}}{\sity{sol}{[0,d]\to\xty}{}}\\
 &(\solves{sol}{s}{d}{\D{x}=f})\\
 &\mathop{\kwprod} (\pity{t}{{[0,d]}}{\ftrans{\ivr}\ {(\lset{s}{x}{(sol\ t)})}})\\
 &\mathop{\kwprod} P\ (\lset{s}{(x,\D{x})}{(sol\ d, f\ (\lset{s}{x}{(sol\ d)}))})
\end{align*}
Then unpack $d$ and $sol$ such that $(\solves{sol}{s}{d}{\D{x}=f})$
  and $(\pity{t}{{[0,d]}}{P\ {(\lset{s}{x}{(sol\ t)})}})$
so that $P\ (\lset{s}{(x,\D{x})}{(sol\ d, f\ (\lset{s}{x}{(sol\ d)}))})$ and by (1) have
\[Q\ (\lset{s}{(x,\D{x})}{(sol\ d, f\ (\lset{s}{x}{(sol\ d)}))})\]
so
\begin{align*}
 &\sity{d}{{\reals_{\geq0}}}{\sity{sol}{[0,d]\to\xty}{}}\\
 &(\solves{sol}{s}{d}{\D{x}=f})\\
 &\mathop{\kwprod} (\pity{t}{{[0,d]}}{\ftrans{\ivr}\ (\lset{s}{x}{(sol\ t)})})\\
 &\mathop{\kwprod} Q\ (\lset{s}{(x,\D{x})}{(sol\ d, f\ (\lset{s}{x}{(sol\ d)}))})\\
=~&\atrans{\pevolvein{\D{x}=f}{\ivr}}\ Q\ s
\end{align*}

\mycase $\alpha;\beta,$ have
$ \atrans{\alpha;\beta}\ P\ s
= \atrans{\alpha}\ (\atrans{\beta}\ P)\ s$ (2).
Note by IH on $\beta$ that (3) for all $s,$ have $(\atrans{\beta}\ P\ s) \limply (\atrans{\beta}\ Q\ s)$.
Then (2) and (3) suffice to apply the IH on $\alpha,$ giving
$\atrans{\alpha}\ (\atrans{\beta}\ Q)\ s
= \atrans{\alpha;\beta}\ Q\ s$

\mycase $\alpha\cup\beta,$ have
$ \atrans{\alpha \cup \beta}\ P\ s
= \atrans{\alpha}\ P\ s \mathop{\kwsum} \atrans{\beta}\ P\ s
= \atrans{\alpha}\ Q\ s \mathop{\kwsum} \atrans{\beta}\ Q\ s
= \atrans{\alpha \cup \beta}\ Q\ s$

\mycase $\prepeat{\alpha},$ have
\begin{align*}
  &\atrans{\prepeat{\alpha}}\ P\ s\\
= &(\lindty{\tau'\mathrel{:}(\sty \to \alltype)}{} \lambda{t:\sty}\,\\
  & (P\ t \to \tau'\ t)\\
& \quad \mathop{\kwsum}  (\atrans{\alpha}\ \tau'\ t \to \tau'\ t))\ s
\end{align*}
Then note for all $t$ have $P\ t \limply Q\ t$ so that
\begin{align*}
 & (\lindty{\tau'\mathrel{:}(\sty \to \alltype)}{} \lambda{t:\sty}\,(P\ t \to \tau'\ t)\\
& \quad\mathop{\kwsum} (\atrans{\alpha}\ \tau'\ t \to \tau'\ t))\ s\\
\limply &(\lindty{\tau'\mathrel{:}(\sty \to \alltype)}{} \lambda{t:\sty}\,(Q\ t \to \tau'\ t)\\
& \quad\mathop{\kwsum} (\atrans{\alpha}\ \tau'\ t \to \tau'\ t))\ s\\
= &\atrans{\prepeat{\alpha}}\ Q\ s
\end{align*}

\mycase $\pdual{\alpha},$ have
$ \atrans{\pdual{\alpha}}\ P\ s
= \dtrans{\alpha}\ P\ s
\limply_{IH} \dtrans{\alpha}\ Q\ s
= \atrans{\pdual{\alpha}}\ Q\ s
$ where the step marked IH employs the IH from the simultaneous IH on Demonic games, which applies because $\alpha$ is structurally smaller than $\pdual{\alpha}$.

We give the Demon cases.

\mycase $\humod{x}{f},$ have
$ \dtrans{\humod{x}{f}}\ P\ s
= P\ (\lset{s}{x}{(f\ s)})
\limply_0 Q\ (\lset{s}{x}{(f\ s)})
= \dtrans{\humod{x}{f}}\ Q\ s$

\mycase $\prandom{x},$ have
$\dtrans{\prandom{x}}\ P\ s
= \pity{v}{\xty}{(P\ (\lset{s}{x}{v}))},$ so
(2) $(P\ (\lset{s}{x}{v}))$ for all $v:\xty$.
Apply (2) for arbitrary $v$ so by (1),
(2) $(Q\ (\lset{s}{x}{v}))$ for all $v:\xty$, thus
 $\pity{v}{\xty}{(Q\ (\lset{s}{x}{v}))}
= \dtrans{\prandom{x}}\ Q\ s$.

\mycase $\ptest{\rho},$ have
$ \dtrans{\ptest{\rho}}\ P\ s
= (\ftrans{\rho}\ s \limply P\ s)$ (2).
Assume $\ftrans{\rho}\ s$ so by (2), $P\ s$ and by (1), $Q\ s$,
thus
$(\ftrans{\rho}\ s \limply Q\ s)
= \dtrans{\ptest{\rho}}\ Q\ s$.

\mycase $\pevolvein{\D{x}=f}{\rho},$ have
\begin{align*}
 &\dtrans{\pevolvein{\D{x}=f}{\rho}}\ P\ s\\
=&\pity{d}{{\reals_{\geq0}}}{\sity{sol}{[0,d]\to\xty}{}}\\
 &(\solves{sol}{s}{d}{\D{x}=f})\\
 &\to (\pity{t}{{[0,d]}}{\ftrans{\rho}\ (\lset{s}{x}{(sol\ t)})})\\
 &\to P\ (\lset{s}{(x,\D{x})}{(sol\ d, f\ (\lset{s}{x}{(sol\ d)}))})
\end{align*}
Then for arbitrary  $d$ and $sol$ and assume $(\solves{sol}{s}{d}{\D{x}=f})$
  and $(\pity{t}{{[0,d]}}{P\ {(\lset{s}{x}{(sol\ t)})}}),$
so that $P\ (\lset{s}{(x,\D{x})}{(sol\ d, f\ (\lset{s}{x}{(sol\ d)}))})$ and by (0) have
\[Q\ (\lset{s}{(x,\D{x})}{(sol\ d, f\ (\lset{s}{x}{(sol\ d)}))})\] giving
\begin{align*}
&\pity{d}{{\reals_{\geq0}}}{\sity{sol}{[0,d]\to\xty}{}}\\
&(\solves{sol}{s}{d}{\D{x}=f})\\
&\to (\pity{t}{{[0,d]}}{\ftrans{\rho}\ (\lset{s}{x}{(sol\ t)})})\\
&\to Q\ (\lset{s}{(x,\D{x})}{(sol\ d, f\ (\lset{s}{x}{(sol\ d)}))})\\
= &\dtrans{\pevolvein{\D{x}=f}{\rho}}\ Q\ s
\end{align*}

\mycase $\alpha;\beta,$ have
$ \dtrans{\alpha;\beta}\ \ftrans{P\ s}
= \dtrans{\alpha}\ (\dtrans{\beta}\ P)\ s,$ call this fact (2).
Note by IH on $\beta$ that (3) for all $s,$ have $(\dtrans{\beta}\ P\ s) \limply (\dtrans{\beta}\ Q\ s)$.
Then (2) and (3) suffice to apply the IH on $\alpha,$ giving
$\dtrans{\alpha}\ (\dtrans{\beta}\ Q)\ s
= \dtrans{\alpha;\beta}\ Q\ s$

\mycase $\alpha\cup\beta,$ have
$ \dtrans{\alpha \cup \beta}\ P\ s
= \dtrans{\alpha}\ P\ s  \mathop{\kwprod} \dtrans{\beta}\ P\ s
\iheq \dtrans{\alpha}\ Q\ s  \mathop{\kwprod} \dtrans{\beta}\ Q\ s
= \dtrans{\alpha \cup \beta}\ Q\ s$

\mycase $\prepeat{\alpha},$ have
\begin{align*}
  &\dtrans{\prepeat{\alpha}}\ P\ s\\
= & (\lcoty{\tau'\mathrel{:}(\sty \to \alltype)}{}\lambda{t:\sty}\, (\tau'\ t \to \dtrans{\alpha}\ \tau'\ t)\\
&\quad \mathop{\kwprod} (\tau'\ t \to P\ t))\ s
\end{align*}
then since for all $t$ have $P\ t \limply Q\ t$ then have
\begin{align*}
  &(\lcoty{\tau'\mathrel{:}(\sty \to \alltype)}{} \lambda{s:'\sty}\,\\
  & \quad(\tau'\ t \to \dtrans{\alpha}\ \tau'\ t)\\
  & \quad\mathop{\kwprod} (\tau'\ t \to P\ t))\ s\\
\limply &(\lcoty{\tau'\mathrel{:}(\sty \to \alltype)}{} \lambda{t:\sty}\,\\
 & \quad(\tau'\ t \to \dtrans{\alpha}\ \tau'\ t)\\
 &\quad \mathop{\kwprod} (\tau'\ t \to Q\ t))\ s\\
= &\dtrans{\prepeat{\alpha}}\ Q\ s
\end{align*}

\mycase $\pdual{\alpha},$ have
$ \dtrans{\pdual{\alpha}}\ P\ s
= \atrans{\alpha}\ P\ s
\limply_{IH} \atrans{\alpha}\ Q\ s
= \dtrans{\pdual{\alpha}}\ Q\ s$
where the IH is from the simultaneous induction on Angelic games.
\end{proof}

The static semantics results are stated informally in the main paper for the sake of brevity.
We give full formal statements here with proof.
The coincidence lemmas for formulas, Angelic games, and Demonic games are proven by simultaneous induction.
We also include results for contexts which are simply finite conjunctions of formulas, and write $\ftrans{\G}$ to mean the product of $\ftrans{\phi}$ for $\phi \in \G$.
The same holds of the renaming and substitution lemmas.

Note that we state coincidence and bound effect for games differently from prior work~\cite{DBLP:conf/cade/Platzer18} simply to avoid introducing some extra notations used in prior work.

\begin{lemma}[Formula coincidence]
If $s=t$ on $\freevars{\G} \cup \freevars{\phi}$ then given $M$ such that $\mproves{\ftrans{\G}\ s}{M}{(\ftrans{\phi}\ s)}$
there exists $N$ such that $\mproves{\ftrans{\G}\ t}{N}{(\ftrans{\phi}\ t)}$.
Coincidence for contexts also holds: If $s=t$ on $\freevars{\G}$ and $\ftrans{\G}\ s$ is inhabited then $\ftrans{\G}\ t$ is inhabited.
Coincidence for the construct $(\solves{sol}{s}{d}{\D{x}=f})$ also holds:
If $s=t$ on $\freevars{f} \cup \{x\}$ then $(\solves{sol}{s}{d}{\D{x}=f}) = (\solves{sol}{t}{d}{\D{x}=f})$
\label{lem:app-formula-coincide}
\end{lemma}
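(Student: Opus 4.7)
The plan is to prove the three statements together by simultaneous structural induction, carried out jointly with two companion game-coincidence lemmas, one for $\atrans{\alpha}$ and one for $\dtrans{\alpha}$. The intended game statement is: if $s=t$ on $\freevars{\alpha}\cup(\freevars{P}\setminus\mustboundvars{\alpha})$ and $P:\sty\to\alltype$ is itself coincidence-respecting on $\freevars{P}$, then $\atrans{\alpha}\ P\ s$ is inhabited iff $\atrans{\alpha}\ P\ t$ is inhabited (likewise for $\dtrans{\alpha}$). With this in hand, the modality cases of formula coincidence unfold by the formula semantics and invoke the game IH, using $\freevars{\ddiamond{\alpha}{\phi}}=\freevars{\alpha}\cup(\freevars{\phi}\setminus\mustboundvars{\alpha})$ by definition; the comparison case $f\sim g$ reduces directly to term coincidence (\rref{lem:app-term-coincide}). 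Context coincidence then follows by iterating formula coincidence over the conjuncts of $\G$, and the typed form $\mproves{\ftrans{\G}\ s}{M}{(\ftrans{\phi}\ s)} \Rightarrow \mproves{\ftrans{\G}\ t}{N}{(\ftrans{\phi}\ t)}$ is recovered by precomposing $M$ with the backward transport on $\G$ and postcomposing with the forward transport on $\phi$.

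The ODE solution clause is a self-contained algebraic step. Unfolding, $(\solves{sol}{s}{d}{\D{x}=f})$ asserts $\lget{s}{x}=sol\ 0$ together with $\der{sol}\ r = f\ (\lset{s}{x}{(sol\ r)})$ on $r\in[0,d]$. When $s=t$ on $\freevars{f}\cup\{x\}$, the initial value reads only $x$, and $f$ is applied to states that differ from $s$ only by overwriting $x$, so term coincidence yields equality of both clauses.

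The game induction itself is routine in the base cases: deterministic assignment and test reduce to term or formula coincidence, and random assignment preserves the hypothesis since $\mustboundvars{\prandom{x}}=\{x\}$. The choice $\alpha\cup\beta$ splits on the two branches. The composition $\alpha;\beta$ is the principal compositional subtlety: after a run of $\alpha$ the intermediate states reached from $s$ and from $t$ must be shown to agree on $\freevars{\beta}$, which is exactly why $\freevars{\alpha;\beta}$ subtracts $\mustboundvars{\alpha}$, and the argument is closed by a bound-effect style transport on the intermediate state together with the IH on $\beta$. The dual case $\pdual{\alpha}$ swaps to the companion IH.

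The main obstacles are the ODE game and the repetition cases. For $\pevolvein{\D{x}=f}{\ivr}$, the ODE solution coincidence above transports $sol$ between $s$ and $t$; one must additionally verify that the terminal update $\lset{s}{(x,\D{x})}{(sol\ d, f\ (\lset{s}{x}{(sol\ d)}))}$ agrees with its $t$-counterpart on the variables the postcondition depends on, which follows by term coincidence applied once more to equate the $\D{x}$ component, since $\freevars{f}\subseteq\freevars{\pevolvein{\D{x}=f}{\ivr}}$. For $\prepeat{\alpha}$, the Angelic side is a least fixed point and is handled by induction on the inductive derivation that inhabits $\atrans{\prepeat{\alpha}}\ P\ s$, applying the game IH on $\alpha$ at each unfolding. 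The Demonic side is the hardest step: $\dtrans{\prepeat{\alpha}}$ is a greatest fixed point, so the transport is constructed by coinduction, lifting a coinvariant $\tau'$ at $s$ to a coinvariant at $t$ via the game IH on $\alpha$ and closing with Knaster--Tarski as in \rref{lem:app-monotone}.
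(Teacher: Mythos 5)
Your overall plan coincides with the paper's: formula, context, and solves-clause coincidence are proven by simultaneous induction with Angel/Demon game coincidence, the comparison case reduces to term coincidence, the sequent form is recovered by transporting $\G$ backwards and $\phi$ forwards, the solves clause is the algebraic computation you describe, and the repetition cases go by induction (Angel) respectively coinduction (Demon) on membership in the fixed point. Two points of your write-up deviate from what actually works in this semantics. First, your handling of $\alpha;\beta$ via ``intermediate states reached from $s$ and from $t$'' and a ``bound-effect style transport'' does not fit the predicate-transformer semantics: $\atrans{\alpha;\beta}\ P\ s=\atrans{\alpha}\ (\atrans{\beta}\ P)\ s$, there is no reachability relation to compare intermediate states over, and the paper's proof of this lemma never invokes bound effect. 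The paper closes the case purely compositionally, rewriting $\atrans{\beta}\ \ftrans{\phi}=\ftrans{\ddiamond{\beta}{\phi}}$ and applying the IH on $\alpha$ with the \emph{formula} postcondition $\ddiamond{\beta}{\phi}$, justified by the set identity $\freevars{\alpha;\beta}\cup(\freevars{\phi}\setminus\mustboundvars{\alpha;\beta})=\freevars{\alpha}\cup(\freevars{\ddiamond{\beta}{\phi}}\setminus\mustboundvars{\alpha})$. Your own generalized IH (arbitrary coincidence-respecting $P$) would also close this case without any bound-effect argument: the IH on $\beta$ is exactly the statement that the postcondition $\atrans{\beta}\ P$ respects coincidence, after which the IH on $\alpha$ applies; so the repair is available inside your framework, but as written the step is the wrong argument.

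Second, be careful with stating the game lemma for arbitrary $P:\sty\to\alltype$ ``coincidence-respecting on $\freevars{P}$'': the paper deliberately keeps postconditions syntactic ($\ftrans{\phi}$) because $\freevars{\cdot}$ is only syntactically defined for formulas and games, and the whole point of the appendix's discussion of semantic versus syntactic static semantics is that arbitrary type families have no syntactic free-variable computation. Your generalization is workable if you phrase the hypothesis semantically (there exists a set $V$ such that $P$ is determined by $V$ and the relevant unions are taken over $V$), but then you must also check that the FV-set bookkeeping in the composition, ODE, and repetition cases goes through with $V$ in place of $\freevars{\phi}$; the paper sidesteps all of this by the $\ftrans{\ddiamond{\beta}{\phi}}$ trick. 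Everything else in your proposal (assignment, test, random assignment, choice, dual, ODE terminal-state bookkeeping for the $\D{x}$ component, and the two fixed-point cases) matches the paper's proof.
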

\begin{proof}
The formula and context cases are proven by simultaneous induction with one another and with \rref{lem:app-game-coincide}.

\mycase $\Gemp$ holds trivially because the unit tuple witnesses $\ftrans{\Gemp}\ t$ for all $t$.

\mycase $\G,\psi:$
Then (A0) $\ftrans{\G,\psi}\ s = \ftrans{\G}\ s \mathop{\kwprod} \ftrans{\psi}\ s$.
Note $\freevars{(\G,\psi)} = \freevars{\G} \cup \freevars{\psi}$ as required to apply the IH.
Apply the IH on smaller context $\G$ to get $\ftrans{\G}\ s \limply \ftrans{\G}\ t$ and apply the formula IH to get (1) $\ftrans{\psi}\ s \limply \ftrans{\psi}\ t$.
Then from (1) the right conjunct of (A0) have $\ftrans{\psi}\ t,$ then with the left conjunct of (A0) have $\ftrans{\G}\ t \mathop{\kwprod} \ftrans{\psi}\ t = \ftrans{(\G,\psi)}$ as desired.

We note a simplification for the formula cases:
In each case the proof of the conclusion begins by assuming
(A2) $\ftrans{\G}\ t$ is inhabited, which by the context cases \rref{lem:app-formula-coincide}
and because $V \supseteq \freevars{\G}$ gives an inhabitant of (A4) $\ftrans{\G}\ s$.
Then by modus ponens on (A1) gives
(A3) $\ftrans{\phi}\ s$ in each case.
In short, in every case we are entitled to simply consider formulas as in (A3) rather than sequents (A1).

\mycase $\ddiamond{\alpha}{\phi},$ have
$\ftrans{\ddiamond{\alpha}{\phi}}\ s
= \atrans{\alpha}\ \ftrans{\phi}\ s
\iheq \atrans{\alpha}\ \ftrans{\phi}\ t
= \ftrans{\ddiamond{\alpha}{\phi}}\ t$
where the IH applies because $\freevars{\ddiamond{\alpha}{\phi}} = \freevars{\alpha} \cup (\freevars{\phi} \setminus \mustboundvars{\alpha})$.

\mycase $\dbox{\alpha}{\phi},$ have
$\ftrans{\dbox{\alpha}{\phi}}\ s
= \dtrans{\alpha}\ \ftrans{\phi}\ s
\iheq \dtrans{\alpha}\ \ftrans{\phi}\ t
= \ftrans{\dbox{\alpha}{\phi}}\ t$
where the IH applies because $\freevars{\dbox{\alpha}{\phi}} = \freevars{\alpha} \cup (\freevars{\phi} \setminus \mustboundvars{\alpha})$.

\mycase $f \sim g:$
Because $\freevars{f \sim g} = \freevars{f} \cup \freevars{g}$ then by \rref{lem:app-term-coincide} have $f\ s = f\ t$ and $g\ s = g\ t$ from which we derive
$\ftrans{f \sim g}\ s = (f\ s \sim g\ s) = (f\ t \sim g\ t) = \ftrans{f \sim g}\ t$.

\mycase $(\solves{sol}{s}{d}{\D{x}=f}):$
Note (0) $(\lget{s}{x} = sol\ 0)$ iff $(\lget{t}{x} = sol\ 0)$ since $s = t$ on $\{x\}$.
Also, $s = t$ on $\freevars{f} \setminus \{x\}$ so by \rref{lem:app-term-coincide} have
(1) $(\der{y}\ r = f\ (\lset{s}{x}{(sol\ r)}))$ iff $(\der{sol}\ r = f\ (\lset{t}{x}{(sol\ r)}))$ for all $r \in [0,d]$.
Next, by \rref{lem:app-term-coincide} have (2) $f\ (\lset{s}{x}{(sol\ r)}) = f\ (\lset{t}{x}{(sol\ r)})$ since $s = t$ on $\freevars{f}$.

Then by (0), (1), and (2) have
$ (\solves{sol}{s}{d}{\D{x}=f})
= (\sprod{(\lget{s}{x} = sol\ 0)}{(\pity{r}{[0,d]}{(\der{sol}\ r = f\ (\lset{s}{x}{(sol\ r)}))})})
= (\sprod{(\lget{t}{x} = sol\ 0)}{(\pity{r}{[0,d]}{(\der{sol}\ r = f\ (\lset{t}{x}{(sol\ r)}))})})
= (\solves{sol}{t}{d}{\D{x}=f})
$
\end{proof}

\begin{lemma}[Game coincidence]
If $s=t$ on $V \supseteq \freevars{\G} \cup \freevars{\alpha} \cup (\freevars{\phi} \setminus \mustboundvars{\alpha})$ then:
\begin{itemize}
\item  given $M$ such that $\mproves{\ftrans{\G}\ s}{M}{(\atrans{\alpha}\ \ftrans{\phi}\ s)}$
there exists $N$ such that $\mproves{\ftrans{\G}\ t}{N}{(\atrans{\alpha}\ \ftrans{\phi}\ t)}$
\item  given $M$ such that $\mproves{\ftrans{\G}\ s}{M}{(\dtrans{\alpha}\ \ftrans{\phi}\ s)}$
there exists $N$ such that $\mproves{\ftrans{\G}\ t}{N}{(\dtrans{\alpha}\ \ftrans{\phi}\ t)}$
\end{itemize}
\label{lem:app-game-coincide}
\end{lemma}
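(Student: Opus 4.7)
The plan is to prove both statements by simultaneous structural induction on $\alpha$, mutually recursive with \rref{lem:app-formula-coincide}. The context hypothesis on the left of each sequent is dispatched by the context case of \rref{lem:app-formula-coincide}: from $s=t$ on $V \supseteq \freevars{\G}$ and inhabitation of $\ftrans{\G}\ s$ we obtain inhabitation of $\ftrans{\G}\ t$, so it suffices in each case to show that $\atrans{\alpha}\ \ftrans{\phi}\ s$ and $\atrans{\alpha}\ \ftrans{\phi}\ t$ (resp.\ the Demonic counterparts) are equi-inhabited whenever $s=t$ on $\freevars{\alpha}\cup(\freevars{\phi}\setminus\mustboundvars{\alpha})$. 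In each case the strategy is to unfold one layer of the semantic definition and then reduce to \rref{lem:app-term-coincide}, the formula IH of \rref{lem:app-formula-coincide}, or the game IH on a proper subgame.

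For the atomic games $\ptest{\psi}$, $\humod{x}{f}$, $\prandom{x}$, and $\pevolvein{\D{x}=f}{\ivr}$, I would unfold $\atrans{\cdot}$ and $\dtrans{\cdot}$, use \rref{lem:app-term-coincide} to equate term arguments across $s$ and $t$, and then invoke the formula IH to transport the postcondition. For assignment and random choice, the updated states $\lset{s}{x}{v}$ and $\lset{t}{x}{v}$ agree on $\freevars{\phi}$: they agree on $x$ by construction, and on the remaining free variables because $s$ and $t$ did. The ODE case additionally uses the coincidence statement for $(\solves{sol}{s}{d}{\D{x}=f})$ already established in \rref{lem:app-formula-coincide}, together with coincidence for $\ivr$ at each intermediate state $\lset{s}{x}{(sol\ r)}$ and for $\phi$ at the final updated state.

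For the compositional cases $\pchoice{\alpha}{\beta}$, $\alpha;\beta$, and $\pdual{\alpha}$, I would unfold and apply the IH. The delicate case is sequential composition, since $\freevars{\alpha;\beta} = \freevars{\alpha}\cup(\freevars{\beta}\setminus\mustboundvars{\alpha})$ couples the two subgames through the must-bound set. The cleanest approach is to rewrite $\atrans{\alpha;\beta}\ \ftrans{\phi}\ s = \atrans{\alpha}\ \ftrans{\ddiamond{\beta}{\phi}}\ s$ and then invoke the game IH on the smaller subgame $\alpha$ with postcondition formula $\ddiamond{\beta}{\phi}$; the free-variable bookkeeping then lines up exactly, since $\freevars{\ddiamond{\beta}{\phi}} = \freevars{\beta}\cup(\freevars{\phi}\setminus\mustboundvars{\beta})$. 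The $\pdual{\alpha}$ case swaps Angelic and Demonic clauses, which is exactly what justifies doing the two halves of the lemma by a single simultaneous induction.

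The main obstacle will be the repetition $\prepeat{\alpha}$, whose semantics is a Knaster--Tarski least (resp.\ greatest) fixed point and for which $\mustboundvars{\prepeat{\alpha}} = \emptyset$, so the needed agreement is exactly $\freevars{\alpha}\cup\freevars{\phi}$. For the Angelic statement I plan to induct on the given inhabitant of the inductive fixed-point type at $s$: the stop-constructor case uses the formula IH on $\phi$, while the go-constructor case invokes the game IH on $\alpha$ with the loop-invariant postcondition $\ddiamond{\prepeat{\alpha}}{\phi}$, together with \rref{lem:app-monotone} to transport the inductively constructed predicate $\tau'$ between $s$ and $t$. The Demonic case is dual: I would build an inhabitant of the coinductive type at $t$ by coinduction, using the game IH on $\alpha$ at each unfolding step to re-establish the guard. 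The subtle point here is that the fixed-point variable $\tau'$ ranges over all states whereas our hypothesis only asserts agreement at the initial state; as in the classical \dGL soundness argument, this is resolved by strengthening the invariant along the trace of the game so that, when \rref{lem:app-monotone} is applied, the pointwise agreement between the two fixed-point constructions is enforced only where it is actually needed.
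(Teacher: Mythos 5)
Your proposal follows essentially the same route as the paper's proof: simultaneous induction with formula coincidence, dispatching the context via the context case, handling atomic games by term coincidence plus the formula IH (including the $\solves{sol}{s}{d}{\D{x}=f}$ coincidence for ODEs), sequential composition by rewriting the postcondition as $\ddiamond{\beta}{\phi}$ with the same free-variable bookkeeping, duality by swapping the two clauses of the simultaneous induction, and repetition by induction (resp.\ coinduction) on membership in the fixed point. If anything, your loop case---flagging that $\tau'$ ranges over all states while the hypothesis only relates the initial states, and resolving this by strengthening the transported predicate and applying \rref{lem:app-monotone}---is more explicit than the paper's one-line appeal to (co)induction on the membership proof, and it is the right way to make that step rigorous rather than circularly invoking the lemma for $\prepeat{\alpha}$ itself.
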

\begin{proof}
Proven by induction simultaneously with \rref{lem:app-formula-coincide}.

In each case assume (A0) $s=t$ on
$V \supseteq \freevars{\G} \cup \freevars{\alpha} \cup (\freevars{\phi} \setminus \mustboundvars{\alpha})$
and (A1) $\mproves{\ftrans{\G}\ s}{M}{(\atrans{\alpha}\ \ftrans{\phi}\ s)}$ or
    $\mproves{\ftrans{\G}\ s}{M}{(\dtrans{\alpha}\ \ftrans{\phi}\ s)}$ as appropriate.
We note a simplification:
In each case the proof of the conclusion begins by assuming
(A2) $\ftrans{\G}\ t$ is inhabited, which by the context case of \rref{lem:app-formula-coincide}
and because $V \supseteq \freevars{\G}$ gives an inhabitant of (A4) $\ftrans{\G}\ s$.
Then by modus ponens on (A1) gives
(A3) $\atrans{\alpha}\ \ftrans{\phi}\ s$ or $\dtrans{\alpha}\ \ftrans{\phi}\ s$ in each case.
In short, in every case we are entitled to simply consider formulas as in (A3) rather than sequents as in(A1).

We give the Angel cases.

\mycase $\humod{x}{f}$:
Since $\freevars{\humod{x}{f}} = \freevars{f}$ note by \rref{lem:app-term-coincide} that (0) $f\ s = f\ t$.
Note that $\mustboundvars{\humod{x}{f}} = \{x\}$ so (1) $s = t$ on $V \supseteq \freevars{\phi} \setminus \{x\}$.
Then
$ \atrans{\humod{x}{f}}\ \ftrans{\phi}\ s
= \ftrans{\phi} (\lset{s}{x}{(f\ s)})
=_0 \ftrans{\phi} (\lset{s}{x}{(f\ t)})
=_1 \ftrans{\phi} (\lset{t}{x}{(f\ t)})
= \atrans{\humod{x}{f}}\ \ftrans{\phi}\ t$

\mycase $\prandom{x}$:
Note that $\mustboundvars{\prandom{x}} = \{x\}$ so (0) $s = t$ on $V \supseteq \freevars{\phi} \setminus \{x\}$.
Then
$ \atrans{\prandom{x}}\ \ftrans{\phi}\ s
= (\sity{v}{\xty}{\ftrans{\phi}\ (\lset{s}{x}{v})})
=_0 (\sity{v}{\xty}{\ftrans{\phi}\ (\lset{t}{x}{v})})
= \atrans{\prandom{x}}\ \ftrans{\phi}\ t
$

\mycase $\ptest{\psi},$ have
$ \atrans{\ptest{\psi}}\ \ftrans{\phi}\ s
= \ftrans{\psi}\ s \mathop{\kwprod} \ftrans{\phi}\ s
= \ftrans{\psi}\ t \mathop{\kwprod} \ftrans{\phi}\ t
= \atrans{\ptest{\psi}}\ \ftrans{\phi}\ t
$

\mycase $\pevolvein{\D{x}=f}{\rho}$:
By the case for $(\solves{sol}{s}{d}{\D{x}=f})$ and

have (0) $(\solves{sol}{s}{d}{\D{x}=f}) = (\solves{sol}{t}{d}{\D{x}=f})$.
Note $s = t$ on $\freevars{\rho} \setminus \{x\}$ since
$\freevars{\rho} \subseteq \freevars{\pevolvein{\D{x}=f}{\rho}}$ and
$\{x,\D{x}\} = \mustboundvars{\pevolvein{\D{x}=f}{\rho}}$ and $\D{x} \notin \freevars{\rho}$ by syntactic constraints, thus the IH
on $\rho$ applies to give (1) $(\pity{r}{{[0,d]}}{\ftrans{\rho}\ {(\lset{s}{x}{(sol\ r)})}}) = (\pity{r}{{[0,d]}}{\ftrans{\rho}\ {(\lset{t}{x}{(sol\ r)})}})$.
Likewise $s = t$ on $\freevars{\phi} \setminus \{x,\D{x}\}$ so the IH on $\phi$ gives 
(2) $\ftrans{\phi}\ (\lset{s}{(x,\D{x})}{(sol\ d, f\ (\lset{s}{x}{(sol\ d)}))})
 = \ftrans{\phi}\ (\lset{t}{(x,\D{x})}{(sol\ d, f\ (\lset{t}{x}{(sol\ d)}))})$.

Then applying (1), (2), and (3) we have

\begin{align*}
  &\atrans{\pevolvein{\D{x}=f}{\rho}}\ \ftrans{\phi}\ s\\
= &\sity{d}{{\reals_{\geq0}}}{\sity{sol}{[0,d]\to\xty}{}}\\
  &(\solves{sol}{s}{d}{\D{x}=f})\\
  &\mathop{\kwprod} (\pity{r}{{[0,d]}}{\ftrans{\rho}\ (\lset{s}{x}{(sol\ r)})})\\
  &\mathop{\kwprod} (\ftrans{\phi}\ (\lset{s}{(x,\D{x})}{(sol\ d, f\ (\lset{s}{x}{(sol\ d)}))}))\\
= &\sity{d}{{\reals_{\geq0}}}{\sity{sol}{[0,d]\to\xty}{}}\\
  &(\solves{sol}{t}{d}{\D{x}=f})\\
  &\mathop{\kwprod} (\pity{r}{{[0,d]}}{\ftrans{\rho}\ (\lset{t}{x}{(sol\ r)})})\\
  &\mathop{\kwprod} (\ftrans{\phi}\ (\lset{t}{(x,\D{x})}{(sol\ d, f\ (\lset{s}{x}{(sol\ d)}))}))\\
= &\atrans{\pevolvein{\D{x}=f}{\rho}}\ \ftrans{\phi}\ t
\end{align*}
 as desired.

\mycase $\alpha;\beta$:
Note that $
  \freevars{\alpha;\beta} \cup (\freevars{\phi} \setminus \mustboundvars{\alpha;\beta})
= \freevars{\alpha}
  \cup (\freevars{\beta} \setminus \mustboundvars{\alpha})
  \cup (\freevars{\phi} \setminus \mustboundvars{\alpha;\beta})
= \freevars{\alpha}
  \cup ((\freevars{\beta} \cup (\freevars{\phi} \setminus \mustboundvars{\beta}))
  \setminus \mustboundvars{\alpha})
= \freevars{\alpha} \cup (\freevars{\ddiamond{\beta}{\phi}} \setminus \mustboundvars{\alpha})$
as required in the IH application.
Then
$ \atrans{\alpha;\beta}\ \ftrans{\phi}\ s
= \atrans{\alpha}\ (\atrans{\beta}\ \ftrans{\phi})\ s
= \atrans{\alpha}\ \ftrans{\ddiamond{\beta}{\phi}}\ s
\iheq \atrans{\alpha}\ \ftrans{\ddiamond{\beta}{\phi}}\ t
= \atrans{\alpha}\ (\atrans{\beta}\ \ftrans{\phi})\ t
= \atrans{\alpha;\beta}\ \ftrans{\phi}\ t
$

\mycase $\alpha\cup\beta$:
$ \atrans{\alpha\cup\beta}\ \ftrans{\phi}\ s
= \atrans{\alpha}\ \ftrans{\phi}\ s \mathop{\kwsum} \atrans{\beta}\ \ftrans{\phi}\ s
= \atrans{\alpha}\ \ftrans{\phi}\ t \mathop{\kwsum} \atrans{\beta}\ \ftrans{\phi}\ t
= \atrans{\alpha\cup\beta}\ \ftrans{\phi}\ t$

\mycase $\prepeat{\alpha}$:
Note that $\freevars{\alpha} \cup \freevars{\phi}$ are the free variables of the fixed point
$(\lindty{\tau'\mathrel{:}(\sty \to \alltype)}{\lambda{t:\sty}\,{
(\phi\ t \to \tau'\ t)
\mathop{\kwprod}
(\atrans{\alpha}\ \tau'\ t \to \tau'\ t)
}})$.
Note $s = t$ on $\freevars{\alpha} \cup \freevars{\phi}$ since $\freevars{\prepeat{\alpha}} = \freevars{\alpha}$ and $\mustboundvars{\prepeat{\alpha}} = \emptyset$.
This suffices to ensure
$ \atrans{\prepeat{\alpha}}\ \ftrans{\phi}\ s
= \atrans{\prepeat{\alpha}}\ \ftrans{\phi}\ t$.
Formally the proof follows by induction on the proof that $s$ belongs to the fixed point.

\mycase $\pdual{\alpha},$ have
$ \atrans{\pdual{\alpha}}\ \ftrans{\phi}\ s
= \dtrans{\alpha}\ \ftrans{\phi}\ s
= \dtrans{\alpha}\ \ftrans{\phi}\ t
= \atrans{\pdual{\alpha}}\ \ftrans{\phi}\ t
$

We give the Demon cases.

\mycase $\humod{x}{f},$
since $\freevars{\humod{x}{f}} = \freevars{f}$ note by \rref{lem:app-term-coincide} that (0) $f\ s = f\ t$.
Note that $\mustboundvars{\humod{x}{f}} = \{x\}$ so (1) $s = t$ on $V \supseteq \freevars{\phi} \setminus \{x\}$.
Then
$ \dtrans{\humod{x}{f}}\ \ftrans{\phi}\ s
= \ftrans{\phi} (\lset{s}{x}{(f\ s)})
=_0 \ftrans{\phi} (\lset{s}{x}{(f\ t)})
=_1 \ftrans{\phi} (\lset{t}{x}{(f\ t)})
= \dtrans{\humod{x}{f}}\ \ftrans{\phi}\ t$

\mycase $\prandom{x}$:
Note that $\mustboundvars{\prandom{x}} = \{x\}$ so (0) $s = t$ on $V \supseteq \freevars{\phi} \setminus \{x\}$.
Then
$ \dtrans{\prandom{x}}\ \ftrans{\phi}\ s
=   (\pity{v}{\xty}{\ftrans{\phi}\ (\lset{s}{x}{v})})
=_0 (\pity{v}{\xty}{\ftrans{\phi}\ (\lset{t}{x}{v})})
= \dtrans{\prandom{x}}\ \ftrans{\phi}\ t
$

\mycase $\ptest{\phi},$ have
$ \dtrans{\ptest{\psi}}\ \ftrans{\phi}\ s
= (\ftrans{\psi}\ s \limply \ftrans{\phi}\ s)
\iheq (\ftrans{\psi}\ t \limply \ftrans{\phi}\ t)
= \dtrans{\ptest{\psi}}\ \ftrans{\phi}\ t
$

\mycase $\pevolvein{\D{x}=f}{\rho}$:
By the case for $(\solves{sol}{s}{d}{\D{x}=f})$ have (0) $(\solves{sol}{s}{d}{\D{x}=f}) = (\solves{sol}{t}{d}{\D{x}=f})$.
Note $s = t$ on $\freevars{\rho} \setminus \{x\}$ since $\freevars{\rho} \subseteq \freevars{\pevolvein{\D{x}=f}{\rho}}$ and $\{x,\D{x}\} = \mustboundvars{\pevolvein{\D{x}=f}{\rho}}$ and $\D{x} \notin \freevars{\rho}$ by syntactic constraints, thus the IH
on $\rho$ applies to give (1) $(\pity{r}{{[0,d]}}{\ftrans{\rho}\ {(\lset{s}{x}{(sol\ r)})}}) = (\pity{r}{{[0,d]}}{\ftrans{\rho}\ {(\lset{t}{x}{(sol\ r)})}})$.
Likewise $s = t$ on $\freevars{\phi} \setminus \{x,\D{x}\}$ so the IH on $\phi$ gives (2) 
\begin{align*}
  &\ftrans{\phi}\ (\lset{s}{(x,\D{x})}{(sol\ d, f\ (\lset{s}{x}{(sol\ d)}))})\\
= &\ftrans{\phi}\ (\lset{t}{(x,\D{x})}{(sol\ d, f\ (\lset{t}{x}{(sol\ d)}))})
\end{align*}

Then applying (1), (2), and (3) we have

\begin{align*}
 &\dtrans{\pevolvein{\D{x}=f}{\rho}}\ \ftrans{\phi}\ s\\
= &\pity{d}{{\reals_{\geq0}}}{\sity{sol}{[0,d]\to\xty}{}}\\
 &(\solves{sol}{s}{d}{\D{x}=f})\\
 &\to (\pity{r}{{[0,d]}}{\ftrans{\rho}\ (\lset{s}{x}{(sol\ r)})})\\
 &\to \ftrans{\phi}\ (\lset{s}{(x,\D{x})}{(sol\ d, f\ (\lset{s}{x}{(sol\ d)}))})\\
= &\pity{d}{{\reals_{\geq0}}}{\sity{sol}{[0,d]\to\xty}{}}\\
 &(\solves{sol}{t}{d}{\D{x}=f})\\
 &\to (\pity{sol}{{[0,d]}}{\ftrans{\rho}\ (\lset{t}{x}{(sol\ r)})})\\
 &\to (\ftrans{\phi}\ (\lset{t}{(x,\D{x})}{(sol\ d, f\ (\lset{s}{x}{(sol\ d)}))}))\\
= &\ftrans{\phi}\ (\lset{t}{(x,\D{x})}{(sol\ d, f\ (\lset{t}{x}{(sol\ d)}))})
\end{align*}

\mycase $\alpha;\beta$
Note that $
  \freevars{\alpha;\beta} \cup (\freevars{\phi} \setminus \mustboundvars{\alpha;\beta})
= \freevars{\alpha}
  \cup (\freevars{\beta} \setminus \mustboundvars{\alpha})
  \cup (\freevars{\phi} \setminus \mustboundvars{\alpha;\beta})
= \freevars{\alpha}
  \cup ((\freevars{\beta} \cup (\freevars{\phi} \setminus \mustboundvars{\beta}))
  \setminus \mustboundvars{\alpha})
= \freevars{\alpha} \cup (\freevars{\dbox{\beta}{\phi}} \setminus \mustboundvars{\alpha})$
as required in the IH application.
Then
$ \dtrans{\alpha;\beta}\ \ftrans{\phi}\ s
= \dtrans{\alpha}\ (\atrans{\beta}\ \ftrans{\phi})\ s
= \dtrans{\alpha}\ \ftrans{\dbox{\beta}{\phi}}\ s
\iheq \dtrans{\alpha}\ \ftrans{\dbox{\beta}{\phi}}\ t
= \dtrans{\alpha}\ (\atrans{\beta}\ \ftrans{\phi})\ t
= \dtrans{\alpha;\beta}\ \ftrans{\phi}\ t
$

\mycase $\alpha\cup\beta,$ have
$ \dtrans{\alpha\cup\beta}\ \ftrans{\phi}\ s
= \dtrans{\alpha}\ \ftrans{\phi}\ s \mathop{\kwprod} \dtrans{\beta}\ \ftrans{\phi}\ s
= \dtrans{\alpha}\ \ftrans{\phi}\ t \mathop{\kwprod} \dtrans{\beta}\ \ftrans{\phi}\ t
= \dtrans{\alpha\cup\beta}\ \ftrans{\phi}\ t$

\mycase $\prepeat{\alpha}$:
Note that $\freevars{\alpha} \cup \freevars{\phi}$ are the free variables of the fixed point
$(\lcoty{\tau'\mathrel{:}(\sty \to \alltype)}{\lambda{t:\sty}\,{
(\tau'\ t \to \dtrans{\alpha}\ \tau'\ t)
\mathop{\kwprod}
(\tau'\ t \to \ftrans{\phi}\ t)
}})$.
Note $s = t$ on $\freevars{\alpha} \cup \freevars{\phi}$ since $\freevars{\prepeat{\alpha}} = \freevars{\alpha}$ and $\mustboundvars{\prepeat{\alpha}} = \emptyset$.
This suffices to ensure
$ \dtrans{\prepeat{\alpha}}\ \ftrans{\phi}\ s
= \dtrans{\prepeat{\alpha}}\ \ftrans{\phi}\ t$.
Formally the proof follows by coinduction on the proof that $s$ belongs to the fixed point.

\mycase $\pdual{\alpha},$ have
$ \dtrans{\pdual{\alpha}}\ \ftrans{\phi}\ s
= \atrans{\alpha}\ \ftrans{\phi}\ s
= \atrans{\alpha}\ \ftrans{\phi}\ t
= \dtrans{\pdual{\alpha}}\ \ftrans{\phi}\ t$

\end{proof}

\begin{lemma}[Bound effect]
Let $V$ be a finite set of program variables. 
Let the defined \CdGL formula $V = s\ V$ be shorthand for $\bigwedge_{x\in V}{(\pity{t}{\sty}{(\lget{t}{x} = \lget{s}{x})})}$ meaning that the values of every $x \in V$ at the present state match those of state $s$.
Formally, every $s\ V$ in this formula is a real-valued literal.
Recall that $S^\complement$ denotes the complement of set $S$.

If $V \subseteq \boundvars{\alpha}^\complement$ then
\begin{itemize}
\item
there exists $M$ such that $\mproves{\ftrans{\G}\ s}{M}{\atrans{\alpha}\ \ftrans{\phi}\ s}$ iff
there exists $N$ such that $\mproves{\ftrans{\G}\ s}{M}{\atrans{\alpha}\ \ftrans{\phi \land V = s\ V}\ s}$.
\item
there exists $M$ such that $\mproves{\ftrans{\G}\ s}{M}{\dtrans{\alpha}\ \ftrans{\phi}\ s}$ iff
there exists $N$ such that $\mproves{\ftrans{\G}\ s}{M}{\dtrans{\alpha}\ \ftrans{\phi \land V = s\ V}\ s}$.
\end{itemize}
\label{lem:app-bound-effect}
\end{lemma}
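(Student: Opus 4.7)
The plan is to proceed by simultaneous structural induction on the game $\alpha$ for both Angelic and Demonic semantics. Each biconditional splits into two directions; the backward direction is immediate from Monotonicity (\rref{lem:app-monotone}) applied to the trivial implication $\phi \land V = s\ V \limply \phi$, so only the forward (strengthening) direction requires genuine work. The invariant driving the proof is exactly the informal statement of bound effect: by definition of $\boundvars{\alpha}$, no variable in $V \subseteq \boundvars{\alpha}^\complement$ can be modified along any play of $\alpha,$ so the initial values carry through to the final state.

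For the atomic cases the argument inspects state updates directly. For $\humod{x}{f}$ and $\prandom{x}$, the hypothesis $V \subseteq \{x\}^\complement$ combined with the setter/getter axioms gives $\lget{(\lset{s}{x}{v})}{y} = \lget{s}{y}$ for every $y \in V$, so the resulting state still agrees with $s$ on $V$. The test $\ptest{\psi}$ leaves the state untouched, making the conjunct trivial. For the ODE $\pevolvein{\D{x}=f}{\ivr}$, only $x$ and $\D{x}$ are updated by the solution $Sol(d)$, and both are excluded from $V$ since $\boundvars{\pevolvein{\D{x}=f}{\ivr}} = \{x,\D{x}\}$. For $\alpha \cup \beta$ the hypothesis gives $V \subseteq \boundvars{\alpha}^\complement \cap \boundvars{\beta}^\complement$, so the IH applies to each branch in both the Angel ($\mathop{\kwsum}$) and Demon ($\mathop{\kwprod}$) cases; duality $\pdual{\alpha}$ merely swaps the relevant semantic clause and invokes the simultaneous IH.

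The main obstacle is the sequential and loop cases, where the IH on a subgame $\beta$ only establishes that the final state agrees with the \emph{entry} state of $\beta$ on $V$, whereas we need agreement with the \emph{original} state of the compound game. The key is transitivity: if $\alpha$ does not modify $V$ then the intermediate state $s'$ already satisfies $V = s\ V$, and if $\beta$ does not modify $V$ then the final state $t$ satisfies $V = s'\ V$, hence $V = s\ V$ at $t$. Concretely for $\alpha;\beta$, I would first apply the IH on $\alpha$ with postcondition $\atrans{\beta}\ \ftrans{\phi}$ to obtain the strengthened form $\atrans{\alpha}\ (\lambda s'.\ \atrans{\beta}\ \ftrans{\phi}\ s' \land V = s\ V \text{ at } s')\ s$, then apply the IH on $\beta$ pointwise at each reachable $s'$ and chain through Monotonicity to rewrite the inner equality against $s$ rather than $s'$.

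For the loop cases I would exploit the fixed-point structure directly, using \rref{lem:app-monotone} in concert with each IH. For Angelic $\prepeat{\alpha}$, I would induct on the least-fixed-point derivation of $\atrans{\prepeat{\alpha}}\ \ftrans{\phi}\ s$: the stop branch inherits the equality trivially from the invariant built up by prior iterations, while the continuation branch applies the IH on $\alpha$ and then chains with the inductive IH on the residual loop in the same shape as the sequential case. For Demonic $\drepeat{\alpha}$, I would coinductively strengthen the existing coinvariant $\tau'$ to the predicate $\lambda t.\ \tau'(t) \land V = s\ V \text{ at } t$, verify its preservation by $\alpha$ using the IH together with Monotonicity, and conclude by coinduction. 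No new analytic machinery beyond what was already invoked for the semantic definition of ODEs is required, so the proof is purely syntactic-semantic once the transitivity trick is in hand.
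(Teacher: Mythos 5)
Your proposal is correct and takes essentially the same route as the paper's proof: the easy direction is Monotonicity applied to $\phi \land V = s\ V \limply \phi$, and the hard direction is a simultaneous structural induction on the game in which sequential composition is handled by the transitivity argument at the intermediate state and the loop cases by induction on the least-fixed-point membership (Angel) and by strengthening the coinvariant with the equality conjunct (Demon). The only cosmetic difference is that the paper first reduces to a single variable by an outer induction on the size of $V$, whereas you carry the whole set $V$ through the induction, which changes nothing essential.
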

\begin{proof}
By the same argument as in the coincidence lemma, sequent-style bound effect follows trivially from formula-style bound effect.
We first give a uniform argument for the converse direction, then prove the forward direction.
The forward direction performs an outer induction on the size of $V,$ then an inner simultaneous induction on Angelic and Demonic games.

Converse direction:
By left projection, $\ftrans{\phi \land V = s\ V}\ s$ implies $\ftrans{\phi}\ s$ for all $s$.
Then by \rref{lem:app-monotone} have
that $\atrans{\alpha}\ \ftrans{\phi \land V = s\ V}\ s$
implies $\atrans{\alpha}\ \ftrans{\phi}\ s$ and likewise for $\dtrans{\alpha}$.

Forward direction:
By induction on $\abs{V}$, generalizing $\phi$.
The case $\abs{V} = 0$ is trivial since $\atrans{\alpha}\ \ftrans{\phi \land (V = s\ V)} = \atrans{\alpha}\ \ftrans{\phi \land \btt} = \atrans{\alpha}\ \ftrans{\phi}$.
In the case $\abs{V \cup \{x\}} = k + 1$ then apply the IH to $\ftrans{\phi \land x = s\ x}$ and set $V$ to get
that $\atrans{\alpha}\ \ftrans{\phi \land x = s\ x}\ s$ iff $\atrans{\alpha}\ \ftrans{(\phi \land x = s\ x) \land (V = s\ V)} \ s$.
Since we also have that $(\ftrans{(\phi \land x = s\ x) \land (V = s\ V)} \ s
= (\ftrans{\phi \land ((V\cup \{x\}) = s\ (V\cup\{x\}))}\ s)$,
then by transitivity it suffices to show that $\atrans{\alpha}\ \ftrans{\phi \land x = s\ x}\ s$ follows $\atrans{\alpha}\ \ftrans{\phi}\ s$.
We proceed by inner induction on games, simultaneously for Angel and Demon.
In each case we assume (A) $\atrans{\alpha}\ \ftrans{\phi}\ s$ and show $\atrans{\alpha}\ \ftrans{\phi \land (x = s\ x)}\ s$ or likewise for $\dtrans{\alpha}$.
We do so by inner induction on games, simultaneously for Angel and Demon.

We give the cases for Angel.

\mycase $\humod{y}{f}$:
Since $x \notin \boundvars{\humod{y}{f}}$ then (0) $x \neq y$.
$\atrans{\humod{y}{f}}\ \ftrans{\phi \land (x = s\ x)}\ s
=   (\ftrans{\phi}\ (\lset{s}{x}{(f\ s)}) \mathop{\kwprod} \lget{(\lset{s}{y}{(f\ s)})}{x} = \lget{s}{x})
=_0 (\ftrans{\phi}\ (\lset{s}{x}{(f\ s)}) \mathop{\kwprod} \lget{s}{x} = \lget{s}{x})$
where the left holds by (A) and the right holds reflexively.

\mycase $\prandom{x}$:
From (A) have some $v$ such that (A1) $\ftrans{\phi}\ (\lset{s}{y}{v})$.
Since $x \notin \boundvars{\prandom{y}}$ then (0) $x \neq y$.
\begin{align*}
  &\atrans{\prandom{y}}\ (\ftrans{\phi \land x = s\ x})\ s\\
= &(\sity{v}{\xty}{\ftrans{\phi}\ (\lset{s}{y}{v}) \mathop{\kwprod} (\lget{(\lset{s}{y}{v})}{x} = \lget{s}{x})})\\
= &(\sity{v}{\xty}{\ftrans{\phi}\ (\lset{s}{y}{v}) \mathop{\kwprod} (\lget{s}{x} = \lget{s}{x})})
\end{align*}
where the left holds from (A1) 
and the right holds for \emph{any} $v$ since $(\lget{s}{x} = \lget{s}{x})$ reflexively.

\mycase $\ptest{\psi},$ have
$\atrans{\ptest{\psi}}\ \ftrans{\phi \land (x = s\ x)}\ s
= \ftrans{\psi}\ s \mathop{\kwprod} \ftrans{\phi}\ s \mathop{\kwprod} (\lget{s}{x} = \lget{s}{x})
=_A (\lget{s}{x} = \lget{s}{x})$
which holds reflexively.

\mycase $\pevolvein{\D{y}=f}{\ivr}$:
Note (0) $x \notin \{y,\D{y}\} = \boundvars{\pevolvein{\D{y}=f}{\ivr}}$.
From (A) unpack $d$ and $sol$ such that
(1)$(\solves{sol}{s}{d}{\D{y}=f})$
and (2) $(\pity{t}{{[0,d]}}{\ftrans{\phi}\ (\lset{s}{y}{(sol\ t)})})$
and (3) $\ftrans{\psi}\ (\lset{s}{y}{(sol\ d)})$.
Then by (0) have (4) $\lget{(\lset{s}{y}{(sol\ d)})}{x} = \lget{s}{x}$.
Then using the same $d$ and $sol$ then using (1) (2) (3) and (4) have
$\atrans{\pevolvein{\D{x}=f}{\ivr}}\ \ftrans{\phi \land (x = s\ x)}\ s$ as desired.

\mycase $\alpha;\beta$:
Note (0) $\atrans{\beta}\ \ftrans{\phi \land (x = t\ x)}\ t$ for all $t$ by IH on $\beta$.
Note (1) that the truth value of $(\lambda r.~t\ x = s\ x)$ is constant with respect to  $r$ so it suffices to show $\atrans{\beta} \ftrans{\phi}$, which follows from (A).
\begin{align*}
  &\atrans{\alpha;\beta}\ \ftrans{\phi \land (x = s\ x)}\ s\\
= &\atrans{\alpha}\ (\atrans{\beta}\ \ftrans{\phi \land (x = s\ x)})\ s\\
= &\atrans{\alpha}\ (\ftrans{x = s\ x} \mathop{\kwprod} (\atrans{\beta}\ \ftrans{\phi \land (x = s\ x)}))\ s\\
= &\atrans{\alpha}\ (\lambda t.~ t\ x = s\ x \mathop{\kwprod} (\atrans{\beta}\ \ftrans{\phi \land (x = s\ x)\ t}))\ s\\
\leftarrow_0 &\atrans{\alpha}\ (\lambda t.~ t\ x = s\ x \mathop{\kwprod} (\atrans{\beta}\ \ftrans{\phi \land (t\ x = s\ x)}\ t))\ s\\
\leftarrow &\atrans{\alpha}\ (\atrans{\beta} \ftrans{\phi \land (x = s\ x)})\ s\\
\leftarrow &\atrans{\alpha;\beta}\  \ftrans{\phi \land (x = s\ x)}\ s
\end{align*}
 which follows from the IH on $\alpha$.

\mycase $\alpha\cup\beta$:
Have either $\atrans{\alpha}\ \ftrans{\phi}\ s$ or $\atrans{\beta}\ \ftrans{\phi}\ s$.
In each case, the IH applies by $\boundvars{\alpha\cup\beta}^\complement = \boundvars{\alpha}^\complement \cup \boundvars{\beta}^\complement.$
In the first case,
$\atrans{\alpha} \ftrans{\phi \land (x = s\ x)}\ s
\limply \atrans{\alpha\cup\beta} \ftrans{\phi \land (x = s\ x)}\ s$.
In the second case,
\[\atrans{\beta} \ftrans{\phi \land (x = s\ x)}\ s
\limply \atrans{\alpha\cup\beta} \ftrans{\phi \land (x = s\ x)}\ s\]

\mycase $\prepeat{\alpha}$:
Proceed by induction on membership of $s$ in the fixed point in
$\atrans{\prepeat{\alpha}}\ \ftrans{\phi}\ s$.

In the base case, $\atrans{\prepeat{\alpha}}\ \ftrans{\phi \land (x = s\ x)}\ s$ trivially since $\lget{s}{x} = \lget{s}{x}$.
In the inductive case wish to show $\ftrans{\phi \land (x = s\ x)}\ t \limply \atrans{\alpha}\ \ftrans{\phi \land (x = s\ x)} t$ which follows from the inner IH on membership and outer IH on $\alpha$ by transitivity and monotonicity over $\alpha$.
The IH on $\alpha$ applies because $\boundvars{\prepeat{\alpha}} = \boundvars{\alpha}$.

\mycase $\pdual{\alpha}$:
From (A) have $\dtrans{\alpha}\ \ftrans{\phi}\ s$ so by IH have $\dtrans{\alpha}\ \ftrans{\phi \land (x = s\ x)}\ s$ which gives $\atrans{\pdual{\phi}}\ \ftrans{\phi \land (x = s\ x)}\ s$.

We give the cases for Demon.

\mycase $\humod{x}{f}$:
Since $x \notin \boundvars{\humod{y}{f}}$ then (0) $x \neq y$.
Then $\dtrans{\humod{y}{f}}\ \ftrans{\phi \land (x = s\ x)}\ s
= \ftrans{\phi}\ (\lset{s}{y}{(f\ s)}) \mathop{\kwprod} (\lget{(\lset{s}{y}{(f\ s)})}{x} = \lget{s}{x})
=_A (\lget{(\lset{s}{y}{(f\ s)})}{x} = \lget{s}{x})
=_0 (\lget{s}{x} = \lget{s}{x})$
which holds reflexively.

\mycase $\prandom{x}$:
Since $x \notin \boundvars{\prandom{y}}$ then (0) $x \neq y$.
Then $\dtrans{\prandom{y}}\ \ftrans{\phi \land (x = s\ x)}\ s
=   (\pity{v}{\xty}{(\ftrans{\phi}\ (\lset{s}{y}{v}) \mathop{\kwprod} \lget{(\lset{s}{y}{v})}{x} = \lget{s}{x})})
=_A (\pity{v}{\xty}{(\lget{(\lset{s}{y}{v})}{x} = \lget{s}{x})})
=   (\pity{v}{\xty}{(\lget{s}{x} = \lget{s}{x})})$
which holds for all $v$ since
$(\lget{s}{x} = \lget{s}{x})$ reflexively.

\mycase $\ptest{\psi},$ have
$\dtrans{\ptest{\psi}}\ \ftrans{\phi \land (x = s\ x)}\ s
= (\ftrans{\psi}\ s \limply \ftrans{\phi}\ s \mathop{\kwprod} (\lget{s}{x} = \lget{s}{x})\ s)
=_A (\ftrans{\psi}\ s \limply (\lget{s}{x} = \lget{s}{x}))
= (\ftrans{\psi}\ s \limply \btt)$
since $(\lget{s}{x} = \lget{s}{x})$ holds reflexively.

\mycase $\pevolvein{\D{x}=f}{\ivr}$:
Note (0) $x \notin \{y,\D{y}\} = \boundvars{\pevolvein{\D{y}=f}{\ivr}}$.
Assume arbitrary $d$ and $sol$ such that
(1)$(\solves{sol}{s}{d}{\D{y}=f})$
and (2) $(\pity{t}{{[0,d]}}{\ftrans{\phi}\ {(\lset{s}{y}{(sol\ t)})}})$
and (3) $\ftrans{\psi}\ (\lset{s}{y}{(sol\ d)})$.
Then by (0) have (4) $\lget{(\lset{s}{y}{(sol\ d)})}{x} = \lget{s}{x}$.
Then using (1) (2) (3) and (3) have
$\atrans{\pevolvein{\D{x}=f}{\ivr}}\ \ftrans{\phi \land (x = s\ x)}\ s$ as desired.

\mycase $\alpha;\beta$:
Note (0) $\dtrans{\beta}\ \ftrans{\phi \land (x = t\ x)}\ t$ for all $t$ by IH on $\beta$.
Note (1) that the truth value of $(\lambda r.~t\ x = s\ x)$ is constant as a function of $r$ and it suffices to show $\dtrans{\beta}\ (\ftrans{\phi})\ r$.
\begin{align*}
  &\dtrans{\alpha;\beta}\ \ftrans{\phi \land (x = s\ x)}\ s\\
= &\dtrans{\alpha}\ (\dtrans{\beta}\ \ftrans{\phi \land (x = s\ x)})\ s\\
= &\dtrans{\alpha}\ (\lambda t.~ \ftrans{x = s\ x}\ t \mathop{\kwprod} (\dtrans{\beta}\ \ftrans{\phi \land (x = s\ x)}\ t))\ s\\
= &\dtrans{\alpha}\ (\lambda t.~ t\ x = s\ x \mathop{\kwprod} (\dtrans{\beta}\ \ftrans{\phi \land (x = s\ x)}\ t))\ s\\
\leftarrow_0 &\dtrans{\alpha}\ (\lambda t.~ t\ x = s\ x \mathop{\kwprod} (\dtrans{\beta}\ \ftrans{\phi \land (t\ x = s\ x)}\ t))\ s\\
\leftarrow_1 &\dtrans{\alpha}\ (\lambda t.~ t\ x = s\ x \mathop{\kwprod} (\dtrans{\beta}\ (\ftrans{\phi})\ t))\ s\\
\leftarrow_1 &\dtrans{\alpha}\ (\dtrans{\beta}\ \ftrans{\phi})\ s\\
\leftarrow_1 &\dtrans{\alpha;\beta}\ \ftrans{\phi}\ s
\end{align*}
which is (A).

\mycase $\alpha\cup\beta$:
Have both $\dtrans{\alpha}\ \ftrans{\phi}\ s$ and $\dtrans{\beta}\ \ftrans{\phi}\ s$.
Each IH applies by $\boundvars{\alpha\cup\beta}^\complement = \boundvars{\alpha}^\complement \cup \boundvars{\beta}^\complement.$
The first IH gives $\dtrans{\alpha}\ \ftrans{\phi \land (x = s\ x)}\ s$ and the second gives $\dtrans{\beta} \ftrans{\phi \land (x = s\ x)}\ s$, then by conjunction introduction have $\dtrans{\alpha\cup\beta}\ \ftrans{\phi \land (x = s\ x)}\ s$.

\mycase $\prepeat{\alpha}$:
By inversion on the proof (A) of $\dtrans{\prepeat{\alpha}}\ \ftrans{\phi}\ s$, we have
some $J$ such that 
(1) $J\ s$,
(2) $J\ t \to \dtrans{\alpha} J\ t$ for all $t$, and
(3) $J\ t \to \ftrans{\phi}\ t$ for all $t$.
We show that  $J\ s \land x = s\ x$ is sufficient for an invariant proof of postcondition $x = s\ x$.

(1a) $\ftrans{J \land x = s\ x}\ s$ from (1) and reflexivity.
(2a) $\ftrans{J \land x = s\ x \limply \dbox{\alpha}{(J \land x = s\ x)}}\ t$ because (2) gives a proof of $\dtrans{\alpha}\ \ftrans{J}\ t$ from which the IH on $\alpha$ gives $\dtrans{\alpha}\ \ftrans{J \land x = s\ x}\ t$ as desired.
(3a) $\ftrans{(J \land x = s\ x) \limply (\phi \land x = s\ x)}\ t$ from (3) and hypothesis rule.

Then the coinductive generated by (1a), (2a), and (3a) is a proof of $\dtrans{\prepeat{\alpha}}\ \ftrans{\phi \land x = s\ x}\ s$ as desired.

\mycase $\pdual{\alpha}$:
From (A) have $\atrans{\alpha}\ \ftrans{\phi}\ s$ so by IH have $\atrans{\alpha}\ \ftrans{\phi \land (x = s\ x)}\ s$ which gives $\dtrans{\pdual{\phi}}\ \ftrans{\phi \land (x = s\ x)}\ s$.
\end{proof}

\begin{lemma}[Transposition]
$\eren{\eren{e}{x}{y}}{x}{y} = e,$ for any \CdGL or CIC term, formula, game, or type $e$.
\label{lem:app-transpose}
\end{lemma}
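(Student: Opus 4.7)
The plan is to prove the lemma by structural induction on the expression $e$, leveraging the fact that the renaming $\eren{\cdot}{x}{y}$ is defined as a transposition (a \emph{swap} of $x$ and $y$) rather than a one-way substitution. The key algebraic observation is that the transposition of two elements is an involution: on the variable level, $\eren{\eren{x}{x}{y}}{x}{y} = \eren{y}{x}{y} = x$, $\eren{\eren{y}{x}{y}}{x}{y} = \eren{x}{x}{y} = y$, and $\eren{\eren{z}{x}{y}}{x}{y} = z$ for any other variable $z$. Everything else should follow by pushing the renaming through the syntax homomorphically.

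The proof proceeds by simultaneous induction on terms $f$, formulas $\phi$, games $\alpha$, and (where meaningful) CIC type expressions $\tau$. For the base cases, I handle variable occurrences by the case analysis above. For compound terms, formulas, and games (tests, assignments, ODEs, choices, sequencing, repetition, duals, modalities, comparisons), renaming distributes homomorphically over each connective, so $\eren{\eren{C(e_1,\ldots,e_n)}{x}{y}}{x}{y} = C(\eren{\eren{e_1}{x}{y}}{x}{y},\ldots,\eren{\eren{e_n}{x}{y}}{x}{y}) = C(e_1,\ldots,e_n)$ by the induction hypotheses. Primed variables $\D{x}$ must be treated uniformly with their base counterparts, since renaming $x$ for $y$ also renames $\D{x}$ for $\D{y}$ (otherwise soundness of ODE reasoning would fail); the involution argument goes through identically.

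For CIC terms, the renaming only affects free occurrences of game variables $x, y$ (viewed as projections $\lget{s}{x}$ out of the state $s$), and is homomorphic over lambda abstractions, applications, pairs, projections, case analysis, and (co)inductive type formers. Since these constructors are fixed by the renaming operation on their surface structure, the induction hypotheses on immediate subterms close each case. The one mild subtlety is handling fresh-variable bindings in the host type theory: we assume, as in the justification of term renaming (Lemma for Term Renaming), that the host theory supports a well-behaved renaming operation that commutes with its constructors, so that the transposition argument applies there too.

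The main obstacle, such as it is, is purely bookkeeping: ensuring that every syntactic category and every constructor (including differentials $\der{f}$, state-substitution $\lset{s}{x}{v}$, and the ODE-solution predicate $\solves{sol}{s}{d}{\D{x}=f}$) is handled uniformly and that the treatment of primed variables is consistent across terms, formulas, and games. There is no deep mathematical content; the proof is essentially a verification that the syntactic transposition was \emph{defined} to commute with every connective and to swap $x$ with $y$ on variables, from which involutivity follows mechanically.
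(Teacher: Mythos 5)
Your proof is correct and follows essentially the same route as the paper: the paper's own argument is a one-line ``trivial induction'' observing that $\eren{e}{x}{y}$ is \emph{defined} as a transposition (swapping $x$ and $y$), so involutivity follows by pushing it homomorphically through the syntax. Your elaboration of the variable base cases, primed variables, and host-theory constructors just spells out the bookkeeping the paper leaves implicit.
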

\begin{proof}
  Trivial induction because we define $\eren{e}{x}{y}$ to be \emph{transposition renaming} which renames $x$ to $y$ but also renames $y$ to $x$.
\end{proof}

\begin{lemma}[Formula uniform renaming]
  If $\mproves{\G}{M}{(\ftrans{\phi}\ s)}$ then $\mproves{\eren{\G}{x}{y}}{\eren{M}{x}{y}}{\eren{\phi}{x}{y}}$.
  If $M : \ftrans{\G}\ s$ then $\eren{M}{x}{f} : \ftrans{\eren{\G}{x}{y}}\ (\eren{\phi}{x}{y})$.
  Also, $(\solves{sol}{s}{d}{\D{z}=f}) = (\solves{sol}{\eren{s}{x}{y}}{d}{\D{\eren{z}{x}{y}}=\eren{f}{x}{y}})$
\label{lem:app-formula-rename}
\end{lemma}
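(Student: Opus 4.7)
The plan is to prove the statement by simultaneous induction on the structure of the formula $\phi$ together with the corresponding game renaming lemma for $\atrans{\alpha}$ and $\dtrans{\alpha}$, following the same organization as the simultaneous proofs of \rref{lem:app-formula-coincide} and \rref{lem:app-game-coincide}. Since uniform renaming is a transposition (\rref{lem:app-transpose}), each case is a symmetric rewriting chain: start from $\ftrans{\phi}\ s$, rewrite state-valued subexpressions from $s$ to $\eren{s}{x}{y}$ using \rref{lem:app-term-rename} wherever terms are evaluated, and fold the result back into $\ftrans{\eren{\phi}{x}{y}}\ (\eren{s}{x}{y})$. The context version follows immediately from the formula version by distributing renaming over $\G$'s conjuncts.

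First I would dispatch the ODE solution case, since it is used repeatedly downstream and is essentially a term-level fact. By \rref{lem:app-term-rename}, $f\ (\lset{s}{x}{v}) = (\eren{f}{x}{y})\ (\lset{\eren{s}{x}{y}}{\eren{x}{x}{y}}{v})$ for any $v$, and the lookup $\lget{s}{z} = \lget{\eren{s}{x}{y}}{\eren{z}{x}{y}}$ is automatic from the definition of transposition on states. Substituting these identities into the definition of $(\solves{sol}{s}{d}{\D{z}=f})$ yields $(\solves{sol}{\eren{s}{x}{y}}{d}{\D{\eren{z}{x}{y}}=\eren{f}{x}{y}})$. Then the formula cases $f\sim g$ use \rref{lem:app-term-rename} directly, and the modal cases $\ddiamond{\alpha}{\phi}$ and $\dbox{\alpha}{\phi}$ reduce to the game IH applied to $\atrans{\alpha}$ and $\dtrans{\alpha}$ respectively.

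For games, the cases for $\ptest{\psi}$, $\humod{z}{f}$, $\prandom{z}$, $\alpha\cup\beta$, $\alpha;\beta$, and $\pdual{\alpha}$ are routine: each definitional unfolding contains either a term evaluation (handled by \rref{lem:app-term-rename}) or a state update $\lset{s}{z}{v}$, and we use the identity $\eren{(\lset{s}{z}{v})}{x}{y} = \lset{\eren{s}{x}{y}}{\eren{z}{x}{y}}{v}$ before refolding to the renamed game. The repetition cases $\prepeat{\alpha}$ use induction on the least fixed point (Angel) and coinduction on the greatest fixed point (Demon), transporting the IH for $\alpha$ across the fixed-point construction; because renaming is a bijective involution, monotonicity (\rref{lem:app-monotone}) is enough to carry the argument through.

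The main obstacle, as usual for uniform renaming, is the ODE case $\pevolvein{\D{z}=f}{\ivr}$: one must simultaneously rename the bound state variable $z$ (and its primed counterpart $\D{z}$), the right-hand side $f$, the domain constraint $\ivr$, and the postcondition, while keeping the solution $sol$ itself unchanged (since $sol$ ranges over $[0,d]\to\xty$ and has no program-variable dependency). Here I would use the already-proven equality on $\solves{\cdot}{\cdot}{\cdot}{\cdot}$, then apply the IH on $\ivr$ pointwise in $t\in[0,d]$ and the IH on the postcondition at the final state $\lset{s}{(z,\D{z})}{(sol\ d, f\ (\lset{s}{z}{(sol\ d)}))}$, using the state-update/renaming commutation identity to realign indices. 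The remaining bureaucratic check is that $\eren{\D{z}}{x}{y}$ is interpreted as $\D{\eren{z}{x}{y}}$, which is a definitional convention that must be fixed once and for all before the induction.
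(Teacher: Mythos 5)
Your proposal matches the paper's proof in essentially every respect: a simultaneous induction with the game renaming lemma (mirroring the coincidence proofs), the $\solves{sol}{s}{d}{\D{z}=f}$ case dispatched via term renaming, the commutation identity $\eren{(\lset{s}{z}{v})}{x}{y} = \lset{\eren{s}{x}{y}}{\eren{z}{x}{y}}{v}$ for the atomic and ODE cases, and inner induction/coinduction on fixed-point membership for $\prepeat{\alpha}$ (the paper phrases this as an inner (co)induction rather than invoking \rref{lem:app-monotone}, but your transport of the IH across the fixed point is the same argument). No gaps worth noting.
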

\begin{lemma}[Game uniform renaming]
  If $\mproves{\G}{M}{(\atrans{\alpha}\ \ftrans{\phi}\ s)}$ then $\mproves{\eren{\G}{x}{y}}{\eren{M}{x}{y}}{(\atrans{\eren{\alpha}{x}{y}}\ \ftrans{\eren{\phi}{x}{y}}\ \eren{s}{x}{y})}$.
  If $\mproves{\G}{M}{(\dtrans{\alpha}\ \ftrans{\phi}\ s)}$ then $\mproves{\eren{\G}{x}{y}}{\eren{M}{x}{y}}{(\dtrans{\eren{\alpha}{x}{y}}\ \ftrans{\eren{\phi}{x}{y}}\ \eren{s}{x}{y})}$.
\label{lem:app-game-rename}
\end{lemma}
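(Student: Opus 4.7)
The plan is to prove the lemma by simultaneous induction on the structure of $\alpha$, interleaved with \rref{lem:app-formula-rename} on formulas. The Angel and Demon inductions will travel together because the case $\pdual{\alpha}$ swaps $\atrans{\cdot}$ and $\dtrans{\cdot}$. For each game connective, I will unfold the definitions of $\atrans{\cdot}$ and $\dtrans{\cdot}$ and show that transposition renaming commutes with every semantic operator used. I will rely on \rref{lem:app-term-rename} to push renaming inside term evaluations and on \rref{lem:app-transpose} to cancel doubled transpositions whenever a state must be aligned with an already-renamed term.

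The structural cases $\alpha\cup\beta$, $\alpha;\beta$, and $\pdual{\alpha}$ will follow by direct IH applications, since $\eren{\cdot}{x}{y}$ distributes syntactically over these constructors and over the semantic operators $\kwsum$, composition, and the Angel/Demon swap. Assignment $\humod{z}{f}$ will reduce to the formula IH applied to the renamed post-state $\lset{\eren{s}{x}{y}}{\eren{z}{x}{y}}{((\eren{f}{x}{y})\,(\eren{s}{x}{y}))}$, after using term renaming and transposition to rewrite $(\eren{f}{x}{y})(\eren{s}{x}{y})$ as $f(s)$. Nondeterministic assignment $\prandom{z}$ works similarly: for Angel, the witness $v$ lives in $\xty$ and is preserved under renaming, so the renamed assignment lands in the renamed postcondition; Demon's universal quantifier is symmetric. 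The test case $\ptest{\psi}$ is immediate from the formula IH.

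For $\pevolvein{\D{z}=f}{\ivr}$, I will first discharge the auxiliary equality $(\solves{sol}{s}{d}{\D{z}=f}) = (\solves{sol}{\eren{s}{x}{y}}{d}{\D{\eren{z}{x}{y}}=\eren{f}{x}{y}})$ stated in \rref{lem:app-formula-rename}. Unfolding the initial-condition and derivative clauses, the same $sol : [0,d]\to\xty$ will work on both sides because $sol$ depends only on time, while term renaming converts the right-hand side $f$ and the assigned variable $z$ coherently. The post-state update then renames homomorphically, and the domain-constraint clause reduces to the formula IH applied pointwise along $sol$. For $\prepeat{\alpha}$, the Angelic case will proceed by an inner induction on membership in the least fixed point, applying the outer IH on $\alpha$ at the continue constructor and using reflexivity at the stop constructor; the Demonic case is dual, using coinduction with the renamed invariant in place of the original.

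The hard part will be bookkeeping around the primed variable $\D{z}$ in the ODE case. I must verify that renaming $x$ to $y$ treats $\D{z}$ consistently with $z$ in both the post-state update at $(z,\D{z})$ and the interpretation of $\der{f}$ via \rref{lem:app-differential-lemma}. Because the syntactic side condition on ODEs forbids any $\D{\cdot}$ in the right-hand side, and because renaming on game variables extends uniformly to their primed counterparts, this bookkeeping should go through, but it is the only step in the induction where the proof is not purely mechanical case analysis.
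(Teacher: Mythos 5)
Your proposal matches the paper's proof: a simultaneous induction over Angel and Demon semantics (with the formula, context, and $\solves{sol}{s}{d}{\D{z}=f}$ cases interleaved), pushing the transposition renaming through each semantic connective via \rref{lem:app-term-rename} and \rref{lem:app-transpose}, reusing the same $sol$ in the ODE case, and handling $\prepeat{\alpha}$ by inner induction on least-fixed-point membership for Angel and coinduction for Demon. The only stray point is your appeal to \rref{lem:app-differential-lemma} for the primed variables: the paper needs no such argument, since renaming simply acts homomorphically on $z$ and $\D{z}$ together in the post-state update, exactly as you suspect in your closing sentence.
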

\begin{proof}[Proof of \rref{lem:app-formula-rename} and \rref{lem:app-game-rename}]
The cases for formulas, contexts, solutions, and games are all proven by simultaneous induction.
We give the cases for contexts first.

\mycase $\Gemp,$ have
$\eren{\Gemp}{x}{y} = \Gemp$ and $\ftrans{\Gemp}\ s$ holds trivially for all $s$.

\mycase $G,\psi$:
Assume $\ftrans{(\G,\psi)}\ s = \ftrans{\G}\ s \mathop{\kwprod} \ftrans{\phi}\ s,$ so that by IH on $\G$ have (0) $ \eren{\sprojL{M}}{x}{y} : \ftrans{\eren{\G}{x}{y}}\ (\eren{s}{x}{y})$ and by the IH on $\psi$ have 
(1)  $\eren{\sprojR{M}}{x}{y} : \ftrans{\eren{\psi}{x}{y}}\ (\eren{\G}{x}{y})$.
Then by conjunction of (0) and (1) have $\eren{M}{x}{y} : \ftrans{\eren{\G,\psi}{x}{y}}\ (\eren{s}{x}{y})$ as desired.

The formula and game cases employ the following simplification using the context case.
Each case first assumes (A1) a sequent of shape $\ftrans{\G}\ s \to \ftrans{\phi}\ s$ then exhibits a sequent of shape $\ftrans{\eren{\G}{x}{y}}\ (\eren{s}{x}{y}) \to \ftrans{Q}\ (\eren{s}{x}{y})$, the first step of which is to assume (A2) $\ftrans{\eren{\G}{x}{y}}\ (\eren{s}{x}{y})$.
From (A2), uniform renaming on contexts yields (by \rref{lem:app-transpose}) $\ftrans{\G}\ s,$ then by modus ponens on (A1) have $\ftrans{\phi}\ s$.
That is, the remaining cases are free to ignore the context $\G$.

Also, we write $\eren{z}{x}{y}$ as shorthand for a variable which is $z$ in the case $z \notin\{x,y\},$ or $y$ when $z=x$, or $x$ when $z=y$.

We give the formula cases.

\mycase $\ddiamond{\alpha}{\phi}$:
Assume
$ \ftrans{\ddiamond{\alpha}{\phi}}\ s
= \atrans{\alpha}\ \ftrans{\phi}\ s
= \atrans{\eren{\alpha}{x}{y}}\ \eren{\ftrans{\phi}}{x}{y}\ \eren{s}{x}{y}
= \atrans{\eren{\alpha}{x}{y}}\ \ftrans{\eren{\phi}{x}{y}}\ \eren{s}{x}{y}
= \ftrans{\ddiamond{\alpha}{\phi}}\ \eren{s}{x}{y}$

\mycase $\dbox{\alpha}{\phi},$ have
$ \ftrans{\dbox{\alpha}{\phi}}\ s
= \dtrans{\alpha}\ \ftrans{\phi}\ s
= \dtrans{\eren{\alpha}{x}{y}}\ \eren{\ftrans{\phi}}{x}{y}\ \eren{s}{x}{y}
= \dtrans{\eren{\alpha}{x}{y}}\ \ftrans{\eren{\phi}{x}{y}}\ \eren{s}{x}{y}
= \ftrans{\dbox{\alpha}{\phi}}\ \eren{s}{x}{y}$

\mycase $f \sim g$:
follows from \rref{lem:app-term-rename}:
$ \ftrans{f \sim g}\ s
= f\ s \sim g\ s
= (\eren{f}{x}{y})\ (\eren{s}{x}{y}) \sim (\eren{g}{x}{y})\ (\eren{s}{x}{y})
= \ftrans{\eren{(f \sim g)}{x}{y}}\ (\eren{s}{x}{y})$

We give the Angel cases.

\mycase $\humod{z}{f},$ have
$\atrans{\humod{z}{f}}\ \ftrans{\phi}\ s
= \ftrans{\phi} (\lset{s}{z}{(f\ s)})
= \ftrans{\eren{\phi}{x}{y}}\ \eren{(\lset{s}{z}{(f\ s)})}{x}{y}
= \ftrans{\eren{\phi}{x}{y}}\ (\lset{(\eren{s}{x}{y})}{(\eren{z}{x}{y})}{(f\ s)})
= \ftrans{\eren{\phi}{x}{y}}\ (\lset{(\eren{s}{x}{y})}{(\eren{z}{x}{y})}{(\eren{f}{x}{y}\ \eren{s}{x}{y})})
= \atrans{\humod{\eren{z}{x}{y}}{\eren{f}{x}{y}}}\ \ftrans{\eren{\phi}{x}{y}}\ \eren{s}{x}{y}
= \atrans{\eren{\humod{z}{f}}{x}{y}}\ \ftrans{\eren{\phi}{x}{y}}\ \eren{s}{x}{y}$

\mycase $\prandom{x},$ have
\begin{align*}
  &\atrans{\prandom{z}}\ \ftrans{\phi}\ s\\
= &\sity{v}{\sty}{\ftrans{\phi} (\lset{s}{z}{v})}\\
= &\sity{v}{\sty}{\ftrans{\eren{\phi}{x}{y}}\ \eren{(\lset{s}{z}{v})}{x}{y}}\\
= &\sity{v}{\sty}{\ftrans{\eren{\phi}{x}{y}}\ (\lset{(\eren{s}{x}{y})}{(\eren{z}{x}{y})}{v})}\\
= &\atrans{\prandom{\eren{z}{x}{y}}}\ \ftrans{\eren{\phi}{x}{y}}\ \eren{s}{x}{y}\\
= &\atrans{\eren{\{\prandom{z}\}}{x}{y}}\ \ftrans{\eren{\phi}{x}{y}}\ \eren{s}{x}{y}
\end{align*}

\mycase $\ptest{\psi},$ have
$\atrans{\ptest{\psi}}\ \ftrans{\phi}\ s
= \ftrans{\psi}\ s \mathop{\kwprod} \ftrans{\phi}\ s
= \ftrans{\eren{\psi}{x}{y}}\ \eren{s}{x}{y} \mathop{\kwprod} \ftrans{\eren{\phi}{}{y}}\ \eren{s}{x}{y}
= \atrans{\eren{\ptest{\psi}}{x}{y}}\ \ftrans{\eren{\phi}{x}{y}}\ \eren{s}{x}{y}$

\mycase $\solves{sol}{s}{d}{\D{z}=f},$ have
\begin{align*}
  &(\solves{sol}{s}{d}{\D{z}=f})\\
= &\sprod{(\lget{s}{z} = sol\ 0)}{\pity{r}{[0,d]}{(\der{sol}\ r = f\ (\lset{s}{z}{(sol\ r)}))}}\\
= &\sprod{(\lget{\eren{s}{x}{y}}{\eren{z}{x}{y}} = sol\ 0)}{\pity{r}{[0,d]}{(\der{sol}\ r = \eren{f}{x}{y}\ (\eren{(\lset{s}{z}{(sol\ r)})}{x}{y}))}}\\\\
= &\sprod{(\lget{\eren{s}{x}{y}}{\eren{z}{x}{y}} = sol\ 0)}{\pity{r}{[0,d]}{(\der{sol}\ r = \eren{f}{x}{y}\ (\lset{s}{(\eren{z}{x}{y})}{(sol\ r)}))}}
= &(\solves{sol}{\eren{s}{x}{y}}{d}{\D{\eren{z}{x}{y}}=\eren{f}{x}{y}})
\end{align*}

\mycase $\pevolvein{\D{z}=f}{\ivr},$ have

\begin{align*}
  &\atrans{\pevolvein{\D{z}=f}{\ivr}}\ \ftrans{\phi}\ s\\
= &\sity{d}{{\reals_{\geq0}}}{\sity{sol}{[0,d]\to\xty}{}} (\solves{sol}{s}{d}{\D{z}=f})\\
&\mathop{\kwprod} (\pity{t}{{[0,d]}}{\ftrans{\phi}\ {(\lset{s}{z}{(sol\ t)})}})\\
&\mathop{\kwprod} \ftrans{\phi}\ (\lset{s}{(x,\D{x})}{(sol\ d, f\ (\lset{s}{x}{(sol\ d)}))})\\
= &\sity{d}{{\reals_{\geq0}}}{\sity{sol}{[0,d]\to\xty}{}} (\solves{sol}{\eren{s}{x}{y}}{d}{\D{\eren{z}{x}{y}}=\eren{f}{x}{y}})\\
  &\mathop{\kwprod} (\pity{t}{{[0,d]}}{\ftrans{\eren{\phi}{x}{y}}\ {\eren{(\lset{s}{z}{(sol\ t)})}{x}{y}}})\\
  &\mathop{\kwprod} \ftrans{\eren{\phi}{x}{y}}\ (\eren{(\lset{s}{(x,\D{x})}{(sol\ d, f\ (\lset{s}{x}{(sol\ d)}))})}{x}{y})\\
= &\sity{d}{{\reals_{\geq0}}}{\sity{sol}{[0,d]\to\xty}{}} (\solves{sol}{\eren{s}{x}{y}}{d}{\D{\eren{z}{x}{y}}=\eren{f}{x}{y}})\\
  &\mathop{\kwprod} (\pity{t}{{[0,d]}}{\ftrans{\eren{\phi}{x}{y}}\ {(\lset{\eren{s}{x}{y}}{\eren{z}{x}{y}}{(sol\ t)})}})\\
  &\mathop{\kwprod} \ftrans{\eren{\phi}{x}{y}}\ (\lset{\eren{s}{x}{y}}{(z,\D{z})}{(sol\ d, f\ (\lset{s}{z}{(sol\ d)}))})
\end{align*}

\mycase $\alpha;\beta,$ have
$ \atrans{\alpha;\beta}\ \ftrans{\phi}\ s
= \atrans{\alpha}\ (\atrans{\beta}\ \ftrans{\phi})\ s
= \atrans{\alpha}\ (\lambda t.~ \atrans{\beta}\ \ftrans{\phi}\ t)\ s
= \atrans{\eren{\alpha}{x}{y}}\ (\lambda t.~ \atrans{\eren{\beta}{x}{y}}\ \ftrans{\eren{\phi}{x}{y}}\ t)\ \eren{s}{x}{y}
= \atrans{\eren{\alpha}{x}{y}}\ (\eren{(\atrans{\beta}\ \ftrans{\phi})}{x}{y})\ \eren{s}{x}{y}
= \atrans{\eren{\alpha;\beta}{x}{y}}\ \ftrans{\eren{\phi}{x}{y}}\ \eren{s}{x}{y}$

\mycase $\alpha\cup\beta,$ have
$ \atrans{\alpha\cup\beta}\ \ftrans{\phi}\ s
= \atrans{\alpha}\ \ftrans{\phi}\ s + \atrans{\beta}\ \ftrans{\phi}\ s
= \atrans{\eren{\alpha}{x}{y}}\ \ftrans{\eren{\phi}{x}{y}}\ \eren{s}{x}{y}
+ \atrans{\eren{\beta}{x}{y}}\ \ftrans{\eren{\phi}{x}{y}}\ \eren{s}{x}{y}
= \atrans{\eren{\alpha\cup\beta}{x}{y}}\ \ftrans{\eren{\phi}{x}{y}}\ \eren{s}{x}{y}$

\mycase $\prepeat{\alpha}$
Note (0)
$(\lindty{\tau'\mathrel{:}(\sty \to \alltype)}{}\,\lambda{t:\sty}\,
(\phi\ t \to \tau'\ t)
\mathop{\kwprod}
(\atrans{\alpha}\ \tau'\ t \to \tau'\ t)
)\eren{}{x}{y}
= (\lindty{\tau'\mathrel{:}(\sty \to \alltype)}{}\,\lambda{t:\sty}\,
(\eren{\phi}{x}{y}\ t \to \tau'\ \eren{t}{x}{y})
\mathop{\kwprod}
(\atrans{\eren{\alpha}{x}{y}}\ \tau'\ t \to \tau'\ t)
)$ which can be proven by an inner induction.
Likewise
$ \atrans{\prepeat{\alpha}}\ \ftrans{\phi}\ s
= (\lindty{\tau'\mathrel{:}(\sty \to \alltype)}{} \lambda{t:\sty}\,
(\phi\ t \to \tau'\ t)
\mathop{\kwprod}
(\atrans{\alpha}\ \tau'\ t \to \tau'\ t)
)\ s
= (\lindty{\tau'\mathrel{:}(\sty \to \alltype)}{} \lambda{t:\sty}\,
(\eren{\phi}{x}{y}\ \eren{t}{x}{y} \to \tau'\ \eren{t}{x}{y})
\mathop{\kwprod}
(\atrans{\eren{\alpha}{x}{y}}\ \tau'\ t \to \tau'\ t)
)\eren{}{x}{y}\ \eren{s}{x}{y},$
by induction on the membership of $s$ in the fixed point.
This simplifies to
$\atrans{\eren{\prepeat{\alpha}}{x}{y}}\ \ftrans{\eren{\phi}{x}{y}}\ \eren{s}{x}{y}$
as desired.

\mycase $\pdual{\alpha},$ have
$ \atrans{\pdual{\alpha}}\ \ftrans{\phi}\ s
= \dtrans{\alpha}\ \ftrans{\phi}\ s
= \dtrans{\eren{\alpha}{x}{y}}\ \ftrans{\eren{\phi}{x}{y}}\ \eren{s}{x}{y}
= \atrans{\eren{\pdual{\alpha}}{x}{y}}\ \ftrans{\eren{\phi}{x}{y}}\ \eren{s}{x}{y}$

We give the Demon cases.

\mycase $\humod{x}{f},$ have
$\dtrans{\humod{z}{f}}\ \ftrans{\phi}\ s
= \ftrans{\phi} (\lset{s}{z}{(f\ s)})
= \ftrans{\eren{\phi}{x}{y}}\ \eren{(\lset{s}{z}{(f\ s)})}{x}{y}
= \ftrans{\eren{\phi}{x}{y}}\ (\lset{(\eren{s}{x}{y})}{(\eren{z}{x}{y})}{(f\ s)})
= \ftrans{\eren{\phi}{x}{y}}\ (\lset{(\eren{s}{x}{y})}{(\eren{z}{x}{y})}{(\eren{f}{x}{y}\ \eren{s}{x}{y})})
= \dtrans{\humod{\eren{z}{x}{y}}{\eren{f}{x}{y}}}\ \ftrans{\eren{\phi}{x}{y}}\ \eren{s}{x}{y}
= \dtrans{\eren{\{\humod{z}{f}\}}{x}{y}}\ \ftrans{\eren{\phi}{x}{y}}\ \eren{s}{x}{y}
$

\mycase $\prandom{x},$ have
\begin{align*}
  &\dtrans{\prandom{z}}\ \ftrans{\phi}\ s\\
= &\pity{v}{\sty}{\ftrans{\phi} (\lset{s}{z}{v})}\\
= &\pity{v}{\sty}{\ftrans{\eren{\phi}{x}{y}}\ \eren{(\lset{s}{z}{v})}{x}{y}}\\
= &\pity{v}{\sty}{\ftrans{\eren{\phi}{x}{y}}\ (\lset{(\eren{s}{x}{y})}{(\eren{z}{x}{y})}{v})}\\
= &\dtrans{\prandom{\eren{z}{x}{y}}}\ \ftrans{\eren{\phi}{x}{y}}\ \eren{s}{x}{y}\\
= &\dtrans{\eren{\{\prandom{z}\}}{x}{y}}\ \ftrans{\eren{\phi}{x}{y}}\ \eren{s}{x}{y}
\end{align*}

\mycase $\ptest{\phi},$ have
$\dtrans{\ptest{\psi}}\ \ftrans{\phi}\ s
= (\ftrans{\psi}\ s \to \ftrans{\phi}\ s)
= (\ftrans{\eren{\psi}{x}{y}}\ \eren{s}{x}{y} \to \ftrans{\eren{\phi}{}{y}}\ \eren{s}{x}{y})
= \dtrans{\eren{\ptest{\psi}}{x}{y}}\ \ftrans{\eren{\phi}{x}{y}}\ \eren{s}{x}{y}$

\mycase $\pevolvein{\D{y}=g}{\ivr}$:
%
%
From (A2) have  (L1) $x \neq y$ and (L2) $y \notin \freevars{g}$.
Note that this is a stronger admissibility condition than those for $\humod{y}{g}$ and $\prandom{y}$.
Unlike the former constructs, $y$ is always a free variable of $\pevolvein{\D{y}=g}{\ivr},$ thus if we attempted to define a sufficient admissibility condition to support the case $x=y,$ we would find it unsatisfiable. 
Thus we simply say $x$ cannot be substituted into $f$ in any ODE which binds $x$.
So (L1) (L2), then have (S)
$\tsub{(\humod{y}{g})}{x}{f} = \humod{x}{(\tsub{g}{x}{f})}$
also, from (L2) have
(*) $(f\ s) = (f\ (\lset{s}{y}{(\tsub{g}{x}{f}\ s)}))$ by \rref{lem:app-term-coincide} since $s = (\lset{s}{y}{(\tsub{g}{x}{f}\ s)})$ on $\{y\}^\complement \supseteq \freevars{f}^\complement$.
then

Have (0)
$ (\solves{sol}{\tsub{s}{x}{f}}{d}{\D{y}=g})
= (\solves{sol}{s}{d}{\tsub{(\D{y}=g)}{x}{f}})$ by the ``solves'' IH.

Have (1) for all $t \in [0,d],$
$\ftrans{\ivr}\ (\lset{(\tsub{s}{x}{f})}{y}{(sol\ t)})
= \ftrans{\tsub{\ivr}{x}{f}}\ (\lset{s}{y}{(sol\ t)})$
by IH on $\ivr$ and because
$ \lset{(\tsub{s}{x}{f})}{y}{(sol\ t)}
= \lset{(\lset{s}{x}{(f\ s)})}{y}{(sol\ t)}
= \lset{(\lset{s}{y}{(sol\ t)})}{x}{(f\ s)}
= \lset{(\lset{s}{y}{(sol\ t)})}{x}{(f\ (\lset{s}{y}{(sol\ t)}))}$
since by (L2) have $y \notin \freevars{f}$ thus $s = (\lset{s}{y}{(sol\ t)})$ on $\freevars{f}^\complement$ and \rref{lem:app-term-coincide} applies.

Have (2)
\begin{align*}
  &\ftrans{\phi}\ (\lset{(\tsub{s}{x}{f})}{(y,\D{y})}{(sol\ d, f\ (\lset{(\tsub{s}{x}{f})}{y}{(sol\ d)}))})\\
= &\ftrans{\tsub{\phi}{x}{f}}\ (\lset{s}{(y,\D{y})}{(sol\ d, f\ (\lset{(\tsub{s}{x}{f})}{y}{(sol\ d)}))})
\end{align*}
by the IH on $\phi$ and because
\begin{align*}
  &\lset{(\tsub{s}{x}{f})}{(y,\D{y})}{(sol\ d, g\ (\lset{(\tsub{s}{x}{f})}{y}{(sol\ d)}))}\\
= &\lset{(\lset{s}{(y,\D{y})}{(sol\ d, g\ (\lset{(\tsub{s}{x}{f})}{y}{(sol\ d)}))})}{x}{(f\ (\lset{s}{y}{(sol\ d)}))}
\end{align*}
by the same argument as above
and because
$ g\ (\lset{(\tsub{s}{x}{f})}{y}{(sol\ d)})
= (\tsub{g}{x}{f})\ (\lset{s}{y}{(sol\ d)})$
by term IH.

\begin{align*}
   &\dtrans{\pevolvein{\D{y}=g}{\ivr}}\ \ftrans{\phi} \tsub{s}{x}{f}\\
=  &\pity{d}{{\reals_{\geq0}}}{\sity{sol}{[0,d]\to\xty}{}}\\
     &(\solves{sol}{\tsub{s}{x}{f}}{d}{\D{y}=g})\\
 \to &(\pity{t}{{[0,d]}}{\ftrans{\ivr}\ {(\lset{\tsub{s}{x}{f}}{y}{(sol\ t)})}})\\
 \to &\ftrans{\phi}\ (\lset{(\tsub{s}{x}{f})}{(y,\D{y})}{(sol\ d, g\ (\lset{(\tsub{s}{x}{f})}{y}{(sol\ d)}))})\\
=_{0,1,2}
&\pity{d}{{\reals_{\geq0}}}{\sity{sol}{[0,d]\to\xty}{}}\\
 &(\solves{sol}{s}{d}{\tsub{(\D{y}=g)}}{x}{f})\\
 \to &(\pity{t}{{[0,d]}}{\ftrans{\tsub{\ivr}{x}{f}}\ {(\lset{s}{y}{(sol\ t)})}})\\
 \to &\ftrans{\tsub{\phi}{x}{f}}\ (\lset{s}{(y,\D{y})}{(sol\ d, (\tsub{g}{x}{f})\ (\lset{s}{y}{(sol\ d)}))})\\
= &\dtrans{\tsub{(\pevolvein{\D{y}=g}{\ivr})}{x}{f}}\ \ftrans{\tsub{\phi}{x}{f}}\ s
\end{align*}

\mycase $\alpha;\beta,$ have
$ \dtrans{\alpha;\beta}\ \ftrans{\phi}\ s
= \dtrans{\alpha}\ (\atrans{\beta}\ \ftrans{\phi})\ s
= \dtrans{\alpha}\ (\lambda t.~ \dtrans{\beta}\ \ftrans{\phi}\ t)\ s
= \dtrans{\eren{\alpha}{x}{y}}\ (\lambda t.~ \dtrans{\eren{\beta}{x}{y}}\ \ftrans{\eren{\phi}{x}{y}}\ t)\ \eren{s}{x}{y}
= \dtrans{\eren{\alpha}{x}{y}}\ (\eren{(\dtrans{\beta}\ \ftrans{\phi})}{x}{y})\ \eren{s}{x}{y}
  = \dtrans{\eren{\alpha;\beta}{x}{y}}\ \ftrans{\eren{\phi}{x}{y}}\ \eren{s}{x}{y}$

\mycase $\alpha\cup\beta,$ have
\begin{align*}
  &\dtrans{\alpha\cup\beta}\ \ftrans{\phi}\ s\\
= &\dtrans{\alpha}\ \ftrans{\phi}\ s \mathop{\kwprod} \dtrans{\beta}\ \ftrans{\phi}\ s\\
= &\dtrans{\eren{\alpha}{x}{y}}\ \ftrans{\eren{\phi}{x}{y}}\ \eren{s}{x}{y}\\
\mathop{\kwprod} &\dtrans{\eren{\beta}{x}{y}}\ \ftrans{\eren{\phi}{x}{y}}\ \eren{s}{x}{y}\\
= &\dtrans{\eren{\alpha\cup\beta}{x}{y}}\ \ftrans{\eren{\phi}{x}{y}}\ \eren{s}{x}{y}
\end{align*}

\mycase $\prepeat{\alpha}$
Note (0)
$\eren{(\lcoty{\tau'\mathrel{:}(\sty \to \alltype)}{\lambda{t:\sty}\,{
(\tau'\ t \to \dtrans{\alpha}\ \tau'\ t)
\mathop{\kwprod}
(\tau'\ t \to \ftrans{\phi}\ t)
}})}{x}{y}
= (\lcoty{\tau'\mathrel{:}(\sty \to \alltype)}{\lambda{t:\sty}\,{
(\tau'\ t \to \dtrans{\eren{\alpha}{x}{y}}\ \tau'\ t)
\mathop{\kwprod}
(\tau'\ t \to \ftrans{\eren{\phi}{x}{y}}\ t)
}})$ which can be proven by an inner coinduction.
Likewise
$ \dtrans{\prepeat{\alpha}}\ \ftrans{\phi}\ s
= (\lcoty{\tau'\mathrel{:}(\sty \to \alltype)}{} \lambda{t:\sty}\,
(\tau'\ t \to \dtrans{\alpha}\ \tau'\ t)
\mathop{\kwprod}
(\tau'\ t \to \ftrans{\phi}\ t)
)\ s
= (\lcoty{\tau'\mathrel{:}(\sty \to \alltype)}{} \lambda{t:\sty}\,
(\tau'\ t \to \dtrans{\eren{\alpha}{x}{y}}\ \tau'\ t)
\mathop{\kwprod}
(\tau'\ t \to \ftrans{\eren{\phi}{x}{y}}\ t)
) \eren{s}{x}{y},$
by coinduction on the membership of $s$ in the fixed point.
This simplifies to
$\dtrans{\eren{\prepeat{\alpha}}{x}{y}}\ \ftrans{\eren{\phi}{x}{y}}\ \eren{s}{x}{y}$
as desired.

\mycase $\pdual{\alpha},$ have
$ \dtrans{\pdual{\alpha}}\ \ftrans{\phi}\ s
= \atrans{\alpha}\ \ftrans{\phi}\ s
= \atrans{\eren{\alpha}{x}{y}}\ \ftrans{\eren{\phi}{x}{y}}\ \eren{s}{x}{y}
= \dtrans{\eren{\pdual{\alpha}}{x}{y}}\ \ftrans{\eren{\phi}{x}{y}}\ \eren{s}{x}{y}$
\end{proof}

\begin{lemma}[Formula substitution]
 If $\mproves{\ftrans{\G}\ s}{M}{\ftrans{\phi}\ s}$ and the substitutions $\tsub{\G}{x}{f}$ and $\tsub{\phi}{x}{f}$ are admissible, then 
 $\mproves{\tsub{\G}{x}{f}\ s}{\tsub{M}{x}{f}\ s}{\ftrans{\tsub{\phi}{x}{f}}\ s}$.
 Likewise for contexts $\G$ and the predicate $(\solves{sol}{s}{d}{\D{y}=g})$.
\label{lem:app-formula-subst}
\end{lemma}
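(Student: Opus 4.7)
The plan is to prove this by simultaneous structural induction on contexts, formulas, and the solution predicate, paired with a companion game-substitution lemma proved by simultaneous induction on the Angel and Demon semantics. This parallels the layout of \rref{lem:app-formula-rename} and \rref{lem:app-game-rename}, and the atomic term-level rewriting is dispatched by \rref{lem:app-term-subst}. The context cases ($\Gemp$ and $\G,\psi$) reduce to projection and a componentwise IH application. The comparison formula case $f \sim g$ holds pointwise: by \rref{lem:app-term-subst}, $\ftrans{\tsub{(f \sim g)}{x}{h}}\ s = ((\tsub{f}{x}{h})\ s) \sim ((\tsub{g}{x}{h})\ s) = \ftrans{f \sim g}\ (\lset{s}{x}{(h\ s)})$. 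Modality cases $\ddiamond{\alpha}{\phi}$ and $\dbox{\alpha}{\phi}$ delegate to the game IH.

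For the discrete game cases, tests, choices, sequential compositions, and duals unfold the semantics and apply the IH directly, with the Angel and Demon cases mirror-images of one another. Deterministic assignment $\humod{y}{g}$ uses \rref{lem:app-term-subst} together with setter-getter algebra. Nondeterministic assignment $\prandom{y}$ relies on admissibility to exclude the capturing case $x \in \freevars{f} \cup \{x\}$ under the $y$ binder, so that the $x$ and $y$ state updates commute when $y \notin \freevars{f}$, using \rref{lem:app-term-coincide} to justify $f\ s = f\ (\lset{s}{y}{v})$. Sequential composition follows by IH on both subgames after rewriting $\atrans{\alpha;\beta}\ \ftrans{\phi}\ s$ as $\atrans{\alpha}\ \ftrans{\ddiamond{\beta}{\phi}}\ s$. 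Repetitions close by induction on membership in the least fixed point for $\atrans{\prepeat{\alpha}}$ and coinduction on membership in the greatest fixed point for $\dtrans{\prepeat{\alpha}}$, analogously to \rref{lem:app-game-rename}.

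The main obstacle is the ODE case $\pevolvein{\D{y}=g}{\ivr}$, since $y$ is unavoidably free in this game and is also bound during evolution, so admissibility forces both $x \neq y$ and $y \notin \freevars{f}$. Under these conditions substitution distributes as $\tsub{(\pevolvein{\D{y}=g}{\ivr})}{x}{f} = \pevolvein{\D{y}=(\tsub{g}{x}{f})}{(\tsub{\ivr}{x}{f})}$, and commuting state updates along the flow reduces to the identity $\lset{(\lset{s}{x}{(f\ s)})}{y}{(sol\ t)} = \lset{(\lset{s}{y}{(sol\ t)})}{x}{(f\ (\lset{s}{y}{(sol\ t)}))}$, which follows from setter-getter laws combined with $y \notin \freevars{f}$ via \rref{lem:app-term-coincide}. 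The solution predicate $\solves{sol}{s}{d}{\D{y}=g}$ is dispatched by the same computation used in its coincidence and renaming cases, and the IH on $\ivr$ and $\phi$ closes the remaining obligations. Once these admissibility side conditions are unpacked correctly, the rest of the proof is mechanical.
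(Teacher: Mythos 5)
Your strategy coincides with the paper's own proof: simultaneous induction on contexts, formulas, the solves predicate, and a companion game-substitution lemma for the Angel and Demon semantics, with atomic steps discharged by \rref{lem:app-term-subst}, the ODE case closed by the commuting-update identity under $x \neq y$ and $y \notin \freevars{f}$, and the repetition cases handled by induction respectively coinduction on fixed-point membership, exactly as in the paper's treatment of \rref{lem:app-formula-subst} and \rref{lem:app-game-subst}.

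There is, however, one concrete gap in your case analysis of the binding discrete games. By \rref{def:lem-admit}, admissibility of $\tsub{\phi}{x}{f}$ requires that $x$ not occur under a binder of $\freevars{f}\cup\{x\}$ (the side condition concerns the bound variable $y$, not $x$ as your phrasing suggests); it does \emph{not} force $x\neq y$ and $y\notin\freevars{f}$ for $\humod{y}{g}$ and $\prandom{y}$. The substitution is also admissible when $y\in\freevars{f}\cup\{x\}$, in particular when $x=y$, provided $x$ does not occur free in the postcondition within the binder's scope. Your sketch argues only the commuting-update case (the paper's subcase (L1) $x\neq y$, (L2) $y\notin\freevars{f}$); the paper separately proves the subcase (R1) $x=y$, (R2) $x\notin\freevars{\phi}$, where the substitution inside the scope is vacuous and one instead uses \rref{lem:app-term-coincide} to show $\ftrans{\phi}\ t = \ftrans{\phi}\ (\lset{t}{x}{(f\ t)})$ and thereby absorb the outer update by $f$. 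Without this subcase the induction does not cover admissible instances such as $\tsub{(\ddiamond{\prandom{x}}{\phi})}{x}{f}$ with $x\notin\freevars{\phi}$. Your observation that the ODE case genuinely forces $x\neq y$ and $y\notin\freevars{f}$ (no $x=y$ case exists there) is correct; the point is that this reading must not be transferred back to assignment and nondeterministic assignment. The remainder of your outline matches the paper's proof.
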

\begin{lemma}[Game substitution]
 If the substitutions $\tsub{\G}{x}{f}$ and $\tsub{\alpha}{x}{f}$ $\tsub{\phi}{x}{f}$ are admissible, then 
 \begin{itemize}
 \item
$\mproves{\ftrans{\G}\ (\tsub{s}{x}{f})}{M\ (\tsub{s}{x}{f})}{\atrans{\alpha}\ \phi\ (\tsub{s}{x}{f})}$ iff
$\mproves{\ftrans{\tsub{\G}{x}{f}}\ s}{\tsub{M}{x}{f}\ s}{\atrans{\tsub{\alpha}{x}{f}}\ \tsub{\phi}{x}{f}\ s}$.
 \item
$\mproves{\ftrans{\G}\ (\tsub{s}{x}{f})}{M\ (\tsub{s}{x}{f})}{\dtrans{\alpha}\ \phi\ (\tsub{s}{x}{f})}$ iff
$\mproves{\ftrans{\tsub{\G}{x}{f}}\ s}{\tsub{M}{x}{f}\ s}{\dtrans{\tsub{\alpha}{x}{f}}\ \tsub{\phi}{x}{f}\ s}$.
 \end{itemize}
where $\tsub{s}{x}{f}$ is shorthand for $\lset{s}{x}{(f\ s)}$.
\label{lem:app-game-subst}
\end{lemma}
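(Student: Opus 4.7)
The plan is to prove Lemma \ref{lem:app-game-subst} by simultaneous induction with Lemma \ref{lem:app-formula-subst} over the structure of formulas, contexts, games, and the solves predicate. The structure parallels the proof of uniform renaming (Lemma \ref{lem:app-game-rename}) given in the excerpt, but replaces the transposition of two variables with substitution of term $f$ for game variable $x$, and relies on the admissibility side condition to ensure that substitution never crosses a binder of any free variable of $f$ or of $x$ itself. As in the renaming proof, I would first handle the context case ($\G = \Gemp$ and $\G, \psi$) to let each remaining case focus purely on the semantic equality of pre- and post-substituted formulas, and I would simplify away $\G$ using modus ponens on the substituted context hypothesis.

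For most connectives the argument is routine and mechanical: unfold the clause in the definition of $\atrans{\cdot}$ or $\dtrans{\cdot}$, push $\tsub{\cdot}{x}{f}$ inside using the Term substitution Lemma~\ref{lem:app-term-subst} at the leaves, and close by IH on the structurally smaller subexpressions. Assignment, test, choice, and dual follow directly. Sequential composition $\alpha;\beta$ needs a short check that admissibility propagates to both components: since $\tsub{(\alpha;\beta)}{x}{f}$ is admissible, $\tsub{\alpha}{x}{f}$ is admissible, and $\tsub{\beta}{x}{f}$ remains admissible inside $\tsub{\alpha}{x}{f}$ because the must-bound analysis tracks which assignments are guaranteed to hide $x$. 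Repetition is handled by an inner induction on fixed-point membership for $\atrans{\prepeat{\alpha}}$ and by coinduction for $\dtrans{\prepeat{\alpha}}$, mirroring the renaming proof.

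The main obstacle is the ODE case $\pevolvein{\D{y}=g}{\ivr}$, which is the same sticking point seen in the renaming proof. Admissibility of $\tsub{(\pevolvein{\D{y}=g}{\ivr})}{x}{f}$ is a stronger condition than for the discrete binders, since $y$ is always free in the ODE; it gives $x \neq y$ and $y \notin \freevars{f}$, so that $\tsub{(\humod{y}{g})}{x}{f} = \humod{y}{(\tsub{g}{x}{f})}$ and, by Term coincidence (Lemma~\ref{lem:app-term-coincide}), $f\ s = f\ (\lset{s}{y}{v})$ for every $v$. With these facts the substitution commutes with each update $\lset{s}{y}{(sol\ t)}$ performed during the evolution, i.e.\
\[
  \lset{(\tsub{s}{x}{f})}{y}{(sol\ t)} \;=\; \tsub{(\lset{s}{y}{(sol\ t)})}{x}{f},
\]
so the solves predicate, the domain constraint, and the postcondition substitute uniformly. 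The IH on $g$ (term), $\ivr$, and $\phi$ then closes the case after an equational rewrite; I would establish the solves predicate case first and reuse it inside both the Angel and Demon ODE cases.

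A secondary subtlety is the interaction between substitution and the differential term $\der{g}$ inside the solves predicate: since $\der{g}$ is defined virtually as a gradient dot-product with $\D{s}$ and admissibility forbids $x$ or $\D{x}$ appearing under an ODE binder on $x$, the substitution of $f$ for $x$ commutes with differentiation under the ODE. This can be discharged by invoking the Differential Lemma~\ref{lem:app-differential-lemma} together with the Term substitution lemma. Once the ODE case is in place, the remaining Demon cases are symmetric to the Angel cases and the simultaneous formula-substitution proof reduces to the game and term cases already established.
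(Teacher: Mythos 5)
Your proposal follows essentially the same route as the paper's proof: a simultaneous induction with the formula-substitution lemma over contexts, formulas, the solves predicate, and games, dispatching contexts first, using \rref{lem:app-term-subst} and \rref{lem:app-term-coincide} at the leaves, proving the solves-predicate case once and reusing it, discharging the ODE case from the admissibility facts $x \neq y$ and $y \notin \freevars{f}$ via the commutation $\lset{(\tsub{s}{x}{f})}{y}{(sol\ t)} = \tsub{(\lset{s}{y}{(sol\ t)})}{x}{f}$, and handling $\prepeat{\alpha}$ by an inner induction (resp.\ coinduction) relating the two fixed points, with Demon symmetric. Two minor corrections: the discrete binders $\humod{y}{g}$ and $\prandom{y}$ do not quite ``follow directly'' --- admissibility forces the case split the paper spells out, namely either $x \neq y$ with $y \notin \freevars{f}$ (commutation as in the ODE case) or $x = y$ with $x \notin \freevars{\phi}$ (where the substitution in the postcondition is dissolved by coincidence rather than commutation), and only the first alternative is available for ODEs. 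Also, your appeal to \rref{lem:app-differential-lemma} is unnecessary: the solves predicate mentions only $\der{sol}$, the derivative of the time-indexed solution $sol$, which is unaffected by substituting $f$ for $x$ in the state, so no substitution--differentiation interaction needs to be discharged.
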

\begin{proof}[Proof of \rref{lem:app-formula-subst} and \rref{lem:app-game-subst}]
In the formula cases, we assume (A0) $\mproves{\ftrans{\G}\ s}{M}{\ftrans{\phi}\ s}$, (A1) admissibility of $\tsub{\G}{x}{f}$ and (A2) admissibility of $\tsub{\phi}{x}{f}$.
Likewise for, contexts, games, and predicate $(\solves{sol}{s}{d}{\D{x}=f})$.
We note that in each case, the admissibility conditions of the IH hold following (A1) and (A2) and unpacking the inductive definition of admissibility.
As usual, we also ignore the contexts in the formula and game cases since the cases with contexts follow easily from those without, combined with IH's on the contexts.

We give the context cases first.

\mycase $\G = \Gemp,$ then
trivially $\ftrans{\Gemp}\ \tsub{s}{x}{f}$.

\mycase $\G,\psi$:
From (A0) have $\ftrans{\G}\ \tsub{s}{x}{f}$ and $\ftrans{\psi}\ \tsub{s}{x}{f},$ then by the IH's have $\ftrans{\tsub{\G}{x}{f}}\ s$ and $\ftrans{\tsub{\psi}{x}{f}}\ s$ giving $\ftrans{\tsub{(\G,\psi)}{x}{f}}\ s$ as desired.

We now give the formula cases.

\mycase $f \sim g$:
From (A0) by \rref{lem:app-term-subst} have
$\ftrans{g\sim h}\ \tsub{s}{x}{f}
= (g\ \tsub{s}{x}{f} \sim h\ \tsub{s}{x}{f})
= (\tsub{g}{x}{f}\ s \sim \tsub{h}{x}{f}\ s)
= (\tsub{(g \sim h)}{x}{f})\ s
$

\mycase $\dbox{\alpha}{\phi}$:
From (A0) have by the IH on $\alpha$ that
$\ftrans{\dbox{\alpha}{\phi}}\ \tsub{s}{x}{f}
= \dtrans{\alpha}\ \ftrans{\phi}\ \tsub{s}{x}{f}
= \dtrans{\tsub{\alpha}{x}{f}}\ \ftrans{\tsub{\phi}{x}{f}}\ s
= \ftrans{\tsub{(\dbox{\alpha}{\phi})}{x}{f}}\ s
$

\mycase $\ddiamond{\alpha}{\phi}$:
From (A0) have by the IH on $\alpha$ that
$\ftrans{\ddiamond{\alpha}{\phi}}\ \tsub{s}{x}{f}
= \atrans{\alpha}\ \ftrans{\phi}\ \tsub{s}{x}{f}
= \atrans{\tsub{\alpha}{x}{f}}\ \ftrans{\tsub{\phi}{x}{f}}\ s
= \ftrans{\tsub{\ddiamond{\alpha}{\phi}}{x}{f}}\ s$

\mycase $(\solves{sol}{s}{d}{\D{y}=g})$:
From (A2) have (L1) $x \neq y$ and (L2) $y \notin \freevars{f}$.
Have (0) $\lget{(\tsub{s}{x}{f})}{y} = \lget{s}{y}$ by (L1).
Have (1) for all $r \in [0,d]$ that 
$ g\ (\lset{(\tsub{s}{x}{f})}{y}{(sol\ r)})
= (\tsub{g}{x}{f})\ (\lset{s}{y}{(sol\ r)})$
by \rref{lem:app-term-coincide} and because
$(\lset{(\tsub{s}{x}{f})}{y}{(sol\ r)})
= \tsub{(\lset{s}{y}{(sol\ r)})}{x}{f}$
since by (L2) $s = \lset{s}{y}{(sol\ r)}$ on $\{y\}^\complement \supseteq \freevars{f}^\complement$ thus \rref{lem:app-term-coincide} applies.
Have
\begin{align*}
       &(\solves{sol}{\tsub{s}{x}{f}}{d}{\D{y}=g})\\
=      &(\lget{(\tsub{s}{x}{f})}{y} = sol\ 0  \mathop{\kwprod}  \pity{r}{[0,d]}{(\der{sol}\ r = g\ (\lset{(\tsub{s}{x}{f})}{y}{(sol\ r)}))})\\
=_{0,1} &(\lget{s}{y} = sol\ 0 \mathop{\kwprod} \pity{r}{[0,d]}{(\der{sol}\ r = (\tsub{g}{x}{f})\ (\lset{s}{y}{(sol\ r)}))})\\
=      &(\solves{sol}{s}{d}{\tsub{(\D{y}=g)}{x}{f}})
\end{align*}

We give the Angel cases.

\mycase $\humod{y}{g},$
 from (A2) have either (L1) $x \neq y$ and (L2) $y \notin \freevars{f}$
or (R1) $x=y$ and (R2) $x \notin \freevars{\phi}$.
In the first case, (L1) (L2), then (S)
$\tsub{(\humod{y}{g})}{x}{f} = \humod{x}{(\tsub{g}{x}{f})}$
also, from (L2) have
(*) $(f\ s) = (f\ (\lset{s}{y}{(\tsub{g}{x}{f}\ s)}))$ by \rref{lem:app-term-coincide} since $s = (\lset{s}{y}{(\tsub{g}{x}{f}\ s)})$ on $\{y\}^\complement \supseteq \freevars{f}^\complement$.
Then 
\begin{align*}
  &\atrans{\humod{y}{g}}\ \ftrans{\phi}\ \tsub{s}{x}{f}\\
= &\ftrans{\phi}\ (\lset{(\tsub{s}{x}{f})}{y}{(g\ (\tsub{s}{x}{f}))})\\
= &\ftrans{\phi}\ (\lset{(\lset{s}{y}{(g\ (\tsub{s}{x}{f}))})}{x}{(f\ s)})\\
= &\ftrans{\phi}\ (\lset{(\lset{s}{y}{(\tsub{g}{x}{f}\ s)})}{x}{(f\ s)})\\
=_{*} &\ftrans{\phi}\ (\lset{(\lset{s}{y}{(\tsub{g}{x}{f}\ s)})}{x}{(f\ (\lset{s}{y}{(\tsub{g}{x}{f}\ s)}))})\\
\iheq &\ftrans{\tsub{\phi}{x}{f}}\ (\lset{s}{y}{(\tsub{g}{x}{f}\ s)})\\
= &\atrans{\tsub{\humod{y}{g}}{x}{f}}\ \ftrans{\tsub{\phi}{x}{f}}\ s
\end{align*}
In the second case, (R1) (R2), note (*) for every $t$ have $t = \tsub{t}{x}{f}$ on $\freevars{\phi}$ (since $x \notin \freevars{\phi}$ by R2) by \rref{lem:app-term-coincide}.

Then
$ \atrans{\humod{y}{g}}\ \ftrans{\phi}\ \tsub{s}{x}{f}
= \atrans{\humod{x}{g}}\ \ftrans{\phi}\ \tsub{s}{x}{f}
= \ftrans{\phi}\ (\lset{(\tsub{s}{x}{f})}{x}{(g\ (\tsub{s}{x}{f}))})
= \ftrans{\phi}\ (\lset{s}{x}{(g\ (\tsub{s}{x}{f}))})
=_* \ftrans{\phi}\ \tsub{(\lset{s}{x}{(g\ \tsub{s}{x}{f})})}{x}{f}
\iheq \ftrans{\tsub{\phi}{x}{f}}\ (\lset{s}{x}{(g\ \tsub{s}{x}{f})})
= \ftrans{(\tsub{\phi}{x}{f})}\ (\lset{s}{x}{(g\ (\lset{s}{x}{(f\ s)}))})
= \ftrans{(\tsub{\phi}{x}{f})}\ (\lset{s}{x}{((\tsub{g}{x}{f})\ s)})
= \atrans{\humod{x}{(\tsub{g}{x}{f})}}\ \tsub{\phi}{x}{f}\ s
= \atrans{\tsub{\{\humod{x}{g}\}}{x}{f}}\ \tsub{\phi}{x}{f}\ s
= \atrans{\tsub{\{\humod{y}{g}\}}{x}{f}}\ \tsub{\phi}{x}{f}\ s
$

\mycase $\prandom{y}$:
From (A2) have either (L1) $x \neq y$ and (L2) $y \notin \freevars{f}$
or (R1) $x=y$ and (R2) $x \notin \freevars{\phi}$.
In both cases, then (S) $\tsub{\{\prandom{y}\}}{x}{f} = \prandom{y}$.
In case (L) then
$ \atrans{\prandom{y}}\ \ftrans{\phi}\ \tsub{s}{x}{f}
=  \sity{v}{\xty}{\ftrans{\phi}\ (\lset{(\tsub{s}{x}{f})}{y}{v})}
=  \sity{v}{\xty}{\ftrans{\phi}\ (\lset{(\lset{s}{x}{(f\ s)})}{y}{v})}
=_{L1}  \sity{v}{\xty}{\ftrans{\phi}\ (\lset{(\lset{s}{y}{v})}{x}{(f\ s)}))}
=_{L2}  \sity{v}{\xty}{\ftrans{\phi}\ (\lset{(\lset{s}{y}{v})}{x}{(f\ (\lset{s}{y}{v}))}))}
=  \sity{v}{\xty}{\ftrans{\tsub{\phi}{x}{f}}\ (\lset{s}{y}{v})}
= \atrans{\prandom{y}}\ \ftrans{\tsub{\phi}{x}{f}}\ s
=_S \atrans{\tsub{\{\prandom{y}\}}{x}{f}}\ \ftrans{\tsub{\phi}{x}{f}}\ s$

In case (R1) (R2) then
note (*) for every $t$ have $t = \tsub{t}{x}{f}$ on $\freevars{\phi}$ (since $x \notin \freevars{\phi}$ by R2) by \rref{lem:app-term-coincide}.
then
$ \atrans{\prandom{y}}\ \ftrans{\phi}\ \tsub{s}{x}{f}
= \atrans{\prandom{x}}\ \ftrans{\phi}\ \tsub{s}{x}{f}
= \sity{v}{\xty}{\ftrans{\phi}\ (\lset{(\tsub{s}{x}{f})}{x}{v})}
= \sity{v}{\xty}{\ftrans{\phi}\ (\lset{s}{x}{v})}
=_* \sity{v}{\xty}{\ftrans{\phi}\ \tsub{(\lset{s}{x}{v})}{x}{f}}
\iheq \sity{v}{\xty}{\ftrans{\tsub{\phi}{x}{f}}\ (\lset{s}{x}{v})}
= \atrans{\prandom{x}}\ \ftrans{\tsub{\phi}{x}{f}}\ s
=_{R1} \atrans{\prandom{y}}\ \ftrans{\tsub{\phi}{x}{f}}\ s
=_S \atrans{\tsub{\prandom{y}}{x}{f}}\ \ftrans{\tsub{\phi}{x}{f}}\ s$.

\mycase $\ptest{\psi},$ have
$ \atrans{\ptest{\psi}}\ \ftrans{\phi}\ \tsub{s}{x}{f}
= \ftrans{\psi}\ \tsub{s}{x}{f} \mathop{\kwprod} \ftrans{\phi} \tsub{s}{x}{f}
= \ftrans{\tsub{\psi}{x}{f}}\ s \mathop{\kwprod} \ftrans{\tsub{\phi}{x}{f}}\ s
= \atrans{\tsub{\ptest{\psi}}{x}{f}} \ftrans{\tsub{\phi}{x}{f}}\ s$

\mycase $\pevolvein{\D{y}=g}{\ivr}$:
From (A2) have  (L1) $x \neq y$ and (L2) $y \notin \freevars{f}$.
Note that this is a stronger admissibility condition than those for $\humod{y}{f}$ and $\prandom{y}$.
Unlike the former constructs, $y$ is always a free variable of $\pevolvein{\D{y}=g}{\ivr},$ thus if we attempted to define a sufficient admissibility condition to support the case $x=y,$ we would find it unsatisfiable.
Thus we simply say $x$ cannot be substituted into $f$ in any ODE which binds $x$.
So (L1) (L2), then have (S)
$\tsub{(\humod{y}{g})}{x}{f} = \humod{x}{(\tsub{g}{x}{f})}$
also, from (L2) have
(*) $(f\ s) = (f\ (\lset{s}{y}{(\tsub{g}{x}{f}\ s)}))$ by \rref{lem:app-term-coincide} since $s = (\lset{s}{y}{(\tsub{g}{x}{f}\ s)})$ on $\{y\}^\complement \supseteq \freevars{f}^\complement$.
then

Have (0) $(\solves{sol}{\tsub{s}{x}{f}}{d}{\D{y}=g}) = (\solves{sol}{s}{d}{\tsub{(\D{y}=g)}{x}{f}})$ by the ``solves'' IH.

Have (1) for all $t \in [0,d],$
$\ftrans{\ivr}\ (\lset{(\tsub{s}{x}{f})}{y}{(sol\ t)})
= \ftrans{\tsub{\ivr}{x}{f}}\ (\lset{s}{y}{(sol\ t)})$
by IH on $\ivr$ and because
$ \lset{(\tsub{s}{x}{f})}{y}{(sol\ t)}
= \lset{(\lset{s}{x}{(f\ s)})}{y}{(sol\ t)}
= \lset{(\lset{s}{y}{(sol\ t)})}{x}{(f\ s)}
= \lset{(\lset{s}{y}{(sol\ t)})}{x}{(f\ (\lset{s}{y}{(sol\ t)}))}$
since by (L2) have $y \notin \freevars{f}$ thus $s = (\lset{s}{y}{(sol\ t)})$ on $\freevars{f}^\complement$ and \rref{lem:app-term-coincide} applies.

Have (2)
\begin{align*}
  &\ftrans{\phi}\ (\lset{(\tsub{s}{x}{f})}{(y,\D{y})}{(sol\ d, g\ (\lset{(\tsub{s}{x}{f})}{y}{(sol\ d)}))})\\
=\,&\ftrans{\tsub{\phi}{x}{f}}\ (\lset{s}{(y,\D{y})}{(sol\ d, \tsub{g}{x}{f}\ (\lset{s}{y}{(sol\ d)}))})
\end{align*}
by the IH on $\phi$ and because
\begin{align*}
  &\lset{(\tsub{s}{x}{f})}{y}{(sol\ d)}\\
=\,&\lset{(\lset{y}{(sol\ d)})}{x}{(f\ (\lset{s}{y}{(sol\ d)}))}
\end{align*}
by the same argument as above and by term case because
$(\lset{(\tsub{s}{x}{f})}{y}{(sol\ d)})
= \tsub{(\lset{s}{y}{(sol\ d)})}{x}{f}$.

\begin{align*}
   &\atrans{\pevolvein{\D{y}=g}{\ivr}}\ \ftrans{\phi} \tsub{s}{x}{f}\\
=  &\sity{d}{{\reals_{\geq0}}}{\sity{sol}{[0,d]\to\xty}{}}\\
 &(\solves{sol}{\tsub{s}{x}{f}}{d}{\D{y}=g})\\
 &\mathop{\kwprod} (\pity{t}{{[0,d]}}{\ftrans{\ivr}\ {(\lset{\tsub{s}{x}{f}}{y}{(sol\ t)})}})\\
 &\mathop{\kwprod} \ftrans{\phi}\ (\lset{(\tsub{s}{x}{f})}{(y,\D{y})}{(sol\ d, g\ (\lset{(\tsub{s}{x}{f})}{y}{(sol\ d)}))})\\
=_{0,1,2}
&\sity{d}{{\reals_{\geq0}}}{\sity{sol}{[0,d]\to\xty}{}}\\
 &(\solves{sol}{s}{d}{\tsub{(\D{y}=g)}}{x}{f})\\
 &\mathop{\kwprod} (\pity{t}{{[0,d]}}{\ftrans{\tsub{\ivr}{x}{f}}\ {(\lset{s}{y}{(sol\ t)})}})\\
 &\mathop{\kwprod} \ftrans{\tsub{\phi}{x}{f}}\ (\lset{s}{(y,\D{y})}{(sol\ d, \tsub{g}{x}{f}\ (\lset{s}{y}{(sol\ d)}))})\\
= &\atrans{\tsub{(\pevolvein{\D{y}=g}{\ivr})}{x}{f}}\ \ftrans{\tsub{\phi}{x}{f}} s
\end{align*}

\mycase $\alpha;\beta,$ have
$ \atrans{\alpha;\beta}\ \ftrans{\phi}\ \tsub{s}{x}{f}
= \atrans{\alpha}\ (\atrans{\beta}\ \ftrans{\phi})\ \tsub{s}{x}{f}
= \atrans{\alpha}\ (\ftrans{\ddiamond{\beta}{\phi}})\ \tsub{s}{x}{f}
= \atrans{\tsub{\alpha}{x}{f}}\ (\ftrans{\tsub{\ddiamond{\beta}{\phi}}{x}{f}})\ s
= \atrans{\tsub{\alpha}{x}{f}}\ (\atrans{\tsub{\beta}{x}{f}}\ \ftrans{\tsub{\phi}{x}{f}})\ s
= \atrans{\tsub{\{\alpha;\beta\}}{x}{f}}\ \ftrans{\tsub{\phi}{x}{f}}\ s
$

\mycase $\alpha\cup\beta,$ have
$ \atrans{\alpha\cup\beta}\ \ftrans{\phi}\ \tsub{s}{x}{f}
=  \atrans{\alpha}\ \ftrans{\phi}\ \tsub{s}{x}{f} \mathop{\kwsum} \atrans{\beta}\ \ftrans{\phi}\ \tsub{s}{x}{f}
=  \atrans{\tsub{\alpha}{x}{f}}\ \ftrans{\tsub{\phi}{x}{f}} s \mathop{\kwsum} \atrans{\tsub{\beta}{x}{f}}\ \ftrans{\tsub{\phi}{x}{f}} s
= \atrans{\tsub{\{\alpha\cup\beta\}}{x}{f}}\ \ftrans{\tsub{\phi}{x}{f}} s$

\mycase $\prepeat{\alpha}$:
The main step (*) holds by inner induction on the membership of $\tsub{s}{x}{f}$ in the fixed point.
The two fixed point solutions $\tau'$ and $\tau''$ are related by $\tau'\ \tsub{s}{x}{f}$ iff $\tau'' s$ for all $s : \sty$.

In the base case,
$(\psi\ \tsub{s}{x}{f} \to \tau'\ \tsub{s}{x}{f})
= (\psi\ \tsub{s}{x}{f} \to \tau''\ s)
= (\tsub{\psi}{x}{f}\ s \to \tau''\ s)$.

In the inductive case,
$ (\atrans{\alpha}\ \tau'\ \tsub{s}{x}{f} \to \tau'\ \tsub{s}{x}{f})
= (\atrans{\tsub{\alpha}{x}{f}}\ \tau''\ s \to \tau'\ \tsub{s}{x}{f})
= (\atrans{\tsub{\alpha}{x}{f}}\ \tau''\ s \to \tau''\ s)$.

Then the main proof of the case proceeds:

$ \atrans{\prepeat{\alpha}}\ \ftrans{\phi}\ \tsub{s}{x}{f}
=
(\lindty{\tau'\mathrel{:}(\sty \to \alltype)}{\lambda{t:\sty}\,{
(\psi\ t \to \tau'\ t)
\mathop{\kwprod}
(\atrans{\alpha}\ \tau'\ t \to \tau'\ t)
}})\ (\tsub{s}{x}{f})
=_{*}
(\lindty{\tau''\mathrel{:}(\sty \to \alltype)}{\lambda{t:\sty}\,{
(\tsub{\psi}{x}{f}\ t \to \tau''\ t)
\mathop{\kwprod}
(\atrans{\tsub{\alpha}{x}{f}}\ \tau''\ t \to \tau''\ t)
}})\ (\tsub{s}{x}{f})
= \atrans{\tsub{\prepeat{\alpha}}{x}{f}}\ \ftrans{\tsub{\phi}{x}{f}}\ s$.

\mycase $\pdual{\alpha},$ have
$ \atrans{\pdual{\alpha}}\ \ftrans{\phi}\ \tsub{s}{x}{f}
= \dtrans{\alpha}\ \ftrans{\phi}\ \tsub{s}{x}{f}
\iheq \dtrans{\tsub{\alpha}{x}{f}}\ \ftrans{\tsub{\phi}{x}{f}}\ s
= \atrans{\pdual{(\tsub{\alpha}{x}{f})}}\ \ftrans{\tsub{\phi}{x}{f}}\ s
= \atrans{\tsub{\{\pdual{\alpha}\}}{x}{f}}\ \ftrans{\tsub{\phi}{x}{f}}\ s
$

We give the Demon cases.

\mycase $\humod{y}{g},$
From (A2) have either (L1) $x \neq y$ and (L2) $y \notin \freevars{f}$
or (R1) $x=y$ and (R2) $x \notin \freevars{\phi}$.

In the first case, (L1) (L2), then (S)
$\tsub{\{\humod{y}{g}\}}{x}{f} = \{\humod{x}{(\tsub{g}{x}{f})}\}.$
Also, from (L2), have
(*) $(f\ s) = (f\ (\lset{s}{y}{(\tsub{g}{x}{f}\ s)}))$ by \rref{lem:app-term-coincide} since $s = (\lset{s}{y}{(\tsub{g}{x}{f}\ s)})$ on $\{y\}^\complement \supseteq \freevars{f}^\complement$
thus
\begin{align*}
  &\dtrans{\humod{y}{g}}\ \ftrans{\phi}\ \tsub{s}{x}{f}\\
= &\ftrans{\phi}\ (\lset{(\tsub{s}{x}{f})}{y}{(g\ (\tsub{s}{x}{f}))})\\
= &\ftrans{\phi}\ (\lset{(\lset{s}{y}{(g\ (\tsub{s}{x}{f}))})}{x}{(f\ s)})\\
= &\ftrans{\phi}\ (\lset{(\lset{s}{y}{(\tsub{g}{x}{f}\ s)})}{x}{(f\ s)})\\
=_{*} &\ftrans{\phi}\ (\lset{(\lset{s}{y}{(\tsub{g}{x}{f}\ s)})}{x}{(f\ (\lset{s}{y}{(\tsub{g}{x}{f}\ s)}))})\\
\iheq &\ftrans{\tsub{\phi}{x}{f}}\ (\lset{s}{y}{(\tsub{g}{x}{f}\ s)})\\
= &\dtrans{\tsub{\humod{y}{g}}{x}{f}}\ \ftrans{\tsub{\phi}{x}{f}}\ s
\end{align*}
In the second case, (R1) (R2), note (*) for every $t$ have $t = \tsub{t}{x}{f}$ on $\freevars{\phi}$ (since $x \notin \freevars{\phi}$ by R2) by \rref{lem:app-term-coincide}.

Then
$ \dtrans{\humod{y}{g}}\ \ftrans{\phi}\ \tsub{s}{x}{f}
= \dtrans{\humod{x}{g}}\ \ftrans{\phi}\ \tsub{s}{x}{f}
= \ftrans{\phi}\ (\lset{(\tsub{s}{x}{f})}{x}{(g\ (\tsub{s}{x}{f}))})
= \ftrans{\phi}\ (\lset{s}{x}{(g\ (\tsub{s}{x}{f}))})
=_* \ftrans{\phi}\   \tsub{(\lset{s}{x}{(g\ \tsub{s}{x}{f})})}{x}{f}
\iheq \ftrans{\tsub{\phi}{x}{f}}\ (\lset{s}{x}{(g\ \tsub{s}{x}{f})})
= \ftrans{(\tsub{\phi}{x}{f})}\ (\lset{s}{x}{(g\ (\lset{s}{x}{f\ s}))})
= \ftrans{(\tsub{\phi}{x}{f})}\ (\lset{s}{x}{((\tsub{g}{x}{f})\ s)})
= \dtrans{\humod{x}{(\tsub{g}{x}{f})}}\ \tsub{\phi}{x}{f}\ s
= \dtrans{\tsub{(\humod{x}{g})}{x}{f}}\ \tsub{\phi}{x}{f}\ s
= \dtrans{\tsub{(\humod{y}{g})}{x}{f}}\ \tsub{\phi}{x}{f}\ s
$

\mycase $\prandom{x}$:
From (A2) have either (L1) $x \neq y$ and (L2) $y \notin \freevars{f}$
or (R1) $x=y$ and (R2) $x \notin \freevars{\phi}$.
In both cases, then (S) $\tsub{\{\prandom{y}\}}{x}{f} = \{\prandom{y}\}$.
In case (L) then
$ \dtrans{\prandom{y}}\ \ftrans{\phi}\ \tsub{s}{x}{f}
=  (\pity{v}{\xty}{\ftrans{\phi}\ (\lset{(\tsub{s}{x}{f})}{y}{v})})
=  (\pity{v}{\xty}{\ftrans{\phi}\ (\lset{(\lset{s}{x}{(f\ s)})}{y}{v})})
=_{L1}  (\pity{v}{\xty}{\ftrans{\phi}\ (\lset{(\lset{s}{y}{v})}{x}{(f\ s)})})
=_{L2}  (\pity{v}{\xty}{\ftrans{\phi}\ (\lset{(\lset{s}{y}{v})}{x}{(f\ (\lset{s}{y}{v}))})})
=  (\pity{v}{\xty}{\ftrans{\tsub{\phi}{x}{f}}\ (\lset{s}{y}{v})})
= \dtrans{\prandom{y}}\ \ftrans{\tsub{\phi}{x}{f}}\ s
=_S \dtrans{\tsub{\prandom{y}}{x}{f}}\ \ftrans{\tsub{\phi}{x}{f}}\ s$

In case (R1) (R2) then
note (*) for every $t$ have $t = \tsub{t}{x}{f}$ on $\freevars{\phi}$ (since $x \notin \freevars{\phi}$ by R2) by \rref{lem:app-term-coincide}.
then
$ \dtrans{\prandom{y}}\ \ftrans{\phi}\ \tsub{s}{x}{f}
= \dtrans{\prandom{x}}\ \ftrans{\phi}\ \tsub{s}{x}{f}
= (\pity{v}{\xty}{\ftrans{\phi}\ (\lset{(\tsub{s}{x}{f})}{x}{v})})
= (\pity{v}{\xty}{\ftrans{\phi}\ (\lset{s}{x}{v})})
=_* (\pity{v}{\xty}{\ftrans{\phi}\ \tsub{(\lset{s}{x}{v})}{x}{f}})
\iheq (\pity{v}{\xty}{\ftrans{\tsub{\phi}{x}{f}}\ (\lset{s}{x}{v})})
= \dtrans{\prandom{x}}\ \ftrans{\tsub{\phi}{x}{f}}\ s
=_{R1} \dtrans{\prandom{y}}\ \ftrans{\tsub{\phi}{x}{f}}\ s
=_S \dtrans{\tsub{\prandom{y}}{x}{f}}\ \ftrans{\tsub{\phi}{x}{f}}\ s$.

\mycase $\ptest{\phi},$ have
$ \dtrans{\ptest{\psi}}\ \ftrans{\phi}\ \tsub{s}{x}{f}
= (\ftrans{\psi}\ \tsub{s}{x}{f} \to \ftrans{\phi} \tsub{s}{x}{f})
= (\ftrans{\tsub{\psi}{x}{f}}\ s \to \ftrans{\tsub{\phi}{x}{f}}\ s)
= \dtrans{\tsub{\ptest{\psi}}{x}{f}} \ftrans{\tsub{\phi}{x}{f}}\ s$

\mycase $\pevolvein{\D{z}=f}{\ivr}$:
From (A2) have (L1) $x \neq y$ and (L2) $y \notin \freevars{f}$ (there is no $x=y$ case).
In the sole case, (L1) (L2), then (S)
$\tsub{\{\humod{y}{g}\}}{x}{f} = \{\humod{x}{(\tsub{g}{x}{f})}\}$
also, from (L2) have
(*) $(f\ s) = (f\ (\lset{s}{y}{(\tsub{g}{x}{f}\ s)}))$ by \rref{lem:app-term-coincide} since $s = (\lset{s}{y}{(\tsub{g}{x}{f}\ s)})$ on $\{y\}^\complement \supseteq \freevars{f}^\complement$.
then

Have (0) $(\solves{sol}{\tsub{s}{x}{f}}{d}{\D{y}=g}) = (\solves{sol}{s}{d}{\tsub{(\D{y}=g)}{x}{f}})$ by the ``solves'' IH.

Have (1) for all $t \in [0,d],$
$\ftrans{\ivr}\ (\lset{(\tsub{s}{x}{f})}{y}{(sol\ t)})
= \ftrans{\tsub{\ivr}{x}{f}}\ (\lset{s}{y}{(sol\ t)})$
by IH on $\ivr$ and because
$ \lset{(\tsub{s}{x}{f})}{y}{(sol\ t)}
= \lset{(\lset{s}{x}{(f\ s)})}{y}{(sol\ t)}
= \lset{(\lset{s}{y}{(sol\ t)})}{x}{(f\ s)}
= \lset{(\lset{s}{y}{(sol\ t)})}{x}{(f\ (\lset{s}{y}{(sol\ t)}))}$
since by (L2) have $y \notin \freevars{f}$ thus $s = (\lset{s}{y}{(sol\ t)})$ on $\freevars{f}^\complement$ and \rref{lem:app-term-coincide} applies.

Have (2)
\begin{align*}
  &\ftrans{\phi}\ (\lset{(\tsub{s}{x}{f})}{(y,\D{y})}{(sol\ d, g\ (\lset{(\tsub{s}{x}{f})}{y}{(sol\ d)}))})\\
= &\ftrans{\tsub{\phi}{x}{f}}\ (\lset{s}{(y,\D{y})}{(sol\ d, \tsub{g}{x}{f}\ (\lset{s}{y}{(sol\ d)}))})
\end{align*}
by the IH on $\phi$ and because
\begin{align*}
  &\lset{(\tsub{s}{x}{f})}{y}{(sol\ d)}\\
= &\lset{(\lset{s}{y}{(sol\ d)})}{x}{(f\ (\lset{s}{y}{(sol\ d)}))}
\end{align*}
by the same argument as above and by term case because
$(\lset{(\tsub{s}{x}{f})}{y}{(sol\ d)})
= \tsub{(\lset{s}{y}{(sol\ d)})}{x}{f}$.

\begin{align*}
  &\dtrans{\pevolvein{\D{y}=g}{\ivr}}\ \ftrans{\phi} \tsub{s}{x}{f}\\
= &\pity{d}{{\reals_{\geq0}}}{\pity{sol}{[0,d]\to\xty}{}}\\
 &(\solves{sol}{\tsub{s}{x}{f}}{d}{\D{y}=g})\\
 &\to (\pity{t}{{[0,d]}}{\ftrans{\ivr}\ {(\lset{\tsub{s}{x}{f}}{y}{(sol\ t)})}})\\
 &\to \ftrans{\phi}\ (\lset{(\tsub{s}{x}{f})}{(y,\D{y})}{(sol\ d, g\ (\lset{(\tsub{s}{x}{f})}{y}{(sol\ d)}))})\\
=_{0,1,2}& \pity{d}{{\reals_{\geq0}}}{\pity{sol}{[0,d]\to\xty}{}}\\
 &(\solves{sol}{s}{d}{\tsub{(\D{y}=g)}}{x}{f})\\
 &\to (\pity{t}{{[0,d]}}{\ftrans{\tsub{\ivr}{x}{f}}\ {(\lset{s}{y}{(sol\ t)})}})\\
 &\to \ftrans{\tsub{\phi}{x}{f}}\ (\lset{s}{(y,\D{y})}{(sol\ d, \tsub{g}{x}{f}\ (\lset{s}{y}{(sol\ d)}))})\\
= &\dtrans{\tsub{(\pevolvein{\D{y}=g}{\ivr})}{x}{f}}\ \ftrans{\tsub{\phi}{x}{f}} s
\end{align*}

\mycase $\alpha;\beta,$ have
$ \dtrans{\alpha;\beta}\ \ftrans{\phi}\ \tsub{s}{x}{f}
= \dtrans{\alpha}\ (\dtrans{\beta}\ \ftrans{\phi})\ \tsub{s}{x}{f}
= \dtrans{\alpha}\ (\ftrans{\dbox{\beta}{\phi}})\ \tsub{s}{x}{f}
= \dtrans{\tsub{\alpha}{x}{f}}\ (\ftrans{\tsub{\dbox{\beta}{\phi}}{x}{f}})\ s
= \dtrans{\tsub{\alpha}{x}{f}}\ (\dtrans{\tsub{\beta}{x}{f}}\ \ftrans{\tsub{\phi}{x}{f}})\ s
= \dtrans{\tsub{\{\alpha;\beta\}}{x}{f}}\ \ftrans{\tsub{\phi}{x}{f}}\ s
$

\mycase $\alpha\cup\beta,$ have
$ \dtrans{\alpha\cup\beta}\ \ftrans{\phi}\ \tsub{s}{x}{f}
= \dtrans{\alpha}\ \ftrans{\phi}\ \tsub{s}{x}{f} \mathop{\kwprod} \dtrans{\beta}\ \ftrans{\phi}\ \tsub{s}{x}{f}
= \dtrans{\tsub{\alpha}{x}{f}}\ \ftrans{\tsub{\phi}{x}{f}}\ s \mathop{\kwprod} \dtrans{\tsub{\beta}{x}{f}}\ \ftrans{\tsub{\phi}{x}{f}}\ s
= \dtrans{\tsub{\{\alpha\cup\beta\}}{x}{f}}\ \ftrans{\tsub{\phi}{x}{f}}\ s
$

\mycase $\prepeat{\alpha}$:
The main step (*) holds by inner coinduction on the membership of $\tsub{s}{x}{f}$ in the fixed point.
The two fixed point solutions $\tau'$ and $\tau''$ are related by $\tau'\ \tsub{s}{x}{f}$ iff $\tau''\ s$ for all $s : \sty.$

In the base case by IH on $\psi$,
$ (\tau'\ \tsub{s}{x}{f} \to \psi\ \tsub{s}{x}{f})
= (\tau''\ s \to \psi\ \tsub{s}{x}{f})
= (\tau''\ s \to \tsub{\psi}{x}{f}\ s)$.

In the coinductive case, by IH on $\alpha,$ have
$ (\tau'\ \tsub{s}{x}{f} \to \dtrans{\alpha}\ \tau'\ \tsub{s}{x}{f})
= (\tau''\ s \to \dtrans{\alpha}\ \tau'\ \tsub{s}{x}{f})
= (\tau''\ s \to \dtrans{\tsub{\alpha}{x}{f}}\ \tau''\ s)$.

Then the main proof of the case proceeds:

$ \dtrans{\prepeat{\alpha}}\ \ftrans{\phi}\ \tsub{s}{x}{f}
= (\lcoty{\tau'\mathrel{:}(\sty \to \alltype)}{\lambda{t:\sty}\,{
(\tau'\ t \to \dtrans{\alpha}\ \tau'\ t)
\mathop{\kwprod}
(\tau'\ t \to \ftrans{\psi}\ t)
}})\ (\tsub{s}{x}{f})
=_{*}
 (\lcoty{\tau''\mathrel{:}(\sty \to \alltype)}{\lambda{t:\sty}\,{
(\tau''\ t \to \dtrans{\tsub{\alpha}{x}{f}}\ \tau''\ t)
\mathop{\kwprod}
(\tau''\ t \to \ftrans{\tsub{\psi}{x}{f}}\ t)
}})\ s
= \dtrans{\tsub{\prepeat{\alpha}}{x}{f}}\ \ftrans{\tsub{\phi}{x}{f}}\ s$.

\mycase $\pdual{\alpha},$ have
$ \dtrans{\pdual{\alpha}}\ \ftrans{\phi}\ \tsub{s}{x}{f}
= \atrans{\alpha}\ \ftrans{\phi}\ \tsub{s}{x}{f}
= \atrans{\tsub{\alpha}{x}{f}}\ \ftrans{\tsub{\phi}{x}{f}}\ s
= \dtrans{\pdual{(\tsub{\alpha}{x}{f})}}\ \ftrans{\tsub{\phi}{x}{f}}\ s
= \dtrans{\tsub{\{\pdual{\alpha}\}}{x}{f}}\ \ftrans{\tsub{\phi}{x}{f}}\ s
$

\end{proof}

\begin{theorem}[Soundness of Proof Calculus]
  If $\seq{\G}{\phi}$ is provable then $\seq{\G}{\phi}$ is valid.
  As a corollary when $\G = \Gemp,$ if $\phi$ is provable, then $\phi$ is valid.
\label{thm:proof-calculus-sound}
\end{theorem}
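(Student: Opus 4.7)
The plan is to proceed by structural induction on the derivation of $\seq{\G}{M}{\phi}$, establishing in each case that the conclusion is semantically valid assuming the premisses are. The propositional cases (\irref{dchoiceIL}, \irref{dchoiceIR}, \irref{dchoiceE}, \irref{bchoiceI}, \irref{bchoiceEL}, \irref{bchoiceER}, \irref{dtestI}, \irref{dtestEL}, \irref{dtestER}, \irref{btestI}, \irref{btestE}, \irref{dualI}, \irref{seqI}) reduce to direct unfolding of $\atrans{\cdot}$ and $\dtrans{\cdot}$: for instance, \irref{dtestI} follows because $\atrans{\ptest{\phi}}\ \ftrans{\psi}\ s$ is by definition the product type $\ftrans{\phi}\ s \mathop{\kwprod} \ftrans{\psi}\ s$, and \irref{dualI} is immediate since $\atrans{\pdual{\alpha}} = \dtrans{\alpha}$ and vice versa. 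The rule \irref{mon} is exactly the content of \rref{lem:app-monotone} promoted to sequent form, modulo a renaming of $\boundvars{\alpha}$ in $\G$ which is justified by \rref{lem:app-game-rename} together with \rref{lem:app-bound-effect}.

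For the first-order cases, \irref{asgnI} and \irref{brandomI} will be handled by combining \rref{lem:app-formula-rename} (to justify freshening $x$ to $y$ in $\G$) with \rref{lem:app-formula-coincide} (to retain provability of the context after binding), while \irref{brandomE} and \irref{drandomI} follow from \rref{lem:app-formula-subst} applied to the assumed admissible substitution. The existential elimination \irref{drandomE} requires in addition the Existential Property (\rref{lem:term-ep}) to extract the witness needed for the continuation. For the repetition rules, \irref{bloopI} follows by constructing the required coinductive inhabitant of $\dtrans{\prepeat{\alpha}}\ \ftrans{\phi}\ s$ from the invariant $J$: the pair of premisses proving invariance and postcondition-extraction are precisely the two components of the $\lcoty{\tau'}{\cdot}$ shape. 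Symmetrically, \irref{dloopI} will be proven by well-founded induction on the metric $\met$ under $\metgr$, where the two premisses furnish, at each stage, either a strict decrease of $\met$ or termination with $\conv \land \metz \metgeq \met$; the inductive structure of $\atrans{\prepeat{\alpha}}$ is then unfolded accordingly. Rules \irref{bunroll}, \irref{broll}, \irref{dstop}, \irref{dgo}, \irref{drcase}, and \irref{dloopE} follow by directly invoking the fold/unfold equations of the respective (co)inductive types.

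The main obstacle is the suite of ODE rules in \rref{fig:ode-rules}, since these are the novel constructive contributions of the paper. For \irref{bsolve} and \irref{dsolve}, soundness reduces to the uniqueness of solutions for the Lipschitz ODE $\D{x}=f$, so the premiss's universally (resp.\ existentially) quantified reasoning over durations transports directly to the $\pity{d}{\reals_{\geq 0}}{\pity{sol}{\cdot}{\cdots}}$ (resp.\ $\sity{\cdot}{\cdot}{\cdots}$) shape in the semantics. Rule \irref{dw} follows by ignoring everything except the domain constraint, and \irref{dc} by observing that a prefix of a solution on $[0,d]$ is again a solution. The crux is \irref{di}: here I will apply \rref{lem:app-differential-lemma} to rewrite the spatial differential $\der{\phi}$ appearing in the premiss into a time derivative of $\phi$ along the solution $sol$, and then invoke the constructive Mean Value Theorem corollaries from \rref{thm:app-mvt} (Feq-criterium for equalities, Derivative-imp-resp-leEq/less for inequalities) to conclude that $\phi$ is preserved along the trajectory. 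Rule \irref{dv} is the Angelic mirror: I will use the constructive DV lemma \rref{lem:app-dv} to convert the lower bound $\der{h}-\der{g} \geq \veps$ on the rate of progress into the finite-time witness $d$ at which $h \geq g$, thereby producing the existential duration in $\atrans{\pevolvein{\D{x}=f}{\ivr}}$.

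The subtlest case is \irref{dg}, where soundness requires that the duration of the solution is unchanged when augmenting the system with a fresh dimension $\D{y}=a(x)y+b(x)$. Here I will invoke constructive Picard--Lindel\"{o}f (\rref{thm:app-pl}), exploiting the fact that a linear right-hand side is globally Lipschitz in $y$ uniformly on the existence interval of $\D{x}=f$, so that the augmented system has a solution on exactly the same interval; the ghost variable $y$ can then be projected away, leaving a bijection between solutions of the original and ghosted ODEs that preserves the postcondition. Once all cases are discharged, the corollary for $\G = \Gemp$ follows by specializing the sequent-level statement to the empty context, noting that $\ftrans{\Gemp}\ s$ is inhabited by the unit tuple at every state $s$. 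Since every step of the argument is itself constructive, the proof doubles as the extraction algorithm alluded to after the theorem statement.
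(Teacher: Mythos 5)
Your proposal takes essentially the same route as the paper's proof: induction on the derivation, with the discrete cases discharged by the coincidence, bound-effect, renaming, and substitution lemmas plus monotonicity for \irref{mon}, the loop rules by coinduction and well-founded induction on the termination metric, and the ODE rules by the differential lemma combined with the constructive Mean Value Theorem corollaries, the DV lemma, constructive Picard-Lindel\"{o}f for \irref{dg}, and uniqueness of solutions for \irref{bsolve}/\irref{dsolve} --- precisely the paper's ingredients. The only cosmetic deviations (invoking the Existential Property for \irref{drandomE} where the paper simply inverts the $\Sigma$-type in the semantics, and describing \irref{dloopE} as a fold/unfold step when it actually uses the induction principle on fixed-point membership) do not alter the argument.
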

\begin{proof}
Each case proceeds by fixing $s : \sty$ and assuming (A) $\ftrans{\G}\ s$.
In cases where premisses include $\G,$ we assume that modus ponens has been applied to all premisses with (A).
Additional antecedents beyond $\G$ will be explicitly discharged in each caes.

\mycase \irref{bchoiceI}:
Assume premisses (0) $\proves{\G}{M}{\dbox{\alpha}{\phi}}$ and (1) $\proves{\G}{N}{\dbox{\beta}{\phi}}$
then by the IH's have
(0A) $\ftrans{\dbox{\alpha}{\phi}}\ s$ and (1A) $\ftrans{\dbox{\beta}{\phi}}\ s$,
which expand to
(0B) $\dtrans{\alpha}\ \ftrans{\phi}\ s$ and
(1B) $\dtrans{\beta}\ \ftrans{\phi}\ s,$ so that by conjunction
$\dtrans{\alpha}\ \ftrans{\phi}\ s \mathop{\kwprod} \dtrans{\beta}\ \ftrans{\phi}\ s,$ i.e.,
 $\dtrans{\alpha\cup\beta}\ \ftrans{\phi}\ s = \ftrans{\dbox{\alpha\cup\beta}{\phi}}\ s$.

\mycase \irref{bchoiceEL}:
From premiss, have (0) $\proves{\G}{M}{\dbox{\alpha\cup\beta}{\phi}}$, by IH have
(0A) $\ftrans{\dbox{\alpha\cup\beta}{\phi}}\ s$, which expands to 
$\dtrans{\alpha \cup \beta}\ \ftrans{\phi}\ s =
\dtrans{\alpha}\ \ftrans{\phi}\ s \mathop{\kwprod} \dtrans{\beta}\ \ftrans{\phi}\ s,$
whose left projection is
$\dtrans{\alpha}\ \ftrans{\phi}\ s = \ftrans{\dbox{\alpha}{\phi}}\ s$.

\mycase \irref{bchoiceER}:
From premiss, have (0) $\proves{\G}{M}{\dbox{\alpha\cup\beta}{\phi}}$, by IH have
(0A) $\ftrans{\dbox{\alpha\cup\beta}{\phi}}\ s$, which expands to 
$\dtrans{\alpha \cup \beta}\ \ftrans{\phi}\ s =
\dtrans{\alpha}\ \ftrans{\phi}\ s \mathop{\kwprod} \dtrans{\beta}\ \ftrans{\phi}\ s,$
whose right projection is
 $\dtrans{\beta}\ \ftrans{\phi}\ s = \ftrans{\dbox{\beta}{\phi}}\ s$.

\mycase \irref{dchoiceE}:
Assume
(0)  $\proves{\G}{A}{\ddiamond{\alpha\cup\beta}{\phi}}$ and
(1)  $\proves{\G,\ddiamond{\alpha}{\phi}}{B}{\psi}$ and
(2)   $\proves{\G,\ddiamond{\beta}{\phi}}{C}{\psi}$.

By the IH's, have

(0A)  $\ftrans{\ddiamond{\alpha\cup\beta}{\phi}}\ s$ and
(1A)  $\ftrans{\ddiamond{\alpha}{\phi} \limply \psi}\ s$ and
(2A)  $\ftrans{\ddiamond{\beta}{\phi} \limply\psi}\ s$,
which expand to 
(0B)  $\atrans{\alpha\cup\beta}\ \ftrans{\phi}\ s =
   \atrans{\alpha}\ \ftrans{\phi}\ s \mathop{\kwsum} \atrans{\beta}\ \ftrans{\phi}\ s$ and
(1B)  $\atrans{\alpha}\ \ftrans{\phi}\ s \limply \ftrans{\psi}\ s$ and
(2B)  $\atrans{\beta}\ \ftrans{\phi}\ s \limply \ftrans{\psi}\ s$,

From (0B) have two cases:
(L) $\atrans{\alpha}\ \ftrans{\phi}\ s$
or (R) $\atrans{\beta}\ \ftrans{\phi}\ s$.
In the first case, apply (1B) to (L), yielding $\ftrans{\psi}\ s,$
or in the second case, apply (2B) to (R), also yielding $\ftrans{\psi}\ s$ in each case as desired.

\mycase \irref{dchoiceIL}:
Assume (0) $\proves{\G}{M}{\ddiamond{\alpha}{\phi}}$.
By IH, (0A) $\ftrans{\ddiamond{\alpha}{\phi}}\ s$, which expands to
(0B) $\atrans{\alpha}\ \ftrans{\phi}\ s$
then by left injection,
$\atrans{\alpha}\ \ftrans{\phi}\ s \mathop{\kwsum} \atrans{\beta}\ \ftrans{\phi}\ s
= \atrans{\alpha\cup\beta}\ \ftrans{\phi}\ s
= \ftrans{\ddiamond{\alpha\cup\beta}{\phi}}\ s$.

\mycase \irref{dchoiceIR}:
Assume (0) $\proves{\G}{M}{\ddiamond{\beta}{\phi}}$.
By IH, (0A) $\ftrans{\ddiamond{\beta}{\phi}}\ s$, which expands to
(0B) $\atrans{\beta}\ \ftrans{\phi}\ s,$
then by right injection,
$ \atrans{\alpha}\ \ftrans{\phi}\ s \mathop{\kwsum} \atrans{\beta}\ \ftrans{\phi}\ s
= \atrans{\alpha\cup\beta}\ \ftrans{\phi}\ s
= \ftrans{\ddiamond{\alpha\cup\beta}{\phi}}\ s$.

\mycase \irref{dtestI}:
Assume (0) $\proves{\G}{M}{\phi}$ and (1) $\proves{\G}{N}{\psi}$.
By IH, (0A) $\ftrans{\phi}\ s$ and (1A) $\ftrans{\psi}\ s$, i.e.,
$\atrans{\ptest{\phi}}\ \ftrans{\psi}\ s
= \ftrans{\ddiamond{\ptest{\phi}}{\psi}}\ s$.

\mycase \irref{dtestEL}:
Assume (0) $\proves{\G}{M}{\ddiamond{\ptest{\phi}}{\psi}},$ so by IH,
(0A) $\ftrans{\ddiamond{\ptest{\phi}}{\psi}}\ s$, which expands to
(0B) $\atrans{\ptest{\phi}}\ \ftrans{\psi}\ s
= \ftrans{\phi}\ s \kwprod \ftrans{\psi} \ s$
whose left projection is $\ftrans{\phi}\ s$ as desired.

\mycase \irref{dtestER}:
Assume (0) $\proves{\G}{M}{\ddiamond{\ptest{\phi}}{\psi}},$ so by IH,
(0A) $\ftrans{\ddiamond{\ptest{\phi}}{\psi}}\ s$, which expands to
(0B) $\atrans{\ptest{\phi}}\ \ftrans{\psi}\ s
= \ftrans{\phi}\ s \kwprod \ftrans{\psi}\ s$
whose right projection is likewise $\ftrans{\psi}\ s$ as desired.

\mycase \irref{btestI}:
Assume (0) $\proves{\G,\phi}{M}{\psi}$.
By IH, (0A) $\ftrans{\phi}\ s \limply \ftrans{\psi}\ s$, i.e.,
$\dtrans{\ptest{\phi}}\ \ftrans{\psi}\ s$ which is 
$\ftrans{\dbox{\ptest{\phi}}{\psi}}\ s$.

\mycase \irref{btestE}:
Assume (0) $\proves{\G}{M}{\dbox{\ptest{\phi}}{\psi}}$ and (1) $\proves{\G}{N}{\phi}$.
By the IH's, have (0A)
$\ftrans{\dbox{\ptest{\phi}}{\psi}}
= \dtrans{\ptest{\phi}}\ \ftrans{\psi}\ s
= \ftrans{\phi}\ s \to \ftrans{\psi}\ s$
and (1A) $\ftrans{\phi}\ s$.
By modus ponens have $\ftrans{\psi}\ s$.

\mycase \irref{hyp}:
The side condition says conclusion $\phi$ satisifies $\phi \in \G$.
Then assumption (A) is $\ftrans{\G}\ s = \kwprod_{i \in \abs{\G}} \ftrans{\psi_i}\ s = \ftrans{\psi_1}\ s \kwprod \cdots \kwprod \ftrans{\phi}\ s \kwprod \cdots \kwprod \ftrans{\psi_{\abs{\G}}}\ s$ for some $\psi_i$ for $i \in [1,\abs{\G}]$.
By projection, $\ftrans{\phi}\ s$ as desired.

\mycase \irref{brandomI}:
WTS for all $s$ such that (A)  then $\ftrans{\dbox{\prandom{x}}{\phi}}\ s,$ which is
$(\pity{v}{\xty}{\ftrans{\phi}\ (\lset{s}{x}{v})})$, so assume any $v:\xty$ to prove $\ftrans{\phi}\ (\lset{s}{x}{v})$.
Since $x \notin \freevars{\G}$ by the side condition, then by \rref{lem:app-formula-coincide} and (A) have
$\ftrans{\G}\ (\lset{s}{x}{v}),$ thus by modus ponens on the premiss have $\ftrans{\phi}\ (\lset{s}{x}{v})$ for all $v$, which is to say
$\ftrans{\dbox{\prandom{x}}{\phi}}\ s$.

\mycase \irref{drandomI}:
From premisses have (0) $\proves{\G}{M}{\tsub{\phi}{x}{f}}$ for some $f$ such that $\tsub{\phi}{x}{f}$ is admissible, or by IH have
(0A) $\ftrans{\tsub{\phi}{x}{f}}\ s$.
We can apply \rref{lem:app-formula-subst} since $\tsub{\phi}{x}{f}$ is admissible, yielding
(1) $\ftrans{\phi}\ (\tsub{s}{x}{f})$.
Let $v = f\ s,$ so (1) simplifies to $\ftrans{\phi}\ (\lset{s}{x}{v}),$ thus $v$ is a witness to
$\sity{v}{\xty}{\ftrans{\phi}\ (\lset{s}{x}{v})}$ which is to say $\ftrans{\ddiamond{\prandom{x}}{\phi}}\ s$.

\mycase \irref{seqI}:
We give the diamond case, the box case is symmetric.
Assumption gives (0) $\proves{\G}{M}{\ddiamond{\alpha}{\ddiamond{\beta}{\phi}}}$ and by IH,
(0A)
$\ftrans{\ddiamond{\alpha}{\ddiamond{\beta}{\phi}}}\ s
= \atrans{\alpha}\ (\atrans{\beta}\ \ftrans{\phi})\ s
= \atrans{\alpha;\beta}\ \ftrans{\phi}\ s
= \ftrans{\ddiamond{\alpha;\beta}{\phi}}\ s$

\mycase \irref{asgnI}:
We give the diamond case, the box case is symmetric.
Assume (0) $\proves{\G}{M}{\tsub{\phi}{x}{f}}$ so by IH
(0A) $\ftrans{\tsub{\phi}{x}{f}}\ s$ and by \rref{lem:app-formula-subst} (since $\tsub{\phi}{x}{f}$ admissible)
then $\ftrans{\phi}\ \tsub{s}{x}{f} = \ftrans{\phi}\ (\lset{s}{x}{(f\ s)}),$ that is $\ftrans{\ddiamond{\humod{x}{f}}{\phi}}\ s$.

\mycase \irref{brandomE}:
Assumed (0) $\proves{\G}{M}{\ddiamond{\prandom{x}}{\phi}}$
an  (1) $\proves{\G}{N}{\lforall{x}{\phi \limply \psi}}$
then by the IH's have
    (0A) $\ftrans{\ddiamond{\prandom{x}}{\phi}}\ s = (\sity{v}{\xty}{\ftrans{\phi}\ (\lset{s}{x}{v})})$
and (1A) $\ftrans{\lforall{x}{\phi \limply \psi}}\ s = (\pity{v}{\xty}{\ftrans{\phi}\ (\lset{s}{x}{v}) \limply \ftrans{\psi}\ (\lset{s}{x}{v})})$

We unpack $v:\xty$ from (0A) so that $\ftrans{\phi}\ (\lset{s}{x}{v})$
In (1A) specialize to the same $v$ to get $\ftrans{\phi}\ (\lset{s}{x}{v}) \limply \ftrans{\psi}\ (\lset{s}{x}{v})$ and, by modus ponenst with (0A), have $\ftrans{\psi}\ (\lset{s}{x}{v})$.
By side condition that $x \notin \freevars{\psi}$, apply \rref{lem:app-formula-coincide} to get $\ftrans{\psi}\ s$.

\mycase \irref{drandomE}:
Assume side condition  $\tsub{\phi}{x}{f}$ admissible.
Assume
(0) $\proves{\G}{M}{\dbox{\prandom{x}}{\phi}}$ so by IH
(0A) $\ftrans{\dbox{\prandom{x}}{\phi}}\ s$ so
(1) $\ftrans{\phi}\ (\lset{s}{x}{v})$ for all $v:\xty$.

Seek to show $\ftrans{\tsub{\phi}{x}{f}}\ t$ in some arbitrary $t$.
By \rref{lem:app-formula-subst} since $\tsub{\phi}{x}{f}$ is admissible, it suffices to show $\ftrans{\phi} (\lset{t}{x}{(f\ t)})$.
Then apply (1) letting $s=t$ and $v = (f\ t)$ yielding
$\ftrans{\phi}\ (\lset{t}{x}{(f\ t)})$ as desired.

\mycase \irref{mon}:
Assume (0) $\proves{\G}{M}{\ddiamond{\alpha}{\phi}}$
and (1) $\proves{\earen{\G}{\alpha},\phi}{N}{\psi}$

Let $x$ be the vector of variables $x_i \in \boundvars{\alpha}$ in some canonical (e.g. lexicographic) order.
Let $y$ be a fresh vector of variables of the same length as $x$.
Let $z$ be $\freevars{\G} \setminus (\boundvars{\alpha})$.

Recall the discrete ghost rule defined in prior work~\cite{esop20}:
\[\cinferenceRule[ghost|iG]{}
{\linferenceRule[formula]
  {\proves{\G,\pvx:x=f}{M}{\phi}}
  {\proves{\G}{\eghost{x}{f}{\pvx}{M}}{\phi}}
}
{\m{x}\text{ fresh except free in }\m{M,}\ \m{\pvx}\text{ fresh}}
\]

Applying \irref{ghost} repeatedly to (1) have
$\proves{\G,y=x}{M}{\ddiamond{\alpha}{\phi}}$.
The IH applies because the context $\G,y=x$ is satisfied
by taking assumption (A) $\ftrans{\G}\ s$ and show
(AA) $\ftrans{\G}\ (\lset{s}{y}{x})$
which reflexivity satisfies $y=x$ while preserving $\G$ by \rref{lem:app-formula-coincide} since definitionally $y \cap \freevars{\G} = \emptyset$.
The the IH results in
$ \ftrans{\ddiamond{\alpha}{\phi}}\ (\lset{s}{y}{(s\ x)})
= \atrans{\alpha}\ \ftrans{\phi}\ (\lset{s}{y}{(s\ x)}),$
then apply \rref{lem:app-bound-effect} with $V=y,z$ giving
(B) $\atrans{\alpha}\ (\lambda t.~ \ftrans{\phi}\ t \mathop{\kwprod} t\ V = s\ V)\ (\lset{s}{y}{(s\ x)})$
To prove the assumption of (1) conclusion from (B), first prove a monotone lemma (Mon):
\begin{claim}[Mon]
Assume $\ftrans{\G}\ s$.
Then for all $t:\sty$

$\ftrans{\phi}\ t \land t\ V = s\ V$ implies $\ftrans{\phi \land \earen{\G}{\alpha}}\ t$
\end{claim}
\begin{proof}[Subproof]
The first conjunct holds trivially from the assumption.
For the latter, decompose $\G$ into two contexts $\G = \Delta_1,\Delta_2$ 
such that context $\Delta_1$ contains formulas $\psi$ such that $\freevars{\psi} \cap \boundvars{\alpha} = \emptyset,$ with all other formulas are in $\Delta_2$.

We have $\ftrans{\Delta_1}\ t$ from (A) by repeating \rref{lem:app-formula-coincide} because for each $\psi \in \Delta_1,$ we have
$s = t$ on $\freevars{\psi}$ as a consequent of $t\ V = s\ V$ specifically because $V \supseteq z \supseteq \freevars{\psi}$.

Consider $\Delta_2$ next.
Recall $\earen{\Delta_2}{\alpha} = \land_{\psi\in{\Delta_2}}(\eren{\psi}{x}{y})$.
For each such $\psi$ we appeal $s\ y = t\ y$ (since $y \subseteq V$) and $s\ y = s\ x$ by ghosting,
thus $t\ y = s\ x$ by transitivity.

To show the desired $\ftrans{\eren{\psi}{x}{y}}\ t$ we note by \rref{lem:app-formula-rename} it suffices
to show $\ftrans{\psi}\ \eren{t}{x}{y},$
which, because $x \in \freevars{\psi}$ and $y \notin \freevars{\psi}$, reduces to
$\ftrans{\psi} (\lset{t}{x}{(s\ y)})$.

Since (AA) includes $\ftrans{\psi}\ (\lset{s}{y}{x})$,
it suffices to apply \rref{lem:app-formula-coincide}, providing the assumption
that on $\freevars{\psi} \subseteq \{z\} \cup \boundvars{\psi} = \{x,z\}$ we have
$(\lset{s}{y}{(s\ x)})= (\lset{t}{x}{(s\ y)})$.
We already have  $s=t$ on $y,z$.
Then $s\ y$ is set to $s\ x$ but $s\ x = s\ y$ anyway, a no-op.
$t\ x$ is set to $s\ y = s\ x$ so that
$(\lset{s}{y}{(s\ x)})= (\lset{t}{x}{(s\ y)})$ agree on $\{x,y,z\}$ as desired.
\end{proof}

We observe the first assumption of (Mon) is just (A),
call the resulting universally quantifier statement (Mono).
By  \rref{lem:app-monotone} on (Mono) and (B) have

$\atrans {\alpha}\ (\ftrans{\phi \land \earen{\G}{\alpha}})\ s$
we finally reach
(Mid)  $\atrans{\alpha}\ \ftrans{\phi \land \earen{\G}{\alpha}}\ (\lset{s}{y}{(s\ x)})$.
By commuting $\phi$ and $\earen{\G}{\alpha}$ in (Mid) we now have the precondition to 
apply \rref{lem:app-monotone} on (1), getting $\atrans{\alpha}\ \ftrans{\psi}\ (\lset{s}{y}{(s\ x)})$.
Lastly, $y$ is fresh in $\alpha$ and $\phi,$ so \rref{lem:app-game-coincide}
gives the desired
$\atrans{\alpha}\ \ftrans{\psi}\ s$.

\mycase \irref{dualI}:
We give the case for diamonds, the case for boxes is symmetric.
Assumed (0) $\proves{\G}{M}{\dbox{\alpha}{\phi}}$, then by IH have
(0A) $\ftrans{\dbox{\alpha}{\phi}}\ s
= \dtrans{\alpha}\ \ftrans{\phi}\ s
= \atrans{\pdual{\alpha}}\ \ftrans{\phi}\ s
= \ftrans{\ddiamond{\pdual{\alpha}}{\phi}}\ s$ as desired.

\mycase \irref{drcase}:
Have assumptions
(0) $\proves{\G}{A}{\ddiamond{\prepeat{\alpha}}{\phi}}$
(1) $\proves{\G,\phi}{B}{\psi}$
(2) $\proves{\G,\ddiamond{\alpha}{\ddiamond{\prepeat{\alpha}}{\phi}}}{C}{\psi}$
so by IH have
(0A) $\ftrans{\ddiamond{\prepeat{\alpha}}{\phi}}\ s$
(1A) $\ftrans{\phi \limply \psi}\ s$
(2A) $\ftrans{\ddiamond{\alpha}{\ddiamond{\prepeat{\alpha}}{\phi}} \limply \psi}\ s$

From (0A) and the $\prepeat{\alpha}$ semantics, since $s$ belongs to the least fixed point
$(\lindty{\tau'\mathrel{:}(\sty \to \alltype)}{}
\lambda{t:\sty}\,
(\psi\ t \to \tau'\ t)
\mathop{\kwprod}
(\atrans{\alpha}\ \tau'\ t \to \tau'\ s)
)$
then we have

(3)$\psi\ s \mathop{\kwsum}
(\atrans{\alpha}\ \ftrans{\ddiamond{\prepeat{\alpha}}{\phi}}\ s \to \tau'\ s),$
so we proceed on cases on (3).
In case (L) $\psi\ s$ apply modus ponens on (1A) giving $\ftrans{\psi}\ s$ as desired.
In case (R) $\atrans{\alpha}\ \ftrans{\ddiamond{\prepeat{\alpha}}{\phi}}\ s \to \tau'\ s$
thus $\ftrans{\ddiamond{\alpha}{\ddiamond{\prepeat{\alpha}}{\phi}}}\ s$ and by modus ponens on (2A) have $\ftrans{\psi}\ s$ as desired.

\mycase \irref{bunroll}:
Assume (0) $\proves{\G}{M}{\dbox{\prepeat{\alpha}}{\phi}}$
and by IH (0A) $\ftrans{\dbox{\prepeat{\alpha}}{\phi}}\ s$.
Since $s$ belongs to the fixed point $(\lcoty{\tau'\mathrel{:}(\sty \to \alltype)}{\lambda{t:\sty}\,{
(\tau'\ t \to \dtrans{\alpha}\ \tau'\ t)
\kwprod
(\tau'\ t \to \ftrans{\phi}\ t)
}})$
then
$\dtrans{\alpha}\ \ftrans{\dbox{\prepeat{\alpha}}{\phi}}\ s
\kwprod
\ftrans{\phi}\ s$
which commutes to
$\ftrans{\phi}\ s
\kwprod
\dtrans{\alpha}\ \ftrans{\dbox{\prepeat{\alpha}}{\phi}}\ s$
which simplifies to
$\ftrans{\phi \land \dbox{\alpha}{\dbox{\prepeat{\alpha}}{\phi}}}\ s$.

\mycase \irref{dstop}:
Assume (0) $\proves{\G}{M}{\phi}$
and by IH (0A) $\ftrans{\phi}\ s,$ thus $s$ belongs to the fixed point $(\lindty{\tau'\mathrel{:}(\sty \to \alltype)}{\lambda{t:\sty}\,{
(\psi\ t \to \tau'\ t)
\mathop{\kwprod}
(\atrans{\alpha}\ \tau'\ t \to \tau'\ t)
}})$ and $\ftrans{\ddiamond{\prepeat{\alpha}}{\phi}}\ s$ as desired.

\mycase \irref{dgo}:
Assume (0) $\proves{\G}{M}{\ddiamond{\alpha}{\ddiamond{\prepeat{\alpha}}{\phi}}}\ s$
and by IH (0A) $\ftrans{\ddiamond{\alpha}{\ddiamond{\prepeat{\alpha}}{\phi}}}\ s,$ 
thus $\atrans{\alpha}\ \ftrans{\ddiamond{\prepeat{\alpha}}{\phi}}\ s,$ 
thus $s$ belongs to the fixed point $(\lindty{\tau'\mathrel{:}(\sty \to \alltype)}{}\lambda{t:\sty}\,
(\psi\ t \to \tau'\ t)
\mathop{\kwprod}
(\atrans{\alpha}\ \tau'\ t \to \tau'\ s)
)$ and $\ftrans{\ddiamond{\prepeat{\alpha}}{\phi}}\ s$ as desired.

\mycase \irref{broll}:
Assume (0) $\proves{\G}{M}{\phi \land \dbox{\alpha}{\dbox{\prepeat{\alpha}}{\phi}}}$ and by IH
(0A) $\ftrans{\phi \land \dbox{\alpha}{\dbox{\prepeat{\alpha}}{\phi}}}\ s$
so $\ftrans{\phi}\ s$ and $\dtrans{\alpha}\ \ftrans{\dbox{\prepeat{\alpha}}{\phi}}\ s$
so $s$ belongs to the fixed point
$(\lcoty{\tau'\mathrel{:}(\sty \to \alltype)}{}\lambda{t:\sty}\,
(\tau'\ t \to \dtrans{\alpha}\ \tau'\ t)
\kwprod
(\tau'\ t \to \ftrans{\phi}\ t)
)$

thus $\ftrans{\dbox{\prepeat{\alpha}}{\phi}}\ s$.

\mycase \irref{bloopI}:
Assume (0) $\proves{\G}{A}{J}$
(1) $\proves{J}{B}{\dbox{\alpha}{J}}$.
(2) $\proves{J}{C}{\phi}$.
By IH's, have (0A) $\ftrans{J}\ s$ and for all $t \in \sty,$ (1A) $\ftrans{J}\ t \to \ftrans{\dbox{\alpha}{J}}\ t$
and (2A) $\ftrans{J}\ s \to \ftrans{\phi}\ s$.
From (1A) and (2A), $J$ is a fixed point $\tau'$ of the equivalence
$\forall t,\ ((\tau'\ t) =
         ((\tau'\ t \to \dtrans{\alpha}\ \tau'\ t)
\kwprod  (\tau'\ t \to \ftrans{\phi}\ t)))$,
and from (0A) we have that $s$ belongs to the fixed point $J$.
Then $s$ belongs also to the greatest fixed point, that is $s$ belongs to
$(\lcoty{\tau'\mathrel{:}(\sty \to \alltype)}{}\lambda{t:\sty}\,
(\tau'\ t \to \dtrans{\alpha}\ \tau'\ t)
\kwprod
(\tau'\ t \to \ftrans{\phi}\ t)
),$ which is to say
$\ftrans{\dbox{\prepeat{\alpha}}{\phi}}\ s$.

\mycase \irref{dloopE}:
Assume
(0) $\proves{\G}{A}{\ddiamond{\prepeat{\alpha}}{\phi}}$
(1) $\proves{\phi}{B}{\psi}$
(2) $\proves{\ddiamond{\alpha}{\psi}}{C}{\psi}$
so by IH have
(0A) $\ftrans{\ddiamond{\prepeat{\alpha}}{\phi}}\ s$
(1A) $\ftrans{\phi\ \to \psi}\ t$ for all $t : \sty$
(2A) $\ftrans{\ddiamond{\alpha}{\psi}\to \psi}\ t$ for all $t : \sty$.

(0A) simplifies to
(3) $(\lindty{\tau'\mathrel{:}(\sty \to \alltype)}{\lambda{t:\sty}\,{
(\phi\ t \to \tau'\ t)
\mathop{\kwprod}
(\atrans{\alpha}\ \tau'\ t \to \tau'\ t)
}})\ s,$ from which we proceed by induction on the fixed point membership of $s$.

In the first case, (L) $\phi\ s$ so by modus ponens with (1) have $\ftrans{\psi}\ s$.
In the second case have $(\atrans{\alpha}\ \tau'\ s)$ where $\tau'\ s = \ftrans{\psi}\ s$, so
$\atrans{\alpha}\ \ftrans{\psi}\ s$ and $\ftrans{\ddiamond{\alpha}{\psi}}\ s$ and by modus ponens on (2A) have $\ftrans{\psi}\ s$.
So in each case the conclusion $\ftrans{\psi}\ s$ holds.

\mycase \irref{dloopI}:
Assume (0) $\proves{\G}{A}{\conv}$
(1) $\proves{\conv,(\metz \metle \met \land \met_0 = \met)}{B}{\ddiamond{\alpha}{(\conv \land \met_0 \metgr \met)}}$
(2) $\proves{(\conv \land \metz \metgeq \met)}{C}{\phi}$


By the IH have
(0A) $\ftrans{\conv}\ s$
and have (1A)
$\ftrans{\conv \land (\metz \metle \met\ s \land \met_0 = \met)}\ t$  $\to$ $\ftrans{\ddiamond{\alpha}{(\conv\land \met_0 \metgr \met)}}\ t$ for all $t : \sty$.
and have (2A) $\ftrans{\conv \land \metz \metgeq \met \to \phi}\ t$ for all $t : \sty$.

By the side condition, the order $\metgr$  is Noetherian and satisfies the ascending chain condition.
Our main proof shows that (Main) $s$ belongs to the fixed point $\ftrans{\ddiamond{\prepeat{\alpha}}{(\conv \land \metz \metgeq \met)}}\ s$.
We proceed by induction on the value of $\met\ s$; by the ascending chain condition, this induction is well-founded.
Specifically, we show that $\met\ s$ decreases with every iteration $\alpha$ until $\met = \metz$ where the loop terminates.
Invariant $\conv$ is maintained in every iteration.

For the base case, (0A) gave $\ftrans{\conv}\ s$ and by conjunction $\ftrans{(\conv \land \metz \metgeq \met)}\ s$ which fulfills the base case of the fixed point.

In the inductive case have $\metz \metle \met\ s$, and still have assumption $\ftrans{\conv}\ s$.
Introduce a fresh discrete ghost (rule \irref{ghost}) $\met_0 = \met\ s$.
We from (1A) have the following fact $\ftrans{\conv \land (\met \metle \met\ s \land \met_0 = \met)}\ (\lset{s}{\met_0}{(\met\ s)})$ by \rref{lem:app-formula-coincide} since $\met_0$ was fresh.
By modus ponens, (1A) yields (4) $\ftrans{\ddiamond{\alpha}{(\conv \land \met_0 \metgr \met)}}\ s$.
Since every point $t$ satisfying $\ftrans{(\conv \land \met_0 \metgr \met)}\ t$ trivially has a lesser value of $\met$ than $s,$
we can apply the IH on region $\ftrans{(\conv \land \met_0 \metgr \met)},$ yielding for all $t$ that $\ftrans{(\conv \land \met_0 \metgr \met)}\ t \to \ftrans{\ddiamond{\prepeat{\alpha}}{(\conv \land \metz \metgeq \met)}}\ t$.
Then by \rref{lem:app-monotone} have
(5) $\ftrans{\ddiamond{\alpha}{\ddiamond{\prepeat{\alpha}}{(\conv\land \metz \metgr \met)}}}\ (\lset{s}{\met_0}{(\met\ s)})$
which satisfies the inductive case, giving
$\ftrans{\ddiamond{\prepeat{\alpha}}{(\conv \land \metz \metgr \met)}} (\lset{s}{\met_0}{(\met\ s)})$.
By freshness of $\met_0,$ \rref{lem:app-formula-coincide} gives $\ftrans{\ddiamond{\prepeat{\alpha}}{(\conv \land \metz \metgr \met)}}\ s$.
This completes the inner induction.
By \rref{lem:app-monotone} and (2A) have $\ftrans{\ddiamond{\prepeat{\alpha}}{\phi}}\ s$ as desired.

\mycase \irref{di}:
Assume (0) $\proves{\G}{}{\phi}$
(1) $\proves{\G}{}{\lforall{x}{(\ivr \limply \dbox{\humod{\D{x}}{f}}{\der{\phi}})}}$
so by IH have
(I0) $\ftrans{\phi}\ s$
and (2) for all  $v : \xty, \ftrans{\ivr}\ (\lset{s}{x}{v}) \limply \ftrans{\der{\phi}}\ (\lset{s}{(x,\D{x})}{(v,f\ (\lset{s}{x}{v}))})$.
To show the conclusion, first assume $d$ and $sol$ such that
(I1) $(\solves{sol}{s}{d}{\D{x}=f})$ and
(4) $\ftrans{\ivr}\ {(\lset{s}{x}{(sol\ t)})}$ for $t \in [0,d]$
and then we conclude that (P) $\ftrans{\phi}\ (\lset{s}{(x,\D{x})}{(sol\ r, f\ (\lset{s}{x}{(sol\ r)}))})$.
From (2) and (4) we then have
$ \ftrans{\der{\phi}}\ (\lset{(\lset{s}{x}{(sol\ t)})}{\D{x}}{(f\ (\lset{s}{x}{(sol\ t)}))})
= \ftrans{\der{\phi}}\ (\lset{s}{(x,\D{x})}{(sol\ t, \der{sol}\ t)})$ (I2)
by unpacking the definition of $(\solves{sol}{s}{d}{\D{x}=f})$ in (I1).

We now prove the conclusion (P) by induction on differentiable formula $\phi$, maintaining assumptions (I0), (I1), (I2) throughout the induction.
Each case reduces to a lemma proven elsewhere in the literature.
However, those prior works appeal to the time-derivative of a term $f,$ while our term $\der{f}$ is a spatial derivative.
Our \rref{lem:app-differential-lemma} shows that these derivatives are equal in the postcondition of an ODE, thus the lemmas are applicable.

\mycase $g = h$:
Let $G(t) = g\ (\lset{s}{(x,\D{x})}{(sol\ t, \der{sol}\ t)})$ and $H(t) = h\ (\lset{s}{(x,\D{x})}{(sol\ t, \der{sol}\ t)})$.
In this case (I0) specializes to
 $g\ s = h\ s$ so (Init) $G(0) = H(0)$ since $sol\ 0 = \lget{s}{x}$ by (I1) and \rref{lem:app-formula-coincide} since $\D{x} \notin \freevars{g} \cup \freevars{h}$.
Then, (I2) specializes for all $t \in [0,d]$ to
$ \der{g}\ (\lset{s}{(x,\D{x})}{(sol\ t, \der{sol}\ t)})
= \der{h}\ (\lset{s}{(x,\D{x})}{(sol\ t, \der{sol}\ t)}),$ that
is (EqT) $\D{G}(t) = \D{H}(t)$ for $t \in [0,d]$.
Apply the $=$ case of \rref{thm:app-mvt} to (Init) and (EqT) on interval $[0,d],$ thus $G(t) = H(t)$ for $t \in [0,d]$.
Therefore (Post) $\ftrans{g = h}\ (\lset{s}{(x,\D{x})}{(sol\ t, \der{sol}\ t)})$.
Since (Post) holds under the assumptions (I1) and (4) then
$\ftrans{\dbox{\pevolvein{\D{x}=f}{\ivr}}{g = h}}\ s$ as desired.

\mycase $g > h$:
Let $F(t) = g\ (\lset{s}{(x,\D{x})}{(sol\ t, \der{sol}\ t)}) - h\ (\lset{s}{(x,\D{x})}{(sol\ t, \der{sol}\ t)})$.
In this case (I0) specializes to
 $g\ s > h\ s$ so (Init) $F(0) > 0$ since $sol\ 0 = \lget{s}{x}$ by (I1) and \rref{lem:app-formula-coincide} since $\D{x} \notin \freevars{g} \cup \freevars{h}$.
Then, (I2) specializes for all $t \in [0,d]$ to
$ \der{g}\ (\lset{s}{(x,\D{x})}{(sol\ t, \der{sol}\ t)})
> \der{h}\ (\lset{s}{(x,\D{x})}{(sol\ t, \der{sol}\ t)}),$ so that
(GrT) $\D{F}(t) > 0 $ for $t \in [0,d]$.
Apply the $>$ case of \rref{thm:app-mvt}  to (Init) and (GrT) on interval $[0,d],$ thus $F(d) > F(0)$ so by transitivity with (Init), have $F(d) > 0$ so
(Post) $\ftrans{g > h}\ (\lset{s}{(x,\D{x})}{(sol\ t, \der{sol}\ t)})$.
Since (Post) holds under the assumptions (I1) and (4) then
$\ftrans{\dbox{\pevolvein{\D{x}=f}{\ivr}}{g > h}}\ s$ as desired.

\mycase $g \geq h$:
Let $F(t) = g\ (\lset{s}{(x,\D{x})}{(sol\ t, \der{sol}\ t)}) - h\ (\lset{s}{(x,\D{x})}{(sol\ t, \der{sol}\ t)})$.
In this case (I0) specializes to
 $g\ s \geq h\ s$ so (Init) $F(0) \geq 0$ since $sol\ 0 = \lget{s}{x}$ by (I1) and \rref{lem:app-formula-coincide} since $\D{x} \notin \freevars{g} \cup \freevars{h}$.
Then, (I2) specializes for all $t \in [0,d]$ to
$ \der{g}\ (\lset{s}{(x,\D{x})}{(sol\ t, \der{sol}\ t)})
\geq \der{h}\ (\lset{s}{(x,\D{x})}{(sol\ t, \der{sol}\ t)}),$ so that
(GeqT) $\D{F}(t) \geq 0 $ for $t \in [0,d]$.
Apply the $\geq$ case of \rref{thm:app-mvt}  to (Init) and (GeqT) on interval $[0,d],$ thus $F(d) \geq F(0)$ so by transitivity with (Init), have $F(d) \geq 0$ so
(Post) $\ftrans{g \geq h}\ (\lset{s}{(x,\D{x})}{(sol\ t, \der{sol}\ t)})$.
Since (Post) holds under the assumptions (I1) and (4) then
$\ftrans{\dbox{\pevolvein{\D{x}=f}{\ivr}}{g \geq h}}\ s$ as desired.

\mycase $\phi_1 \land \phi_2$:
In this case (I0) specializes to
$\ftrans{\phi_1 \land \phi_2}\ s$ so (L0) $\ftrans{\phi_1}\ s$ and (R0) $\ftrans{\phi_2}\ s$.
Then since $\der{\phi_1 \land \phi_2} = \der{\phi_1} \land \der{\phi_2}$ we also have
(L2) $\ftrans{\der{\phi_1}}\ (\lset{s}{(x,\D{x})}{(sol\ t, \der{sol}\ t)})$ and
(R2) $\ftrans{\der{\phi_2}}\ (\lset{s}{(x,\D{x})}{(sol\ t, \der{sol}\ t)})$.
By the IH on (L0) (I1) (L2) and  have (IHL) $\ftrans{\dbox{\pevolvein{\D{x}=f}{\ivr}}{\phi_1}}\ s$.
By the IH on (R0) (I1) (R2) and then we have (IHR) $\ftrans{\dbox{\pevolvein{\D{x}=f}{\ivr}}{\phi_2}}\ s$.
By applying assumptions (I1) and (4) to (IHL) and (IHR) have
(L3) $\ftrans{\phi_1}\ (\lset{s}{(x,\D{x})}{(sol\ t, \der{sol}\ t)})$ and
(R3) $\ftrans{\phi_2}\ (\lset{s}{(x,\D{x})}{(sol\ t, \der{sol}\ t)})$ so
(LR) $\ftrans{\phi_1 \land \phi_2}\ (\lset{s}{(x,\D{x})}{(sol\ t, \der{sol}\ t)})$ thus
$\ftrans{\dbox{\pevolvein{\D{x}=f}{\ivr}}{(\phi_1 \land \phi_2)}}\ s$ as desired.

\mycase $\phi_1 \lor \phi_2$:
Since by definition $\der{\phi_1 \lor \phi_2} = \der{\phi_1} \land \der{\phi_2}$ then we have
(L2) $\ftrans{\der{\phi_1}}\ (\lset{s}{(x,\D{x})}{(sol\ t, \der{sol}\ t)})$ and
(R2) $\ftrans{\der{\phi_2}}\ (\lset{s}{(x,\D{x})}{(sol\ t, \der{sol}\ t)})$.
By the IH on (I1) (L2) have (IHL) $\ftrans{\phi_1}\ s \limply \ftrans{\dbox{\pevolvein{\D{x}=f}{\ivr}}{\phi_1}}\ s$.
By the IH on (I1) (R2) have (IHR) $\ftrans{\phi_2}\ s \limply \ftrans{\dbox{\pevolvein{\D{x}=f}{\ivr}}{\phi_2}}\ s$.
In this case (I0) specializes to
$\ftrans{\phi_1 \lor \phi_2}\ s$, proceed by cases.
First case:  (L0) $\ftrans{\phi_1}\ s$, by (IHL) have
$\ftrans{\dbox{\pevolvein{\D{x}=f}{\ivr}}{\phi_1}}\ s$ thus
$\ftrans{\dbox{\pevolvein{\D{x}=f}{\ivr}}{(\phi_1 \lor \phi_2)}}\ s$ and
$\ftrans{\dbox{\pevolvein{\D{x}=f}{\ivr}}{(\phi_1 \lor\phi_2)}}\ s$.
Second case: (R0) $\ftrans{\phi_2}\ s,$ by (IHR) have
$\ftrans{\dbox{\pevolvein{\D{x}=f}{\ivr}}{\phi_2}}\ s$ thus
$\ftrans{\dbox{\pevolvein{\D{x}=f}{\ivr}}{(\phi_1 \lor \phi_2)}}\ s$ and
$\ftrans{\dbox{\pevolvein{\D{x}=f}{\ivr}}{(\phi_1 \lor\phi_2)}}\ s$.

\mycase \irref{dc}:
Assume (0) $\proves{\G}{}{\dbox{\pevolvein{\D{x}=f}{\ivr}}{\rho}}$
and (1) $\proves{\G}{}{\dbox{\pevolvein{\D{x}=f}{\ivr \land \rho}}{\phi}}$
so by IH have
(0A) $\ftrans{\dbox{\pevolvein{\D{x}=f}{\ivr}}{\rho}}\ s$
and (1A) $\ftrans{\dbox{\pevolvein{\D{x}=f}{\ivr \land \rho}}{\phi}}\ s$.
To show the conclusion, first assume a duration $d$ and solution $sol$ such that
(S) $(\solves{sol}{s}{d}{\D{x}=f})$
and (C) $(\pity{t}{[0,d]}{\ftrans{\ivr}\ {(\lset{s}{x}{(sol\ t)})}})$
after which we show that (P) $\ftrans{\phi}\ (\lset{s}{(x,\D{x})}{(sol\ d, f\ (\lset{s}{x}{(sol\ t)}))})$.
Because ``solves'' and universal quantification are prefix-closed, we also have:
(S0) $(\solves{sol}{s}{r}{\D{x}=f})$ for all $r \in [0,d]$
(C0) $(\pity{t}{[0,r]}{\ftrans{\ivr}\ {(\lset{s}{x}{(sol\ t)})}})$ for all $r \in [0,d]$.
Thus apply (S0) and (C0) to (0) for each $r \in [0,d],$ getting

(2) $\ftrans{\rho}\ (\lset{s}{(x,\D{x})}{(sol\ r, f\ (\lset{s}{x}{(sol\ r)}))})$ for all $r \in [0,d]$.
Since $\rho$ appears in a domain constraint, we must have $\D{x} \notin \freevars{\rho}$ by syntactic restriction, so by \rref{lem:app-formula-coincide} have
(2A) $\ftrans{\rho}\ (\lset{s}{x}{(sol\ r)})$ for all $r \in [0,d]$.
Conjoining (2A) with (C) we have
(CA) $(\pity{t}{[0,d]}{\ftrans{\ivr \land rho}\ {(\lset{s}{x}{(sol\ t)})}})$.
Applying (S) and (CA) to (1) we have
(P) $\ftrans{\phi}\ (\lset{s}{(x,\D{x})}{(sol\ r, f\ (\lset{s}{x}{(sol\ r)}))})$ as desired.

\mycase \irref{dw}:
By assumption we have (0) $\proves{\G}{}{\lforall{(x,\D{x})}{(\ivr \limply \phi)}}$
so by IH have (0A) $\ftrans{\ivr \to \phi}\ (\lset{s}{(x,\D{x})}{(v,\D{v})})$ for all $v, \D{v} \in \xty$.
To show $\dbox{\pevolvein{\D{x}=f}{\ivr}}{\phi}$ assume
some $d > 0$ and $sol : [0,d] \to \xty$ such that
(1) $(\solves{sol}{s}{d}{\D{x}=f})$ and
(2) $\ftrans{\ivr}\ {(\lset{s}{x}{(sol\ t)})}$ for $t \in [0,d]$
and then show $\ftrans{\phi}\ (\lset{s}{x}{(sol\ d)})$
For each $t$ we can apply \rref{lem:app-formula-coincide} in (2) because $\D{x} \notin \freevars{\ivr}$ by syntactic restriction, giving
(3) $\ftrans{\ivr}\ (\lset{s}{(x,\D{x})}{(sol\ t, f\ (\lset{s}{x}{(sol\ t)}))})\ s$.
Apply (3) to (0A) have $\ftrans{\phi}\ (\lset{s}{(x,\D{x})}{(sol\ t, f\ (\lset{s}{x}{(sol\ t)}))})$ as desired.

\mycase \irref{dg}:
Assume the side conditions that $y$ is fresh and that $a,b$ are continuous.
Assume (0) $\proves{\G}{}{\lexists{y}{\dbox{\pevolvein{\D{x}=f,\D{y}=a(x)y + b(x)}{\ivr}}{\phi}}}$
then by IH (0A) $\ftrans{\lexists{y}{\dbox{\pevolvein{\D{x}=f,\D{y}=a(x)y + b(x)}{\ivr}}{\phi}}}\ s$.
Unpack $v : \xty$ such that we will have
(0B) $\ftrans{\dbox{\pevolvein{\D{x}=f,\D{y}=a(x)y + b(x)}{\ivr}}{\phi}}\ (\lset{s}{y}{v})$.
To show $\ftrans{\dbox{\pevolvein{\D{x}=f}{\ivr}}{\phi}}\ s$ first assume $d$ and $sol$ are such that
    (1) $(\solves{sol}{s}{d}{\D{x}=f})$
and (2) $(\pity{t}{[0,d]}{\ftrans{\ivr}\ {(\lset{s}{x}{(sol\ t)})}})$ to show
(P) $\ftrans{\phi}\ (\lset{s}{(x,\D{x})}{(sol\ d, f\ (\lset{s}{x}{(sol\ d)}))})$

We will do so by applying (0B), which first requires constructing a solution
$xysol$ to $\D{x}=f,\D{y}=a(x)y + b(x)$ of the same duration as $sol$.
We construct $xysol$ according to form $xysol(t) = (sol(t), ysol(t))$ for some $ysol(t)$ which solves $\D{y}(t)=a(sol\ t)y + b(sol\ t)$.
So we first construct $ysol$ according to the initial value problem:
\begin{align*}
  ysol(0)             &=   v\\
  \D{ysol}(t,v_y) &= a(sol(t)) v_y + b(sol(t))
\end{align*}
We show that $\der{xysol}(t) = (\der{sol}(t),\der{ysol}(t))$ is Lipschitz continuous in $t,v_y$.
It is continuous  because addition, multiplication, and composition preserve continuity, because $a$ and $b$ are continuous by the side condition, and because $sol$ is continuous since it is the solution of an ODE.
It satisfies the Lipschitz condition with constant
\[L = \max_{t\in[0,d]}(sol\ t \cdot a(sol(t)))\]
where the maximum exists because it is the maximum of a continuous function on a compact interval.
Thus, \rref{thm:app-pl} applies and there exists a solution $ysol$ on $[0,d]$ such that
(solY) $(\solves{ysol}{(\lset{s}{y}{v})}{d}{\{\D{y} = a(sol\ t)y + b(sol\ t)\}})$.
From (solY) and (1) have
(1XY) $(\solves{xysol}{(\lset{s}{y}{v})}{d}{\{\D{x} = f, \D{y} = a(sol\ t)y + b(sol\ t)\}})$.
From (2) have
(2XY) $(\pity{t}{[0,d]}{\ftrans{\ivr}\ {(\lset{s}{(x,y)}{(xysol\ t)})}})$
by \rref{lem:app-formula-coincide} using the assumption that $y$ was fresh in $\phi$.
Now we abbreviate $ss = (\lset{(\lset{s}{y}{v})}{(x,y)}{(xysol\ d)}) = \lset{s}{(x,y)}{(xysol\ d)}$.
Apply $xysol, d,$ (1XY), and (2XY) to (0A) to get
$\ftrans{\phi}\ (\lset{ss}{(\D{x},\D{y}))}{(f\ ss,(a(x)y + b(x))\ ss)}$
from which we have (P) because $\sprojL{xysol} = sol$ and by \rref{lem:app-formula-coincide} since $y$ was fresh in $\phi$.

\mycase \irref{dv}:
Assume side conditions
(V1) $t$ fresh and (V2) $x,\D{x},t,\D{t}$ not free in $d,\veps$.
Assume (0) $\proves{\G}{}{\ddiamond{\humod{t}{0};\{\pevolvein{\D{t}=1,\D{x}=f}{\ivr}\}}{t \geq d}}$
Assume (1) $\proves{\G}{}{\dbox{\pevolve{\D{x}=f}}{(\der{h}-\der{g}) \geq \veps}}$
Assume (2) $\proves{\G}{}{d > 0 \land \veps > 0 \land h-g \geq -d\veps}$
Assume (3) $\proves{\ivr, h \geq g}{}{\phi}$, then by IH have
\begin{align*}
\text{(0A)} &\ftrans{\ddiamond{\humod{t}{0};\{\pevolvein{\D{t}=1,\D{x}=f}{\ivr}\}}{t \geq d}}\ s\\
\text{(1A)} &\ftrans{\dbox{\pevolve{\D{x}=f}}{(\der{h}-\der{g}) \geq \veps}}\ s\\
\text{(2A)} &\ftrans{d > 0 \land \veps > 0 \land h-g \geq -d\veps}\ s\\
\text{(3A)} &\ftrans{\ivr \land h \geq g \limply \phi}\ t\text{ for all }t : \sty
\end{align*}
To show (P) $\ftrans{\ddiamond{\pevolvein{\D{x}=f}{\ivr}}{\phi}}\ s$ it suffices to choose $sol$ and $d$ such that
\begin{align*}
\text{(P0)} &(\solves{sol}{s}{d}{\D{x}=f})\\
\text{(P1)} &(\pity{t}{{[0,d]}}{\ftrans{\phi}\ {(\lset{s}{x}{(sol\ t)})}})\\
\text{(P2)} &\ftrans{\psi}\ (\lset{s}{(x,\D{x})}{(sol\ d, f\ (\lset{s}{x}{(sol\ d)}))})
\end{align*}
From (0A) unpack $tsol$ and $dur$ such that
 (B) $(\solves{tsol}{(\lset{s}{t}{0}}{dur}{\{\D{t}=1,\D{x}=f}\})$ and
 (C) $(\pity{tt}{{[0,dur]}}{\ftrans{\ivr}\ {(\lset{s}{(t,x)}{(sol\ tt)})}})$ and also have
 (D) $\ftrans{t \geq d}\ (\lset{s}{(t,x,\D{t},\D{x})}{((tsol\ dur), (1,f\ (\lset{s}{x}{(sol\ dur)})))})$.

Since $(\lambda x.~x)$ is the unique solution of $\D{t} = 1$ we have  $tsol(r) = (r,sol(r))$ for some solution $sol$.
This also yields (Dur)

$dur \geq d\ (\lset{s}{(t,x,\D{t},\D{x})}{((tsol\ dur), (1,f\ (\lset{s}{x}{(sol\ dur)})))}) = d\ s$
by \rref{lem:app-formula-coincide} and (V2).

We project the solution for $f,$ noting \rref{lem:app-formula-coincide} applies since $t \notin \freevars{f} \cup \freevars{\ivr} \cup \freevars{\phi}$ by (V1).
We get
 (B1) $(\solves{sol}{s}{dur}{\D{x}=f})$
and
 (C1) $\pity{t}{{[0,dur]}}{\ftrans{\phi}\ {(\lset{s}{(x)}{(sol\ t)})}}$
and
 (D1) $\ftrans{\phi}\ (\lset{s}{(x,\D{x})}{(sol\ dur, f\ (\lset{s}{x}{(sol\ dur)}))})$.

We can now instantiate (1A) with solution $sol$ and duration $dur$ using (B1) and (CT1) to get
(Der) $\ftrans{\der{h}-\der{g} \geq \veps}\ (\lset{s}{(x,\D{x})}{(sol\ dur, f\ (\lset{s}{x}{(sol\ dur)}))})$, which also holds for $t \in [0,dur]$ since solutions and quantifiers are prefix-closed.
From (2A) have (Dpos) $d\ s > 0$ with $d$ constant by (V2)
and (Epos) $\veps\ s > 0$ and $\veps$ constant by (V2)
and (HG) $h\ s - g\ s \geq -d\veps$ with $d,\veps$ constant.
Let $HG(t) = (h - g)\ (\lset{s}{(x,\D{x})}{(sol\ t, f\ (\lset{s}{x}{(sol\ t)}))})$ so that
$\D{HG}(t) = \ftrans{\der{h}-\der{g}}\ (\lset{s}{(x,\D{x})}{(sol\ dur, f\ (\lset{s}{x}{(sol\ dur)}))})$.
Let $c = \veps\ = (\lset{s}{(x,\D{x})}{(sol\ dur, f\ (\lset{s}{x}{(sol\ dur)}))}) = (\lset{s}{(x,\D{x})}{(sol\ t, f\ (\lset{s}{x}{(sol\ t)}))})$ (for all $t \in [0,d]$) $ = \veps\ s$ by (V2) and \rref{lem:app-term-coincide}.
We write $\veps$ for short since it's constant anyway.

Now \rref{lem:app-dv} applies to $HG$ and $c$ on $[0,dur]$ by (Der) so that
(Progress) $HG(dur) - HG(0) \geq \veps\,dur > \veps\,d$, i.e.,
$HG(dur) \geq HG(0) + \veps\,dur > \veps\,d$.

By (HG) have $HG(0) \geq -d\,\veps,$ add with (Progress) to get
(Big) $HG(dur) \geq 0$.
From (Big) and instantiating (C1) at (Dur), we have the premisses of (3A), so that
(Post) $\ftrans{\phi}\ (\lset{s}{(x,\D{x})}{(sol\ dur, f\ (\lset{s}{x}{(sol\ dur)}))})$.
We conclude $\ftrans{\ddiamond{\pevolvein{\D{x}=f}{\ivr}}{\phi}}\ s$ by choosing solution $sol,$ duration $dur,$ and noting
(B1), (C1), and (Post).

\mycase \irref{bsolve}:
Assume the side condition (V) that $\D{x} \notin \freevars{\phi}$.
Assume the side condition (Sol) that $sln$ is the unique, global solution of $\D{x}=f$ such that $sln\ 0 = \lget{s}{x}$.
Not every syntactically valid \CdGL ODE has unique solutions, nor global solutions, let alone global solutions which lend themselves to tractable first-order proof obligations.
In practical implementations, the latter requirement demands nilpotence so that solutions are first-order-expressible, which trivially implies unique global solutions because nilpotent ODEs are both linear and Lipschitz.

Assume (0) $\proves{\G}{}{\lforall[{\reals_{\geq0}}]{d}{((\lforall[{[0,d]}]{t}{\dbox{\humod{x}{(sln\ t)}}{\ivr}})\limply\dbox{\humod{x}{(sln\ d)}}{\phi})}}$
so by IH (0A)
$\ftrans{((\lforall[{[0,d]}]{t}{\dbox{\humod{x}{(sln\ t)}}{\ivr}})
 \limply
\dbox{\humod{x}{(sln\ d)}}{\phi})
}\ (\lset{s}{d}{v})$ for all $v \geq 0$.
To show the conclusion $\dbox{\pevolvein{\D{x}=f}{\ivr}}{\phi}$ first assume some $d$ and $sol$
such that (1) $(\solves{sol}{s}{d}{\D{x}=f})$ and
(2) $\pity{t}{[0,d]}{\ftrans{\ivr}\ {(\lset{s}{x}{(sol\ t)})}}$
which allows us to show
(P) $\ftrans{\phi}\ (\lset{s}{(x,\D{x})}{(sol\ d, f\ (\lset{s}{x}{(sol\ d)}))})$.
By (Sol), solutions are unique, so $sln=sol$.
Then (2) is equivalent to $\ftrans{\lforall[{[0,d]}]{t}{\dbox{\humod{x}{(sln\ t)}}{\ivr}}}\ s$ so by (0A)
have (3) $\ftrans{\phi}\ (\lset{s}{d}{v})$ which even holds for all $v \in [0,d]$ since universal quantification is prefix-closed, but especially $d$.
By \rref{lem:app-formula-coincide} and (V) on (3) have
$\ftrans{\phi}\ (\lset{s}{(x,\D{x})}{(sln\ d, f\ (\lset{s}{x}{(sln\ d)}))})$
which by (Sol) again is equivalent to (P).

\mycase \irref{dsolve}:
Assume the side condition (V) that $\D{x} \notin \freevars{\phi}$.
Assume the side condition (Sol) that $sln$ is a global solution of $\D{x}=f$ such that $sln\ 0 = \lget{s}{x}$.
In practice it usually also the unique solution, but that is not strictly required in this rule.

Assume (0) $\proves{\G}{}{\lexists[{\reals_{\geq0}}]{d}{((\lforall[{[0,d]}]{t}{\dbox{\humod{x}{(sln\ t)}}{\ivr}})\land\dbox{\humod{x}{(sln\ d)}}{\phi})}}$
so by IH (0A)
$\ftrans{((\lforall[{[0,d]}]{t}{\dbox{\humod{x}{(sln\ t)}}{\ivr}})
 \land
\dbox{\humod{x}{(sln\ d)}}{\phi})
}\ (\lset{s}{d}{v})$ for some $v \geq 0$.
To show the conclusion $\dbox{\pevolvein{\D{x}=f}{\ivr}}{\phi}$
must exhibit $d$ and $sol$
such that
(1) $(\solves{sol}{s}{d}{\D{x}=f})$ and
(2) $\pity{t}{[0,d]}{\ftrans{\ivr}\ {(\lset{s}{x}{(sol\ t)})}}$ and
(3) $\ftrans{\phi}\ (\lset{s}{(x,\D{x})}{(sol\ d, f\ (\lset{s}{x}{(sol\ d)}))})$.
Choose $sol = y$ and $d = v$.
Then (1) is simply (Sol), (2) is left component of (0A) and (3) follows from the right component of (0A) by \rref{lem:app-formula-coincide} from side condition (V).
\end{proof}

\begin{lemma}[Existence Property]
Let $s \in \sty$. If $M : (\ftrans{\sity{x}{\xty}{\phi}}\ s)$ then there exist terms $v:\xty$ and $N : (\ftrans{\tsub{\phi}{x}{v}}\ s)$.
\label{lem:app-term-ep}
\end{lemma}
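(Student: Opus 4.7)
The proof proposal follows directly from unpacking definitions and applying the substitution lemma already proven in the appendix. The key observation is that $\sity{x}{\xty}{\phi}$ is just an abbreviation unfolding (via the definitional equations $\lexists{x}{\phi} \equiv \ddiamond{\prandom{x}}{\phi}$ in the main paper) to a game modality whose semantic interpretation is literally a $\Sigma$-type in the host type theory. Once this is unfolded, the existence of the witness is essentially free, and all that remains is a substitution/admissibility bookkeeping step.

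First I would unfold: by the definition of $\sity{x}{\xty}{\phi}$ as $\ddiamond{\prandom{x}}{\phi}$ and the Angel semantics of $\prandom{x}$, we have
\[
\ftrans{\sity{x}{\xty}{\phi}}\ s \;=\; \atrans{\prandom{x}}\ \ftrans{\phi}\ s \;=\; \sity{v}{\xty}{\,\ftrans{\phi}\ (\lset{s}{x}{v})}.
\]
Hence the hypothesis $M : (\ftrans{\sity{x}{\xty}{\phi}}\ s)$ is literally a dependent pair in CIC. Let $v \mathrel{:}= \sprojL{M} : \xty$ and $N_0 \mathrel{:}= \sprojR{M} : (\ftrans{\phi}\ (\lset{s}{x}{v}))$. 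This gives the required witness $v$ together with a proof about a modified state, and the Existential Property in type theory handles this step transparently.

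Next I would convert the ``modified state'' form into the ``substituted formula'' form demanded by the statement. By \rref{lem:app-formula-subst} applied to the constant term $v$ (which has no free variables, so the admissibility condition on $\tsub{\phi}{x}{v}$ is trivially satisfied since no binder can capture any variable of $v$), we have
\[
\ftrans{\phi}\ (\lset{s}{x}{(v\ s)}) \;=\; \ftrans{\tsub{\phi}{x}{v}}\ s.
\]
Since $v$ is a constant, $v\ s = v$, so $N_0$ inhabits $\ftrans{\tsub{\phi}{x}{v}}\ s$. Set $N \mathrel{:}= N_0$ and we are done.

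The only subtle point, and what I would flag as the main obstacle, is the status of $v$: semantically $v \in \xty$ is a real number, but the statement asks for a \emph{term} $v : \xty$. Since \CdGL terms are open (any effective function of type $\sty \to \xty$, per \rref{def:terms}), a real number is trivially represented as the constant term $\lambda s.\,v$, which computes the same value in every state and in particular is admissible for substitution into $\phi$. The proof is therefore essentially ``projection plus substitution lemma,'' and both the Disjunction Property and Consistency follow by exactly the same recipe, projecting through the definitional unfoldings of $\lor$ and of $\neg$ respectively.
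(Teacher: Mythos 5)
Your proof is correct and takes essentially the same route as the paper's: unfold the Angel semantics of the existential into a $\Sigma$-type, obtain the witness and residual proof by projection/inversion, and then convert the updated-state form $\ftrans{\phi}\ (\lset{s}{x}{v})$ into the substituted form $\ftrans{\tsub{\phi}{x}{v}}\ s$. The only (cosmetic) difference is the bookkeeping lemma cited at the end---you invoke the substitution lemma (\rref{lem:app-formula-subst}) where the paper cites coincidence (\rref{lem:app-formula-coincide})---and your explicit note that the semantic value $v$ is packaged as a constant term, making admissibility trivial; both points are, if anything, slightly more careful than the paper's own one-line argument.
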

\begin{proof}
$(\ftrans{\sity{x}{\tau}{\phi}}\ s)
= \sity{v}{\xty}{(\ftrans{\phi}\ (\lset{s}{x}{v}))}$
so by inversion on $M$ there exists
$v:\xty$ and $N$ such that $N:(\ftrans{\phi}\ (\lset{s}{x}{v}))$
and such that the substitution $\tsub{\phi}{x}{v}$ is admissible.
Then by \rref{lem:app-formula-coincide} have $\esub{N}{x}{v} : (\ftrans{\tsub{\phi}{x}{v}}\ s)$ as desired.
\end{proof}

\begin{lemma}[Disjunction Property]
  If $M : (\ftrans{\phi \lor \psi}\ s)$ then there exists constructively an $N$ such that either $N : (\ftrans{\phi}\ s)$ or $N : (\ftrans{\psi}\ s)$.
\end{lemma}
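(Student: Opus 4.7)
The plan is to unfold the definition of disjunction and read off the two cases directly from the type-theoretic semantics. Recall that $\phi \lor \psi$ is defined as $\ddiamond{\ptest{\phi} \cup \ptest{\psi}}{\btt}$, so by the formula semantics and the Angel semantics for choice and test,
\begin{align*}
\ftrans{\phi \lor \psi}\ s
  &= \atrans{\ptest{\phi} \cup \ptest{\psi}}\ \ftrans{\btt}\ s \\
  &= \atrans{\ptest{\phi}}\ \ftrans{\btt}\ s \mathop{\kwsum} \atrans{\ptest{\psi}}\ \ftrans{\btt}\ s \\
  &= (\ftrans{\phi}\ s \mathop{\kwprod} \ftrans{\btt}\ s) \mathop{\kwsum} (\ftrans{\psi}\ s \mathop{\kwprod} \ftrans{\btt}\ s).
\end{align*}
So the given $M$ inhabits a disjoint union of two dependent-pair types.

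First I would case-analyze $M$ at the outer sum $\mathop{\kwsum}$. In the left case $M = \ell \cdot M_1$, let $N \mathrel{:=} \sprojL{M_1}$, which inhabits $\ftrans{\phi}\ s$ by the semantics of $\ptest{\phi}$. In the right case $M = r \cdot M_2$, symmetrically let $N \mathrel{:=} \sprojL{M_2}$, which inhabits $\ftrans{\psi}\ s$. The disjoint-union case analysis is constructive, so in either case we have produced an explicit $N$ together with a proof of which disjunct it inhabits.

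The main (and only) obstacle is really just checking that the ambient type theory's case analysis on $\mathop{\kwsum}$ and its projections $\sprojL{\cdot}$ are available as computational operators, which was already assumed in the Type Theory Assumptions. Note, as mentioned in the paper, that this argument is genuinely about truth at the specific state $s$: because the left/right branch chosen may depend on $s$, one cannot conclude that validity of $\phi \lor \psi$ implies validity of one of the disjuncts. The $\btt$ component is irrelevant and is simply discarded by projection. As with the Existence Property, the lemma therefore reduces to the corresponding disjunction property of the host type theory once the semantic definition is unfolded.
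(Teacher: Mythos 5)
Your proof is correct and follows essentially the same route as the paper's: unfold the semantics of $\phi \lor \psi$ into a sum type and case-analyze the inhabitant $M$, which the host type theory supports constructively. The only difference is that you unfold the derived definition $\ddiamond{\ptest{\phi} \cup \ptest{\psi}}{\btt}$ explicitly and discard the trivial $\btt$ component by projection, whereas the paper states the simplified equality $\ftrans{\phi \lor \psi}\ s = \ftrans{\phi}\ s \mathop{\kwsum} \ftrans{\psi}\ s$ directly; this is a presentational nuance, not a different argument.
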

\begin{proof}
$(\ftrans{\phi \lor \psi}\ s)
= \ftrans{\phi}\ s \mathop{\kwsum} \ftrans{\psi}\ s$
so by inversion on $M$ there exists
$L : \ftrans{\phi}\ s$ or $R : \ftrans{\psi}\ s$ as desired.
\end{proof}

\end{document}

